\documentstyle[12pt,fleqn]{book}

\textheight=8.2in \textwidth=6.3in \oddsidemargin .5cm
\evensidemargin .5cm \topmargin= 1cm \headsep 30pt

\arraycolsep 1pt

\font\twlgot =eufm10 scaled \magstep1
\font\egtgot =eufm8
\font\sevgot =eufm7

\font\twlmsb =msbm10 scaled \magstep1
\font\egtmsb =msbm8
\font\sevmsb =msbm7

\newfam\gotfam
\def\pgot{\fam\gotfam\twlgot}
\textfont\gotfam\twlgot
\scriptfont\gotfam\egtgot
\scriptscriptfont\gotfam\sevgot
\def\got{\protect\pgot}

\newfam\msbfam
\textfont\msbfam\twlmsb
\scriptfont\msbfam\egtmsb
\scriptscriptfont\msbfam\sevmsb
\def\Bbb{\protect\pBbb}
\def\pBbb{\relax\ifmmode\expandafter\Bb\else\typeout{You cann't use
Bbb in text mode}\fi}
\def\Bb #1{{\fam\msbfam\relax#1}}

\def\op#1{\mathop{{\it\fam0} #1}\limits}

\newcommand{\id}{{\rm Id\,}}

\newcommand{\pr}{{\rm pr}}

\newcommand{\Id}{{\rm Id}}
\newcommand{\Ker}{{\rm Ker\,}}
\newcommand{\im}{{\rm Im\, }}
\newcommand{\re}{{\rm Re\, }}
\newcommand{\hm}{{\rm Hom\,}}
\newcommand{\dif}{{\rm Diff\,}}

\newcommand{\nm}[1]{|{#1}|}
\newcommand{\sU}{{\{U\}}}
\newcommand{\lng}{\langle}
\newcommand{\rng}{\rangle}
\newcommand{\bll}{\bullet}

\newcommand{\rx}{{\rm x}}

\newcommand{\bite}{\begin{itemize}}
\newcommand{\eite}{\end{itemize}}
\newcommand{\benu}{\begin{enumerate}}
\newcommand{\eenu}{\end{enumerate}}
\newcommand{\bde}{\begin{description}}
\newcommand{\ede}{\end{description}}
\newcommand{\bquo}{\begin{quote}}
\newcommand{\equo}{\end{quote}}
\newcommand{\bquot}{\begin{quotation}}
\newcommand{\equot}{\end{quotation}}

\newcommand{\eqref}[1]{(\ref{#1})}
\newcommand{\beq}{\begin{equation}}
\newcommand{\eeq}{\end{equation}}
\newcommand{\ben}{\begin{eqnarray}}
\newcommand{\een}{\end{eqnarray}}
\newcommand{\be}{\begin{eqnarray*}}
\newcommand{\ee}{\end{eqnarray*}}
\newcommand{\bea}{\begin{eqalph}}
\newcommand{\eea}{\end{eqalph}}

\newcommand{\nw}[1]{[{#1}]}
\newcommand{\cB}{{\cal B}}
\newcommand{\gU}{{\got U}}
\newcommand{\gO}{{\got O}}
\newcommand{\gA}{{\got A}}

\newcommand{\gP}{{\got P}}
\newcommand{\cG}{{\got g}}
\newcommand{\gd}{{\got d}}
\newcommand{\gj}{{\got J}}

\newcommand{\gS}{{\got S}}

\newcommand{\gR}{{\got R}}

\newcommand{\gf}{{\got f}}

\newcommand{\cJ}{{\cal J}}
\newcommand{\cA}{{\cal A}}
\newcommand{\cO}{{\cal O}}
\newcommand{\cT}{{\cal T}}
\newcommand{\cP}{{\cal P}}
\newcommand{\cR}{{\cal R}}

\newcommand{\cV}{{\cal V}}

\newcommand{\cE}{{\cal E}}
\newcommand{\cH}{{\cal H}}

\newcommand{\cC}{{\cal C}}
\newcommand{\cZ}{{\cal Z}}
\newcommand{\cI}{{\cal I}}

\newcommand{\cK}{{\cal K}}
\newcommand{\cM}{{\cal M}}
\newcommand{\cN}{{\cal N}}
\newcommand{\ccG}{{\cal G}}

\newcommand{\cS}{{\cal S}}
\newcommand{\cD}{{\cal D}}

\newcommand{\bL}{{\bf L}}
\newcommand{\bb}{{\bf 1}}

\newcommand{\bp}{{\bf p}}

\newcommand{\al}{\alpha}
\newcommand{\bt}{\beta}
\newcommand{\dl}{\delta}
\newcommand{\la}{\lambda}
\newcommand{\La}{\Lambda}
\newcommand{\f}{\phi}
\newcommand{\vf}{\varphi}

\newcommand{\p}{\pi}

\newcommand{\om}{\omega}
\newcommand{\Om}{\Omega}
\newcommand{\m}{\mu}

\newcommand{\g}{\gamma}
\newcommand{\G}{\Gamma}
\newcommand{\e}{\epsilon}
\newcommand{\ve}{\varepsilon}
\newcommand{\thh}{\theta}

\newcommand{\up}{\upsilon}
\newcommand{\vt}{\vartheta}

\newcommand{\si}{\sigma}
\newcommand{\Si}{\Sigma}

\newcommand{\w}{\wedge}

\newcommand{\wt}{\widetilde}
\newcommand{\wh}{\widehat}
\newcommand{\ol}{\overline}

\newcommand{\dr}{\partial}

\newcommand{\mar}[1]{}

\newcommand{\lto}{{\leftarrow}}

\newcommand{\lla}{\op\longleftarrow}
\newcommand{\ar}{\op\longrightarrow}

\newcommand{\ot}{\otimes}

\let\ssection=\section
\renewcommand{\section}{\setcounter{equation}{0}\ssection}

\newcounter{eqalph}[section]
\newcounter{equationa}[section]
\newcounter{example}[section]
\newcounter{remark}[section]
\newcounter{theorem}[section]
\newcounter{proposition}[section]
\newcounter{lemma}[section]
\newcounter{corollary}[section]
\newcounter{definition}[section]

\setcounter{example}{1}
\setcounter{remark}{1}
\setcounter{theorem}{1}
\setcounter{proposition}{1}
\setcounter{lemma}{1}
\setcounter{corollary}{1}
\setcounter{definition}{1}

\def\theremark{\arabic{chapter}.\arabic{section}.\arabic{remark}}

\def\thedefinition{\arabic{chapter}.\arabic{section}.\arabic{definition}}

\newenvironment{proof}{\noindent {\it Outline of proof:}\small
}{{\footnotesize\it
QED}\medskip}
\newenvironment{ex}{\refstepcounter{remark}\medskip\noindent{\bf
Example
\theremark:}\small }{$\diamondsuit$ \medskip}
\newenvironment{rem}{\refstepcounter{remark}\medskip\noindent{\bf
Remark
\theremark:}\small }{$\diamondsuit$\medskip}
\newenvironment{theo}{\refstepcounter{definition}
\medskip\noindent{\sc Theorem \thedefinition}:}{$\Box$\medskip}
\newenvironment{prop}{\refstepcounter{definition}\medskip\noindent{\sc
Proposition \thedefinition}:}{$\Box$\medskip}

\newenvironment{defi}{\refstepcounter{definition}\medskip\noindent{\sc
Definition \thedefinition}:}{$\Box$\medskip}

\newenvironment{eqalph}{\stepcounter{equation}
\setcounter{equationa}{\value{equation}}
\setcounter{equation}{0}

\begin{eqnarray}}{\end{eqnarray}
\setcounter{equation}{\value{equationa}}}

\makeindex

\hyphenation{ma-ni-fold La-gran-gi-ans di-men-si-o-nal
-di-men-si-o-nal La-gran-gi-an Ha-mil-to-ni-an}

\begin{document}

\hbox{}

\thispagestyle{empty}

\setcounter{page}{0}

\vskip 3cm

\begin{center}

{\large \bf Lectures on Differential Geometry of Modules and
Rings}

\bigskip
\bigskip
\bigskip

{\sc G. Sardanashvily}
\bigskip

Department of Theoretical Physics, Moscow State University,
Moscow, Russia

\bigskip
\bigskip
\bigskip
\bigskip
\bigskip
\bigskip

{\bf Abstract}
\bigskip
\end{center}

\noindent Generalizing differential geometry of smooth vector
bundles formulated in algebraic terms of the ring of smooth
functions, its derivations and the Koszul connection, one can
define differential operators, differential calculus and
connections on modules over arbitrary commutative, graded
commutative and noncommutative rings. For instance, this is the
case of quantum theory, SUSY theory and noncommutative geometry,
respectively. The relevant material on this subject is summarized.



\newpage

\centerline{\large \bf Contents}
\bigskip
\bigskip

\contentsline {chapter}{Introduction}{3} \contentsline
{chapter}{\numberline {1}Commutative geometry}{5} \contentsline
{section}{\numberline {1.1}Commutative algebra}{5} \contentsline
{section}{\numberline {1.2}Differential operators on modules and
rings}{10} \contentsline {section}{\numberline {1.3}Connections on
modules and rings}{13} \contentsline {section}{\numberline
{1.4}Differential calculus over a commutative ring}{17}
\contentsline {section}{\numberline {1.5}Local-ringed spaces}{19}
\contentsline {section}{\numberline {1.6}Differential geometry of
$C^\infty (X)$-modules}{21} \contentsline {section}{\numberline
{1.7}Connections on local-ringed spaces}{25} \contentsline
{chapter}{\numberline {2}Geometry of quantum systems}{29}
\contentsline {section}{\numberline {2.1}Geometry of Banach
manifolds}{29} \contentsline {section}{\numberline {2.2}Geometry
of Hilbert manifolds}{34} \contentsline {section}{\numberline
{2.3}Hilbert and $C^*$-algebra bundles}{40} \contentsline
{section}{\numberline {2.4}Connections on Hilbert and
$C^*$-algebra bundles}{44} \contentsline {section}{\numberline
{2.5}Instantwise quantization}{45} \contentsline
{section}{\numberline {2.6}Berry connection}{48} \contentsline
{chapter}{\numberline {3}Supergeometry}{53} \contentsline
{section}{\numberline {3.1}Graded tensor calculus}{53}
\contentsline {section}{\numberline {3.2}Graded differential
calculus and connections}{56} \contentsline {section}{\numberline
{3.3}Geometry of graded manifolds}{62} \contentsline
{section}{\numberline {3.4}Supermanifolds}{69} \contentsline
{section}{\numberline {3.5}Supervector bundles}{74} \contentsline
{section}{\numberline {3.6}Superconnections}{77} \contentsline
{chapter}{\numberline {4}Non-commutative geometry}{81}
\contentsline {section}{\numberline {4.1}Modules over
$C^*$-algebras}{81} \contentsline {section}{\numberline
{4.2}Non-commutative differential calculus}{84} \contentsline
{section}{\numberline {4.3}Differential operators in
non-commutative geometry}{87} \contentsline {section}{\numberline
{4.4}Connections in non-commutative geometry}{92} \contentsline
{section}{\numberline {4.5}Matrix geometry}{95} \contentsline
{section}{\numberline {4.6}Connes' non-commutative geometry}{97}
\contentsline {section}{\numberline {4.7}Differential calculus
over Hopf algebras}{100} \contentsline {chapter}{\numberline
{5}Appendix. Cohomology}{111} \contentsline {section}{\numberline
{5.1}Cohomology of complexes}{111} \contentsline
{section}{\numberline {5.2}Cohomology of Lie algebras}{114}
\contentsline {section}{\numberline {5.3}Sheaf cohomology}{116}
\contentsline {chapter}{Bibliography}{125} \contentsline
{chapter}{Index}{131}


\addcontentsline{toc}{chapter}{Introduction}

\chapter*{Introduction}

Geometry of classical mechanics and field theory is mainly
differential geometry of finite-dimensional smooth manifolds and
fibre bundles \cite{book09,book98,sard09}.

At the same time, the standard mathematical language of quantum
mechanics and field theory has been long far from geometry. In the
last twenty years, the incremental development of new physical
ideas in quantum theory (including super- and BRST symmetries,
geometric and deformation quantization, topological field theory,
anomalies, non-commutativity, strings and branes) has called into
play advanced geometric techniques, based on the deep interplay
between algebra, geometry and topology \cite{book05,sardgeom}.

Geometry in quantum systems speaks mainly the algebraic language
of rings, modules and sheaves due to the fact that the basic
ingredients in the differential calculus and differential geometry
of smooth manifolds (except non-linear differential operators) can
be restarted in a pure algebraic way.

Let $X$ be a smooth manifold and $C^\infty(X)$ the ring  of smooth
real functions on $X$. A key point is that, by virtue of the
well-known Serre--Swan theorem, a $C^\infty(X)$-module is finitely
generated projective iff it is isomorphic to the module of
sections of some smooth vector bundle over $X$. Moreover, this
isomorphism is a categorial equivalence. Therefore, differential
geometry of smooth vector bundles can be adequately formulated in
algebraic terms of the ring $C^\infty(X)$, its derivations and the
Koszul connection (Section 1.6).

In a general setting, let $\cK$ be a commutative ring, $\cA$ an
arbitrary commutative $\cK$-ring, and $P$, $Q$ some $\cA$-modules.
The $\cK$-linear $Q$-valued differential operators on $P$ can be
defined \cite{grot,kras}. The representative objects of the
functors $Q\to \dif_s(P,Q)$ are the jet modules $\cJ^sP$ of $P$.
Using the first order jet module $\cJ^1P$, one also restarts the
notion of a connection on an $\cA$-module $P$
\cite{kosz60,book00}. Such a connection assigns to each derivation
$\tau\in\gd\cA$ of a $\cK$-ring $\cA$ a first order $P$-valued
differential operator $\nabla_\tau$ on $P$ obeying the Leibniz
rule
\be
\nabla_\tau(ap)=\tau(a)p+a\nabla_\tau(p).
\ee
Similarly, connections on local-ringed spaces are introduced
(Section 1.7).

As was mentioned above, if $P$ is a $C^\infty(X)$-module of
sections of a smooth vector bundle $Y\to X$, we come to the
familiar notions of a linear differential operator on $Y$, the
jets of sections of $Y\to X$ and a linear connection on $Y\to X$.

In quantum theory, Banach and Hilbert manifolds, Hilbert bundles
and bundles of $C^*$-algebras over smooth manifolds are
considered. Their differential geometry also is formulated as
geometry of modules, in particular, $C^\infty(X)$-modules (Chapter
2).

Let $\cK$ be a commutative ring, $\cA$ a (commutative or
non-commutative) $\cK$-ring, and $\cZ(\cA)$ the center of $\cA$.
Derivations of $\cA$ make up a Lie $\cK$-algebra $\gd\cA$. Let us
consider the Chevalley--Eilenberg complex of $\cK$-multilinear
morphisms of $\gd\cA$ to $\cA$, seen as a $\gd\cA$-module
\cite{fuks,book00}. Its subcomplex $\cO^*(\gd\cA,d)$ of
$\cZ(\cA)$-multilinear morphisms is a differential graded algebra,
called the Chevalley--Eilenberg differential calculus over $\cA$.
It contains the minimal differential calculus $\cO^*\cA$ generated
by elements $da$, $a\in\cA$. If $\cA$ is the $\Bbb R$-ring
$C^\infty(X)$ of smooth real functions on a smooth manifold $X$,
the module $\gd C^\infty(X)$ of its derivations is the Lie algebra
of vector fields on $X$ and the Chevalley--Eilenberg differential
calculus over $C^\infty(X)$ is exactly the algebra of exterior
forms on a manifold $X$ where the Chevalley--Eilenberg coboundary
operator $d$ coincides with the exterior differential, i.e.,
$\cO^*(\gd C^\infty(X),d)$ is the familiar de Rham complex. In a
general setting, one therefore can think of elements of the
Chevalley--Eilenberg differential calculus $\cO^k(\gd\cA,d)$ over
an algebra $\cA$ as being differential forms over $\cA$.

Similarly, the Chevalley--Eilenberg differential calculus over a
graded commutative ring is constructed \cite{fuks,book09,sard09a}.
This is the case of SUSY theory and supergeometry (Chapter 3). In
supergeometry, connections on graded manifolds and supervector
bundles are defined as those on graded modules over a graded
commutative ring and graded local-ringed spaces are defined
\cite{bart,book00,sard09a}.

Non-commutative geometry also is developed as a generalization of
the calculus in commutative rings of smooth functions
\cite{conn,land}. In a general setting, any non-commutative
$\cK$-ring $\cA$ over a commutative ring $\cK$ can be called into
play. One can consider the above mentioned Chevalley--Eilenberg
differential calculus $\cO^*\cA$ over $\cA$, differential
operators and connections on $\cA$-modules (but not their jets)
(Chapter 4). If the derivation $\cK$-module $\gd\cA$ is a finite
projective module with respect to the center of $\cA$, one can
treat the triple $(\cA,\gd\cA,\cO^*\cA)$ as a non-commutative
space.

Note that, in non-commutative geometry, different definitions of a
differential operator on modules over a non-commutative ring have
been suggested \cite{bor97,dublmp,lunts}, but there is no a
satisfactory construction of higher order differential operator
\cite{book05,sard07} Roughly speaking, the difficulty lies in the
fact that, if $\dr$ is a derivation of a non-commutative
$\cK$-ring $\cA$, the product $a\dr$, $a\in\cA$, need not be so.
There are also different definitions of a connection on modules
over a non-commutative ring \cite{dub,land,book00}.

\chapter{Commutative geometry}

We start with differential calculus and differential geometry of
modules over commutative rings. In particular, this is the case of
differential geometry of smooth manifolds and smooth vector
bundles (Section 1.6).

\section{Commutative algebra}

In this Section, the relevant basics on modules over commutative
algebras is summarized  \cite{lang,mcl}.

An {\it algebra} $\cA$ \index{algebra} is an additive group which
is additionally provided with distributive multiplication. All
algebras  throughout the Lectures are associative, unless they are
Lie algebras.  A {\it ring} \index{ring} is a {\it unital}
algebra, \index{algebra!unital} i.e., it contains a unit element
$\bb$. Unless otherwise stated, we assume that $\bb\neq 0$, i.e.,
a ring does not reduce to the zero element. One says that $A$ is a
{\it division algebra} \index{division algebra} if it has no a
divisor of zero, i.e., $ab=0$, $a,b\in A$, implies either $a=0$ or
$b=0$. Non-zero elements of a ring form a multiplicative monoid.
If this multiplicative monoid is a multiplicative group, one says
that the ring has a multiplicative inverse. A ring $A$ has a
multiplicative inverse iff it is a division algebra. A {\it field}
\index{field} is a commutative ring whose non-zero elements make
up a multiplicative group.

A subset $\cI$ of an algebra $\cA$ is called a left (resp. right)
{\it ideal} \index{ideal} if it is a subgroup of the additive
group $\cA$ and $ab\in \cI$ (resp. $ba\in\cI$) for all $a\in \cA$,
$b\in \cI$. If $\cI$ is both a left and right ideal, it is called
a two-sided ideal. An ideal is a subalgebra, but a {\it proper}
ideal \index{ideal!proper} (i.e., $\cI\neq \cA$) of a ring is not
a subring because it does not contain a unit element.

Let $\cA$ be a commutative ring. Of course, its ideals are
two-sided. Its proper ideal is said to be {\it maximal}
\index{ideal!maximal} if it does not belong to another proper
ideal. A commutative ring $\cA$ is called {\it local} \index{local
ring} if it has a unique maximal ideal. This ideal consists of all
non-invertible elements of $\cA$. A proper two-sided ideal $\cI$
of a commutative ring is called {\it prime} \index{ideal!prime} if
$ab\in \cI$ implies either $a\in\cI$ or $b\in\cI$. Any maximal
two-sided ideal is prime. Given a two-sided ideal $\cI\subset
\cA$, the additive factor group $\cA/\cI$ is an algebra, called
the {\it factor algebra}. \index{factor algebra} If $\cA$ is a
ring, then $\cA/\cI$ is so. If $\cI$ is a prime ideal, the factor
ring $\cA/\cI$ has no divisor of zero, and it is a field if $\cI$
is a maximal ideal.

Given an algebra $\cA$, an additive group $P$ is said to be a left
(resp. right) {\it $\cA$-module} \index{module} if it is provided
with distributive multiplication $\cA\times P\to P$ by elements of
$\cA$ such that $(ab)p=a(bp)$ (resp. $(ab)p=b(ap)$) for all
$a,b\in\cA$ and $p\in P$.  If $\cA$ is a ring, one additionally
assumes that $\bb p=p=p\bb$ for all $p\in P$. Left and right
module structures are usually written by means of left and right
multiplications $(a,p)\mapsto ap$ and $(a,p)\mapsto pa$,
respectively. If $P$ is both a left module over an algebra $\cA$
and a right module over an algebra $\cA'$, it is called an
$(\cA-\cA')$-bimodule (an $\cA$-bimodule if $\cA=\cA'$).
\index{bimodule} If $\cA$ is a commutative algebra, an
$(\cA-\cA)$-bimodule $P$ is said to be {\it commutative}
\index{bimodule!commutative} if $ap=pa$ for all $a\in \cA$ and
$p\in P$. Any left or right module over a commutative algebra
$\cA$ can be brought into a commutative bimodule. Therefore,
unless otherwise stated, any module over a commutative algebra
$\cA$ is called an $\cA$-module.

A module over a field is called a {\it vector space}.
\index{vector space} If an algebra $\cA$ is a module over a ring
$\cK$, it is said to be a {\it $\cK$-algebra}.
\index{$\cK$-algebra} Any algebra can be seen as a $\Bbb
Z$-algebra.

\begin{rem} \label{w120} \mar{w120}
Any $\cK$-algebra $\cA$ can be extended to a unital algebra $\wt
\cA$ by the adjunction of the identity $\bb$ to $\cA$. The algebra
$\wt\cA$, called the {\it unital extension} \index{unital
extension} of $\cA$, is defined as the direct sum of $\cK$-modules
$\cK\oplus \cA$ provided with the multiplication
\be
(\la_1,a_1)(\la_2,a_2)=(\la_1\la_2,\la_1a_2+\la_2 a_1 +a_1a_2),
\quad \la_1,\la_2 \in \cK, \quad a_1,a_2\in \cA.
\ee
Elements of $\wt\cA$ can be written as $(\la,a)=\la\bb+a$,
$\la\in\cK$, $a\in\cA$.

Let us note that, if $\cA$ is a unital algebra, the identity
$\bb_\cA$ in $\cA$ fails to be that in $\wt\cA$. In this case, the
algebra $\wt\cA$ is isomorphic to the product of $\cA$ and the
algebra $\cK(\bb-\bb_\cA)$.
\end{rem}

In this Chapter, all associative algebras are assumed to be
commutative, unless they are graded.

The following are standard constructions of new modules from old
ones.

$\bullet$ The {\it direct sum} \index{direct sum!of modules}
$P_1\oplus P_2$ of $\cA$-modules $P_1$ and $P_2$ is the additive
group $P_1\times P_2$ provided with the $\cA$-module structure
\be
a(p_1,p_2)=(ap_1,ap_2), \qquad p_{1,2}\in P_{1,2}, \qquad a\in\cA.
\ee
Let $\{P_i\}_{i\in I}$ be a set of modules. Their direct sum
$\oplus P_i$ consists of elements $(\ldots, p_i,\ldots)$ of the
Cartesian product $\prod P_i$ such that $p_i\neq 0$ at most for a
finite number of indices $i\in I$.

$\bullet$ The {\it tensor product} \index{tensor product!of
modules} $P\ot Q$ of $\cA$-modules $P$ and $Q$ is an additive
group which is generated by elements $p\ot q$, $p\in P$, $q\in Q$,
obeying the relations
\be
&& (p+p')\ot q =p\ot q + p'\ot q, \quad p\ot(q+q')=p\ot q+p\ot q', \\
&&  pa\ot q= p\ot aq, \qquad p\in P, \qquad q\in Q, \qquad a\in\cA,
\ee
and it is provided with the $\cA$-module structure
\be
a(p\ot q)=(ap)\ot q=p\ot (qa)=(p\ot q)a.
\ee
If the ring $\cA$ is treated as an $\cA$-module, the tensor
product  $\cA\ot_\cA Q$ is canonically isomorphic to $Q$ via the
assignment
\be
\cA\ot_\cA Q\ni a\ot q \leftrightarrow aq\in Q.
\ee

$\bullet$ Given a submodule $Q$ of an $\cA$-module $P$, the
quotient $P/Q$ of the additive group $P$ with respect to its
subgroup $Q$ is also provided with an $\cA$-module structure. It
is called a {\it factor module}. \index{factor module}

$\bullet$ The set $\hm_\cA(P,Q)$ of $\cA$-linear morphisms of an
$\cA$-module $P$ to an $\cA$-module $Q$ is naturally an
$\cA$-module. The $\cA$-module $P^*=\hm_\cA(P,\cA)$ is called the
{\it dual} \index{module!dual} of an $\cA$-module $P$. There is a
natural monomorphism  $P\to P^{**}$.

An $\cA$-module $P$ is called {\it free} \index{module!free} if it
has a {\it basis}, \index{basis!for a module} i.e., a linearly
independent subset $I\subset P$ spanning $P$ such that each
element of $P$ has a unique expression as a linear combination of
elements of $I$ with a finite number of non-zero coefficients from
an algebra $\cA$. Any vector space is free. Any module is
isomorphic to a quotient of a free module. A module is said to be
{\it finitely generated} \index{module!finitely generated} (or of
{\it finite rank}) \index{module!of finite rank} if it is a
quotient of a free module with a finite basis.

One says that a module $P$ is {\it projective}
\index{module!projective} if it is a direct summand of a free
module, i.e., there exists a module $Q$ such that $P\oplus Q$ is a
free module. A module $P$ is projective iff $P=\bp S$ where $S$ is
a free module and $\bp$ is a projector of $S$, i.e., $\bp^2=\bp$.
If $P$ is a projective module of finite rank over a ring, then its
dual $P^*$ is so, and $P^{**}$ is isomorphic to $P$.

\begin{theo}
Any projective module over a local ring is free.
\end{theo}

Now we focus on exact sequences, direct and inverse limits of
modules \cite{mcl,massey}.

A composition of module morphisms
\be
P\ar^i Q\ar^j T
\ee
is said to be {\it exact} \index{exact sequence!of modules} at $Q$
if $\Ker j=\im i$. A composition of module morphisms
\mar{spr13}\beq
0\to P\ar^i Q\ar^j T\to 0 \label{spr13}
\eeq
is called a {\it short exact sequence} \index{exact
sequence!short} if it is exact at all the terms $P$, $Q$, and $T$.
This condition implies that: (i) $i$ is a monomorphism,  (ii)
$\Ker j=\im i$, and (iii) $j$ is an epimorphism onto the quotient
$T=Q/P$.

\begin{theo} \label{spr183} \mar{spr183}
Given an exact sequence of modules (\ref{spr13}) and another
$\cA$-module $R$, the sequence of modules
\mar{spr900}\beq
0\to\hm_\cA(T,R)\ar^{j^*} \hm_\cA(Q,R)\ar^{i^*} \hm(P,R)
\label{spr900}
\eeq
is exact at the first and second terms, i.e., $j^*$ is a
monomorphism, but $i^*$ need not be an epimorphism.
\end{theo}

One says that the exact sequence (\ref{spr13}) is {\it split}
\index{splitting of an exact sequence} if there exists a
monomorphism $s:T\to Q$ such that $j\circ s=\id T$ or,
equivalently,
\be
Q=i(P)\oplus s(T) \cong P\oplus T.
\ee
The exact sequence (\ref{spr13}) is always split if $T$ is a
projective module.

A {\it directed set} \index{directed set} $I$ is a set with an
order relation $<$ which satisfies the following three conditions:

(i) $i<i$, for all $i\in I$;

(ii) if $i<j$ and $j< k$, then $i<k$;

(iii) for any $i,j\in I$, there exists $k\in I$ such that $i<k$
and $j<k$.

\noindent It may happen that $i\neq j$, but $i<j$ and $j<i$
simultaneously.

A family of modules $\{P_i\}_{i\in I}$ (over the same algebra),
indexed by a directed set $I$, is called a {\it direct system}
\index{direct system of modules} if, for any pair $i<j$, there
exists a morphism $r^i_j:P_i\to P_j$ such that
\be
r^i_i=\id P_i, \qquad r^i_j\circ r^j_k=r^i_k, \qquad i<j<k.
\ee
A direct system of modules admits a {\it direct limit}.
\index{direct limit} This is a module $P_\infty$ together with
morphisms $r^i_\infty: P_i\to P_\infty$ such that
$r^i_\infty=r^j_\infty\circ r^i_j$ for all $i<j$. The module
$P_\infty$ consists of elements of the direct sum
$\op\oplus_{I}P_i$ modulo the identification of elements of $P_i$
with their images in $P_j$ for all $i<j$. An example of a direct
system is a {\it direct sequence} \index{direct sequence}
\mar{spr1}\beq
P_0\ar P_1\ar \cdots P_i\ar^{r^i_{i+1}}\cdots, \qquad I=\Bbb N.
\label{spr1}
\eeq
It should be noted that direct limits also exist in the categories
of commutative algebras and rings, but not in categories whose
objects are non-Abelian groups.

\begin{theo} \label{spr170} \mar{spr170}
Direct limits commute with direct sums and tensor products of
modules. Namely, let $\{P_i\}$ and $\{Q_i\}$ be two direct systems
of modules over the same algebra which are indexed by the same
directed set $I$, and let $P_\infty$ and $Q_\infty$ be their
direct limits. Then the direct limits of the direct systems
$\{P_i\oplus Q_i\}$ and $\{P_i\ot Q_i\}$ are $P_\infty\oplus
Q_\infty$ and $P_\infty\ot Q_\infty$, respectively.
\end{theo}

A morphism of a direct system $\{P_i, r^i_j\}_I$ to a direct
system $\{Q_{i'}, \rho^{i'}_{j'}\}_{I'}$ consists of an order
preserving map $f:I\to I'$ and morphisms $F_i:P_i\to Q_{f(i)}$
which obey the compatibility conditions
\be
\rho^{f(i)}_{f(j)}\circ F_i=F_j\circ r^i_j.
\ee
If $P_\infty$ and $Q_\infty$ are limits of these direct systems,
there exists a unique morphism $F_\infty: P_\infty\to Q_\infty$
such that
\be
\rho^{f(i)}_\infty\circ F_i=F_\infty\circ r^i_\infty.
\ee
Moreover, direct limits preserve monomorphisms and epimorphisms.
To be precise, if all $F_i:P_i\to Q_{f(i)}$ are monomorphisms or
epimorphisms, so is $\Phi_\infty:P_\infty\to Q_\infty$. As a
consequence, the following holds.

\begin{theo} \label{dlim1} \mar{dlim1}
Let short exact sequences
\mar{spr186}\beq
0\to P_i\ar^{F_i} Q_i\ar^{\Phi_i} T_i\to 0 \label{spr186}
\eeq
for all $i\in I$ define a short exact sequence of direct systems
of modules $\{P_i\}_I$, $\{Q_i\}_I$, and $\{T_i\}_I$ which are
indexed by the same directed set $I$. Then there exists a short
exact sequence of their direct limits
\mar{spr187}\beq
0\to P_\infty\ar^{F_\infty} Q_\infty\ar^{\Phi_\infty} T_\infty\to
0. \label{spr187}
\eeq
\end{theo}

In particular, the direct limit of factor modules $Q_i/P_i$ is the
factor module $Q_\infty/P_\infty$. By virtue of Theorem
\ref{spr170}, if all the exact sequences (\ref{spr186}) are split,
the exact sequence (\ref{spr187}) is well.

\begin{ex} \label{ws40} \mar{ws40} Let $P$ be an $\cA$-module.
We denote $P^{\ot k}=\op\ot^kP$. Let us consider the direct system
of $\cA$-modules with respect to monomorphisms
\be
\cA\ar (\cA\oplus P)\ar \cdots  (\cA\oplus P\oplus\cdots\oplus
P^{\ot k})\ar \cdots.
\ee
Its direct limit
\mar{spr620}\beq
\ot P=\cA\oplus P\oplus\cdots\oplus P^{\ot k}\oplus\cdots
\label{spr620}
\eeq
is an $\Bbb N$-graded $\cA$-algebra with respect to the tensor
product $\ot$. It is called the {\it tensor algebra} \index{tensor
algebra} of a module $P$. Its quotient with respect to the ideal
generated by elements $p\ot p'+p'\ot p$, $p,p'\in P$, is an $\Bbb
N$-graded commutative algebra, called the {\it exterior algebra}
\index{exterior algebra} of a module $P$.
\end{ex}

We restrict our consideration of inverse systems of modules to
{\it inverse sequences} \index{inverse sequence}
\mar{spr2}\beq
P^0\lla P^1\lla \cdots P^i\op\lla^{\pi^{i+1}_i}\cdots.
\label{spr2}
\eeq
Its {\it inductive limit} \index{inductive limit} ({\it the
inverse limit}) \index{inverse limit} is a module $P^\infty$
together with morphisms $\pi^\infty_i: P^\infty\to P^i$ such that
$\pi^\infty_i=\pi^j_i\circ \pi^\infty_j$ for all $i<j$. It
consists of elements $(\ldots,p^i,\ldots)$, $p^i\in P^i$, of the
Cartesian product $\prod P^i$ such that $p^i=\pi^j_i(p^j)$ for all
$i<j$.

\begin{theo} \label{spr3} \mar{spr3}
Inductive limits preserve monomorphisms, but not epimorphisms. If
a sequence
\be
0\to P^i\ar^{F^i} Q^i\ar^{\Phi^i} T^i, \qquad i\in\Bbb N,
\ee
of inverse systems of modules $\{P^i\}$, $\{Q^i\}$ and $\{T^i\}$
is exact, so is the sequence of the inductive limits
\be
0\to P^\infty\ar^{F^\infty} Q^\infty\ar^{\Phi^\infty} T^\infty.
\ee
\end{theo}

In contrast with direct limits, the inductive ones exist in the
category of groups which are not necessarily commutative.

\begin{ex} \label{t1} \mar{t1}
Let $\{P_i\}$ be a direct sequence of modules. Given another
module $Q$, the modules $\hm(P_i,Q)$ make up an inverse system
such that its inductive limit is isomorphic to $\hm(P_\infty,Q)$.
\end{ex}

\section{Differential operators on modules and rings}

This Section addresses the notion of a (linear) differential
operator on a module over a commutative ring
\cite{grot,kras,book09}.

Let $\cK$ be a commutative ring and $\cA$ a commutative
$\cK$-ring. Let $P$ and $Q$ be $\cA$-modules. The $\cK$-module
$\hm_\cK (P,Q)$ of $\cK$-module homomorphisms $\Phi:P\to Q$ can be
endowed with the two different $\cA$-module structures
\mar{5.29}\beq
(a\Phi)(p)= a\Phi(p),  \qquad  (\Phi\bll a)(p) = \Phi (a p),\qquad
a\in \cA, \quad p\in P. \label{5.29}
\eeq
For the sake of convenience, we will refer to the second one as
the $\cA^\bll$-module structure. Let us put
\mar{spr172}\beq
\dl_a\Phi= a\Phi -\Phi\bll a, \qquad a\in\cA. \label{spr172}
\eeq

\begin{defi} \label{ws131} \mar{ws131}
An element $\Delta\in\hm_\cK(P,Q)$ is called a $Q$-valued {\it
differential operator} \index{differential operator!on a module}
of order $s$ on $P$ if
\be
\dl_{a_0}\circ\cdots\circ\dl_{a_s}\Delta=0
\ee
for any tuple of $s+1$ elements $a_0,\ldots,a_s$ of $\cA$.
\end{defi}

The set $\dif_s(P,Q)$ of these operators inherits the $\cA$- and
$\cA^\bll$-module structures (\ref{5.29}). Of course, an $s$-order
differential operator is also of $(s+1)$-order.

In particular, zero order differential operators obey the
condition
\be
\dl_a \Delta(p)=a\Delta(p)-\Delta(ap)=0, \qquad a\in\cA, \qquad
p\in P,
\ee
and, consequently, they coincide with $\cA$-module morphisms $P\to
Q$. A first order differential operator $\Delta$ satisfies the
condition
\mar{ws106}\beq
\dl_b\circ\dl_a\,\Delta(p)= ba\Delta(p) -b\Delta(ap)
-a\Delta(bp)+\Delta(abp) =0, \quad a,b\in\cA. \label{ws106}
\eeq

The following fact reduces the study of $Q$-valued differential
operators on an $\cA$-module $P$ to that of $Q$-valued
differential operators on the ring $\cA$.

\begin{prop} \label{ws109} \mar{ws109}
Let us consider the $\cA$-module morphism
\mar{n2}\beq
h_s: \dif_s(\cA,Q)\to Q, \qquad h_s(\Delta)=\Delta(\bb).
\label{n2}
\eeq
Any $Q$-valued $s$-order differential operator $\Delta\in
\dif_s(P,Q)$ on $P$ uniquely factorizes
\mar{n13}\beq
\Delta:P\ar^{\gf_\Delta} \dif_s(\cA,Q)\ar^{h_s} Q \label{n13}
\eeq
through the morphism $h_s$ (\ref{n2}) and some homomorphism
\mar{n0}\beq
\gf_\Delta: P\to \dif_s(\cA,Q), \qquad (\gf_\Delta
p)(a)=\Delta(ap), \qquad a\in \cA, \label{n0}
\eeq
of the $\cA$-module $P$ to the $\cA^\bll$-module $\dif_s(\cA,Q)$
\cite{kras}. The assignment $\Delta\mapsto\gf_\Delta$ defines the
isomorphism
\mar{n1}\beq
\dif_s(P,Q)=\hm_{\cA-\cA^\bll}(P,\dif_s(\cA,Q)). \label{n1}
\eeq
\end{prop}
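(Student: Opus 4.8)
The plan is to build $\gf_\Delta$ by hand, verify it lands in $\dif_s(\cA,Q)$, is $\cA$-to-$\cA^\bll$ linear, and factorizes $\Delta$ as in \eqref{n13}, and then to exhibit the inverse assignment $\psi\mapsto h_s\circ\psi$. The technical core is an elementary ``restriction'' identity. Fix $p\in P$ and let $\ell_p:\cA\to P$, $a\mapsto ap$; since $\cA$ is commutative this is a morphism of the $\cA$-module $\cA$ to the $\cA$-module $P$. First I would check that for every $\cK$-linear $\Theta:P\to Q$ and every $b\in\cA$ one has $\dl_b(\Theta\circ\ell_p)=(\dl_b\Theta)\circ\ell_p$ in $\hm_\cK(\cA,Q)$ — a one-line computation from \eqref{spr172}, where on the left $\dl_b$ is formed with $\cA$ regarded as a module over itself and on the right with the $\cA$-module $P$. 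Iterating gives $\dl_{a_0}\circ\cdots\circ\dl_{a_k}(\Theta\circ\ell_p)=(\dl_{a_0}\circ\cdots\circ\dl_{a_k}\Theta)\circ\ell_p$ for every tuple. Since $(\gf_\Delta p)(a)=\Delta(ap)$ means precisely $\gf_\Delta p=\Delta\circ\ell_p$, applying this with $\Theta=\Delta\in\dif_s(P,Q)$ and $k=s$ yields $\dl_{a_0}\circ\cdots\circ\dl_{a_s}(\gf_\Delta p)=0$, so $\gf_\Delta p\in\dif_s(\cA,Q)$. Additivity of $\gf_\Delta$ in $p$ is clear; module associativity plus commutativity give $\gf_\Delta(bp)=(\gf_\Delta p)\bll b$, so $\gf_\Delta$ is a morphism of the $\cA$-module $P$ to the $\cA^\bll$-module $\dif_s(\cA,Q)$; and $h_s(\gf_\Delta p)=(\gf_\Delta p)(\bb)=\Delta(p)$, which is exactly \eqref{n13}.

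Next I would obtain uniqueness and bijectivity together. If $\psi:P\to\dif_s(\cA,Q)$ is any $\cA$-to-$\cA^\bll$ morphism with $h_s\circ\psi=\Delta$, then $\psi(ap)=(\psi p)\bll a$, hence $(\psi p)(a)=\big((\psi p)\bll a\big)(\bb)=\big(\psi(ap)\big)(\bb)=h_s(\psi(ap))=\Delta(ap)$, so $\psi=\gf_\Delta$; this gives uniqueness of the factorization and, since $\Delta=h_s\circ\gf_\Delta$, injectivity of $\Delta\mapsto\gf_\Delta$. For surjectivity, take an arbitrary $\psi\in\hm_{\cA-\cA^\bll}(P,\dif_s(\cA,Q))$ and set $\Delta:=h_s\circ\psi\in\hm_\cK(P,Q)$. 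The computation just made shows $\psi p=\Delta\circ\ell_p$ for all $p$; since $\psi p\in\dif_s(\cA,Q)$, the restriction identity gives $\big(\dl_{a_0}\circ\cdots\circ\dl_{a_s}\Delta\big)\circ\ell_p=\dl_{a_0}\circ\cdots\circ\dl_{a_s}(\psi p)=0$, and evaluating at $\bb$ forces $\dl_{a_0}\circ\cdots\circ\dl_{a_s}\Delta=0$, so $\Delta\in\dif_s(P,Q)$ with $\gf_\Delta=\psi$. Hence $\Delta\mapsto\gf_\Delta$ is a bijection onto $\hm_{\cA-\cA^\bll}(P,\dif_s(\cA,Q))$. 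To finish I would note it is $\cA$- and $\cA^\bll$-linear for the obvious structures on the hom-module induced by the two structures on the target $\dif_s(\cA,Q)$: from $(\gf_{a\Delta}p)(b)=a\Delta(bp)$ and $(\gf_{\Delta\bll a}p)(b)=\Delta\big((ab)p\big)=(\gf_\Delta p)(ab)$ one reads off that $\gf_{a\Delta}$ and $\gf_{\Delta\bll a}$ are the corresponding translates of $\gf_\Delta$, which yields the claimed isomorphism \eqref{n1}.

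The one place needing care is bookkeeping of which of the two $\cA$-module structures underlies $\dl_b$ at each stage — on $\hm_\cK(\cA,Q)$ it is built from $\cA$ acting on itself, on $\hm_\cK(P,Q)$ from $\cA$ acting on $P$ — and checking that the single equivariant map $\ell_p$ intertwines the resulting operators; once the identity $\dl_b(\Theta\circ\ell_p)=(\dl_b\Theta)\circ\ell_p$ is in place everything else is formal. I expect no other obstacle: no finiteness or projectivity hypothesis on $P$ enters, and the factorization \eqref{n13} through $h_s$ \eqref{n2} is then immediate from $h_s(\gf_\Delta p)=\Delta(p)$.
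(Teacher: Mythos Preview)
Your proof is correct and complete. The paper does not actually supply its own proof of this proposition --- it simply states the result and cites \cite{kras} --- so there is no argument in the text to compare against. Your approach via the restriction identity $\dl_b(\Theta\circ\ell_p)=(\dl_b\Theta)\circ\ell_p$ is the natural one and handles both directions of the isomorphism cleanly; the bookkeeping of the two module structures is done correctly throughout.
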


Let $P=\cA$. Any zero order $Q$-valued differential operator
$\Delta$ on $\cA$ is defined by its value $\Delta(\bb)$. Then
there is an isomorphism $\dif_0(\cA,Q)=Q$ via the association
\be
Q\ni q\mapsto \Delta_q\in \dif_0(\cA,Q),
\ee
where $\Delta_q$ is given by the equality $\Delta_q(\bb)=q$. A
first order $Q$-valued differential operator $\Delta$ on $\cA$
fulfils the condition
\be
\Delta(ab)=b\Delta(a)+ a\Delta(b) -ba \Delta(\bb), \qquad
a,b\in\cA.
\ee
It is called a $Q$-valued {\it derivation} \index{derivation!with
values in a module} of $\cA$ if $\Delta(\bb)=0$, i.e., the {\it
Leibniz rule} \index{Leibniz rule}
\mar{+a20}\beq
\Delta(ab) = \Delta(a)b + a\Delta(b), \qquad  a,b\in \cA,
\label{+a20}
\eeq
holds. One obtains at once that any first order differential
operator on $\cA$ falls into the sum
\be
\Delta(a)= a\Delta(\bb) +[\Delta(a)-a\Delta(\bb)]
\ee
of the zero order differential operator $a\Delta(\bb)$ and the
derivation $\Delta(a)-a\Delta(\bb)$. If $\dr$ is a derivation of
$\cA$, then $a\dr$ is well for any $a\in \cA$. Hence, derivations
of $\cA$ constitute an $\cA$-module $\gd(\cA,Q)$, called the {\it
derivation module}. \index{derivation module} There is the
$\cA$-module decomposition
\mar{spr156'}\beq
\dif_1(\cA,Q) = Q \oplus\gd(\cA,Q). \label{spr156'}
\eeq

\begin{rem} \label{w70} \mar{w70}
Let us recall that, given a (non-commutative) $\cK$-algebra $\cA$
and an
 $\cA$-bimodule $Q$, by a $Q$-valued {\it derivation} of $\cA$
\index{derivation!of a non-commutative algebra} is meant a
$\cK$-module morphism $u:\cA\to Q$ which obeys the {\it Leibniz
rule} \index{Leibniz rule!non-commutative}
\mar{ws100}\beq
u(ab)=u(a)b +a u(b), \qquad a,b\in\cA. \label{ws100}
\eeq
It should be emphasized that this derivation rule differs from
that (\ref{ws10}) of graded derivations. A $Q$-valued derivation
$u$ of $A$ is called {\it inner} \index{derivation!inner} if there
exists an element $q\in Q$ such that $u(a)=qa-aq$.
\end{rem}

If $Q=\cA$, the derivation module $\gd\cA$ of $\cA$ is also a Lie
algebra over the ring $\cK$ with respect to the Lie bracket
\mar{+860}\beq
[u,u']=u\circ u' - u'\circ u, \qquad u,u'\in\cA. \label{+860}
\eeq
Accordingly, the decomposition (\ref{spr156'}) takes the form
\mar{spr156}\beq
\dif_1(\cA) = \cA \oplus\gd\cA. \label{spr156}
\eeq

An $s$-order differential operator on a module $P$ is represented
by a zero order differential operator on the module of $s$-order
jets of $P$ as follows.

Given an $\cA$-module $P$, let us consider the tensor product
$\cA\otimes_\cK P$ of $\cK$-modules $\cA$ and $P$. We put
\mar{spr173}\beq
\dl^b(a\otimes p)= (ba)\otimes p - a\otimes (b p), \qquad p\in P,
\qquad a,b\in\cA.  \label{spr173}
\eeq
Let us denote by $\m^{k+1}$ the submodule of $\cA\ot_\cK P$
generated by elements of the type
\be
\dl^{b_0}\circ \cdots \circ\dl^{b_k}(a\otimes p).
\ee
The {\it $k$-order jet module} $\cJ^k(P)$ \index{jet module} of a
module $P$ is defined as the quotient of the $\cK$-module
$\cA\otimes_\cK P$ by $\m^{k+1}$. We denote its elements
$c\ot_kp$.

In particular, the first order jet module $\cJ^1(P)$ consists of
elements $c\ot_1 p$ modulo the relations
\mar{mos041}\beq
\dl^a\circ \dl^b(\bb\ot_1 p)= ab\otimes_1 p -b\otimes_1 (ap)
-a\otimes_1 (bp) +\bb\ot_1(abp) =0. \label{mos041}
\eeq

The $\cK$-module $\cJ^k(P)$ is endowed with the $\cA$- and
$\cA^\bll$-module structures
\mar{+a21}\beq
b(a\ot_k p)= ba\ot_k p, \qquad b\bll(a\otimes_k p)= a\otimes_k
(bp). \label{+a21}
\eeq
There exists the module morphism
\mar{5.44}\beq
J^k: P\ni p\mapsto \bb\otimes_k p\in \cJ^k(P) \label{5.44}
\eeq
of the $\cA$-module $P$ to the $\cA^\bll$-module $\cJ^k(P)$ such
that $\cJ^k(P)$, seen as an $\cA$-module, is generated by elements
$J^kp$, $p\in P$.

Due to the natural monomorphisms $\m^r\to \m^s$ for all $r>s$,
there are $\cA$-module epimorphisms of jet modules
\mar{t4}\beq
\pi^{i+1}_i: \cJ^{i+1}(P)\to \cJ^i(P). \label{t4}
\eeq
In particular,
\mar{+a13}\beq
\pi^1_0:\cJ^1(P) \ni a\ot_1 p\mapsto ap \in P.\label{+a13}
\eeq

The above mentioned relation between differential operators on
modules and jets of modules is stated by the following theorem
\cite{kras}.

\begin{theo} \label{t6} \mar{t6}
Any $Q$-valued differential operator $\Delta$ of order $k$ on an
$\cA$-module $P$ factorizes uniquely
\be
\Delta: P\ar^{J^k} \cJ^k(P)\ar Q
\ee
through the morphism $J^k$ (\ref{5.44}) and some $\cA$-module
homomorphism ${\got f}^\Delta: \cJ^k(P)\to Q$.
\end{theo}

The proof is based on the fact that the morphism $J^k$
(\ref{5.44}) is a $k$-order $\cJ^k(P)$-valued differential
operator on $P$. Let us denote
\be
J: P\ni p\mapsto \bb\ot p\in \cA\ot P.
\ee
Then, for any ${\got f}\in\hm_\cA (\cA\ot P,Q)$, we obtain
\be
\dl_b({\got f}\circ J)(p)={\got f}(\dl^b(\bb\ot p)).
\ee
The correspondence $\Delta\mapsto {\got f}^\Delta$ defines an
$\cA$-module isomorphism
\mar{5.50}\beq
\dif_s(P,Q)=\hm_{\cA}(\cJ^s(P),Q). \label{5.50}
\eeq

\section{Connections on modules and rings}

We employ  the jets of modules in previous Section in order to
introduce connections on modules and commutative rings
\cite{book00}.

Let us consider the jet modules $\cJ^s=\cJ^s(\cA)$ of the ring
$\cA$ itself. In particular, the first order jet module $\cJ^1$
consists of the elements $a\otimes_1 b$, $a,b\in\cA$, subject to
the relations
\mar{5.53}\beq
ab\otimes_1 \bb -b\otimes_1 a -a\otimes_1 b +\bb\ot_1(ab) =0.
\label{5.53}
\eeq
The $\cA$- and $\cA^\bll$-module structures (\ref{+a21}) on
$\cJ^1$ read
\be
c(a\ot_1 b)=(ca)\ot_1 b,\qquad c\bll(a\ot_1 b)=
a\ot_1(cb)=(a\ot_1b)c.
\ee
Besides the monomorphism
\be
J^1: \cA\ni a\mapsto \bb\otimes_1 a\in \cJ^1
\ee
(\ref{5.44}), there is the $\cA$-module monomorphism
\be
i_1: \cA \ni a  \mapsto a\otimes_1 \bb\in \cJ^1.
\ee
With these monomorphisms, we have the canonical $\cA$-module
splitting
\mar{mos058}\ben
&& \cJ^1=i_1(\cA)\oplus \cO^1, \label{mos058} \\
&& aJ^1(b)= a\ot_1 b=ab\ot_1\bb + a(\bb\ot_1 b- b\ot_1\bb),
\nonumber
\een
where the $\cA$-module $\cO^1$ is generated by the elements
$\bb\ot_1 b-b\ot_1 \bb$ for all $b\in\cA$. Let us consider the
corresponding $\cA$-module epimorphism
\mar{+216}\beq
h^1:\cJ^1\ni \bb\ot_1 b\mapsto \bb\ot_1 b-b\ot_1 \bb\in \cO^1
\label{+216}
\eeq
and the composition
\mar{mos045}\beq
d^1=h^1\circ J_1: \cA \ni b \mapsto \bb\ot_1 b- b\ot_1\bb \in
\cO^1, \label{mos045}
\eeq
which is a $\cK$-module morphism. This is a $\cO^1$-valued
derivation of the $\cK$-ring $\cA$ which obeys the Leibniz rule
\be
d^1(ab)= \bb\ot_1 ab-ab\ot_1\bb +a\ot_1 b  -a\ot_1 b  =ad^1b +
(d^1a)b.
\ee
It follows from the relation (\ref{5.53}) that $ad^1b=(d^1b)a$ for
all $a,b\in \cA$. Thus, seen as an $\cA$-module, $\cO^1$ is
generated by the elements $d^1a$ for all $a\in\cA$.

Let $\cO^{1*}=\hm_{\cA}(\cO^1,\cA)$ be the dual  of the
$\cA$-module $\cO^1$. In view of the splittings (\ref{spr156}) and
(\ref{mos058}), the isomorphism (\ref{5.50}) reduces to the
duality relation
\mar{5.81,a}\ben
&& \gd\cA=\cO^{1*}, \label{5.81}\\
&& \gd\cA\ni u\leftrightarrow
   \f_u\in \cO^{1*}, \qquad \f_u(d^1a)=u(a), \qquad  a\in \cA.
   \label{5.81a}
\een
In a more direct way (see Proposition \ref{w130} below), the
isomorphism (\ref{5.81}) is derived from the facts that $\cO^1$ is
generated by elements $d^1a$, $a\in\cA$, and that $\f(d^1a)$ is a
derivation of $\cA$ for any $\f\in\cO^{1*}$. However, the morphism
\be
\cO^1 \to \cO^{1**}=\gd\cA^*
\ee
need not be an isomorphism.

Let us define the modules $\cO^k$, $k=2,\ldots$, as the exterior
products of the $\cA$-module $\cO^1$. There are the higher degree
generalizations
\mar{5.4}\ben
&& h^k: \cJ^1(\cO^{k-1})\to \cO^k, \nonumber\\
&& d^k=h^k\circ J^1: \cO^{k-1}\to \cO^k \label{5.4}
\een
of the morphisms (\ref{+216}) and (\ref{mos045}). The operators
(\ref{5.4}) are nilpotent, i.e., $d^k\circ d^{k-1}=0$. They  form
the cochain complex
\mar{55.63}\beq
0\to \cK\ar\cA\ar^{d^1}\cO^1\ar^{d^2} \cdots  \cO^k\ar^{d^{k+1}}
\cdots\,. \label{55.63}
\eeq

Let us return to the first order jet module $\cJ^1(P)$ of an
$\cA$-module $P$. It is isomorphic to the tensor product
\mar{mos074}\beq
\cJ^1(P)=\cJ^1\ot P, \qquad (a\ot_1 bp) \leftrightarrow (a\ot_1
b)\ot p. \label{mos074}
\eeq
Then the isomorphism (\ref{mos058}) leads to the splitting
\mar{mos071}\ben
&& \cJ^1(P)= (\cA\oplus \cO^1)\ot P=
(\cA \ot P)\oplus (\cO^1\ot P), \label{mos071}\\
&& a\ot_1 bp\leftrightarrow  (ab +ad^1(b))\ot p. \nonumber
\een
Applying the epimorphism $\pi^1_0$ (\ref{+a13}) to this splitting,
one obtains the short exact sequence of $\cA$- and
$\cA^\bll$-modules
\mar{+175}\ben
&& 0\ar \cO^1\ot P\to \cJ^1(P)\ar^{\pi^1_0} P\ar 0, \label{+175}\\
&&  (a\ot_1 b -ab\ot_1 \bb)\ot p\to
(c\ot_1 \bb+ a\ot_1 b -ab\ot_1 \bb)\ot p \to cp. \nonumber
\een
This exact sequence is canonically split by the $\cA^\bll$-module
morphism
\be
P\ni ap \mapsto \bb\ot ap= a\ot p + d^1(a)\ot p\in\cJ^1(P).
\ee
However, it need not be split by an $\cA$-module morphism, unless
$P$ is a projective $\cA$-module.

\begin{defi} \label{+176} \mar{+176}
A {\it connection} \index{connection!on a module} on an
$\cA$-module $P$ is defined as an $\cA$-module morphism
\mar{+179'}\beq
\G:P\to \cJ^1(P), \qquad \G (ap)=a\G(p), \label{+179'}
\eeq
which splits the exact sequence (\ref{+175}) or, equivalently, the
exact sequence
\mar{+183}\beq
0\to \cO^1\ot P\to (\cA\oplus \cO^1)\ot P\to P\to 0. \label{+183}
\eeq
\end{defi}

If a splitting $\G$ (\ref{+179'}) exists, it reads
\mar{+178}\beq
J^1p=\G(p) + \nabla(p), \label{+178}
\eeq
where $\nabla$ is the complementary morphism
\mar{+179}\beq
\nabla: P\to \cO^1\ot P, \qquad \nabla(p)= \bb\ot_1 p-
\G(p).\label{+179}
\eeq
Though this complementary morphism in fact is a {\it covariant
differential} \index{covariant differential!on a module} on the
module $P$, it is traditionally called a connection on a module.
It satisfies the {\it Leibniz rule} \index{Leibniz rule}
\mar{+180}\beq
\nabla(ap)= d^1a \ot p +a\nabla(p), \label{+180}
\eeq
i.e., $\nabla$ is  an $(\cO^1\ot P)$-valued first order
differential operator on $P$. Thus, we come to the equivalent
definition of a connection \cite{kosz60}.

\begin{defi} \label{+181} \mar{+181}
A {\it connection} \index{connection!on a module} on an
$\cA$-module $P$ is a $\cK$-module morphism $\nabla$ (\ref{+179})
which obeys the Leibniz rule (\ref{+180}). Sometimes, it is called
the {\it Koszul connection}. \index{Koszul connection}
\end{defi}

The morphism $\nabla$ (\ref{+179}) can be extended naturally to
the morphism
\be
\nabla: \cO^1\ot P\to \cO^2\ot P.
\ee
Then we have the morphism
\mar{+101}\beq
R=\nabla^2: P\to \cO^2 \ot P, \label{+101}
\eeq
called the {\it curvature} \index{curvature!of a connection!on a
module} of the connection $\nabla$ on a module $P$.

In view of the isomorphism (\ref{5.81}), any connection in
Definition \ref{+181} determines a connection in the following
sense.

\begin{defi} \label{1016} \mar{1016}
A {\it connection} \index{connection!on a module} on an
$\cA$-module $P$ is an $\cA$-module morphism
\mar{1017}\beq
\gd\cA\ni u\mapsto \nabla_u\in \dif_1(P,P) \label{1017}
\eeq
such that the first order differential operators $\nabla_u$ obey
the {\it Leibniz rule} \index{Leibniz rule}
\mar{1018}\beq
\nabla_u (ap)= u(a)p+ a\nabla_u(p), \quad a\in \cA, \quad p\in P.
\label{1018}
\eeq
\end{defi}

Definitions \ref{+181} and \ref{1016} are equivalent if
$\cO^1=\gd\cA^*$.

The {\it curvature} \index{curvature!of a connection!on a module}
of the connection (\ref{1017}) is defined as a zero order
differential operator
\mar{+100}\beq
R(u,u')=[\nabla_u,\nabla_{u'}] -\nabla_{[u,u']} \label{+100}
\eeq
on the module $P$ for all $u,u'\in \gd\cA$.

Let $P$ be a commutative $\cA$-ring and $\gd P$ the derivation
module of $P$ as a $\cK$-ring. Definition \ref{1016} is modified
as follows.

\begin{defi} \label{mos088} \mar{mos088}
A {\it connection} \index{connection!on a ring} on an $\cA$-ring
$P$ is an $\cA$-module morphism
\mar{mos090}\beq
\gd\cA\ni u\mapsto \nabla_u\in \gd P, \label{mos090}
\eeq
which is a connection on $P$ as an $\cA$-module, i.e., obeys the
Leinbniz rule (\ref{1018}).
\end{defi}

Two such connections $\nabla_u$ and $\nabla'_u$ differ from each
other in a derivation of the $\cA$-ring $P$, i.e., which vanishes
on $\cA\subset P$. The curvature of the connection (\ref{mos090})
is given by the formula (\ref{+100}).

\section{Differential calculus over a commutative ring}

In a general setting, the de Rham complex is defined as a cochain
complex which is also a differential graded algebra. By a
gradation throughout this Section is meant the $\Bbb N$-gradation.

A {\it graded algebra} \index{algebra!$\Bbb N$-graded} $\Om^*$
over a commutative ring $\cK$ is defined as a direct sum
\be
\Om^*= \op\oplus_k \Om^k
\ee
of $\cK$-modules $\Om^k$, provided with an associative
multiplication law $\al\cdot\bt$, $\al,\bt\in \Om^*$, such that
$\al\cdot\bt\in \Om^{|\al|+|\bt|}$, where $|\al|$ denotes the
degree of an element $\al\in \Om^{|\al|}$. In particular, it
follows that $\Om^0$ is a (non-commutative) $\cK$-algebra $\cA$,
while $\Om^{k>0}$ are $\cA$-bimodules and $\Om^*$ is an
$(\cA-\cA)$-algebra. A graded algebra  is said to be {\it graded
commutative} \index{algebra!$\Bbb N$-graded!commutative} if
\be
\al\cdot\bt=(-1)^{|\al||\bt|}\bt\cdot \al, \qquad \al,\bt\in
\Om^*.
\ee

A graded algebra $\Om^*$ is called a {\it differential graded
algebra} \index{differential graded algebra} if it is a cochain
complex of $\cK$-modules
\mar{spr260}\beq
0\to \cK\ar\cA\ar^\dl\Om^1\ar^\dl\cdots\Om^k\ar^\dl\cdots
\label{spr260}
\eeq
with respect to a coboundary operator $\dl$ which obeys the {\it
graded Leibniz rule} \index{Leibniz rule!graded}
\mar{1006}\beq
\dl(\al\cdot\bt)=\dl\al\cdot\bt +(-1)^{|\al|}\al\cdot \dl\bt.
\label{1006}
\eeq
In particular, $\dl:\cA\to \Om^1$ is a $\Om^1$-valued derivation
of a $\cK$-algebra $\cA$.

The cochain complex (\ref{spr260}) is the above mentioned {\it de
Rham complex} \index{de Rham complex} of the differential graded
algebra $(\Om^*,\dl)$. This algebra is also said to be a {\it
differential calculus} \index{differential calculus} over $\cA$.
Cohomology $H^*(\Om^*)$ of the complex (\ref{spr260}) is called
the {\it de Rham cohomology} \index{de Rham cohomology} of a
differential graded algebra. It is a graded algebra with respect
to the {\it cup-product} \index{cup-product}
\mar{spr268}\beq
[\al]\smile [\bt]=[\al\cdot\bt], \label{spr268}
\eeq
where $[\al]$ denotes the de Rham cohomology class of elements
$\al\in \Om^*$.

A morphism $\g$ between two differential graded algebras
$(\Om^*,\dl)$ and $(\Om'^*,\dl')$ is defined as a cochain
morphism, i.e., $\g\circ\dl=\g\circ \dl'$. It yields the
corresponding morphism of the de Rham cohomology groups of these
algebras.

One considers the minimal differential graded subalgebra
$\Om^*\cA$ of the differential graded algebra $\Om^*$ which
contains $\cA$. Seen as an $(\cA-\cA)$-algebra, it is generated by
the elements $\dl a$, $a\in \cA$, and consists of monomials
\be
\al=a_0\dl a_1\cdots \dl a_k, \qquad a_i\in \cA,
\ee
whose product obeys the {\it juxtaposition rule}
\index{juxtaposition rule}
\be
(a_0\dl a_1)\cdot (b_0\dl b_1)=a_0\dl (a_1b_0)\cdot \dl b_1-
a_0a_1\dl b_0\cdot \dl b_1
\ee
in accordance with the equality (\ref{1006}). The differential
graded algebra $(\Om^*\cA,\dl)$ is called the {\it minimal
differential calculus} \index{differential calculus!minimal} over
$\cA$.

Let us show that any commutative $\cK$-ring $\cA$ defines a
differential calculus.

As was mentioned above, the derivation module $\gd\cA$ of $\cA$ is
also a Lie $\cK$-algebra. Let us consider the extended
Chevalley--Eilenberg complex
\mar{ws102}\beq
0\to \cK\ar^{\rm in}C^*[\gd\cA;\cA] \label{ws102}
\eeq
of the Lie algebra $\gd\cA$ with coefficients in the ring $\cA$,
regarded as a $\gd\cA$-module \cite{book05}. This complex contains
a subcomplex $\cO^*[\gd\cA]$ of $\cA$-multilinear skew-symmetric
maps
\mar{+840'}\beq
\f^k:\op\times^k \gd\cA\to \cA \label{+840'}
\eeq
with respect to the Chevalley--Eilenberg coboundary operator
\mar{+840}\ben
&& d\f(u_0,\ldots,u_k)=\op\sum^k_{i=0}(-1)^iu_i
(\f(u_0,\ldots,\wh{u_i},\ldots,u_k)) +\label{+840}\\
&& \qquad \op\sum_{i<j} (-1)^{i+j}
\f([u_i,u_j],u_0,\ldots, \wh u_i, \ldots, \wh u_j,\ldots,u_k).
\nonumber
\een
Indeed, a direct verification shows that if $\f$ is an
$\cA$-multilinear map, so is $d\f$. In particular,
\mar{spr708}\ben
&& (d a)(u)=u(a), \qquad a\in\cA, \qquad u\in\gd\cA, \label{spr708}\\
&&(d\f)(u_0,u_1)= u_0(\f(u_1)) -u_1(\f(u_0))
\label{+921}\\
&& \qquad -\f([u_0,u_1]), \qquad \f\in \cO^1[\gd\cA], \nonumber \\
&& \cO^0[\gd\cA]=\cA, \qquad
\cO^1[\gd\cA]=\hm_\cA(\gd\cA,\cA).\nonumber
\een \mar{+921}
It follows that $d(\bb)=0$ and $d$ is a $\cO^1[\gd\cA]$-valued
derivation of $\cA$.

The graded module $\cO^*[\gd\cA]$ is provided with the structure
of a graded $\cA$-algebra with respect to the product
\mar{ws103}\ben
&& \f\w\f'(u_1,...,u_{r+s})= \label{ws103}\\
&& \qquad \op\sum_{i_1<\cdots<i_r;j_1<\cdots<j_s} {\rm
sgn}^{i_1\cdots i_rj_1\cdots j_s}_{1\cdots r+s} \f(u_{i_1},\ldots,
u_{i_r}) \f'(u_{j_1},\ldots,u_{j_s}), \nonumber \\
&& \f\in \cO^r[\gd\cA], \qquad \f'\in \cO^s[\gd\cA], \qquad u_k\in \gd\cA,
\nonumber
\een
where sgn$^{...}_{...}$ is the sign of a permutation. This product
obeys the relations
\mar{ws98,9}\ben
&& d(\f\w\f')=d(\f)\w\f' +(-1)^{|\f|}\f\w d(\f'),
\quad \f,\f'\in \cO^*[\gd\cA], \label{ws98}\\
&& \f\w \f' =(-1)^{|\f||\f'|}\f'\w \f. \label{ws99}
\een
By virtue of the first one, $(\cO^*[\gd\cA],d)$ is a differential
graded $\cK$-algebra, called the {\it Chevalley--Eilenberg
differential calculus} \index{Chevalley--Eilenberg!differential
calculus} over a $\cK$-ring $\cA$ \cite{book05}. The relation
(\ref{ws99}) shows that $\cO^*[\gd\cA]$ is a graded commutative
algebra.

The {\it minimal Chevalley--Eilenberg differential calculus}
$\cO^*\cA$  over a ring \index{Chevalley--Eilenberg!differential
calculus!minimal} $\cA$ consists of the monomials
\be
a_0da_1\w\cdots\w da_k, \qquad a_i\in\cA.
\ee
Its complex
\mar{t10}\beq
0\to\cK\ar \cA\ar^d\cO^1\cA\ar^d \cdots  \cO^k\cA\ar^d \cdots
\label{t10}
\eeq
is exactly the cochain complex (\ref{55.63}). Indeed, comparing
the equalities (\ref{5.81a}) and (\ref{spr708}) shows that $d^1=d$
on the $\cA$-module
\be
\cO^1\cA=\cO^1\subseteq \cO^1[\gd\cA]=\gd\cA^*,
\ee
generated by elements $d^1a$, $a\in\cA$. The complex (\ref{t10})
is called the {\it de Rham complex} \index{de Rham complex!of a
ring} of the $\cK$-ring $\cA$, and its cohomology $H^*(\cA)$ is
said to be the {\it de Rham cohomology} \index{de Rham
cohomology!of a ring} of $\cA$. This cohomology is a graded
commutative algebra with respect to the cup-product (\ref{spr268})
induced by the exterior product $\w$ of elements of $\cO^*\cA$ so
that
\mar{spr268'}\ben
&& [\f]\smile [\f']= [\f\w \f'], \label{spr268'}\\
&& [\f]\smile [\f']=(-1)^{|[\f]||[\f']|}[\f']\smile [\f]. \nonumber
\een

\section{Local-ringed spaces}

Local-ringed spaces are sheafs of local rings. For instance,
smooth manifolds, represented by sheaves of real smooth functions,
make up a subcategory of the category of local-ringed spaces.

A sheaf $\gR$ on a topological space $X$ is said to be a {\it
ringed space} \index{ringed space} if its stalk $\gR_x$ at each
point $x\in X$ is a commutative ring \cite{tenn}. A ringed space
is often denoted by a pair $(X,\gR)$ of a topological space $X$
and a sheaf $\gR$ of rings on $X$ which are called the {\it body}
\index{body!of a ringed space} and the {\it structure sheaf}
\index{structure sheaf!of a ringed space} of a ringed space,
respectively.

A ringed space is said to be a {\it local-ringed space}
\index{local-ringed space} (a {\it geometric space}
\index{geometric space} in the terminology of \cite{tenn}) if it
is a sheaf of local rings.

For instance, the sheaf $C^0_X$ of continuous real functions on a
topological space $X$ is a local-ringed space. Its stalk $C^0_x$,
$x\in X$, contains the unique maximal ideal of germs of functions
vanishing at $x$.

Morphisms of local-ringed spaces are defined to be particular
morphisms of sheaves on different topological spaces as follows.

Let $\vf:X\to X'$ be a continuous map. Given a sheaf $S$ on $X$,
its {\it direct image} \index{direct image of a sheaf} $\vf_*S$ on
$X'$ is generated by the presheaf of assignments
\be
X'\supset U'\mapsto S(\vf^{-1}(U'))
\ee
for any open subset $U'\subset X'$. Conversely, given a sheaf $S'$
on $X'$, its {\it inverse image} \index{inverse image of a sheaf}
$\vf^*S'$ on $X$ is defined as the pull-back onto $X$ of the
topological fibre bundle $S'$ over $X'$, i.e.,
$\vf^*S'_x=S_{\vf(x)}$. This sheaf is generated by the presheaf
which associates to any open $V\subset X$ the direct limit of
modules $S'(U)$ over all open subsets $U\subset X'$ such that
$V\subset f^{-1}(U)$.

\begin{ex} \label{spr201} \mar{spr201}
Let $i:X\to X'$ be a closed subspace of $X'$. Then $i_*S$ is a
unique sheaf on $X'$ such that
\be
i_*S|_X=S, \qquad i_*S|_{X'\setminus X}=0.
\ee
Indeed, if $x'\in X\subset X'$, then $i_*S(U')= S(U'\cap X)$ for
any open neighborhood $U$ of this point. If $x'\not\in X$, there
exists its neighborhood $U'$ such that $U'\cap X$ is empty, i.e.,
$i_*S(U')=0$. The sheaf $i_*S$ is called the {\it trivial
extension} \index{trivial extension of a sheaf} of the sheaf $S$.
\end{ex}

By a {\it morphism of ringed spaces} \index{morphism!of ringed
spaces} $(X,\gR)\to (X',\gR')$ is meant a pair $(\vf,\Phi)$ of a
continuous map $\vf:X\to X'$ and a sheaf morphism $\Phi:\gR'\to
\vf_*\gR$ or, equivalently, a sheaf morphisms $\vf^*\gR'\to \gR$
\cite{tenn}. Restricted to each stalk, a sheaf morphism $\Phi$ is
assumed to be a ring homomorphism. A morphism of ringed spaces is
said to be:

$\bullet$ a monomorphism if $\vf$ is an injection and $\Phi$ is an
epimorphism,

$\bullet$ an epimorphism if $\vf$ is a surjection, while $\Phi$ is
a monomorphism.

Let $(X,\gR)$ be a local-ringed space. By a {\it sheaf $\gd \gR$
of derivations} \index{sheaf!of derivations} of the sheaf $\gR$ is
meant a subsheaf of endomorphisms of $\gR$ such that any section
$u$ of $\gd \gR$ over an open subset $U\subset X$ is a derivation
of the ring $\gR(U)$. It should be emphasized that, since
(\ref{+212}) is not necessarily an isomorphism, a derivation of
the ring $\gR(U)$ need not be a section of the sheaf $\gd \gR|_U$.
Namely, it may happen that, given open sets $U'\subset U$, there
is no restriction morphism
\be
\gd (\gR(U)) \to\gd (\gR(U')).
\ee

Given a local-ringed space $(X,\gR)$, a sheaf $P$ on $X$ is called
a {\it sheaf of $\gR$-modules} \index{sheaf!of modules} if every
stalk $P_x$, $x\in X$, is an $\gR_x$-module or, equivalently, if
$P(U)$ is an $\gR(U)$-module for any open subset $U\subset X$. A
sheaf of $\gR$-modules $P$ is said to be {\it locally free}
\index{sheaf!locally free} if there exists an open neighborhood
$U$ of every point $x\in X$ such that $P(U)$ is a free
$\gR(U)$-module. If all these free modules are of finite rank
(resp. of the same finite rank), one says that $P$ is of {\it
finite type} \index{sheaf!locally free!of finite type} (resp. of
constant rank). \index{sheaf!locally free!of constant rank} The
structure module of a locally free sheaf is called a {\it locally
free module}. \index{locally free module}

The following is a generalization of Proposition \ref{spr256}
\cite{hir}.

\begin{prop} \label{spr256'} \mar{spr256'} Let $X$ be a paracompact
space which admits a partition of unity by elements of the
structure module $S(X)$ of some sheaf $S$ of real functions on
$X$. Let $P$ be a sheaf of $S$-modules. Then $P$ is fine and,
consequently, acyclic.
\end{prop}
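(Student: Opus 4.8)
The plan is to show that a sheaf $P$ of $S$-modules over such an $X$ is \emph{fine}, meaning that for every locally finite open cover $\{U_i\}$ of $X$ there exists a family of endomorphisms $\{e_i\}$ of $P$ with $\sum_i e_i=\id$ and with the support of each $e_i$ contained in $U_i$; fineness then implies acyclicity by the standard argument (a fine sheaf on a paracompact space is soft, hence its higher cohomology vanishes), which is presumably the content of the earlier result being generalized. First I would invoke the hypothesis: since $X$ is paracompact, any open cover admits a locally finite refinement, and by assumption there is a partition of unity $\{f_i\}\subset S(X)$ subordinate to it, i.e.\ $\sum_i f_i=\bb$ with $\operatorname{supp} f_i\subset U_i$ and the family locally finite. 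The key step is that, because $P$ is a sheaf of $S$-modules, each global section $f_i\in S(X)$ acts on $P$: multiplication by $f_i$ defines a sheaf endomorphism $e_i\col P\to P$, since on each open $V$ the map $P(V)\to P(V)$, $s\mapsto (f_i|_V)\cdot s$, is $S(V)$-linear and commutes with restrictions.

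Next I would check the support condition. The germ $(f_i)_x$ vanishes for every $x\notin\operatorname{supp} f_i$, hence the stalk map $(e_i)_x\col P_x\to P_x$ is the zero map there; thus $\operatorname{supp} e_i\subset\operatorname{supp} f_i\subset U_i$. Local finiteness of $\{\operatorname{supp} f_i\}$ means that near each point only finitely many $e_i$ are nonzero, so the sum $\sum_i e_i$ is a well-defined sheaf endomorphism, and on each stalk it equals multiplication by $\sum_i(f_i)_x=(\bb)_x$, i.e.\ the identity. This exhibits the required partition of unity by endomorphisms of $P$, so $P$ is fine. Acyclicity, $H^q(X,P)=0$ for $q>0$, then follows from the general fact that fine sheaves on paracompact spaces are acyclic (soft sheaves have vanishing higher cohomology), which we are entitled to assume as standard sheaf-cohomology material.

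The main obstacle, such as it is, is conceptual rather than computational: one must be careful that the $f_i$ genuinely live in $S(X)$ and act on $P$ via the $S$-module structure — this is exactly what the hypothesis "$X$ admits a partition of unity by elements of the structure module $S(X)$" buys us, in contrast to the situation flagged just before Proposition~\ref{spr256'} where a derivation of $\gR(U)$ need not glue. A secondary point to state cleanly is that the endomorphism $e_i$ is supported \emph{where $f_i$ is}, using that a stalk of $P$ on which $(f_i)_x=0$ is killed by $e_i$; no projectivity or freeness of $P$ is needed, only the module action. Once fineness is in hand, the reduction to acyclicity is a citation, so there is no substantial difficulty remaining.
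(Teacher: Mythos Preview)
Your argument is correct and is exactly the standard proof: multiplication by the partition-of-unity functions $f_i\in S(X)$ furnishes the required endomorphisms of $P$, and the support and summation conditions follow immediately from the $S$-module structure. The paper itself does not supply a proof of this proposition --- it is stated with a citation to \cite{hir} as a generalization of Proposition~\ref{spr256} (which is likewise quoted without proof from \cite{book05}) --- so your write-up is precisely the argument the paper is invoking; the acyclicity step is handled in the paper's appendix via the explicit \v{C}ech homotopy operator (\ref{spr255}), which you are right to treat as a citation.
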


\section{Differential geometry of $C^\infty(X)$-modules}

The sheaf $C^\infty_X$ of smooth real functions on a smooth
manifold $X$ provides an important example of a local-ringed
spaces.

\begin{rem} \label{ws1} \mar{ws1}
Throughout the Lectures, {\it smooth manifolds} \index{smooth
manifold} are finite-dimensional real manifolds, though
infinite-dimensional Banach and Hilbert manifolds are also smooth.
A smooth real manifold is customarily assumed to be Hausdorff and
{\it second-countable} \index{second-countable topological space}
(i.e., it has a countable base for topology). Consequently, it is
a locally compact space which is a union of a countable number of
compact subsets, a {\it separable} space \index{separable
topological space} (i.e., it has a countable dense subset), a
paracompact and completely regular space. Being paracompact, a
smooth manifold admits a partition of unity by smooth real
functions. One also can show that, given two disjoint closed
subsets $N$ and $N'$ of a smooth manifold $X$, there exists a
smooth function $f$ on $X$ such that $f|_N=0$ and $f|_{N'}=1$.
Unless otherwise stated, manifolds are assumed to be connected
and, consequently, arcwise  connected. We follow the notion of a
manifold without boundary.
\end{rem}

Similarly to the sheaf $C^0_X$ of continuous functions, the stalk
$C^\infty_x$ of the sheaf $C^\infty_X$ at a point $x\in X$ has a
unique maximal ideal of germs of smooth functions vanishing at
$x$. Though the sheaf $C^\infty_X$ is defined on a topological
space $X$, it fixes a unique smooth manifold structure on $X$ as
follows.

\begin{theo} \label{+26} \mar{+26}
Let $X$ be a paracompact topological space and $(X,\gR)$ a
local-ringed space. Let $X$ admit an open cover $\{U_i\}$ such
that the sheaf $\gR$ restricted to each $U_i$ is isomorphic to the
local-ringed space  $(\Bbb R^n, C^\infty_{R^n})$. Then $X$ is an
$n$-dimensional smooth manifold together with a natural
isomorphism of local-ringed spaces $(X,\gR)$ and $(X,C^\infty_X)$.
\end{theo}

One can think of this result as being an alternative definition of
smooth real manifolds in terms of local-ringed spaces. In
particular, there is one-to-one correspondence between smooth
manifold morphisms $X\to X'$ and the $\Bbb R$-ring morphisms
$C^\infty(X')\to C^\infty(X)$.

\begin{rem} \label{ws2} \mar{ws2}
Let $X\times X'$ be a manifold product. The ring $C^\infty(X\times
X')$ is constructed from the rings $C^\infty(X)$ and
$C^\infty(X')$ as follows. Whenever referring to a topology on the
ring $C^\infty(X)$, we will mean the topology of compact
convergence for all derivatives \cite{rob}. The $C^\infty(X)$ is a
{\it Fr\'echet ring} \index{Fr\'echet ring} with respect to this
topology, i.e., a complete  metrizable locally convex topological
vector space. There is an isomorphism of Fr\'echet rings
\mar{+55}\beq
C^\infty(X)\wh\ot C^\infty(X') \cong C^\infty(X\times X'),
\label{+55}
\eeq
where the left-hand side, called the {\it topological tensor
product}, \index{tensor product!topological} is the completion of
$C^\infty(X)\ot C^\infty(X')$ with respect to Grothendieck's
topology, defined as follows. If $E_1$ and $E_2$ are locally
convex topological vector spaces, {\it Grothendieck's topology}
\index{Grothendieck's topology} is the finest locally convex
topology on $E_1\ot E_2$  such that the canonical mapping of
$E_1\times E_2$ to $E_1\ot E_2$ is continuous \cite{rob}. It is
also called the $\pi$-topology in contrast with the coarser
$\ve$-topology on $E_1\ot E_2$ \cite{piet,trev}. Furthermore, for
any two open subsets $U\subset X$ and $U'\subset X'$, let us
consider the topological tensor product of rings
$C^\infty(U)\wh\ot C^\infty(U')$. These tensor products define a
locally ringed space $(X\times X',C^\infty_X\wh\ot
C^\infty_{X'})$. Due to the isomorphism (\ref{+55}) written for
all $U\subset X$ and $U'\subset X'$, we obtain the sheaf
isomorphism
\mar{+56}\beq
C^\infty_X\wh\ot C^\infty_{X'}=C^\infty_{X\times X'}. \label{+56}
\eeq
\end{rem}

Since a smooth manifold admits a partition of unity by smooth
functions, it follows from Proposition \ref{spr256'} that any
sheaf of $C^\infty_X$-modules on $X$ is fine and, consequently,
acyclic.

For instance, let $Y\to X$ be a smooth (finite-dimensional) vector
bundle. The germs of its sections make up a sheaf of
$C^\infty_X$-modules, called the {\it structure sheaf} $S_Y$
\index{structure sheaf!of a vector bundle} of a vector bundle
$Y\to X$. The sheaf $S_Y$ is fine.

In particular, all sheafs $\cO^k_X$, $k\in\Bbb N_+$, of germs of
exterior forms on $X$ is fine. These sheaves constitute the {\it
de Rham complex} \index{de Rham complex!of sheaves}
\mar{t67}\beq
0\to \Bbb R\ar C^\infty_X\ar^d \cO^1_X\ar^d\cdots \cO^k_X\ar^d
\cdots. \label{t67}
\eeq
The corresponding complex of structure modules of these sheaves is
the {\it de Rham complex} \index{de Rham complex!of exterior
forms}
\mar{t37}\beq
0\to \Bbb R\ar C^\infty(X)\ar^d \cO^1(X)\ar^d\cdots \cO^k(X)\ar^d
\cdots \label{t37}
\eeq
of exterior forms on a manifold $X$. Its cohomology is called the
{\it de Rham cohomology} \index{de Rham cohomology!of a manifold}
$H^*(X)$ of $X$. Due to the Poincar\'e lemma, the complex
(\ref{t67}) is exact and, thereby, is a fine resolution of the
constant sheaf $\Bbb R$ on a manifold. Then a corollary of Theorem
\ref{spr230} is the classical {\it de Rham theorem}. \index{de
Rham theorem}

\begin{theo} \label{t60} \mar{t60} There is the isomorphism
\mar{t61}\beq
H^k(X)=H^k(X;\Bbb R) \label{t61}
\eeq
of the de Rham cohomology $H^*(X)$ of a manifold $X$ to cohomology
of $X$ with coefficients in the constant sheaf $\Bbb R$.
\end{theo}

\begin{rem}
Let us consider the short exact sequence of constant sheaves
\mar{1320}\beq
0\to \Bbb Z\ar \Bbb R\ar U(1)\to 0, \label{1320}
\eeq
where $U(1)=\Bbb R/\Bbb Z$ is the circle group of complex numbers
of unit modulus. This exact sequence yields the long exact
sequence of the sheaf cohomology groups
\be
&& 0\to \Bbb Z\ar\Bbb R \ar U(1) \ar
H^1(X;\Bbb Z) \ar H^1(X;\Bbb R)\ar\cdots \\
&& \qquad  H^p(X;\Bbb Z)\ar H^p(X;\Bbb R)\ar H^p(X;U(1))\ar
H^{p+1}(X;\Bbb Z) \ar\cdots,
\ee
where
\be
H^0(X;\Bbb Z)=\Bbb Z, \qquad H^0(X;\Bbb R)=\Bbb R
\ee
and $H^0(X;U(1))=U(1)$. This exact sequence defines the
homomorphism
\mar{spr752}\beq
H^*(X;\Bbb Z)\to H^*(X;\Bbb R) \label{spr752}
\eeq
of cohomology with coefficients in the constant sheaf $\Bbb Z$ to
that with coefficients in $\Bbb R$. Combining the isomorphism
(\ref{t61}) and the homomorphism (\ref{spr752}) leads to the
cohomology homomorphism
\mar{t62}\beq
H^*(X;\Bbb Z)\to H^*(X). \label{t62}
\eeq
Its kernel contains all cyclic elements of cohomology groups
$H^k(X;\Bbb Z)$.
\end{rem}

Given a vector bundle $Y\to X$, the structure module of the sheaf
$S_Y$ coincides with the {\it structure module} \index{structure
module!of a vector bundle} $Y(X)$ of global sections of $Y\to X$.
The {\it Serre--Swan theorem}, \index{Serre--Swan theorem} shows
that these modules exhaust all projective modules of finite rank
over $C^\infty(X)$. This theorem originally has been proved in the
case of a compact manifold $X$, but it is generalized to an
arbitrary smooth manifold \cite{book05}.

\begin{theo} \label{sp60} \mar{sp60}
Let $X$ be a smooth manifold. A $C^\infty(X)$-module $P$ is
isomorphic to the structure module of a smooth vector bundle over
$X$ iff it is a projective module of finite rank.
\end{theo}

$\bullet$ The structure module $Y^*(X)$ of the dual $Y^*\to X$ of
a vector bundle $Y\to X$ is the $C^\infty(X)$-dual $Y(X)^*$ of the
structure module $Y(X)$ of $Y\to X$.

$\bullet$ Any exact sequence of vector bundles
\mar{t51}\beq
0\to Y \ar Y'\ar Y''\to 0 \label{t51}
\eeq
over the same base $X$ yields the exact sequence
\mar{t52}\beq
0\to Y(X) \ar Y'(X)\ar Y''(X)\to 0 \label{t52}
\eeq
of their structure modules, and {\it vice versa}. In accordance
with the well-known theorem \cite{book00,sard09}, the exact
sequence (\ref{t51}) is always split. Every its splitting defines
that of the exact sequence (\ref{t52}), and {\it vice versa}.

$\bullet$ For instance, the derivation module of the $\Bbb R$-ring
$C^\infty(X)$ coincides with the $C^\infty(X)$-module $\cT_1(X)$
of vector fields on $X$, i.e., with the structure module of the
tangent bundle $TX$ of $X$. Hence, it is a projective
$C^\infty(X)$-module of finite rank. It is the $C^\infty(X)$-dual
$\cT_1(X)=\cO^1(X)^*$ of the structure module $\cO^1(X)$ of the
cotangent bundle $T^*X$ of $X$ which is the module of differential
one-forms on $X$ and, conversely, $\cO^1(X)=\cT_1(X)^*$. It
follows that the Chevalley--Eilenberg differential calculus over
the $\Bbb R$-ring $C^\infty(X)$ is exactly the differential graded
algebra $(\cO^*(X),d)$ of exterior forms on $X$, where the
Chevalley--Eilenberg coboundary operator $d$ (\ref{+840})
coincides with the exterior differential. Accordingly, the de Rham
complex (\ref{t10}) of the $\Bbb R$-ring $C^\infty(X)$ is the de
Rham complex (\ref{t37}) of exterior forms on $X$. Moreover, one
can show that $(\cO^*(X),d)$ is a  minimal differential calculus,
i.e., the $C^\infty(X)$-module $\cO^1(X)$ is generated by elements
$df$, $f\in C^\infty(X)$. Indeed, using the notation in the proof
of Theorem \ref{sp60}, one can write
\mar{spr882}\beq
\cO^1(X)\ni \f=\op\sum_\xi l_\xi^2 \f= \op\sum_\xi l_\xi^2 \f_\m
dx^\m= \op\sum_\xi (l_\xi \f_\m d(l_\xi x^\m) -l_\xi \f_\m x^\m d
l_\xi), \label{spr882}
\eeq
where $(x^\m)$ are local coordinates on $U_\xi$ and $l_\xi x^\m$
and $l_\xi$ are functions on $X$.

\begin{rem}
Let us note that the above mentioned Chevalley--Eilenberg
differential calculus over the $\Bbb R$-ring $C^\infty(X)$ is a
subcomplex  of the Chevalley--Eilenberg complex of the Lie algebra
$\cT_1(X)$ with coefficients in $C^\infty(X)$. It consists of
skew-symmetric morphisms of $\cT_1(X)$ to $C^\infty(X)$ which are
not only $\Bbb R$-multilinear, but $C^\infty(X)$-multilinear. The
Chevalley--Eilenberg cohomology of smooth vector fields with
coefficients in a trivial representation and in spaces of smooth
tensor fields has been studied in detail \cite{fuks}
\end{rem}

$\bullet$ Let $Y\to X$ be a vector bundle and $Y(X)$ its structure
module. The $r$-order jet manifold $J^rY$ of $Y\to X$ consists of
the equivalence classes $j^r_xs$, $x\in X$, of sections $s$ of
$Y\to X$ which are identified by the $r+1$ terms of their Taylor
series at points $x\in X$. Since $Y\to X$ is a vector bundle, so
is the jet bundle $J^rY\to X$. Its structure module $J^rY(X)$ is
exactly the $r$-order jet module $\cJ^r(Y(X))$ of the
$C^\infty(X)$-module $Y(X)$ in Section 1.2 \cite{kras}. As a
consequence, the notion of a connection on the structure module
$Y(X)$ is equivalent to the standard geometric notion of a
connection on a vector bundle $Y\to X$ \cite{book00}. Indeed,
connection on a fibre bundle $Y\to X$ is defined as a global
section $\G$ of the affine jet bundle $J^1Y\to Y$. If $Y\to X$ is
a vector bundle, there exists the exact sequence
\mar{t50}\beq
0\to T^*X\op\ot_XY\ar J^1Y\ar Y\to 0 \label{t50}
\eeq
over $X$ which is split by $\G$. Conversely, any slitting of this
exact sequence yields a connection $Y\to X$. The exact sequence of
vector bundles (\ref{t50}) induces the exact sequence of their
structure modules
\mar{t53}\beq
0\to \cO^1(X)\op\ot Y(X)\ar J^1Y(X)\ar Y(X)\to 0. \label{t53}
\eeq
Then any connection $\G$ on a vector bundle $Y\to X$ defines a
splitting of the exact sequence (\ref{t53}) which, by Definition
\ref{+176}, is a connection on the $C^\infty(X)$-module $Y(X)$,
and {\it vice versa}.

Let now $P$ be an arbitrary $C^\infty(X)$-module. One can
reformulate Definitions \ref{+181} and \ref{1016} of a connection
on $P$ as follows.

\begin{defi} \label{t55} \mar{t55}
A connection on a $C^\infty(X)$-module $P$ is a
$C^\infty(X)$-module morphism
\mar{t56}\beq
\nabla: P\to \cO^1(X)\ot P, \label{t56}
\eeq
which satisfies the Leibniz rule
\be
\nabla(fp)=df\ot p +f\nabla(p), \qquad f\in C^\infty(X), \qquad
p\in P.
\ee
\end{defi}

\begin{defi} \label{t57} \mar{t57}
A connection on a $C^\infty(X)$-module $P$ associates to any
vector field $\tau\in\cT_1(X)$ on $X$ a first order differential
operator $\nabla_\tau$ on $P$ which obeys the Leibniz rule
\mar{t58}\beq
\nabla_\tau(fp)=(\tau\rfloor df)p +f\nabla_\tau p. \label{t58}
\eeq
\end{defi}

Since $\cO^1(X)=\cT_1(X)^*$, Definitions \ref{t55} and \ref{t57}
are equivalent.

Let us note that a connection on an arbitrary $C^\infty(X)$-module
need not exist, unless it is a projective or locally free module
(see Theorem \ref{w715} below).

The curvature of a connection $\nabla$ in Definitions \ref{t55}
and \ref{t57} is defined as the zero-order differential operator
\mar{t59}\beq
R(\tau,\tau')=[\nabla_\tau,\nabla_{\tau'}]-\nabla_{[\tau,\tau']}
\label{t59}
\eeq
on a module $P$ for all vector fields $\tau,\tau'\in\cT_1(X)$ on
$X$.

\section{Connections on local-ringed spaces}

Let $(X,\gR)$ be a local-ringed space and $\gP$ a sheaf of
$\gR$-modules on $X$.  For any open subset $U\subset X$, let us
consider the jet module $\cJ^1(\gP(U))$ of the module $\gP(U)$. It
consists of the elements of $\gR(U)\ot \gP(U)$ modulo the
pointwise relations (\ref{mos041}). Hence, there is the
restriction morphism
\be
\cJ^1(\gP(U))\to \cJ^1(\gP(V))
\ee
for any open subsets $V\subset U$, and the jet modules
$\cJ^1(\gP(U))$ constitute a presheaf. This presheaf defines the
{\it sheaf $\gj^1\gP$ of jets} \index{sheaf!of jets} of $\gP$ (or
simply the {\it jet sheaf}). \index{jet sheaf} The jet sheaf
$\gj^1\gR$ of the sheaf $\gR$ of local rings is introduced in a
similar way. Since the relations (\ref{mos041}) and (\ref{5.53})
on the ring $\gR(U)$ and modules $\gP(U)$, $\cJ^1(\gP(U))$,
$\cJ^1(\gR(U))$  are pointwise relations for any open subset
$U\subset X$, they commute with the restriction morphisms.
Therefore, the direct limits of the quotients modulo these
relations exist \cite{massey}. Then we have the sheaf $\cO^1\gR$
of one-forms over the sheaf $\gR$, the sheaf isomorphism
\be
\gj^1(\gP)=(\gR\oplus \cO^1\gR)\ot \gP,
\ee
and the exact sequences of sheaves
\mar{+213}\ben
&&0\to \cO^1\gR\ot \gP\to \gj^1(\gP)\to \gP\to 0, \label{+213}\\
&& 0\to \cO^1\gR\ot \gP\to (\gR\oplus \cO^1\gR)\ot \gP\to \gP\to 0. \label{+214}
\een
They reflect the quotient (\ref{+216}), the isomorphism
(\ref{mos071}) and the exact sequences of modules (\ref{+175}),
(\ref{+183}), respectively.

\begin{rem}
It should be emphasized that, because of the inequality
(\ref{+212}), the duality relation (\ref{5.81}) is not extended to
the sheaves $\gd \gR$ and $\cO^1\gR$ in general, unless $\gd \gR$
and $\cO^1$ are locally free sheaves of finite rank. If $\gP$ is a
locally free sheaf of finite rank, so is $\gj^1\gP$.
\end{rem}

Following Definitions \ref{+176}, \ref{+181} of a connection on
modules, we come to the following notion of a connection on
sheaves.

\begin{defi} \label{+217} \mar{+217}
Given a local-ringed space $(X,\gR)$ and a sheaf $\gP$ of
$\gR$-modules on $X$, a {\it connection} \index{connection!on a
sheaf} on a sheaf $\gP$ is defined as a splitting of the exact
sequence (\ref{+213}) or, equivalently, the exact sequence
(\ref{+214}).
\end{defi}

Theorem \ref{spr30} leads to the following compatibility of the
notion of a connection on sheaves with that of a connection on
modules.

\begin{prop} \label{+219} \mar{+219}
If there exists a connection on a sheaf $\gP$ in Definition
\ref{+217}, then there exists a connection on a module $\gP(U)$
for any open subset $U\subset X$. Conversely, if for any open
subsets $V\subset U\subset X$ there are connections on the modules
$\gP(U)$ and $\gP(V)$ related by the restriction morphism, then
the sheaf $\gP$ admits a connection.
\end{prop}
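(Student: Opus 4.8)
The plan is to obtain the equivalence by passing to sections: a connection on the sheaf $\gP$ is a splitting of the exact sequence of sheaves \ref{+213} (equivalently \ref{+214}), and over each open $U\subset X$ it should induce a splitting of the exact sequence of modules \ref{+175} (equivalently \ref{+183}) with $P=\gP(U)$, i.e.\ a connection on the $\gR(U)$-module $\gP(U)$ in the sense of Definition \ref{+176}; conversely, a family of such module splittings compatible with the restriction morphisms should glue back to a sheaf splitting.

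For the direct implication, I would take a connection on $\gP$, that is an $\gR$-module sheaf morphism $\Gamma\colon\gP\to\gj^1\gP$ with $\pi^1_0\circ\Gamma=\id\gP$, and evaluate it on sections over $U$. Since taking sections is additive it preserves the splitting, so $\Gamma(U)$ splits the sequence of $\gR(U)$-modules obtained from \ref{+213}; because the jet sheaf $\gj^1\gP$ is generated by the presheaf $U\mapsto\cJ^1(\gP(U))$ and the defining relations \ref{mos041} are pointwise, this is, up to the associated-sheaf identification, the module sequence \ref{+175} for $\gP(U)$. Hence $\Gamma(U)$ yields a connection $\nabla_U$ on $\gP(U)$, and one checks it obeys the Leibniz rule \ref{+180}; applying the same to $V\subset U$ and using that $\Gamma$ commutes with restriction morphisms gives the required compatibility of $\nabla_U$ and $\nabla_V$.

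For the converse, suppose that for all $V\subset U\subset X$ we are given connections $\nabla_U$ on $\gP(U)$ and $\nabla_V$ on $\gP(V)$ intertwined by restriction. Equivalently we are given, for every open $U$, an $\gR(U)$-linear splitting of \ref{+175} for $\gP(U)$, and these splittings are compatible with restrictions; this is exactly a morphism of presheaves from $\gP$ to the presheaf $U\mapsto\cJ^1(\gP(U))$ that splits, presheaf-wise, the projection onto $\gP$. Passing to associated sheaves --- which is where Theorem \ref{spr30} is used, to the effect that a restriction-compatible family of morphisms of sections defines a morphism of sheaves and a presheaf splitting induces a sheaf splitting --- produces an $\gR$-module sheaf morphism $\gP\to\gj^1\gP$ splitting \ref{+213}, i.e.\ a connection on $\gP$ in the sense of Definition \ref{+217}. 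That \ref{+213} and \ref{+214} (resp.\ \ref{+175} and \ref{+183}) encode the same data is immediate from the already recorded sheaf isomorphism $\gj^1\gP\cong(\gR\oplus\cO^1\gR)\ot\gP$ and its module analogue \ref{mos071}.

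The step I expect to be the main obstacle is precisely the identification, in both directions, between the module of sections $\gj^1\gP(U)$ of the jet sheaf and the jet module $\cJ^1(\gP(U))$ of the module of sections, since these coincide only after sheafification; one must verify that the splitting $\Gamma(U)$ factors correctly through $\cJ^1(\gP(U))$ and respects \ref{+180} in the direct implication, and that sheafifying the presheaf splitting in the converse loses neither the relation $\pi^1_0\circ\Gamma=\id\gP$ nor the $\gR$-linearity --- all of which is controlled by the pointwise character of the relations \ref{mos041}, \ref{5.53} noted in the construction of $\gj^1\gP$ together with Theorem \ref{spr30}.
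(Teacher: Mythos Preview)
Your proposal is correct and follows essentially the same approach as the paper, which merely states that the proposition follows from Theorem~\ref{spr30} (a split exact sequence of sheaves yields a split exact sequence of canonical presheaves). You have supplied the details the paper omits, including the converse direction and the care needed in identifying $\gj^1\gP(U)$ with $\cJ^1(\gP(U))$ via the pointwise nature of the relations~(\ref{mos041}).
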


\begin{ex}
Let $Y\to X$ be a vector bundle. Every linear connection $\G$ on
$Y\to X$ defines a connection on the structure module $Y(X)$ such
that the restriction $\G|_U$ is a connection on the module $Y(U)$
for any open subset $U\subset X$. Then we have a connection on the
structure sheaf $Y_X$. Conversely, a connection on the structure
sheaf $Y_X$ defines a connection on the module $Y(X)$ and,
consequently, a connection on the vector bundle $Y\to X$.
\end{ex}

As an immediate consequence of Proposition \ref{+219}, we find
that the exact sequence of sheaves (\ref{+214}) is split iff there
exists a sheaf morphism
\mar{+3}\beq
\nabla: \gP\to \cO^1\gR\ot \gP, \label{+3}
\eeq
satisfying the Leibniz rule
\be
\nabla (fs)=df\ot s + f\nabla(s), \qquad f\in \cA(U), \qquad s\in
\gP(U),
\ee
for any open subset $U\in X$. It leads to the following equivalent
definition of a connection on sheaves in the spirit of Definition
\ref{+181}.

\begin{defi} \label{+4} \mar{+4}
The sheaf morphism (\ref{+3}) is a {\it connection} on the sheaf
$\gP$.
\end{defi}

Similarly to the case of connections on modules, the {\it
curvature} \index{curvature!of a connection on sheaves} of the
connection (\ref{+3}) on a sheaf $\gP$ is given by the expression
\mar{+105}\beq
R=\nabla^2:\gP\to \cO^2_X\ot \gP. \label{+105}
\eeq

The exact sequence (\ref{+214}) need not be split. One can obtain
the following criteria of the existence of a connection on a
sheaf.

Let $\gP$ be a locally free sheaf of $\gR$-modules. Then we have
the exact sequence of sheaves
\be
0\to \hm(\gP,\cO^1\gR\ot \gP)\to \hm(\gP,(\gR\oplus\cO^1\gR)\ot
\gP) \to \hm(\gP,\gP)\to 0
\ee
and the corresponding exact sequence (\ref{spr227}) of the
cohomology groups
\be
&& 0\to H^0(X;\hm(\gP,\cO^1\gR\ot \gP)) \to H^0(X;
\hm(\gP,(\gR\oplus\cO^1\gR)\ot \gP))\to \\
&& \qquad H^0(X;\hm(\gP,\gP))\to H^1(X;\hm(\gP,\cO^1\gR\ot \gP))\to \cdots.
\ee
The identity morphism $\id :\gP\to \gP$ belongs to
$H^0(X;\hm(\gP,\gP))$. Its image in
\be
H^1(X;\hm(\gP,\cO^1\gR\ot \gP))
\ee
is called the {\it Atiyah class}. \index{ Atiyah class} If this
class vanishes,  there exists an element of
\be
\hm(\gP,(\gR\oplus\cO^1\gR)\ot \gP))
\ee
whose image is $\id \gP$, i.e., a splitting of the exact sequence
(\ref{+214}).

In particular, let $X$ be a manifold and $\gR=C^\infty_X$ the
sheaf of smooth functions on $X$.  The sheaf $\gd C^\infty_X$ of
its derivations is isomorphic to the sheaf of vector fields on a
manifold $X$. It follows that:

$\bullet$ there is the restriction morphism $\gd(C^\infty(U))\to
\gd(C^\infty(V))$ for any open sets $V\subset U$,

$\bullet$ $\gd C^\infty_X$ is  a locally free sheaf of
$C^\infty_X$-modules of finite rank,

$\bullet$ the sheaves $\gd C^\infty_X$ and $\cO^1_X$ are mutually
dual.

Let $\gP$ be a locally free sheaf of $C^\infty_X$-modules. In this
case, $\hm(\gP,\cO^1_X\ot \gP)$ is a locally free sheaf of
$C^\infty_X$-modules. It is fine and acyclic. Its  cohomology
group
\be
H^1(X;\hm(\gP,\cO^1_X\ot \gP))
\ee
vanishes, and the exact sequence
\mar{+2}\beq
0\to \cO^1_X\ot \gP\to (C^\infty_X\oplus \cO^1_X)\ot \gP\to \gP\to
0 \label{+2}
\eeq
admits a splitting. This proves the following.

\begin{prop} \label{w715} \mar{w715}
Any locally free sheaf of $C^\infty_X$-modules on a manifold $X$
admits a connection and, in accordance with Proposition
\ref{+219}, any locally free $C^\infty(X)$-module does well.
\end{prop}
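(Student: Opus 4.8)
The plan is to follow the thread already assembled in this section: realize a connection on the sheaf $\gP$ as a splitting of the exact sequence \eqref{+2}, express the obstruction to such a splitting as an Atiyah class living in a first sheaf-cohomology group, and then annihilate that class using the fact that every sheaf of $C^\infty_X$-modules on a manifold is fine.

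First I would fix a locally free sheaf $\gP$ of $C^\infty_X$-modules and recall that, by Definitions \ref{+217} and \ref{+4}, a connection on $\gP$ is precisely a splitting of the exact sequence \eqref{+2}. Applying the functor $\hm(\gP,-)$ to \eqref{+2} gives the sequence of sheaves
\be
&& 0\to \hm(\gP,\cO^1_X\ot \gP)\to \hm(\gP,(C^\infty_X\oplus\cO^1_X)\ot \gP)\to \\
&& \qquad \hm(\gP,\gP)\to 0,
\ee
which is exact: local freeness of $\gP$ makes \eqref{+2} locally split, so $\hm(\gP,-)$ preserves exactness stalkwise. Passing to the associated long exact cohomology sequence, the identity section $\id\in H^0(X;\hm(\gP,\gP))$ lifts to a global splitting of \eqref{+2} as soon as its image, the Atiyah class, in $H^1(X;\hm(\gP,\cO^1_X\ot \gP))$ vanishes.

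The key step is the vanishing of that cohomology group. Since $\gP$ and $\cO^1_X$ are locally free sheaves of $C^\infty_X$-modules, so is $\cO^1_X\ot\gP$, and then so is $\hm(\gP,\cO^1_X\ot\gP)$; in particular it is a sheaf of $C^\infty_X$-modules. A smooth manifold is paracompact and admits a partition of unity by smooth functions, so Proposition \ref{spr256'} applies and shows that every sheaf of $C^\infty_X$-modules is fine, hence acyclic, so $H^1(X;\hm(\gP,\cO^1_X\ot\gP))=0$. Therefore the Atiyah class vanishes, \eqref{+2} splits, and $\gP$ carries a connection. For the statement about modules I would then invoke Proposition \ref{+219}: a locally free $C^\infty(X)$-module $P$ is the structure module of a locally free sheaf $\gP$ with $\gP(X)=P$, and restricting a connection on $\gP$ to global sections produces a connection on $P$ in the sense of Definitions \ref{+181} and \ref{1016}.

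I expect the only genuine subtlety to be the bookkeeping in the middle step: verifying that $\hm(\gP,-)$ applied to \eqref{+2} remains exact as a sequence of sheaves — this really does use local freeness of $\gP$, since $\hm(\gP,-)$ is only left exact in general — and identifying the connecting homomorphism's value on $\id$ with the Atiyah class. Once the target $H^1$ is recognized as zero by fineness, the splitting, and hence the existence of the connection, follows with no further work.
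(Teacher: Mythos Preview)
Your proposal is correct and follows essentially the same argument as the paper: both identify the obstruction to splitting the sequence \eqref{+2} as the Atiyah class in $H^1(X;\hm(\gP,\cO^1_X\ot\gP))$ and kill it by observing that this sheaf, being a sheaf of $C^\infty_X$-modules, is fine and hence acyclic. Your write-up is somewhat more careful about why $\hm(\gP,-)$ preserves exactness (local freeness) than the paper's terse version, but the strategy is identical.
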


In conclusion, let us consider a sheaf $S$ of commutative
$C^\infty_X$-rings on a manifold $X$. Basing on Definition
\ref{mos088}, we come to the following notion of a connection on a
sheaf $S$ of commutative $C^\infty_X$-rings.

\begin{defi} \label{+7} \mar{+7}
Any morphism
\be
\gd C^\infty_X \ni\tau \mapsto \nabla_\tau\in \gd S,
\ee
which is a connection on $S$ as a sheaf of $C^\infty_X$-modules,
is called a connection on the sheaf $S$ of rings.
\end{defi}

Its curvature is given by the expression
\mar{+106}\beq
R(\tau,\tau')=[\nabla_\tau,\nabla_{\tau'}]-\nabla_{[\tau,\tau']},
\label{+106}
\eeq
similar to the expression (\ref{+100}) for the curvature of a
connection on modules.

\chapter{Geometry of quantum systems}

Algebraic quantum theory usually deals with Hilbert spaces. This
Chapter addresses differential geometry of Banach and Hilbert
manifolds and, in particular, Hilbert bundles and bundles of
$C^*$-algebras over a smooth manifolds $X$. For instance, this is
the case of time-dependent quantum systems (where $X=\Bbb R$)
(Section 2.5) and quantum models depending on classical parameters
(Section 2.6). Their differential geometry is similar to
differential geometry of finite-dimensional smooth manifolds and
bundles in main, and it is formulated in algebraic terms of
differential geometry of modules and, in particular,
$C^\infty(X)$-modules.

\section{Geometry of Banach manifolds}

We start with the notion of a real Banach manifold
\cite{lang95,vais73}. Banach manifolds are defined similarly to
finite-dimensional smooth manifolds, but they are modelled on
Banach spaces, not necessarily finite-dimensional.

Let us recall some particular properties of (infinite-dimensional)
real Banach spaces. Let us note that a finite-dimensional Banach
space is always provided with an Euclidean norm.

$\bullet$ Given Banach spaces $E$ and $H$, every continuous
bijective linear map of $E$ to $H$ is an isomorphism of
topological vector spaces.

$\bullet$ Given a Banach space $E$, let $F$ be its closed
subspace. One says that $F$ {\it splits} \index{split (subspace)}
in $E$ if there exists a closed complement $F'$ of $F$ such that
$E\cong F\oplus F'$. In particular, finite-dimensional and
finite-codimensional subspaces split in $E$. As a consequence, any
subspace of a finite-dimensional space splits.

$\bullet$ Let $E$ and $H$ be Banach spaces and $f:E\to H$ a
continuous injection. One says that $f$ {\it splits} \index{split
(injection)} if there exists an isomorphism
\be
g:H\to H_1\times H_2
\ee
such that $g\circ f$ yields an isomorphism of $E$ onto
$H_1\times\{0\}$.

$\bullet$ Given Banach spaces $(E,\|.\|_E)$ and $(H,\|.\|_H)$, one
can provide the set $\hm^0(E,H)$ of continuous linear morphisms of
$E$ to $H$ with the norm
\mar{w370}\beq
||f||=\op\sup_{||z||_E=1}||f(z)||_H,  \qquad f\in \hm^0(E,H).
\label{w370}
\eeq
If $E$, $H$ and $F$ are Banach spaces, the bilinear map
\be
\hm^0(E,F)\times \hm^0(F,H)\to \hm^0(E,H),
\ee
obtained by the composition $f\circ g$ of morphisms $\g\in
\hm^0(E,F)$ and $f\in \hm^0(F,H)$, is continuous. Let us note that
this assertion is false for more general spaces, e.g., the
Fr\'echet ones.

$\bullet$ Let $(E,\|.\|_E)$ and $(H,\|.\|_H)$ be real Banach
spaces. One says that a continuous map $f: E\to H$ (not
necessarily linear and isometric) is a {\it differentiable
function} \index{differentiable function!between Banach spaces}
between $E$ and $H$ if, given a point $z\in E$, there exists an
$\Bbb R$-linear continuous map
\be
df(z): E\to H
\ee
(not necessarily isometric) such that
\be
&& f(z')=f(z) +df(z)(z'-z) +o(z'-z), \\
&& \op\lim_{\|z'-z\|_E\to 0}
\frac{\|o(z'-z)\|_H}{\|z'-z\|_E}=0,
\ee
for any $z'$ in some open neighborhood $U$ of $z$. For instance,
any continuous linear morphism $f$ of $E$ to $H$ is differentiable
and $df(z)z=f(z)$. The linear map $df(z)$ is called a {\it
differential} \index{differential on a Banach space} of $f$ at a
point $z\in U$. Given an element $v\in E$, we obtain the map
\mar{spr706}\beq
E\ni z\mapsto \dr_vf(z)=df(z)v\in H, \label{spr706}
\eeq
called the {\it derivative} \index{derivative!on a Banach space}
of a function $f$ along a vector $v\in E$. One says that $f$ is
two-times differentiable if the map (\ref{spr706}) is
differentiable for any $v\in E$. Similarly, $r$-times
differentiable and infinitely differentiable (smooth) functions on
a Banach space are defined. The composition of smooth maps is a
smooth map.

The following {\it inverse mapping theorem} \index{inverse mapping
theorem} enables one to consider smooth Banach manifolds and
bundles similarly to the finite-dimensional ones.

\begin{theo} \label{w371} \mar{w371}
Let $f: E\to H$ be a smooth map such that, given a point $z\in E$,
the differential $df(z):E\to H$ is an isomorphism of topological
vector spaces. Then $f$ is a local isomorphism at $z$.
\end{theo}

Let us turn to the notion of a Banach manifold, without repeating
the statements true both for finite-dimensional and Banach
manifolds.

\begin{defi} \label{w372} \mar{w372}
A {\it Banach manifold} \index{Banach manifold} $\cB$ modelled on
a Banach space $B$ is defined as a topological space which admits
an atlas of charts $\Psi_\cB=\{(U_\iota,\f_\iota)\}$, where the
maps $\f_\iota$ are homeomorphisms of $U_\iota$ onto open subsets
of the Banach space $B$, while the transition functions
$\f_\zeta\f_\iota^{-1}$ from $\f_\iota(U_\iota\cap U_\zeta)\subset
B$ to $\f_\zeta(U_\iota\cap U_\zeta)\subset B$ are smooth. Two
atlases of a Banach manifold are said to be equivalent if their
union is also an atlas.
\end{defi}

Unless otherwise stated, Banach manifolds are assumed to be
connected paracompact Hausdorff topological spaces. A locally
compact Banach manifold is necessarily finite-dimensional.

\begin{rem} \label{w740} \mar{w740}
Let us note that a paracompact Banach manifold admits a smooth
partition of unity iff its model Banach space does. For instance,
this is the case of (real) separable Hilbert spaces. Therefore, we
restrict our consideration to Hilbert manifolds modelled on
separable Hilbert spaces.
\end{rem}

Any open subset $U$ of a Banach manifold $\cB$ is a Banach
manifold whose atlas is the restriction of an atlas of $\cB$ to
$U$.

Morphisms of Banach manifolds are defined similarly to those of
smooth finite-dimensional manifolds. However, the notion of the
immersion and submersion need a certain modification (see
Definition \ref{w374} below).

Tangent vectors to a smooth Banach manifold $\cB$ are introduced
by analogy with tangent vectors to a finite-dimensional one. Given
a point $z\in\cB$, let us consider the pair
$(v;(U_\iota,\f_\iota))$ of a vector $v\in B$ and a chart
$(U_\iota\ni z,\f_\iota)$ on a Banach manifold $\cB$. Two pairs
$(v;(U_\iota,\f_\iota))$ and $(v';(U_\zeta,\f_\zeta))$  are said
to be equivalent if
\mar{spr705}\beq
v'=d(\f_\zeta\f_\iota^{-1})(\f_\iota(z))v. \label{spr705}
\eeq
The equivalence classes of such pairs make up the {\it tangent
space} \index{tangent space!to a Banach manifold} $T_z\cB$ to a
Banach manifold $\cB$ at a point $z\in\cB$. This tangent space is
isomorphic to the topological vector space $B$. Tangent spaces to
a Banach manifold $\cB$ are assembled into the {\it tangent
bundle} $T\cB$ of $\cB$. \index{tangent bundle!of a Banach
manifold} It is a Banach manifold modelled over the Banach space
$B\oplus B$ which possesses the transition functions
\be
(\f_\zeta\f_\iota^{-1},d(\f_\zeta\f_\iota^{-1})).
\ee

Any morphism $f:\cB\to\cB'$ of Banach manifolds yields the
corresponding tangent morphism  of the tangent bundles $Tf:T\cB\to
T\cB'$.

\begin{defi} \label{w374} \mar{w374}
Let $f:\cB\to \cB'$ be a  morphism of Banach manifolds.

(i) It is called an immersion at a point $z\in\cB$ if the tangent
morphism $Tf$ at $z$ is injective and splits.

(ii) A morphism $f$ is called a submersion at a point $z\in\cB$ if
$Tf$ at $z$ is surjective and its kernel splits.
\end{defi}

In the case of finite-dimensional smooth manifolds, the split
conditions are superfluous, and Definition \ref{w374} recovers the
notion of the immersion and submersion of smooth manifolds.

The range of a surjective submersion $f$ of a Banach manifold is a
submanifold, though $f$ need not be an isomorphism onto a
submanifold, unless $f$ is an imbedding.

One can think of a surjective submersion $\pi:\cB\to\cB'$ of
Banach manifolds as a {\it fibred Banach manifold}. \index{fibred
manifold Banach} For instance, the product $\cB\times \cB'$ of
Banach manifolds is a fibred Banach manifold with respect to
$\pr_1$ and $\pr_2$.

Let $\cB$ be a Banach manifold and $E$ a Banach space. The
definition of a (locally trivial) vector bundle with the typical
fibre $E$ and the base $\cB$ is a repetition of that of
finite-dimensional smooth vector bundles. Such a vector bundle $Y$
is a Banach manifold and $Y\to \cB$ is a surjective submersion.
The above mentioned tangent bundle $T\cB$ of a Banach manifold
exemplifies a vector bundle over $\cB$.

Let {\bf Bnh} be the {\it category of Banach spaces}
\index{category {\bf Bnh}} and ${\bf Vect}(\cB)$ denotes the {\it
category of vector bundles over a Banach manifold} $\cB$
\index{category ${\bf Vect}(\cB)$} with respect to their morphisms
over $\id \cB$. Let
\mar{w375}\beq
F:{\bf Bnh}\times {\bf Bnh}\to {\bf Bnh} \label{w375}
\eeq
 be a functor of two variables which is covariant in the first
and contravariant in the second. Then there exists a functor
\be
VF: {\bf Vect}(\cB)\times {\bf Vect}(\cB)\to {\bf Vect}(\cB)
\ee
such that, if $Y_E\to\cB$ and $Y_H\to\cB$ are vector bundles with
the typical fibres $E$ and $H$, then $VF(Y_E,Y_H)$ is a vector
bundle with the typical fibre $F(E,H)$. For instance, the Whitney
sum, the tensor product, and the exterior product of vector
bundles over a Banach manifold are defined in this way. In
particular, since the topological dual $E'$ of a Banach space $E$
is a Banach space, one can associate to each vector bundle
$Y_E\to\cB$ the dual $Y^*_E=Y_{E'}$ with the typical fibre $E'$.
For instance, the dual of the tangent bundle $T\cB$ of a Banach
manifold $\cB$ is the {\it cotangent bundle} \index{cotangent
bundle!of a Banach manifold} $T^*\cB$.

Sections of the tangent bundle $T\cB\to\cB$ of a Banach manifold
are called {\it vector fields} on a Banach manifold \index{vector
field!on a Banach manifold}  $\cB$. They form a locally free
module $\cT_1(\cB)$ over the ring $C^\infty(\cB)$ of smooth real
functions on $\cB$. Every vector field $\vt$ on a Banach manifold
$\cB$ determines a derivation of the $\Bbb R$-ring $C^\infty(\cB)$
by the formula
\be
f(z)\mapsto \dr_\vt f(z)=df(z)\vt(z), \qquad  z\in \cB.
\ee
Different vector fields yield different derivations. It follows
that $\cT_1(\cB)$ possesses a structure of a real Lie algebra, and
there is its monomorphism
\mar{w376}\beq
\cT_1(\cB)\to \gd C^\infty(\cB) \label{w376}
\eeq
to the derivation module of the $\Bbb R$-ring $C^\infty(\cB)$.

Let us consider the Chevalley--Eilenberg complex of the real Lie
algebra $\cT_1(\cB)$ with coefficients in $C^\infty(\cB)$ and its
subcomplex $\cO^*[\cT_1(\cB)]$ of $C^\infty(\cB)$-multilinear
skew-symmetric maps by analogy with the complex $\cO^*[\gd\cA]$ in
Section 1.4 \cite{book05}. This subcomplex is a differential
calculus over a $\Bbb R$-ring $C^\infty(\cB)$ where the
Chevalley--Eilenberg coboundary operator $d$ (\ref{+840}) and the
product (\ref{ws103}) read
\mar{spr713,'}\ben
&& d\f(\vt_0,\ldots,\vt_r)=\op\sum^r_{i=0}(-1)^i\dr_{\vt_i}
(\f(\vt_0,\ldots,\wh{\vt_i},\ldots,\vt_r)) +\label{spr713}\\
&& \qquad \op\sum_{i<j} (-1)^{i+j}
\f([\vt_i,\vt_j],\vt_0,\ldots,
\wh \vt_i, \ldots, \wh \vt_j,\ldots,\vt_k), \nonumber\\
&& \f\w\f'(\vt_1,...,\vt_{r+s})= \label{spr713'}\\
&& \qquad \op\sum_{i_1<\cdots<i_r;j_1<\cdots<j_s}
{\rm sgn}^{i_1\cdots i_rj_1\cdots j_s}_{1\cdots r+s}
\f(\vt_{i_1},\ldots, \vt_{i_r})
\f'(\vt_{j_1},\ldots,\vt_{j_s}), \nonumber \\
&& \f\in
\cO^r[\cT_1(\cB)], \qquad \f'\in \cO^s[\cT_1(\cB)], \qquad
 \vt_i\in\cT_1(\cB). \nonumber
\een
There are the familiar relations
\be
&& \vt\rfloor df=\dr_\vt f, \qquad f\in C^\infty(\cB), \qquad \vt\in
\cT_1(\cB), \\
&& d(\f\w\f')=d\f\w\f' +(-1)^{|\f|}\f\w d\f',
\qquad \f,\f'\in \cO^*[\cT_1(\cB)].
\ee

The differential calculus $\cO^*[\cT_1(\cB)]$ contains the
following subcomplex. Let $\cO^1(\cB)$ be the
$C^\infty(\cB)$-module of global sections of the cotangent bundle
$T^*\cB$ of $\cB$. Obviously, there is its monomorphism
\mar{w377}\beq
\cO^1(\cB)\to \gd C^\infty(\cB)^* \label{w377}
\eeq
to the dual of the derivation module $\gd C^\infty(\cB)$.
Furthermore, let $\op\w^r T^*\cB$ be the $r$-degree exterior
product of the cotangent bundle $T^*\cB$ and $\cO^r(\cB)$ the
$C^\infty(\cB)$-module of its sections. Let $\cO^*(\cB)$ be the
direct sum of $C^\infty(\cB)$-modules $\cO^r(\cB)$, $r\in\Bbb N$,
where we put $\cO^0(\cB)=C^\infty(\cB)$. Elements of $\cO^*(\cB)$
are obviously $C^\infty(\cB)$-multilinear skew-symmetric maps of
$\cT_1(\cB)$ to $C^\infty(\cB)$. Therefore, the
Chevalley--Eilenberg differential $d$ (\ref{spr713}) and the
exterior product (\ref{spr713'}) of elements of $\cO^*(\cB)$ are
well defined. Moreover, one can show that $d\f$ and $\f\w\f'$,
$\f,\f'\in \cO^*(\cB)$, are also elements of $\cO^*(\cB)$. Thus,
$\cO^*(\cB)$ is a differential graded commutative algebra, called
the algebra of {\it exterior forms} \index{exterior form!on a
Banach manifold} on a Banach manifold $\cB$.

At the same time, one can consider Chevalley--Eilenberg
differential calculus $\cO^*[\gd C^\infty(\cB)]$ over the $\Bbb
R$-ring $C^\infty(\cB)$. Because of the monomorphism (\ref{w376}),
we have the homomorphism of $C^\infty(\cB)$-modules
\mar{w377'}\beq
\cO^1[\gd C^\infty(\cB)]=\gd C^\infty(\cB)^*\to \cT_1(\cB)^*
=\cO^1[\cT_1(\cB)]\lto \cO^1(\cB). \label{w377'}
\eeq
It follows that the differential calculi $\cO^*[\cT_1(\cB)]$,
$\cO^*(\cB)$ and $\cO^1[\gd C^\infty(\cB)]$ over the $\Bbb R$-ring
$C^\infty(\cB)$ are not mutually isomorphic in general. However,
it is readily observed that the minimal differential calculi in
$\cO^*[\cT_1(\cB)]$ and $\cO^*(\cB)$ coincide with the minimal
Chevalley--Eilenberg differential calculus $\cO^*C^\infty(\cB)$
over the $\Bbb R$-ring $C^\infty(\cB)$ because they
 are generated by the elements $df$,
$f\in C^\infty(\cB)$, where $d$ is the restriction (\ref{spr713})
to $\cT_1(\cB)$ of the Chevalley--Eilenberg coboundary operator
(\ref{+840}).

A {\it connection} on a Banach manifold $\cB$ \index{connection!on
a Banach manifold} is defined as a connection on the
$C^\infty(\cB)$-module $\cT_1(\cB)$ \cite{book05,vais73}. In
accordance with Definition \ref{+181}, it is an $\Bbb R$-module
morphism
\be
\nabla: \cT_1(\cB)\to \cO^1C^\infty(\cB) \ot\cT_1(\cB),
\ee
which obeys the Leibniz rule
\mar{w385}\beq
\nabla(f\vt)= df\ot\vt +f\nabla(\vt), \qquad f\in  C^\infty(\cB),
\qquad \vt\in\cT_1(\cB). \label{w385}
\eeq
In view of the inclusions,
\be
\cO^1C^\infty(\cB)\subset \cO^1(\cB)\subset \cT_1(\cB)^*, \qquad
\cT_1(\cB)\subset \cT_1(\cB)^{**}\subset \cO^1(\cB)^*,
\ee
it is however convenient to define a connection on a Banach
manifold as an $\Bbb R$-module morphism
\mar{w386}\beq
\nabla: \cT_1(\cB)\to \cO^1(\cB) \ot\cT_1(\cB), \label{w386}
\eeq
which obeys the Leibniz rule (\ref{w385}).

\section{Geometry of Hilbert manifolds}

Let us turn now to Hilbert manifolds. These are particular Banach
manifolds modelled on complex Hilbert spaces, which are assumed to
be separable (see Remark \ref{w740}).

\begin{rem} \label{w705} \mar{w705} We refer the reader to \cite{lang95}
for the theory of real Hilbert and (infinite-dimensional)
Riemannian manifolds. A real Hilbert manifold is a Banach manifold
$\cB$ modelled on a real Hilbert space $V$. It is assumed to be
connected Hausdorff and paracompact space admitting the partition
of unity by smooth functions (this is the case of a separable
$V$). In infinite-dimensional geometry, the most of local results
follow from general arguments analogous to those in the
finite-dimensional case. The global theory of real Hilbert
manifolds is more intricate.
\end{rem}

A complex Hilbert space $(E,\lng.|.\rng)$ can be seen as a real
Hilbert space
\be
E\ni v\mapsto v_\Bbb R\in E_\Bbb R, \qquad (v_\Bbb R,v'_\Bbb R)=
{\rm Re}\,\lng v|v'\rng,
\ee
equipped with the complex structure $Jv_\Bbb R=(iv)_\Bbb R$. We
have
\be
(Jv_\Bbb R,Jv'_\Bbb R)=(v_\Bbb R,v'_\Bbb R), \qquad (Jv_\Bbb
R,v'_\Bbb R)={\rm Im}\,(v'_\Bbb R,v_\Bbb R).
\ee
Let $E_\Bbb C=\Bbb C\ot E_\Bbb R$ denote the complexification of
$E_\Bbb R$ provided with the Hermitian form $\lng.|.\rng_\Bbb C$.
The complex structure $J$ on $E_\Bbb R$ is naturally extended to
$E_\Bbb C$ by letting $J\circ i= i\circ J$. Then $E_\Bbb C$ is
split into the two complex subspaces
\mar{spr720}\ben
&& E_\Bbb C=E^{1,0}\oplus E^{0,1}, \label{spr720}\\
&& E^{1,0}=\{v_\Bbb R-iJv_\Bbb R\,:\, v_\Bbb R\in
E_\Bbb R\}, \nonumber\\
&& E^{0,1}=\{v_\Bbb R+iJv_\Bbb R\,:\, v_\Bbb R\in
E_\Bbb R\}, \nonumber
\een
which are mutually orthogonal with respect to the Hermitian form
$\lng.|.\rng_\Bbb C$. Since
\be
\lng v_\Bbb R-iJv_\Bbb R)|v'_\Bbb R-iJv'_\Bbb R\rng = 2\lng
v|v'\rng, \qquad \lng v_\Bbb R+iJv_\Bbb R|v'_\Bbb R+iJv'_\Bbb
R\rng = 2\lng v'|v\rng,
\ee
there are the following linear and antilinear isometric bijections
\be
&& E\ni v\mapsto v_\Bbb R\to \frac1{\sqrt{2}}(v_\Bbb R-iJv_\Bbb R)\in E^{1,0}, \\
&& E\ni v\mapsto v_\Bbb R\to \frac1{\sqrt{2}}(v_\Bbb R+iJv_\Bbb R)\in E^{0,1}.
\ee
They make $E^{1,0}$ and $E^{0,1}$ isomorphic to the Hilbert space
$E$ and the dual Hilbert space $\ol E$, respectively. Hence, the
decomposition (\ref{spr720}) takes the form
\mar{spr722}\beq
E_\Bbb C=E\oplus \ol E. \label{spr722}
\eeq
The complex structure $J$ on the direct sum (\ref{spr722}) reads
\mar{A22}\beq
J: E\oplus \ol E\ni v+\ol u\mapsto iv-i\ol u\in E\oplus \ol E,
\label{A22}
\eeq
where $E$ and $\ol E$ are the (holomorphic and antiholomorphic)
eigenspaces of $J$ characterized by the eigenvalues $i$ and $-i$,
respectively.

Let $f$ be a function (not necessarily linear) from a Hilbert
space $E$ to a Hilbert space $H$. It is said to be {\it
differentiable} \index{differentiable function!on a Hilbert space}
if the corresponding function $f_\Bbb R$ between the real Banach
spaces $E_\Bbb R$ and $H_\Bbb R$ is differentiable. Let $df_\Bbb
R(z)$, $z\in E_\Bbb R$, be the differential (\ref{spr706}) of
$f_\Bbb R$ on $E_\Bbb R$ which is a continuous linear morphism
\be
E_\Bbb R\ni v_\Bbb R \mapsto df_\Bbb R(z)v_\Bbb R \in H_\Bbb R
\ee
between real topological vector spaces $E_\Bbb R$ and $H_\Bbb R$.
This morphism is naturally extended to the $\Bbb C$-linear
morphism
\mar{spr725}\beq
E_\Bbb C\ni v_\Bbb C \mapsto df_\Bbb R(z)v_\Bbb C \in H_\Bbb C
\label{spr725}
\eeq
between the complexifications of $E_\Bbb R$ and $H_\Bbb R$. In
view of the decomposition (\ref{spr722}), one can introduce the
$\Bbb C$-linear maps
\be
\dr f_\Bbb R(z)(v+\ol u)= df_\Bbb R(z)v, \qquad
   \ol \dr f(z)(v+\ol u)= df_\Bbb R(z)\ol u
\ee
 from $E\oplus \ol E$ to $H_\Bbb C$ such that
\be
df_\Bbb R(z)v_\Bbb C=df_\Bbb R(z)(v+\ol u)=\dr f_\Bbb R(z)v
+\ol\dr f_\Bbb R(z)\ol u.
\ee
Let us split
\be
f_\Bbb R(z)=f(z) +\ol f(z)
\ee
in accordance with the decomposition $H_\Bbb C=H\oplus \ol H$.
Then the morphism (\ref{spr725}) takes the form
\mar{spr726}\beq
df_\Bbb R(z)(v+\ol u)=\dr f(z)v +\ol\dr f(z)\ol u +\dr\ol f(z)v
+\ol\dr\, \ol f(z)\ol u, \label{spr726}
\eeq
where $\dr \ol f=\ol{\ol\dr f}$, $\ol\dr\,\ol f=\ol{\dr f}$. A
function $f:E\to H$ is said to be {\it holomorphic}
\index{holomorphic function on a Hilbert space} (resp. {\it
antiholomorphic}) \index{antiholomorphic function on a Hilbert
space} if it is differentiable and $\ol\dr f(z)=0$ (resp. $\dr
f(z)=0$) for all $z\in E$. A holomorphic function is smooth, and
is given by the Taylor series. If $f$ is a holomorphic function,
then the morphism (\ref{spr726}) is split into the sum
\be
df_\Bbb R(z)(v+\ol u)=\dr f(z)v + \ol\dr\, \ol f(z)\ol u
\ee
of morphisms $E\to H$ and $\ol E\to\ol H$.

\begin{ex} \label{spr730} \mar{spr730}
Let $f$ be a complex function on a Hilbert space $E$. Then
\be
f_\Bbb R=(\re f,\im f)
\ee
is a map of $E$ to $\Bbb R^2$. The differential $df_\Bbb R(z)$,
$z\in E$, of $f_\Bbb R$ yields the complex linear morphism
\be
E\oplus \ol E\ni v_\Bbb C\mapsto (d\re f(z)v_\Bbb C,d\im
f(z)v_\Bbb C)\mapsto d(\re f+i\im f)(z)v_\Bbb C\in \Bbb C,
\ee
which is regarded as a differential $df(z)$ of a complex function
$f$ on a Hilbert space $E$.
\end{ex}

A {\it Hilbert manifold} \index{Hilbert manifold} $\cP$ modelled
on a Hilbert space $E$ is defined as a real Banach manifold
modelled on the Banach space $E_\Bbb R$ which admits an atlas
$\{(U_\iota,\f_\iota)\}$ with holomorphic transition functions
$\f_\zeta\f_\iota^{-1}$. Let $CT\cP$ denote the {\it complexified
tangent bundle} \index{tangent bundle!complexified } of a Hilbert
manifold $\cP$. In view of the decomposition (\ref{spr722}), each
fibre $CT_z\cP$, $z\in \cP$, of $CT\cP$ is split into the direct
sum
\be
CT_z\cP=T_z\cP\oplus\ol T_z\cP
\ee
of subspaces $T_z\cP$ and $\ol T_z\cP$, which are topological
complex vector spaces isomorphic to the Hilbert space $E$ and the
dual Hilbert space $\ol E$, respectively. The spaces $CT_z\cP$,
$T_z\cP$ and $\ol T_z\cP$ are respectively called the {\it
complex}, \index{tangent space!complex} {\it holomorphic}
\index{tangent space!holomorphic} and {\it antiholomorphic tangent
spaces}  \index{tangent space!antiholomorphic} to a Hilbert
manifold $\cP$ at a point $z\in\cP$. Since transition functions of
a Hilbert manifold are holomorphic, the complex tangent bundle
$CT\cP$ is split into a sum
\be
CT\cP=T\cP\oplus \ol T\cP
\ee
of {\it holomorphic} \index{tangent bundle!holomorphic} and {\it
antiholomorphic} \index{tangent bundle!antiholomorphic}
subbundles, together with the antilinear bundle automorphism
\be
T\cP\oplus \ol T\cP\ni v+\ol u\mapsto \ol v + u\in T\cP\oplus \ol
T\cP
\ee
and the complex structure
\mar{spr700}\beq
J: T\cP\oplus \ol T\cP\ni v+\ol u\mapsto iv-i\ol u\in T\cP\oplus
\ol T\cP. \label{spr700}
\eeq

Sections of the complex tangent bundle $CT\cP\to\cP$ are called
{\it complex vector fields} \index{complex vector field!on a
Hilbert manifold} on a Hilbert manifold $\cP$. They constitute the
locally free module $C\cT_1(\cP)$ over the ring $\Bbb
C^\infty(\cP)$ of smooth complex functions on $\cP$. Every complex
vector field $\vt +\ol\up$ on $\cP$ yields a derivation
\be
f(z)\to df(z)(\vt +\up)=\dr f(z)\vt(z) +\ol\dr f(z)\upsilon(z),
\qquad f\in \Bbb C^\infty(\cP), \quad z\in \cP,
\ee
of the $\Bbb C$-ring $\Bbb C^\infty(\cP)$.

The (topological) dual of the complex tangent bundle $CT\cP$ is
the {\it complex cotangent bundle} \index{cotangent bundle!
complex} $CT^*\cP$ of $\cP$. Its fibres $CT^*_z\cP$, $z\in \cP$,
are topological complex vector spaces isomorphic to $E\oplus \ol
E$.  Since Hilbert spaces are reflexive, the complex tangent
bundle $CT\cP$ is the dual of $CT^*\cP$. The complex cotangent
bundle $CT^*\cP$ is split into the sum
\mar{spr702}\beq
CT^*\cP=T^*\cP\oplus \ol T^*\cP \label{spr702}
\eeq
of {\it holomorphic} \index{cotangent bundle!holomorphic} and {\it
antiholomorphic} \index{cotangent bundle!antiholomorphic}
subbundles, which are the annihilators of antiholomorphic and
holomorphic tangent bundles $\ol T\cP$ and $T\cP$, respectively.
Accordingly, $CT^*\cP$ is provided with the complex structure $J$
via the relation
\be
\lng v,Jw\rng=\lng Jv,w\rng, \qquad  v\in CT_z\cP, \qquad w\in
CT^*_z\cP, \qquad z\in \cP.
\ee
Sections of the complex cotangent bundle $CT^*\cP\to \cP$
constitute a locally free $\Bbb C^\infty(\cP)$-module
$\cO^1(\cP)$. It is the $\Bbb C^\infty(\cP)$-dual
\mar{w380}\beq
\cO^1(\cP)=C\cT_1(\cP)^* \label{w380}
\eeq
of the module $C\cT_1(\cP)$ of complex vector fields on $\cP$, and
{\it vice versa}.

Similarly to the case of  a Banach manifold, let us consider the
differential calculi $\cO^*[\cT_1(\cP)]$, $\cO^*(\cP)$ (further
denoted by $\cC^*(\cP)$)  and $\cO^1[\gd C^\infty(\cP)]$ over the
$\Bbb C$-ring $\Bbb C^\infty(\cP)$. Due to the isomorphism
(\ref{w380}), $\cO^*[\cT_1(\cP)]$ is isomorphic to $\cC^*(\cP)$,
whose elements are called {\it exterior form!complex}
\index{exterior form!on a Hilbert manifold} on a Hilbert manifold
$\cP$. The exterior differential $d$ on these forms is the
Chevalley--Eilenberg coboundary operator
\mar{w381}\ben
&& d\f(\vt_0,\ldots,\vt_k)=\op\sum^k_{i=0}(-1)^id
\f(\vt_0,\ldots,\wh{\vt_i},\ldots,\vt_k)\vt_i + \label{w381}\\
&& \qquad \op\sum_{i<j} (-1)^{i+j}
\f([\vt_i,\vt_j],\vt_0,\ldots, \wh \vt_i, \ldots, \wh
\vt_j,\ldots,\vt_k), \qquad \vt_i\in C\cT_1(\cP). \nonumber
\een

In view of the splitting (\ref{spr702}), the differential graded
algebra $\cC^*(\cP)$ admits the decomposition
\be
\cC^*(\cP)=\op\oplus_{p,q=0}\cC^{p,q}(\cP)
\ee
into subspaces $\cC^{p,q}(\cP)$ of {\it $p$-holomorphic}
\index{holomorphic exterior form on a Hilbert manifold} and {\it
$q$-antiholomorphic} forms. \index{exterior
form!antiholomorphic!on a Hilbert manifold} Accordingly, the
exterior differential $d$ on $\cC^*(\cP)$ is split into a sum
$d=\dr +\ol\dr$ of holomorphic and antiholomorphic differentials
\be
&& \dr: \cC^{p,q}(\cP) \to \cC^{p+1,q}(\cP), \qquad \ol\dr:
\cC^{p,q}(\cP) \to \cC^{p,q+1}(\cP),\\
&& \dr\circ\dr=0, \qquad \ol\dr\circ\ol\dr=0, \qquad \dr\circ\ol\dr
+\ol\dr\circ\dr =0.
\ee

A {\it Hermitian metric} \index{Hermitian metric!on a Hilbert
manifold} on a Hilbert manifold $\cP$ is defined as a complex
bilinear form $g$ on fibres of the complex tangent bundle $CT\cP$
which obeys the following conditions:

$\bullet$ $g$ is a smooth section of the tensor bundle $CT^*\cP\ot
CT^*\cP\to\cP$;

$\bullet$ $g(\vt_z,\vt'_z)=0$ if complex tangent vectors
$\vt_z,\vt'_z\in CT_z\cP$ are simultaneously holomorphic or
antiholomorphic;

$\bullet$ $g(\vt_z,\ol\vt_z)> 0$ for any non-vanishing complex
tangent vector $\vt_z\in CT_z\cP$;

$\bullet$ the bilinear form $g(\vt_z,\vt'_z)$, $\vt_z,\vt'_z\in
CT_z\cP$, defines a norm topology on the complex tangent space
$CT_z\cP$ which is equivalent to its Hilbert space topology.

As an immediate consequence of this definition, we obtain
\be
\ol{g(\vt_z,\vt'_z)}= g(\ol \vt_z, \ol \vt'_z),\qquad
g(J\vt_z,J\vt'_z)= g(\vt_z, \vt'_z).
\ee
A Hermitian metric exists, e.g., on paracompact Hilbert manifolds
modelled on separable Hilbert spaces.

The above mentioned properties of a Hermitian metric on a Hilbert
manifold are similar to properties of a Hermitian metric on a
finite-dimensional complex manifold. Therefore, one can think of
the pair $(\cP,g)$ as being an infinite-dimensional {\it Hermitian
manifold}. \index{Hermitian manifold!infinite-dimensional}

A Hermitian manifold $(\cP,g)$ is endowed with a non-degenerate
exterior two-form
\mar{A26}\beq
\Om(\vt_z,\vt'_z) =g(J\vt_z,\vt'_z), \qquad  \vt_z,\vt'_z\in
CT_z\cP, \qquad z\in \cP, \label{A26}
\eeq
called the {\it fundamental form} \index{fundamental form!of a
Hermitian metric on a Hilbert manifold} of the Hermitian metric
$g$. This form satisfies the relations
\be
\ol{\Om(\vt_z,\vt'_z)}= \Om(\ol \vt_z, \ol \vt'_z),\qquad
\Om(J_z\vt_z,J_z\vt'_z)= \Om(\vt_z, \vt'_z).
\ee
If $\Om$ (\ref{A26}) is a closed (i.e., symplectic) form, the
Hermitian metric $g$ is called a {\it K\"ahler metric}
\index{K\"ahler metric!on a Hilbert manifold} and $\Om$ a {\it
K\"ahler form}. \index{K\"ahler form!on a Hilbert manifold}
Accordingly, $(\cP,g, \Om)$ is said to be an infinite-dimensional
{\it K\"ahler manifold}. \index{K\"ahler
manifold!infinite-dimensional}

By analogy with the case of a Banach manifold, we modify
Definition \ref{+181} and define a {\it connection} $\nabla$
\index{connection!on a Hilbert manifold} on a Hilbert manifold
$\cP$ as a $\Bbb C$-module morphism
\be
\nabla: C\cT_1(\cP)\to \cC^1(\cP) \ot C\cT_1(\cP),
\ee
which obeys the Leibniz rule
\be
\nabla(f\vt)= df\ot\vt +f\nabla(\vt), \qquad f\in \Bbb
C^\infty(\cP), \qquad \vt\in C\cT_1(\cP).
\ee
Similarly, a connection is introduced on any $\Bbb
C^\infty(\cP)$-module, e.g., on sections of tensor bundles over a
Hilbert manifold $\cP$. Let $D$ and $\ol D$ denote the holomorphic
and antiholomorphic parts of $\nabla$, and let
$\nabla_\vt=\vt\rfloor\nabla$, $D_\vt$ and $\ol D_\vt$ be the
corresponding covariant derivatives along a complex vector field
$\vt$ on $\cP$. For any complex vector field $\vt= \nu+\ol \up$ on
$\cP$, we have the relations
\be
D_\vt=\nabla_\nu, \qquad \ol D_\vt=\ol\nabla_{\ol \up},\qquad
D_{J\vt}=iD_\vt, \qquad \ol D_{J\vt}=-i\ol D_\vt.
\ee

\begin{prop} \label{spr703} \mar{spr703}
Given a K\"ahler manifold $(\cP,g)$, there always exists a {\it
metric connection} \index{metric connection!on a Hilbert manifold}
on $\cP$ such that
\be
\nabla g=0, \qquad \nabla \Om=0, \qquad \nabla J=0,
\ee
   where $J$ is
regarded as a section of the tensor bundle $CT^*\cP\ot CT\cP$.
\end{prop}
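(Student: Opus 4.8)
The plan is to transfer to the Hilbert-manifold setting the classical construction of the Levi-Civita connection of a Kähler metric and the classical proof that it is parallel for the complex structure and the fundamental form. First I would pass to the underlying real Hilbert manifold $\cB$ (modelled on $E_{\Bbb R}$). Restricted to real tangent vectors, the Hermitian metric $g$ gives a smooth symmetric $C^\infty(\cB)$-bilinear form $g_{\Bbb R}$ on $\cT_1(\cB)$ which, by the fourth clause in the definition of a Hermitian metric, induces the Hilbert topology on each tangent space (so $g_{\Bbb R}$ is a \emph{strong} Riemannian metric); the complex structure is a smooth section $J$ of ${\rm End}\,T\cB$ with $J^2=-\Id$, integrable because the transition functions of $\cP$ are holomorphic (so the Nijenhuis tensor $N_J$ vanishes), and isometric, $g_{\Bbb R}(J\vt,J\vt')=g_{\Bbb R}(\vt,\vt')$; and the fundamental form $\Om=g_{\Bbb R}(J\cdot,\cdot)$ is closed by the Kähler hypothesis. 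Hence it suffices to produce a connection $\nabla$ on the $C^\infty(\cB)$-module $\cT_1(\cB)$ with $\nabla g_{\Bbb R}=0$ and $\nabla J=0$; extending it $\Bbb C$-linearly to $C\cT_1(\cP)=\Bbb C\ot_{\Bbb R}\cT_1(\cB)$ and canonically to the relevant tensor bundles over $\cP$ then gives the asserted metric connection, with $\nabla\Om=0$ automatic from $\Om=g_{\Bbb R}(J\cdot,\cdot)$, $\nabla g_{\Bbb R}=0$ and $\nabla J=0$.

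Next I would build the Levi-Civita connection of $g_{\Bbb R}$ by the Koszul formula. Since $g_{\Bbb R}$ is strong, the musical morphism $\flat:\vt\mapsto g_{\Bbb R}(\vt,\cdot)$ is a fibrewise topological isomorphism and — using smoothness of $g$ together with the inverse mapping theorem (Theorem \ref{w371}) — a smooth isomorphism of the vector bundles $T\cB$ and $T^*\cB$; in particular every smooth one-form on $\cB$ is $g_{\Bbb R}(\vt,\cdot)$ for a unique $\vt\in\cT_1(\cB)$. Given $X,Y\in\cT_1(\cB)$, one checks that
\[
Z\ \longmapsto\ \frac12\Bigl(Xg_{\Bbb R}(Y,Z)+Yg_{\Bbb R}(X,Z)-Zg_{\Bbb R}(X,Y)+g_{\Bbb R}([X,Y],Z)-g_{\Bbb R}([X,Z],Y)-g_{\Bbb R}([Y,Z],X)\Bigr)
\]
is a smooth one-form on $\cB$, hence equals $g_{\Bbb R}(\nabla_XY,\cdot)$ for a unique $\nabla_XY\in\cT_1(\cB)$. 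A routine verification then shows $\nabla$ is $C^\infty(\cB)$-linear in $X$ and obeys the Leibniz rule in $Y$, so it is a connection in the sense of Definition \ref{+181}; that it is torsion-free, $\nabla_XY-\nabla_YX=[X,Y]$, and metric, $Xg_{\Bbb R}(Y,Z)=g_{\Bbb R}(\nabla_XY,Z)+g_{\Bbb R}(Y,\nabla_XZ)$ (i.e. $\nabla g_{\Bbb R}=0$), is read off the formula, as is uniqueness.

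Then I would deduce $\nabla J=0$ from the Kähler conditions. For any torsion-free metric connection there is the well-known pointwise identity expressing $\nabla J$ through $d\Om$ and $N_J$, of the form
\[
g_{\Bbb R}\bigl((\nabla_XJ)Y,Z\bigr)=c_1\,d\Om(X,Y,Z)+c_2\,d\Om(X,JY,JZ)+c_3\,g_{\Bbb R}\bigl(N_J(Y,Z),JX\bigr)
\]
with universal constants, whose derivation is local and algebraic and therefore holds verbatim on $\cB$. Since $d\Om=0$ and $N_J=0$, the right-hand side vanishes for all $Z$, and non-degeneracy of $g_{\Bbb R}$ forces $(\nabla_XJ)Y=0$, i.e. $\nabla J=0$; then $\nabla\Om=0$ follows as noted. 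Extending $\nabla$ complex-linearly to $C\cT_1(\cP)$ and canonically to the tensor bundles $CT^*\cP\ot CT^*\cP$ and $CT^*\cP\ot CT\cP$ yields a connection in the sense of the modified Definition \ref{+181} with $\nabla g=\nabla\Om=\nabla J=0$, $J$ being regarded as a section of $CT^*\cP\ot CT\cP$.

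The main obstacle is the genuinely infinite-dimensional point in the second step: that the Koszul expression represents an honest tangent vector. This is exactly where the strength clause in the definition of a Hermitian metric is indispensable — for a merely weak metric the musical map $\flat$ need not be surjective and a metric connection need not exist (cf. \cite{lang95}) — and it is also where one must verify that the Koszul one-form is fibrewise continuous and that $\nabla_XY$ depends smoothly on $X,Y$ and is a smooth section, both of which are routine once $\flat^{-1}$ is known to be smooth via Theorem \ref{w371}. Paracompactness and separability of $\cP$ enter only to guarantee that a Hermitian metric exists in the first place and are not invoked again; everything else is the classical Kähler computation, which is local and insensitive to the dimension of the model space.
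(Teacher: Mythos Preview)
The paper states this proposition without proof; it simply asserts the result and moves on to examples. Your argument is the standard one (Levi--Civita connection via the Koszul formula, then the K\"ahler identity relating $\nabla J$ to $d\Om$ and $N_J$), and it is correct. You have also correctly identified the one genuinely infinite-dimensional issue: the invertibility of the musical map, which is guaranteed precisely by the fourth clause in the paper's definition of a Hermitian metric on a Hilbert manifold. Since the paper offers nothing to compare against, there is no divergence to discuss; your proposal supplies what the paper omits.
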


\begin{ex} \label{+303} \mar{+303}
If $\cP=E$ is a Hilbert space, then
\be
CT\cP=E\times (E\oplus \ol E).
\ee
A Hermitian form $\lng.|.\rng$ on $E$ defines the constant
Hermitian metric
\mar{A25}\ben
&& g: (E\oplus\ol E)\times (E\oplus\ol E)\to\Bbb C, \nonumber\\
&& g(\vt,\vt')= \lng v|u'\rng + \lng v'|u\rng, \qquad
\vt=v+\ol u,\quad \vt'=v'+\ol u',  \label{A25}
\een
on $\cP=E$. The associated fundamental form (\ref{A26}) reads
\mar{A28}\beq
\Om(\vt,\vt')=i\lng v|u'\rng - i\lng v'|u\rng. \label{A28}
\eeq
It is constant on $E$. Therefore, $d\Om=0$ and $g$ (\ref{A25}) is
a K\"ahler metric. The metric connection on $E$ is trivial, i.e.,
$\nabla=d$, $D=\dr$, $\ol D=\ol\dr$.
\end{ex}

\begin{ex} \label{x1} \mar{x1}
Given a Hilbert space $E$, {\it a projective Hilbert space}
\index{projective Hilbert space} $PE$ is made up by the complex
one-dimensional subspaces (i.e., complex rays) of $E$. This is a
Hilbert manifold with the following standard atlas. For any
non-zero element $x\in E$, let us denote by $\rx$ a point of $PE$
such that $x\in\rx$. Then each normalized element $h\in E$,
$||h||=1$, defines a chart $(U_h,\f_h)$ of the projective Hilbert
space $PE$ such that
\mar{A34}\beq
U_h=\{\rx\in PE\,\,:\,\, \lng x|h\rng\neq 0\}, \qquad \f_h(\rx)=
\frac{x}{\lng x|h\rng} - h. \label{A34}
\eeq
The image of $U_h$ in the Hilbert space $E$ is the
one-codimensional closed (Hilbert) subspace
\mar{A34'}\beq
E_h=\{z\in E\,\,:\,\, \lng z|h \rng = 0\}, \label{A34'}
\eeq
where $z(\rx)+h\in \rx$. In particular, given a point $\rx\in PE$,
one can choose the centered chart $E_h$, $h\in \rx$, such that
$\f_h(\rx)=0$. Hilbert spaces $E_h$ and $E_{h'}$ associated to
different charts $U_h$ and $U_{h'}$ are isomorphic. The transition
function between them is a holomorphic function
\mar{spr811'}\beq
z'(\rx)=\frac{z(\rx)+h}{\lng z(\rx)+h|h'\rng}-h', \qquad \rx\in
U_h\cap U_{h'}, \label{spr811'}
\eeq
from $\f_h(U_h\cap U_{h'})\subset E_h$ to $\f_{h'}(U_h\cap
U_{h'})\subset E_{h'}$. The set of the charts $\{(U_h,\f_h)\}$
with the transition functions (\ref{spr811'}) provides a
holomorphic atlas of the projective Hilbert space $PE$. The
corresponding coordinate transformations for the tangent vectors
to $PE$ at $\rx\in PE$ reads \mar{spr812}\beq v'=\frac{1}{\lng
x|h'\rng}[\lng x|h\rng v-x\lng v|h\rng]. \label{spr812}
\eeq

The projective Hilbert space $PE$ is homeomorphic to the quotient
of the unitary group $U(E)$ equipped with the normed operator
topology by the stabilizer of a ray of $E$. It is connected and
simply connected \cite{cir1}.

The projective Hilbert space $PE$ admits a unique Hermitian metric
$g$ such that the corresponding distance function on $PE$ is
\mar{A35}\beq
\rho(\rx,\rx')= \sqrt 2\,{\rm arccos}(\nm{\lng x|x'\rng}),
\label{A35}
\eeq
where $x,x'$ are normalized elements of $E$. It is a K\"ahler
metric, called the {\it Fubini--Studi metric}.
\index{Fubini--Studi metric} Given a coordinate chart
$(U_h,\f_h)$, this metric reads \mar{A36}\beq g_{\rm
FS}(\vt_1,\vt_2)= \frac{\lng v_1|u_2\rng +\lng
v_2|u_1\rng}{1+\|z\|^2} - \frac{\lng z|u_2\rng\lng v_1|z\rng +\lng
z|u_1\rng\lng v_2|z\rng}{(1+\|z\|^2)^2}, \quad z\in E_h,
\label{A36}
\eeq
for any complex tangent vectors $\vt_1=v_1+\ol u_1$ and
$\vt_2=v_2+\ol u_2$ in $CT_zPE$. The corresponding K\"ahler form
is given by the expression
\mar{A37}\beq
\Om_{\rm FS}(\vt_1,\vt_2)= i\frac{\lng v_1|u_2\rng -\lng
v_2|u_1\rng}{1+\|z\|^2} - i\frac{\lng z|u_2\rng\lng v_1|z\rng
-\lng z|u_1\rng\lng v_2|z\rng}{(1+\|z\|^2)^2}. \label{A37}
\eeq
It is readily justified that the expressions (\ref{A36}) --
(\ref{A37}) are preserved under the transition functions
(\ref{spr811'}) -- (\ref{spr812}). Written in the coordinate chart
centered at a point $z(\rx)=0$, these expressions come to the
expressions (\ref{A25}) and (\ref{A28}), respectively.
\end{ex}

\section{Hilbert and $C^*$-algebra bundles}

Due to the inverse mapping Theorem \ref{w371} for Banach
manifolds, fibred Banach manifolds similarly to the
finite-dimensional smooth ones can be defined as surjective
submersions of Banach manifold onto Banach manifold. Locally
trivial fibred Banach manifolds are called (smooth) {\it Banach
vector bundle}. \index{Banach vector bundle} They are exemplified
by vector (e.g., tangent and cotangent) bundles over Banach
manifolds in Section 2.1. Here, we restrict our consideration to
particular Banach vector bundles whose fibres are $C^*$-algebras
(seen as Banach spaces) and Hilbert spaces, but the base is a
finite-dimensional smooth manifold.

Note that sections of a Banach vector bundle $\cB\to Q$ over a
smooth finite-dimensional manifold $Q$ constitute a locally free
$C^\infty(Q)$-module $\cB(Q)$. Following the proof of Proposition
\ref{+219}, one can show that it is a projective
$C^\infty(Q)$-module. In a general setting, we therefore can
consider projective locally free $C^\infty(Q)$-modules, locally
generated by a Banach space. In contrast with the case of
projective $C^\infty(X)$ modules of finite rank, such a module
need not be a module of sections of some Banach vector bundle.

Let $\cC\to Q$ be a locally trivial topological fibre bundle over
a finite-dimensional smooth real manifold $Q$ whose typical fibre
is a $C^*$-algebra $A$, regarded as a real Banach space, and whose
transition functions are smooth. Namely,
 given two trivializations charts $(U_1,\psi_1)$ and
$(U_2,\psi_2)$ of $\cC$, we have the smooth morphism of Banach
manifolds
\be
\psi_1\circ \psi_2^{-1}: U_1\cap U_2\times A\to U_1\cap U_2\times
A,
\ee
where
\be
\psi_1\circ \psi_2^{-1}|_{q\in U_1\cap U_2}
\ee
is an automorphism of $A$. We agree to call $\cC\to Q$ a {\it
bundle of $C^*$-algebras}. \index{fibre bundle!of $C^*$-algebras}
It is a Banach vector bundle. The $C^\infty(Q)$-module $\cC(Q)$ of
smooth sections of this fibre bundle is a unital involutive
algebra with respect to fibrewise operations. Let us consider a
subalgebra $A(Q)\subset \cC(Q)$ which consists of sections of
$\cC\to Q$ vanishing at infinity on $Q$. It is provided with the
norm
\mar{spr740}\beq
||\al||=\op{\rm sup}_{q\in Q}||\al(q)||<\infty, \qquad \al\in
A(Q), \label{spr740}
\eeq
but fails to be complete. Nevertheless, one extends $A(Q)$ to a
$C^*$-algebra of continuous sections of $\cC\to Q$ vanishing at
infinity on a locally compact space $Q$.

\begin{rem} \label{spr742} \mar{spr742}
Let $\cC\to Q$ be a topological bundle of $C^*$-algebras over a
locally compact topological space $Q$, and let $\cC^0(Q)$ denote
the involutive algebra of its continuous sections. This algebra
exemplifies a locally trivial continuous {\it field of
$C^*$-algebras} \index{field of $C^*$-algebras} in  \cite{dixm}.
Its subalgebra $A^0(Q)$ of sections vanishing at infinity on $Q$
is a $C^*$-algebra with respect to the norm (\ref{spr740}). It is
called a {\it $C^*$-algebra defined by a continuous field of
$C^*$-algebras}. \index{$C^*$-algebra!defined by a continuous
field of $C^*$-algebras} There are several important examples of
$C^*$-algebras of this type. For instance, any commutative
$C^*$-algebra is isomorphic to the algebra of continuous complex
functions vanishing at infinity on its spectrum.
\end{rem}

\begin{rem} \label{w430} \mar{w430}
One can consider a locally trivial bundle of $C^*$-algebras
$\cC\to Q$ as a fibre bundle with the structure topological group
Aut$(A)$ of automorphisms of $A$. If a fibre bundle $\cC$ is
smooth, this group is necessarily provided with a normed operator
topology.
\end{rem}

Hilbert bundles over a smooth manifold are similarly defined. Let
$\cE\to Q$ be a locally trivial topological fibre bundle over a
finite-dimensional smooth real manifold $Q$ whose typical fibre is
a Hilbert space $E$, regarded as a real Banach space, and whose
transition functions are smooth functions taking their values in
the unitary group $U(E)$ equipped with the normed operator
topology. We agree to call $\cE\to Q$ a {\it Hilbert bundle}.
\index{Hilbert bundle} It is a Banach vector bundle. Smooth
sections of $\cE\to Q$ constitute a $C^\infty(Q)$-module $\cE(Q)$,
called a {\it Hilbert module}. \index{Hilbert module} Continuous
sections of $\cE\to Q$ constitute a locally trivial continuous
field of Hilbert spaces \cite{dixm}.

There are the following relations between bundles of
$C^*$-algebras and Hilbert bundles.

Let $T(E)\subset B(E)$ be the $C^*$-algebra of compact (completely
continuous) operators in a Hilbert space $E$. It is called an {\it
elementary $C^*$-algebra}. \index{$C^*$-algebra!elementary} Every
automorphism $\f$ of $E$ yields the corresponding automorphism
\be
T(E)\to \f T(E)\f^{-1}
\ee
of the $C^*$-algebra $T(E)$. Therefore, given a Hilbert bundle
$\cE\to Q$ with transition functions
\be
E\to \rho_{\iota\zeta}(q) E, \qquad q\in U_\iota\cap U_\zeta,
\ee
over a cover $\{U_\iota\}$ of $Q$, we have the associated locally
trivial bundle of elementary $C^*$-algebras $T(E)$ with transition
functions
\mar{w700}\beq
T(E)\to \rho_{\al\bt}(q)T(E)(\rho_{\al\bt}(q))^{-1}, \qquad q\in
U_\al\cap U_\bt, \label{w700}
\eeq
which are proved to be continuous with respect to the normed
operator topology on $T(E)$ \cite{dixm}. The proof is based on the
following facts.

$\bullet$ The set of degenerate operators (i.e., operators of
finite rank) is dense in $T(E)$.

$\bullet$ Any operator of finite rank is a linear combination of
operators
\be
P_{\xi,eta}:\zeta \to \lng\zeta|\eta\rng\xi, \qquad \xi,\eta,\zeta
\in E,
\ee
and even the projectors $P_\xi$ onto $\xi\in E$.

$\bullet$ Let $\xi_1,\ldots,\xi_{2n}$ be variable vectors of $E$.
If $\xi_i$, $i=1,\ldots,2n$, converges to $\eta_i$ (or, more
generally, $\lng \xi_i|\xi_j\rng$ converges to
$\lng\eta_i|\eta_j\rng$ for any $i$ and $j$), then
\be
P_{\xi_1,\xi_2}+\cdots +P_{\xi_{2n-1},\xi_{2n}}
\ee
uniformly converges to
\be
P_{\eta_1,\eta_2}+\cdots +P_{\eta_{2n-1},\eta_{2n}}.
\ee

Note that, given a Hilbert bundle $\cE\to Q$, the associated
bundle of $C^*$-algebras $B(E)$ of bounded operators in $E$ fails
to be constructed in general because the transition functions
(\ref{w700}) need not be continuous.

The opposite construction however meets a topological obstruction
as follows \cite{bryl93,carey}.

Let $\cC\to Q$ be a bundle of $C^*$-algebras whose typical fibre
is an elementary $C^*$-algebra $T(E)$ of compact operators in a
Hilbert space $E$. One can think of $\cC\to Q$ as being a
topological fibre bundle with the structure group of automorphisms
of $T(E)$. This is the projective unitary group $PU(E)$. With
respect to the normed operator topology, the groups $U(E)$ and
$PU(E)$ are the Banach Lie groups \cite{harp}. Moreover, $U(E)$ is
contractible if a Hilbert space $E$ is infinite-dimensional
\cite{kuip}. Let $(U_\al,\rho_{\al\bt})$ be an atlas of the fibre
bundle $\cC\to Q$ with $PU(E)$-valued transition functions
$\rho_{\al\bt}$. These transition functions give rise to the maps
\be
\ol \rho_{\al\bt}: U_\al\cap U_\bt\to U(E),
\ee
which however fail to be transition functions of a fibre bundle
with the structure group $U(E)$ because they need not satisfy the
cocycle condition. Their failure to be so is measured by the
$U(1)$-valued cocycle
\be
e_{al\bt\g}=\ol g_{\bt\g}\ol g^{-1}_{\al\g}\ol g_{\al\bt}.
\ee
This cocycle defines a class $[e]$ in the cohomology group
$H^2(Q;U(1)_Q)$ of the manifold $Q$ with coefficients in the sheaf
$U(1)_Q$ of continuous maps of $Q$ to $U(1)$. This cohomology
class vanishes iff there exists a Hilbert bundle associated to
$\cC$. Let us consider the short exact sequence of sheaves
\be
0\to \Bbb Z\ar C^0_Q\ar^\g U(1)_Q\to 0,
\ee
where $C^0_Q$ is the sheaf of continuous real functions on $Q$ and
the morphism $\g$ reads
\be
\g: C^0_Q\ni f\mapsto \exp(2\pi if)\in U(1)_Q.
\ee
This exact sequence yields the long exact sequence (\ref{spr227})
of the sheaf cohomology groups
\be
&& 0\to \Bbb Z\ar C^0_Q \ar U(1)_Q \ar
H^1(Q;\Bbb Z) \ar\cdots \\
&& \quad  H^p(Q;\Bbb Z)\ar H^p(Q;C^0_Q)\ar H^p(Q;U(1)_Q)\ar
H^{p+1}(Q;\Bbb Z) \ar\cdots\,.
\ee
Since the sheaf $C^0_Q$ is fine and acyclic, we obtain at once
from this exact sequence the isomorphism of cohomology groups
\be
H^2(Q,U(1)_Q)=H^3(Q,\Bbb Z).
\ee
The image of $[e]$ in $H^3(Q,\Bbb Z)$ is called the {\it
Dixmier--Douady class} \index{Dixmier--Douady class} \cite{dixm}.
One can show that the negative $-[e]$ of the Dixmier--Douady class
is the obstruction class of the lift of $PU(E)$-principal bundles
to the $U(E)$-principal ones \cite{carey}.

Thus, studying Hilbert and $C^*$-algebra bundles, we come to fibre
bundles with unitary and projective unitary structure groups.

\section{Connections on Hilbert and $C^*$-algebra bundles}

There are different notions of a connection on Hilbert and
$C^*$-algebra bundles which need not be obviously equivalent,
unless bundles are finite-dimensional. These are connections on
structure modules of sections, connections as a horizontal
splitting and principal connections.

Given a bundle of $C^*$-algebras $\cC\to Q$ with a typical fibre
$A$ over a smooth real manifold $Q$, the involutive algebra
$\cC(Q)$ of its smooth sections is a $C^\infty(Q)$-algebra.
Therefore, one can introduce a connection on the fibre bundle
$\cC\to Q$ as a connection on the $C^\infty(Q)$-algebra $\cC(Q)$.
In accordance with Definition \ref{mos088}, such a {\it
connection} \index{connection!on a bundle of $C^*$-algebras}
assigns to each vector field $\tau$ on $Q$ a symmetric derivation
$\nabla_\tau$ of the involutive algebra $\cC(Q)$ which obeys the
Leibniz rule
\be
\nabla_\tau(f\al)=(\tau\rfloor df)\al +f\nabla_\tau \al, \qquad
f\in \Bbb C^\infty(Q), \qquad \al\in \cC(Q),
\ee
and the condition
\be
\nabla_\tau\al^*=(\nabla_\tau\al)^*.
\ee
Let us recall that two such connections $\nabla_\tau$ and
$\nabla'_\tau$ differ from each other in a derivation of the
$C^\infty(Q)$-algebra $\cC(Q)$. Then, given a trivialization chart
\be
\cC|_U\cong U\times A
\ee
of $\cC\to Q$, a connection on $\cC(Q)$ can be written in the form
\mar{spr754}\beq
\nabla_\tau =\tau^m(q)(\dr_m - \dl_m(q)), \qquad q\in U,
\label{spr754}
\eeq
where $(q^m)$ are local coordinates on $Q$ and $\dl_m(q)$ for all
$q\in U$ are symmetric bounded derivations of the $C^*$-algebra
$A$.

\begin{rem} \label{spr771} \mar{spr771}
A problem is that a $C^*$-algebra need not admit bounded
derivations. For instance, no commutative algebra possesses
bounded derivations. Moreover, a bounded derivation of a
$C^*$-algebra is the infinitesimal generator of a uniformly
continuous one-parameter group of automorphisms of this algebra
\cite{brat}. However, strongly continuous groups of $C^*$-algebra
automorphisms usually are considered in quantum models
\cite{book05}. Therefore, one should assume that, in these models,
the derivations $\dl_m(q)$ in the expression (\ref{spr754}) are
unbounded in general. This leads us to the notion of a {\it
generalized connection} \index{generalized connection} on bundles
of $C^*$-algebras \cite{asor}.
\end{rem}

Let $\cE\to Q$ be a Hilbert bundle with a typical fibre $E$ and
$\cE(Q)$ the $C^\infty(Q)$-module of its smooth sections. Then a
{\it connection on a Hilbert bundle} \index{connection!on a
Hilbert bundle} $\cE\to Q$ is defined as a connection $\nabla$ on
the module $\cE(Q)$. In accordance with Definition \ref{t57}, such
a connection assigns to each vector field $\tau$ on $Q$ a first
order differential operator $\nabla_\tau$ on $\cE(Q)$ which obeys
both the Leibniz rule
\be
\nabla_\tau(f\psi)=(\tau\rfloor df)\psi +f\nabla_\tau \psi, \qquad
f\in \Bbb C^\infty(Q), \qquad \psi\in \cE(Q),
\ee
and the additional condition
\mar{spr757}\beq
\lng(\nabla_\tau \psi)(q)|\psi(q)\rng
+\lng\psi(q)|(\nabla_\tau\psi)(q)\rng =\tau(q)\rfloor d\lng\psi
(q)|\psi(q)\rng. \label{spr757}
\eeq
Given a trivialization chart $\cE|_U\cong U\times E$ of $\cE\to
Q$, a connection on $\cE(Q)$ reads
\mar{spr758}\beq
\nabla_\tau= \tau^m(q)(\dr_m + i\cH_m(q)), \qquad q\in U,
\label{spr758}
\eeq
where $\cH_m(q)$ for all $q\in U$ are bounded self-adjoint
operators in the Hilbert space $E$.

In a general setting, let $\cB\to Q$ be a Banach vector bundle
over a finite-dimensional smooth manifold $Q$ and $\cB(X)$ the
locally free $C^\infty(Q)$-module of its smooth sections $s(q)$.
By virtue of Definition \ref{t57}, a connection on $\cB(Q)$
assigns to each vector field $\tau$ on $Q$ a first order
differential operator $\nabla_\tau$ on $\cB(Q)$ which obeys the
Leibniz rule
\mar{w728}\beq
\nabla_\tau(fs)=(\tau\rfloor df)s +f\nabla_\tau s, \qquad f\in
\Bbb C^\infty(Q), \qquad s\in \cB(Q). \label{w728}
\eeq
In accordance with Proposition \ref{w715}, such a connection
exists. Connections (\ref{spr754}) and (\ref{spr758}) exemplify
connections on Banach vector bundles $\cC\to Q$ and $\cE\to Q$,
but they obey additional conditions because these bundles possess
additional structures of a $C^*$-algebra bundle and a Hilbert
bundle, respectively. In particular, the connection (\ref{spr758})
is a principal connection whose second term is an element of the
Lie algebra of the unitary group $U(E)$.

In a different way, connection on a Banach vector bundle $\cB\to
Q$ can be defined as a splitting of the exact sequence
\be
0\to V\cB \to T\cB \to TQ\op\ot_Q \cB \to 0,
\ee
where $V\cB$ denotes the vertical tangent bundle of $\cB\to Q$. In
the case of finite-dimensional vector bundles, both definitions
are equivalent (Section 1.6). This equivalence is extended to the
case of Banach vector bundles over a finite-dimensional base. We
leave the proof of this fact outside the scope of our exposition
because it involves the notion of jets of Banach fibre bundles.

\section{Instantwise quantization}

This Section addresses the evolution of such quantum systems which
can be viewed as a parallel displacement along time.

It should be emphasized that, in quantum mechanics based on the
Schr\"odinger and Heisenberg equations, the physical time plays
the role of a classical parameter. Indeed, all relations between
operators in quantum mechanics are simultaneous, while computation
of mean values of operators in a quantum state  does not imply
integration over time. It follows that, at each instant $t\in\Bbb
R$, there is an instantaneous quantum system characterized by some
$C^*$-algebra $A_t$. Thus, we come to {\it instantwise
quantization}. \index{instantwise quantization} Let us suppose
that all instantaneous $C^*$-algebras $A_t$ are isomorphic to some
unital $C^*$-algebra $A$. Furthermore, let they constitute a
locally trivial smooth bundle $\cC$ of $C^*$-algebras over the
time axis $\Bbb R$. Its typical fibre is $A$. This bundle of
$C^*$-algebras is trivial, but need not admit a canonical
trivialization in general. One can think of its different
trivializations as being associated to different reference frames
\cite{book05,sard07a}.

Let us describe evolution of quantum systems in the framework of
instantwise quantization. Given a bundle of $C^*$-algebras
$\cC\to\Bbb R$, this evolution can be regarded as a parallel
displacement with respect to some connection on $\cC\to\Bbb R$
\cite{asor,sard00}. Following previous Section, we define $\nabla$
as a connection on the involutive $C^\infty(\Bbb R)$-algebra
$\cC(\Bbb R)$ of smooth sections of $\cC\to\Bbb R$. It assigns to
the standard vector field $\dr_t$ on $\Bbb R$ a symmetric
derivation $\nabla_t$ of $\cC(\Bbb R)$ which obeys the Leibniz
rule
\be
\nabla_t (f\al)=\dr_tf\al+ f\nabla_t \al, \qquad \al\in \cC(\Bbb
R), \qquad f\in C^\infty(\Bbb R),
\ee
and the condition
\be
\nabla_t\al^*=(\nabla_t\al)^*.
\ee
Given a trivialization $\cC=\Bbb R\times A$, a connection
$\nabla_t$ reads
\mar{+331}\beq
\nabla_t =\dr_t - \dl(t),\label{+331}
\eeq
where $\dl(t)$, $t\in\Bbb R$, are symmetric derivations of the
$C^*$-algebra $A$, i.e.,
\be
\dl_t(ab)= \dl_t(a)b + a\dl_t(b), \qquad \dl_t(a^*)=\dl_t(a)^*,
\qquad
  a,b\in A.
\ee

We say that a section $\al$ of the bundle of $C^*$-algebras
$\cC\to\Bbb R$ is an integral section of the connection $\nabla_t$
if
\mar{+333}\beq
\nabla_t\al(t) =[\dr_t - \dl(t)]\al(t)=0. \label{+333}
\eeq
One can think of the equation (\ref{+333}) as being the {\it
Heisenberg equation} \index{Heisenberg equation} describing
quantum evolution.

In particular, let the derivations $\dl(t)=\dl$ in the Heisenberg
equation (\ref{+333}) be the same for all $t\in\Bbb R$, and let
$\dl$ be an infinitesimal generator of a strongly continuous
one-parameter group $[G_t]$ of automorphisms of the $C^*$-algebra
$A$ \cite{book05}. The pair $(A,[G_t])$ is called the {\it
$C^*$-dynamic system}. \index{$C^*$-dynamic system} It describes
evolution of a conservative quantum system. Namely,  for any $a\in
A$, the curve $\al(t)=G_t(a)$, $t\in\Bbb R$, in $A$ is a unique
solution with the initial value $\al(0)=a$ of the Heisenberg
equation (\ref{+333}).

It should be emphasized that, if the derivation $\dl$ is
unbounded, the connection $\nabla_t$ (\ref{+331}) is not defined
everywhere on the algebra $\cC(\Bbb R)$. In this case, we deal
with a generalized connection. It is given by operators of a
parallel displacement, whose  generators however are ill defined
\cite{asor}. Moreover, it may happen that a representation $\pi$
of the $C^*$-algebra $A$ does not carry out a representation of
the automorphism group $[G_t]$. Therefore, quantum evolution
described by the conservative Heisenberg equation, whose solution
is a strongly (but not uniformly) continuous dynamic system
$(A,G_t)$, need not  be described by the Schr\"odinger equation
(see Remark \ref{spr763} below).

If $\dl$ is a bounded derivation of a $C^*$-algebra $A$, the
Heisenberg and Schr\"odinger pictures of evolution of a
conservative quantum system are equivalent. Namely, as was
mentioned above, $\dl$ is an infinitesimal generator of a
uniformly continuous one-parameter group $[G_t]$ of automorphisms
of $A$, and {\it vice versa}. For any representation $\pi$ of $A$
in a Hilbert space $E$, there exists a bounded self-adjoint
operator $\cH$ in $E$ such that
\mar{spr761}\beq
\pi(\dl(a))= -i[\cH,\pi(a)],\qquad \pi(G_t)=\exp(-it\cH), \quad
a\in A, \qquad t\in \Bbb R. \label{spr761}
\eeq
The corresponding conservative Schr\"odinger equation reads
\mar{spr760}\beq
(\dr_t +i\cH)\psi=0, \label{spr760}
\eeq
where $\psi$ is a section of the trivial Hilbert bundle $\Bbb
R\times E\to\Bbb R$. Its solution with an initial value
$\psi(0)\in E$ is
\mar{spr762}\beq
\psi(t)=\exp[-it\cH]\psi(0). \label{spr762}
\eeq

\begin{rem} \label{spr763} \mar{spr763}
If the derivation $\dl$ is unbounded, but obeys some conditions,
we also obtain the unitary representation (\ref{spr761}) of the
group $[G_t]$, but the curve $\psi(t)$ (\ref{spr762}) need not be
differentiable, and the Schr\"odinger equation (\ref{spr760}) is
ill defined \cite{book05}.
\end{rem}

Let us return to the general case of a non-conservative quantum
system characterized by a bundle of $C^*$-algebras $\cC\to\Bbb R$
with the typical fibre $A$. Let us suppose that a phase Hilbert
space of a quantum system is preserved under evolution, i.e.,
instantaneous $C^*$-algebras $A_t$ are endowed with
representations equivalent to some representation of the
$C^*$-algebra $A$ in a Hilbert space $E$. Then quantum evolution
can be described by means of the Schr\"odinger equation as
follows.

Let us consider a smooth Hilbert bundle $\cE\to \Bbb R$ with the
typical fibre $E$ and a connection $\nabla$ on the $C^\infty(\Bbb
R)$-module $\cE(\Bbb R)$ of smooth sections of $\cE\to\Bbb R$ (see
previous Section). This connection assigns to the standard vector
field $\dr_t$ on $\Bbb R$ an $\Bbb R$-module endomorphism
$\nabla_t$ of $\cE(\Bbb R)$ which obeys the Leibniz rule
\be
\nabla_t (f\psi)= \dr_tf\psi+ f\nabla_t \psi, \qquad \psi\in
\cE(\Bbb R), \qquad f\in C^\infty(\Bbb R),
\ee
and the condition
\be
\lng(\nabla_t \psi)(t)|\psi(t)\rng
+\lng\psi(t)|(\nabla_t\psi)(t)\rng =\dr_t \lng\psi
(t)|\psi(t)\rng.
\ee
Given a trivialization $\cE=\Bbb R\times E$, the connection
$\nabla_t$ reads
\mar{+346}\beq
\nabla_t\psi =(\dr_t + i\cH(t)) \psi, \label{+346}
\eeq
where $\cH(t)$ are bounded self-adjoint operators in $E$ for all
$t\in\Bbb R$. It is a $U(E)$-principal connection.

We say that a section $\psi$ of the Hilbert bundle $\Pi\to\Bbb R$
is an integral section of the connection $\nabla_t$ (\ref{+346})
if it fulfils the equation
\mar{+349}\beq
\nabla_t\psi(t) =(\dr_t + i\cH(t)) \psi(t)=0. \label{+349}
\eeq
One can think of this equation as being the {\it Schr\"odinger
equation} \index{Schr\"odinger equation} for the Hamiltonian
$\cH$. Its solution with an initial value $\psi(0)\in E$ exists
and reads
\mar{1072}\beq
\psi(t)=U(t) \psi(0), \label{1072}
\eeq
where $U(t)$ is an {\it operator of a parallel displacement}
\index{operator of a parallel displacement} with respect to the
connection (\ref{+346}). This operator is a differentiable section
of the trivial bundle
\be
\Bbb R\times U(E)\to \Bbb R
\ee
which obeys the equation
\mar{spr764}\beq
\dr_t U(t)=-i\cH(t) U(t), \qquad U_0=\bb. \label{spr764}
\eeq
The operator $U(t)$ plays the role of an {\it evolution operator}.
\index{evolution operator} It is given by the {\it time-ordered
exponential} \index{time-ordered exponential}
\mar{+351}\beq
U(t)=T\exp\left[ -i\op\int_0^t \cH(t')dt'\right], \label{+351}
\eeq
which uniformly converges in the operator norm \cite{dal}.

It should be emphasized that the evolution operator $U(t)$ has
been defined with respect to a given trivialization of a Hilbert
bundle $\cE\to\Bbb R$.

\section{Berry connection}

Let us consider a quantum system depending on a finite number of
real classical parameters given by sections of a smooth parameter
bundle $\Si\to \Bbb R$. For the sake of simplicity, we fix a
trivialization $\Si=\Bbb R\times Z$, coordinated by $(t,\si^m)$.
Although it may happen that the parameter bundle $\Si\to \Bbb R$
has no preferable trivialization.

In previous Section, we have characterized the time as a classical
parameter in quantum mechanics. This characteristic is extended to
other classical parameters. In a general setting, one assigns a
$C^*$-algebra $A_\si$ to each point $\si\in \Si$ of the parameter
bundle $\Si$, and treat $A_\si$ as a quantum system under fixed
values $(t,\si^m)$ of the parameters. However, we will simplify
repeatedly our consideration in order to single out a desired
 Berry's phase phenomenon.

Let us assume that all algebras $A_\si$ are isomorphic to the
algebra $B(E)$ of bounded operators in some Hilbert space $E$, and
consider a smooth Hilbert bundle $\cE\to \Si$ with the typical
fibre $E$. Smooth sections of  $\cE\to \Si$ constitute a module
$\cE(\Si)$ over the ring $C^\infty(\Si)$ of real functions on
$\Si$. A connection $\wt\nabla$  on $\cE(\Si)$ assigns to each
vector field $\tau$ on $\Si$ a first order differential operator
\mar{+320}\beq
\wt \nabla_\tau\in \dif_1(\cE(\Si),\cE(\Si)) \label{+320}
\eeq
which obeys the Leibniz rule
\be
\wt\nabla_\tau (fs)= (\tau\rfloor df)s+ f\wt\nabla_\tau s, \qquad
s\in \cE(\Si), \qquad f\in C^\infty(\Si).
\ee
Let $\tau$ be a vector field on $\Si$ such that $dt\rfloor\tau=1$.
Given a trivialization chart of the Hilbert bundle $\cE\to \Si$,
the operator $\wt\nabla_\tau$ (\ref{+320}) reads
\mar{+321}\beq
\wt\nabla_\tau(s) =(\dr_t + i\cH(t,\si^i)) s  +\tau^m(\dr_m - i\wh
A_m(t,\si^i))s, \label{+321}
\eeq
where $\cH(t,\si^i)$, $\wh A_m(t,\si^i)$ for each $\si\in\Si$ are
bounded self-adjoint operators in the Hilbert space $E$.

Let us consider the composite Hilbert bundle $\cE\to \Si\to\Bbb
R$. Every section $h(t)$ of the fibre bundle $\Si\to\Bbb R$
defines the subbundle $\cE_h=h^*\cE\to\Bbb R$ of the Hilbert
bundle $\cE\to \Bbb R$ whose typical fibre is $E$. Accordingly,
the connection $\wt\nabla$ (\ref{+321}) on the
$C^\infty(\Si)$-module $\cE(\Si)$ yields the pull-back connection
\mar{+322}\beq
\nabla_h(\psi) = [\dr_t - i(\wh A_m(t,h^i(t))\dr_t
h^m-\cH(t,h^i(t))]\psi \label{+322}
\eeq
on the $C^\infty(\Bbb R)$-module $\cE_h(\Bbb R)$ of sections
$\psi$ of the fibre bundle $\cE_h\to\Bbb R$.

We say that a section $\psi$ of the fibre bundle $\cE_h\to\Bbb R$
is an integral section of the connection (\ref{+322}) if
\mar{+323}\beq
\nabla_h(\psi) =[\dr_t -i(\wh A_m(t,h^i(t)) \dr_t
h^m-\cH(t,h^i(t))]\psi=0. \label{+323}
\eeq
One can think of the equation (\ref{+323}) as being the
Shr\"odinger equation for a quantum system depending on the
parameter function $h(t)$. Its solutions take the form
(\ref{1072}) where $U(t)$ is the time-ordered exponent
\mar{+356}\beq
U(t)=T\exp\left[ i\op\int_0^t (\wh A_m\dr_{t'}h^m -
\cH)dt'\right]. \label{+356}
\eeq

The term $i\wh A_m(t,h^i(t)) \dr_t h^m$ in the Shr\"odinger
equation (\ref{+323}) is responsible for the
 Berry's phase phenomenon, while $\cH$ is treated as an ordinary Hamiltonian of
a quantum system \cite{book05}.  To show the
 Berry's phase phenomenon clearly,
we will continue to simplify the system under consideration. Given
a trivialization of the fibre bundle $\cE\to\Bbb R$ and the above
mentioned trivialization $\Si=\Bbb R\times Z$ of the parameter
bundle $\Si$, let us suppose that the components $\wh A_m$ of the
connection $\wt\nabla$ (\ref{+321}) are independent of $t$ and
that the operators $\cH(\si)$ commute with the operators $\wh
A_m(\si')$ at all points of the curve $h(t)\subset \Si$. Then the
operator $U(t)$ (\ref{+356}) takes the form
\mar{+356'}\beq
U(t)=T\exp\left[ i\op\int_{h([0,t])} \wh A_m(\si^i) d\si^m
\right]\cdot T\exp\left[ -i\op\int_0^t\cH(t')dt'\right].
\label{+356'}
\eeq
One can think of the first factor in the right-hand side of the
expression (\ref{+356'}) as being the operator of a parallel
transport along the curve $h([0,t])\subset Z$ with respect to the
pull-back connection
\mar{+357}
\beq
\nabla= i^*\wt\nabla = \dr_m - i\wh A_m(t,\si^i) \label{+357}
\eeq
 on
the fibre bundle $\cE\to Z$, defined by the imbedding
\be
i: Z\hookrightarrow \{0\}\times Z\subset \Si.
\ee
Note that, since $\wh A_m$ are independent of time, one can
utilize any imbedding of $Z$ to $\{t\}\times Z$. Moreover, the
connection $\nabla$ (\ref{+357}), called the {\it Berry
connection}, \index{Berry connection} can be seen as a connection
on some principal fibre bundle $P\to Z$ for the unitary group
$U(E)$. Let the curve $h([0,t])$ be closed,  while the holonomy
group of the connection $\nabla$ at the point $h(t)=h(0)$ is not
trivial. Then the unitary operator
\mar{+359}\beq
T\exp\left[ i\op\int_{h([0,t])} \wh A_m(\si^i) d\si^m \right]
\label{+359}
\eeq
is not the identity. For example, if
\mar{+360}\beq
i\wh A_m(\si^i)=iA_m(\si^i)\Id_E \label{+360}
\eeq
is a $U(1)$-principal connection on $Z$, then the operator
(\ref{+359}) is the well-known Berry phase factor
 \be
\exp\left[ i\op\int_{h([0,t])} A_m(\si^i) d\si^m \right].
\ee
If (\ref{+360}) is a curvature-free connection,  Berry's phase is
exactly the Aharonov--Bohm effect  on the parameter space $Z$.

The following variant of the Berry's phase phenomenon leads us to
a principal bundle for familiar finite-dimensional Lie groups. Let
a Hilbert space $E$ be the Hilbert sum of $n$-dimensional
eigenspaces of the Hamiltonian $\cH(\si)$, i.e.,
\be
E= \op\bigoplus_{k=1}^\infty E_k, \qquad E_k=P_k(E),
\ee
where $P_k$ are the projection operators, i.e.,
\be
H(\si)\circ P_k = \la_k(\si) P_k
\ee
 (in the spirit of the adiabatic
hypothesis).  Let the operators $\wh A_m(z)$ be time-independent
and preserve the eigenspaces $E_k$ of the Hamiltonian $\cH$, i.e.,
\mar{+361}\beq
\wh A_m(z)= \op\sum_k \wh A_m^k(z) P_k, \label{+361}
\eeq
where $\wh A_m^k(z)$, $z\in Z$, are self-adjoint operators in
$E_k$. It follows that $\wh A_m(\si)$ commute with $\cH(\si)$ at
all points of the parameter bundle $\Si\to\Bbb R$. Then,
restricted to each subspace $E_k$, the parallel transport operator
(\ref{+359}) is a unitary operator in $E_k$. In this case, the
Berry connection (\ref{+357})  on the $U(E)$-principal bundle
$P\to Z$ can be seen as a composite connection on the composite
bundle
\be
P\to P/U(n)\to Z,
\ee
which is defined by some principal connection on the
$U(n)$-principal bundle $P\to P/U(n)$ and the trivial connection
on the fibre bundle $P/U(n)\to Z$ \cite{book05}.

Note that, since $U(E)$ is contractible, the $U(n)$-principal
bundle $U(E)\to U(E)/U(n)$ is universal and, consequently, the
typical fibre $U(E)/U(n)$ of $P/U(n)\to Z$ is exactly the
classifying space $B(U(n)$ of $U(n)$-principal bundles
\cite{book00}.
 Moreover, one can consider the parallel transport
along a curve in the bundle $P/U(n)$. In this case, a state vector
$\psi(t)$ acquires a geometric phase factor in addition to the
dynamical phase factor. In particular, if $\Si=\Bbb R$ (i.e.,
classical parameters are absent and  Berry's phase has only the
geometric origin) we come to the case of a Berry connection on the
$U(n)$-principal bundle over the classifying space $B(U(n))$
\cite{bohm}. If $n=1$, this is the variant of Berry's geometric
phase of \cite{anan}.

\chapter{Supergeometry}

Supergeometry is phrased in terms of $\Bbb Z_2$-graded modules and
sheaves over $\Bbb Z_2$-graded commutative algebras. Their
algebraic properties naturally generalize those of modules and
sheaves over commutative algebras, but this is not a particular
case of non-commutative geometry because of the peculiar
definition of graded derivations.

\section{Graded tensor calculus}

Unless otherwise stated, by a graded structure throughout this
Chapter is meant a $\Bbb Z_2$-graded structure, and the symbol
$\nw .$ stands for the $\Bbb Z_2$-graded parity. Let us recall
some basic notions of the graded tensor calculus
\cite{bart,cia,sard09a}.

An algebra $\cA$ is called {\it graded} \index{algebra!$\Bbb
Z_2$-graded} if it is endowed with a {\it grading automorphism}
\index{grading automorphism} $\g$ such that $\g^2=\id$. A graded
algebra seen as a $\Bbb Z$-module falls into the direct sum
$\cA=\cA_0\oplus \cA_1$ of two $\Bbb Z$-modules $\cA_0$ and
$\cA_1$ of {\it even} \index{even element} and {\it odd}
\index{odd element} elements such that
\be
\g(a)=(-1)^ia, \qquad a\in\cA_i, \qquad i=0,1.
\ee
One calls $\cA_0$ and $\cA_1$ the even and odd parts of $\cA$,
respectively. In particular, if $\g=\id$, then $\cA=\cA_0$. Since
\be
\g(aa')=\g(a)\g(a'),
\ee
we have
\be
[aa']=([a]+[a']){\rm mod}\,2
\ee
where $a\in \cA_{[a]}$, $a'\in \cA_{[a']}$. It follows that
$\cA_0$ is a subalgebra of $\cA$ and $\cA_1$ is an $\cA_0$-module.
If $\cA$ is a graded ring, then $[\bb]=0$.

A graded algebra $\cA$ is said to be {\it graded commutative}
\index{algebra!$\Bbb Z_2$-graded commutative} if
\be
aa'=(-1)^{[a][a']}a'a,
\ee
where $a$ and $a'$ are arbitrary {\it homogeneous elements}
\index{homogeneous element} of $\cA$, i.e., they are either even
or odd.

Given a graded algebra $\cA$, a left {\it graded $\cA$-module}
\index{graded module} $Q$ is a left $\cA$-module provided with the
grading automorphism $\g$ such that
\be
\g(aq)=\g(a)\g(q), \qquad a\in\cA,\qquad q\in Q,
\ee
i.e.,
\be
[aq]=([a]+[q]){\rm mod}\,2.
\ee
A graded module $Q$ is split into the direct sum $Q=Q_0\oplus Q_1$
of two $\cA_0$-modules $Q_0$ and $Q_1$ of even and odd elements.
Similarly, right graded modules are defined.

If $\cK$ is a graded commutative ring, a graded $\cK$-module can
be provided with a graded {\it $\cK$-bimodule}
\index{bimodule!graded} structure by letting
\be
qa= (-1)^{[a][q]}aq, \qquad a\in\cK, \qquad q\in Q.
\ee
A graded $\cK$-module is called {\it free} \index{graded
module!free} if it has a basis generated by homogeneous elements.
This basis is said to be of type $(n,m)$ if it contains $n$ even
and $m$ odd elements.

In particular, by a (real) {\it graded vector space} $B=B_0\oplus
B_1$ \index{graded vector space} is meant a graded $\Bbb
R$-module. A graded vector space is said to be $(n,m)$-dimensional
if $B_0=\Bbb R^n$  and $B_1=\Bbb R^m$.

The following are standard constructions of new graded modules
from old ones.

$\bullet$ The direct sum of graded modules over the same graded
commutative ring and a graded factor module are defined just as
those of modules over a commutative ring.

$\bullet$ The {\it tensor product} \index{tensor product!of graded
modules} $P\ot Q$ of graded $\cK$-modules $P$ and $Q$ is an
additive group generated by elements $p\ot q$, $p\in P$, $q\in Q$,
obeying the relations
\be
&& (p+p')\ot q =p\ot q + p'\ot q, \\
&& p\ot(q+q')=p\ot q+p\ot q', \\
&&  ap\ot q=(-1)^{[p][a]}pa\ot q= (-1)^{[p][a]}p\ot aq=\\
&& \qquad (-1)^{([p]+[q])[a]}p\ot qa, \qquad a\in\cK.
\ee
In particular, the tensor algebra $\ot Q$ of a graded $\cK$-module
$Q$ is defined as that of a module over a commutative ring in
Example \ref{ws40}. Its quotient $\w Q$ with respect to the ideal
generated by elements
\be
q\ot q' + (-1)^{[q][q']}q'\ot q, \qquad q,q'\in Q,
\ee
is the {\it bigraded exterior algebra} \index{exterior
algebra!bigraded} of a graded module $Q$ with respect to the {\it
graded exterior product} \index{graded exterior product}
\be
q\w q' =- (-1)^{[q][q']}q'\w q.
\ee

$\bullet$ A morphism $\Phi:P\to Q$ of graded $\cK$-modules is said
to be {\it even} \index{even morphism} (resp. {\it odd})
\index{odd morphism} if $\Phi$ preserves (resp. change) the graded
parity of all elements $P$. It obeys the relations
\be
\Phi(ap)=(-1)^{[\Phi][a]}\Phi(p), \qquad p\in P, \qquad a\in\cK.
\ee
The set $\hm_\cK(P,Q)$ of graded morphisms of a graded
$\cK$-module $P$ to a graded $\cK$-module $Q$ is naturally a
graded $\cK$-module. The graded $\cK$-module $P^*=\hm_\cK(P,\cK)$
is called the {\it dual} \index{graded module!dual} of a graded
$\cK$-module $P$.

A {\it graded commutative $\cK$-ring} \index{graded commutative
ring} $A$ is a graded commutative ring which is also a graded
$\cK$-module. A graded commutative $\Bbb R$-ring is said to be of
rank $N$ if it is a free algebra generated by the unit $\bb$ and
$N$ odd elements. A {\it graded commutative Banach ring}
\index{graded commutative ring!Banach} $A$ is a graded commutative
$\Bbb R$-ring which is a real Banach algebra whose norm obeys the
additional condition
\be
\|a_0 + a_1\|=\|a_0\| + \|a_1\|, \qquad a_0\in A_0, \quad a_1\in
A_1.
\ee

Let $V$ be a real vector space. Let $\La=\w V$ be its ($\Bbb
N$-graded) exterior algebra provided  with the $\Bbb Z_2$-graded
structure
\mar{+66}\beq
\La=\La_0\oplus \La_1, \qquad \La_0=\Bbb R\op\bigoplus_{k=1}
\op\w^{2k} V, \qquad \La_1=\op\bigoplus_{k=1} \op\w^{2k-1} V.
\label{+66}
\eeq
It is a graded commutative $\Bbb R$-ring, called the {\it
Grassmann algebra}. \index{Grassmann algebra} A Grassmann algebra,
seen as an additive group, admits the decomposition
\mar{+11}\beq
\La=\Bbb R\oplus R =\Bbb R\oplus R_0\oplus R_1=\Bbb R \oplus
(\La_1)^2 \oplus \La_1, \label{+11}
\eeq
where $R$ is the {\it ideal of nilpotents} \index{ideal!of
nilpotents} of $\La$. The corresponding projections
$\si:\La\to\Bbb R$ and $s:\La\to R$ are called the {\it body}
\index{body map} and {\it soul} \index{soul map} maps,
respectively.

Hereafter, we restrict our consideration to Grassmann algebras of
finite rank. Given a basis $\{c^i\}$ for the vector space $V$, the
elements of the Grassmann algebra $\La$ (\ref{+66}) take the form
\mar{z784}\beq
a=\op\sum_{k=0} \op\sum_{(i_1\cdots i_k)}a_{i_1\cdots
i_k}c^{i_1}\cdots c^{i_k}, \label{z784}
\eeq
where the second sum runs through all the tuples $(i_1\cdots i_k)$
such that no two of them are permutations of each other. The
Grassmann algebra $\La$ becomes a graded commutative Banach ring
if its elements (\ref{z784}) are endowed with the norm
\be
\|a\|=\op\sum_{k=0} \op\sum_{(i_1\cdots i_k)}\nm{a_{i_1\cdots
i_k}}.
\ee

Let $B$ be a graded vector space. Given a Grassmann algebra $\La$
of rank $N$, it can be brought into a graded $\La$-module
\be
\La B=(\La B)_0\oplus (\La B)_1=(\La_0\ot B_0\oplus \La_1\ot
B_1)\oplus (\La_1\ot B_0\oplus \La_0\ot B_1),
\ee
called a {\it superspace}. \index{graded envelope} The superspace
\mar{+70}\beq
B^{n\mid m}=[(\op\oplus^n\La_0) \oplus (\op\oplus^m\La_1)]\oplus
[(\op\oplus^n\La_1)\oplus (\op\oplus^m\La_0)] \label{+70}
\eeq
is said to be $(n,m)$-dimensional. The graded $\La_0$-module
\be
B^{n,m}= (\op\oplus^n\La_0) \oplus (\op\oplus^m\La_1)
\ee
is called an $(n,m)$-dimensional {\it supervector space}.
\index{supervector space}

Whenever referring to a topology on a supervector space
$B^{n,m}$, we will mean the Euclidean topology on a
$2^{N-1}[n+m]$-dimensional real vector space.

Given a superspace $B^{n\mid m}$  over a Grassmann algebra $\La$,
a $\La$-module endomorphism of $B^{n\mid m}$ is represented by an
$(n+ m)\times (n+m)$ matrix
\mar{+200}\beq
L=\left(
\begin{array}{cc}
L_1 & L_2 \\
L_3 & L_4
\end{array}
\right) \label{+200}
\eeq
with entries in $\La$. It is called a {\it supermatrix}.
\index{supermatrix}  One says that  a supermatrix $L$ is

$\bullet$ {\it even} \index{supermatrix!even} if $L_1$ and $L_4$
have even entries, while $L_2$ and $L_3$ have the odd ones;

$\bullet$ {\it odd} \index{supermatrix!odd} if $L_1$ and $L_4$
have odd entries, while $L_2$ and $L_3$ have the even ones.

Endowed with this gradation, the set of supermatrices (\ref{+200})
is a graded $\La$-ring.

 Invertible supermatrices constitute a
group $GL(n|m;\La)$, called the {\it general linear graded group}.
\index{general linear graded group}

Let $\cK$ be a graded commutative ring. A graded commutative
(non-associative) $\cK$-algebra $\cG$ is called a {\it Lie
$\cK$-superalgebra} \index{Lie superalgebra} if its product,
called the {\it superbracket} \index{superbracket} and denoted by
$[.,.]$, obeys the relations
\be
&& [\ve,\ve']=-(-1)^{[\ve][\ve']}[\ve',\ve],\\
&& (-1)^{[\ve][\ve'']}[\ve,[\ve',\ve'']]
+(-1)^{[\ve'][\ve]}[\ve',[\ve'',\ve]] +
(-1)^{[\ve''][\ve']}[\ve'',[\ve,\ve']] =0.
\ee
Obviously, the even part $\cG_0$ of a Lie $\cK$-superalgebra $\cG$
is a Lie $\cK_0$-algebra. A graded $\cK$-module $P$ is called a
$\cG$-module if it is provided with a $\cK$-bilinear map
\be
&& \cG\times P\ni (\ve,p)\mapsto \ve p\in P, \\
&& [\ve
p]=([\ve]+[p]){\rm mod}\,2,\\
&& [\ve,\ve']p=(\ve\circ\ve'-(-1)^{[\ve][\ve']}\ve'\circ\ve)p.
\ee

\section{Graded differential calculus and connections}

Linear differential operators and connections on graded modules
over graded commutative rings are defined similarly to those in
commutative geometry \cite{book05,sard09a}.

Let $\cK$ be a graded commutative ring and $\cA$ a graded
commutative $\cK$-ring. Let $P$ and $Q$ be graded $\cA$-modules.
The graded $\cK$-module $\hm_\cK (P,Q)$ of graded $\cK$-module
homomorphisms $\Phi:P\to Q$ can be endowed with the two graded
$\cA$-module structures
\mar{ws11}\beq
(a\Phi)(p)= a\Phi(p),  \qquad  (\Phi\bll a)(p) = \Phi (a p),\qquad
a\in \cA, \quad p\in P, \label{ws11}
\eeq
called $\cA$- and $\cA^\bll$-module structures, respectively. Let
us put
\mar{ws12}\beq
\dl_a\Phi= a\Phi -(-1)^{[a][\Phi]}\Phi\bll a, \qquad a\in\cA.
\label{ws12}
\eeq
An element $\Delta\in\hm_\cK(P,Q)$ is said to be a $Q$-valued {\it
graded differential operator} \index{graded differential operator}
of order $s$ on $P$ if
\be
\dl_{a_0}\circ\cdots\circ\dl_{a_s}\Delta=0
\ee
for any tuple of $s+1$ elements $a_0,\ldots,a_s$ of $\cA$. The set
$\dif_s(P,Q)$ of these operators inherits the graded module
structures (\ref{ws11}).

In particular, zero order graded differential operators obey the
condition
\be
\dl_a \Delta(p)=a\Delta(p)-(-1)^{[a][\Delta]}\Delta(ap)=0, \qquad
a\in\cA, \qquad p\in P,
\ee
i.e., they coincide with graded $\cA$-module morphisms $P\to Q$. A
first order graded differential operator $\Delta$ satisfies the
condition
\be
&& \dl_a\circ\dl_b\,\Delta(p)=
ab\Delta(p)- (-1)^{([b]+[\Delta])[a]}b\Delta(ap)-
(-1)^{[b][\Delta]}a\Delta(bp)+\\
&& \qquad (-1)^{[b][\Delta]+([\Delta]+[b])[a]}
=0, \qquad a,b\in\cA, \quad p\in P.
\ee

For instance, let $P=\cA$. Any zero order $Q$-valued graded
differential operator $\Delta$ on $\cA$ is defined by its value
$\Delta(\bb)$. Then there is a graded $\cA$-module isomorphism
\be
\dif_0(\cA,Q)=Q
\ee
via the association
\be
Q\ni q\mapsto \Delta_q\in \dif_0(\cA,Q),
\ee
where $\Delta_q$ is given by the equality $\Delta_q(\bb)=q$. A
first order $Q$-valued graded differential operator $\Delta$ on
$\cA$ fulfils the condition
\be
\Delta(ab)= \Delta(a)b+ (-1)^{[a][\Delta]}a\Delta(b)
-(-1)^{([b]+[a])[\Delta]} ab \Delta(\bb), \qquad  a,b\in\cA.
\ee
It is called a $Q$-valued {\it graded derivation} \index{graded
derivation} of $\cA$ if $\Delta(\bb)=0$, i.e., the graded Leibniz
rule
\mar{ws10}\beq
\Delta(ab) = \Delta(a)b + (-1)^{[a][\Delta]}a\Delta(b), \qquad
a,b\in \cA, \label{ws10}
\eeq
holds (cf. (\ref{ws100})). One obtains at once that any first
order graded differential operator on $\cA$ falls into the sum
\be
\Delta(a)= \Delta(\bb)a +[\Delta(a)-\Delta(\bb)a]
\ee
of a zero order graded differential operator $\Delta(\bb)a$ and a
graded derivation $\Delta(a)-\Delta(\bb)a$. If $\dr$ is a graded
derivation of $\cA$, then $a\dr$ is so for any $a\in \cA$. Hence,
graded derivations of $\cA$ constitute a graded $\cA$-module
$\gd(\cA,Q)$, called the {\it graded derivation module}.
\index{graded derivation module}

If $Q=\cA$, the graded derivation module $\gd\cA$ is also a Lie
superalgebra over the graded commutative ring $\cK$ with respect
to the superbracket
\mar{ws14}\beq
[u,u']=u\circ u' - (-1)^{[u][u']}u'\circ u, \qquad u,u'\in \cA.
\label{ws14}
\eeq
We have the graded $\cA$-module decomposition
\mar{ws15}\beq
\dif_1(\cA) = \cA \oplus\gd\cA. \label{ws15}
\eeq

Let us turn now to jets of graded modules. Given a graded
$\cA$-module $P$, let us consider the tensor product
$\cA\otimes_\cK P$ of graded $\cK$-modules $\cA$ and $P$. We put
\mar{ws16}\beq
\dl^b(a\otimes p)= (ba)\otimes p - (-1)^{[a][b]}a\otimes (bp),
\qquad p\in P, \quad a,b\in\cA.  \label{ws16}
\eeq
The {\it $k$-order graded jet module} $\cJ^k(P)$ \index{graded jet
module} of the module $P$ is defined as the quotient of the graded
$\cK$-module $\cA\otimes_\cK P$ by its submodule generated by
elements of type
\be
\dl^{b_0}\circ \cdots \circ\dl^{b_k}(a\otimes p).
\ee

In particular, the first order graded jet module $\cJ^1(P)$
consists of elements $a\ot_1 p$ modulo the relations
\mar{ws20}\beq
ab\otimes_1 p -(-1)^{[a][b]}b\otimes_1 (ap)
  -a\otimes_1(bp)  + \bb\ot_1(abp) =0. \label{ws20}
\eeq

For any $h\in\hm_\cA (\cA\ot P,Q)$, the equality
\be
\dl_b(h(a\ot p))=(-1)^{[h][b]}h(\dl^b(a\ot p))
\ee
holds. By analogous with Theorem \ref{t6}, one then can show that
any $Q$-valued graded differential operator $\Delta$ of order $k$
on a graded $\cA$-module $P$ factorizes uniquely
\be
\Delta: P\ar^{J^k} \cJ^k(P)\ar Q
\ee
through the morphism
\be
J^k:p\ni p\mapsto \bb\ot_k p\in \cJ^k(P)
\ee
and some homomorphism ${\got f}^\Delta:\cJ^k(P)\to Q$.
Accordingly, the assignment $\Delta\mapsto {\got f}^\Delta$
defines an isomorphism
\mar{ws21}\beq
\dif_s(P,Q)=\hm_{\cA}(\cJ^s(P),Q). \label{ws21}
\eeq

Let us focus on the first order graded jet module $\cJ^1$ of $\cA$
consisting of the elements $a\otimes_1 b$, $a,b\in\cA$, subject to
the relations
\mar{ws22}\beq
ab\otimes_1 \bb -(-1)^{[a][b]}b\otimes_1 a
  -a\otimes_1b  + \bb\ot_1(ab) =0. \label{ws22}
\eeq
It is endowed with the $\cA$- and $\cA^\bll$-module structures
\be
c(a\ot_1 b)=(ca)\ot_1 b,\qquad c\bll(a\ot_1 b)= a\ot_1(cb).
\ee
There are canonical $\cA$- and $\cA^\bll$-module monomorphisms
\be
&& i_1: \cA \ni a  \mapsto a\otimes_1 \bb\in \cJ^1,\\
&& J^1: \cA\ni a\mapsto \bb\otimes_1 a\in \cJ^1,
\ee
such that $\cJ^1$, seen as a graded $\cA$-module, is generated by
the elements $J^1a$, $a\in \cA$. With these monomorphisms, we have
the canonical $\cA$-module splitting
\mar{ws23}\ben
&& \cJ^1=i_1(\cA)\oplus \cO^1, \label{ws23} \\
&&
aJ^1(b)= a\ot_1 b=ab\ot_1\bb + a(\bb\ot_1 b- b\ot_1\bb), \nonumber
\een
where the graded $\cA$-module $\cO^1$ is generated by the elements
\be
\bb\ot_1 b-b\ot_1 \bb, \qquad b\in\cA.
\ee
Let us consider the corresponding $\cA$-module epimorphism
\mar{ws30}\beq
h^1:\cJ^1\ni \bb\ot_1 b\mapsto \bb\ot_1 b-b\ot_1 \bb\in \cO^1
\label{ws30}
\eeq
and the composition
\mar{ws31}\beq
d=h^1\circ J_1: \cA \ni b \mapsto \bb\ot_1 b- b\ot_1\bb \in \cO^1.
\label{ws31}
\eeq
The equality
\be
d(ab)= a\otimes_1 b  +b\otimes_1 a- ab\otimes_1 \bb-ba\otimes_1
\bb  =(-1)^{[a][b]}bda + adb
\ee
shows that $d$ (\ref{ws31}) is an even $\cO^1$-valued derivation
of $\cA$. Seen as a graded $\cA$-module, $\cO^1$ is generated by
the elements $da$ for all $a\in\cA$.

In view of the splittings (\ref{ws15}) and (\ref{ws23}), the
isomorphism (\ref{ws21}) reduces to the isomorphism
\mar{ws47}\beq
\gd\cA=\cO^{1*}=\hm_{\cA}(\cO^1,\cA) \label{ws47}
\eeq
of $\gd\cA$ to the dual $\cO^{1*}$ of the graded $\cA$-module
$\cO^1$. It is given by the duality relations
\mar{ws41}\beq
\gd\cA\ni u\leftrightarrow
   \f_u\in \cO^{1*}, \qquad \f_u(da)=u(a), \qquad  a\in \cA.
   \label{ws41}
\eeq
Using this fact, let us construct a differential calculus over a
graded commutative $\cK$-ring $\cA$.

Let us consider the bigraded exterior algebra $\cO^*$ of a graded
module $\cO^1$. It consists of finite linear combinations of
monomials of the form
\mar{ws42}\beq
\f=a_0 da_1\w\cdots\w da_k, \qquad a_i\in \cA, \label{ws42}
\eeq
whose product obeys the juxtaposition rule
\be
(a_0d a_1)\w (b_0d b_1)=a_0d (a_1b_0)\w db_1- a_0a_1d b_0\w d b_1
\ee
and the bigraded commutative relations
\mar{ws45}\beq
\f\w \f'=(-1)^{|\f||\f'|+[\f][\f']}\f'\w\f. \label{ws45}
\eeq
In order to make $\cO^*$ to a differential algebra, let us define
the coboundary operator $d:\cO^1\to\cO^2$ by the rule
\be
d\f(u,u')=-u'(u(\f)) +(-1)^{[u][u']}u(u'(\f)) +[u',u](\f),
\ee
where $u,u'\in \gd\cA$, $\f\in\cO^1$, and $u$, $u'$ are both
graded derivatives of $\cA$ and $\cA$-valued forms on $\cO^1$. It
is readily observed that, by virtue of the relation (\ref{ws41}),
$(d\circ d)(a)=0$ for all $a\in \cA$. Then $d$ is extended to the
bigraded exterior algebra $\cO^*$ if its action on monomials
(\ref{ws42}) is defined as
\be
d(a_0 da_1\w\cdots\w da_k)=da_0\w da_1\w\cdots\w da_k.
\ee
This operator is nilpotent and fulfils the familiar relations
\mar{ws44}\beq
d(\f\w\f')= d\f\w\f' +(-1)^{|\f||\f'|}\f\w d\f'. \label{ws44}
\eeq
It makes $\cO^*$ into a {\it differential bigraded algebra},
\index{differential bigraded algebra} called a {\it graded
differential calculus} \index{graded differential calculus} over a
graded commutative $\cK$-ring $\cA$.

Furthermore, one can extend the duality relation (\ref{ws41}) to
the {\it graded interior product} \index{interior product!graded}
of $u\in\gd\cA$ with any monomial $\f$ (\ref{ws42}) by the rules
\mar{ws46}\ben
&& u\rfloor(bda) =(-1)^{[u][b]}u(a), \nonumber\\
&& u\rfloor(\f\w\f')=
(u\rfloor\f)\w\f'+(-1)^{|\f|+[\f][u]}\f\w(u\rfloor\f').
\label{ws46}
\een
As a consequence, any graded derivation $u\in\gd\cA$ of $\cA$
yields a derivation
\mar{+117}\ben
&& \bL_u\f= u\rfloor d\f + d(u\rfloor\f), \qquad \f\in\cO^*, \qquad
u\in\gd\cA, \label{+117} \\
&& \bL_u(\f\w\f')=\bL_u(\f)\w\f' + (-1)^{[u][\f]}\f\w\bL_u(\f'), \nonumber
\een
of the bigraded algebra $\cO^*$ called the {\it graded Lie
derivative} \index{Lie derivative!of a bigraded algebra} of
$\cO^*$.

\begin{rem}
Since $\gd\cA$ is a Lie $\cK$-superalgebra, let us consider the
Chevalley--Eilenberg complex $C^*[\gd\cA;\cA]$ where the graded
commutative ring $\cA$ is a regarded as a $\gd\cA$-module
\cite{fuks}. It is the complex
\mar{ws85}\beq
0\to \cA\ar^{\dl^0}C^1[\gd\cA;\cA]\ar^{\dl^1}\cdots
C^k[\gd\cA;\cA]\ar^{\dl^k}\cdots \label{ws85}
\eeq
where
\be
C^k[\gd\cA;\cA]=\hm_\cK(\op\w^k \gd\cA,\cA)
\ee
are $\gd\cA$-modules of $\cK$-linear graded morphisms of the
graded exterior products $\op\w^k \gd\cA$ of the $\cK$-module
$\gd\cA$ to $\cA$. Let us bring homogeneous elements of $\op\w^k
\gd\cA$ into the form
\be
\ve_1\w\cdots\ve_r\w\e_{r+1}\w\cdots\w \e_k, \qquad
\ve_i\in\gd\cA_0, \quad \e_j\in\gd\cA_1.
\ee
Then the coboundary operators of the complex (\ref{ws85}) are
given by the expression
\mar{ws86}\ben
&& \dl^{r+s-1}c(\ve_1\w\cdots\w\ve_r\w\e_1\w\cdots\w\e_s)=
\label{ws86}\\
&&\op\sum_{i=1}^r (-1)^{i-1}\ve_i
c(\ve_1\w\cdots\wh\ve_i\cdots\w\ve_r\w\e_1\w\cdots\e_s)+
\nonumber \\
&& \op\sum_{j=1}^s (-1)^r\ve_i
c(\ve_1\w\cdots\w\ve_r\w\e_1\w\cdots\wh\e_j\cdots\w\e_s)
+\nonumber\\
&& \op\sum_{1\leq i<j\leq r} (-1)^{i+j}
c([\ve_i,\ve_j]\w\ve_1\w\cdots\wh\ve_i\cdots\wh\ve_j
\cdots\w\ve_r\w\e_1\w\cdots\w\e_s)+\nonumber\\
&&\op\sum_{1\leq i<j\leq s} c([\e_i,\e_j]\w\ve_1\w\cdots\w
\ve_r\w\e_1\w\cdots
\wh\e_i\cdots\wh\e_j\cdots\w\e_s)+\nonumber\\
&& \op\sum_{1\leq i<r,1\leq j\leq s} (-1)^{i+r+1}
c([\ve_i,\e_j]\w\ve_1\w\cdots\wh\ve_i\cdots\w\ve_r\w
\e_1\w\cdots\wh\e_j\cdots\w\e_s).\nonumber
\een
The subcomplex $\cO^*[\gd\cA]$ of the complex (\ref{ws85}) of
$\cA$-linear morphisms is the graded Chevalley--Eilenberg
differential calculus over a graded commutative $\cK$-ring $\cA$.
Then one can show that the above mentioned graded differential
calculus $\cO^*$ is a subcomplex of the Chevalley--Eilenberg one
$\cO^*[\gd\cA]$.
\end{rem}

Following the construction of a connection in commutative geometry
in Section 1.3, one comes to the notion of a connection on modules
over a graded commutative $\Bbb R$-ring $\cA$. The following is a
straightforward counterpart of Definitions \ref{1016} and
\ref{mos088}.

\begin{defi} \label{ws33} \mar{ws33}
A {\it connection} \index{connection!on a graded module} on a
graded $\cA$-module $P$ is an $\cA$-module morphism
\mar{ws34}\beq
\gd\cA\ni u\mapsto \nabla_u\in \dif_1(P,P) \label{ws34}
\eeq
such that the first order differential operators $\nabla_u$ obey
the Leibniz rule
\mar{ws35}\beq
\nabla_u (ap)= u(a)p+ (-1)^{[a][u]}a\nabla_u(p), \quad a\in \cA,
\quad p\in P. \label{ws35}
\eeq
\end{defi}

\begin{defi} \label{ws36} \mar{ws36}
Let $P$ in Definition \ref{ws33} be a graded commutative
$\cA$-ring and $\gd P$ the derivation module of $P$ as a graded
commutative $\cK$-ring. A {\it connection} \index{connection!on a
graded commutative ring} on a graded commutative $\cA$-ring $P$ is
a $\cA$-module morphism
\mar{ws37}\beq
\gd\cA\ni u\mapsto \nabla_u\in \gd P, \label{ws37}
\eeq
which is a connection on $P$ as an $\cA$-module, i.e., obeys the
Leibniz rule (\ref{ws35}).
\end{defi}

\section{Geometry of graded manifolds}

By a {\it graded manifold} \index{graded manifold} of dimension
$(n,m)$ is meant a local-ringed space $(Z,\gA)$ where $Z$ is an
$n$-dimensional smooth manifold $Z$ and $\gA=\gA_0\oplus\gA_1$ is
a sheaf of graded commutative algebras of rank $m$ such that
\cite{bart}:

(i) there is the exact sequence of sheaves
\mar{cmp140}\beq
0\to \cR \to\gA \op\to^\si C^\infty_Z\to 0, \qquad
\cR=\gA_1+(\gA_1)^2,\label{cmp140}
\eeq
where $C^\infty_Z$ is the sheaf of smooth real functions on $Z$;

(ii) $\cR/\cR^2$ is a locally free sheaf of $C^\infty_Z$-modules
of finite rank (with respect to pointwise operations), and the
sheaf $\gA$ is locally isomorphic to the the exterior product
$\w_{C^\infty_Z}(\cR/\cR^2)$.

The sheaf $\gA$ is called a {\it structure sheaf} \index{structure
sheaf!of a graded manifold} of the graded manifold $(Z,\gA)$,
while the manifold $Z$ is said to be a {\it body} \index{body!of a
graded manifold} of $(Z,\gA)$. Sections of the sheaf $\gA$ are
called {\it graded functions}. \index{graded function} They make
up a graded commutative $C^\infty(Z)$-ring $\gA(Z)$.

A graded manifold $(Z,\gA)$ has the following local structure.
Given a point $z\in Z$, there exists its open neighborhood $U$,
called a {\it splitting domain}, \index{splitting domain} such
that
\mar{+54}\beq
\gA(U)\cong C^\infty(U)\ot\w\Bbb R^m. \label{+54}
\eeq
It means that the restriction $\gA|_U$ of the structure sheaf
$\gA$ to $U$ is isomorphic to the sheaf $C^\infty_U\ot\w\Bbb R^m$
of sections of some exterior bundle
\be
\w E^*_U= U\times \w\Bbb R^m\to U.
\ee

The well-known {\it Batchelor's theorem} \index{Batchelor's
theorem} \cite{bart,batch1,book05} states that such a structure of
graded manifolds is global.

\begin{theo} \label{lmp1} \mar{lmp1}
Let $(Z,\gA)$ be a graded manifold. There exists a vector bundle
$E\to Z$ with an $m$-dimensional typical fibre $V$ such that the
structure sheaf $\gA$ of $(Z,\gA)$ is isomorphic to the structure
sheaf $\gA_E$ of sections of the exterior bundle $\w E^*$, whose
typical fibre is the Grassmann algebra $\w V^*$.
\end{theo}

It should be emphasized that Batchelor's isomorphism in Theorem
\ref{lmp1} fails to be canonical. At the same time, there are many
physical models where a vector bundle $E$ is introduced from the
beginning. In this case, it suffices to consider the structure
sheaf $\gA_E$ of the exterior bundle $\w E^*$. We agree to call
the pair $(Z,\gA_E)$ a {\it simple graded manifold}. \index{graded
manifold!simple} Its automorphisms are restricted to those induced
by automorphisms of the vector bundle $E\to Z$, called the {\it
characteristic vector bundle} \index{characteristic vector bundle}
of the simple graded manifold $(Z,\gA_E)$. Accordingly, the
structure module
\be
\gA_E(Z)=\w E^*(Z)
\ee
of the sheaf $\gA_E$ (and of the exterior bundle $\w E^*$) is said
to be the {\it structure module of the simple graded manifold}
$(Z,\gA_E)$. \index{structure module!of a simple graded manifold}

Given a simple graded manifold $(Z,\gA_E)$, every trivialization
chart $(U; z^A,y^a)$ of the vector bundle $E\to Z$ is a splitting
domain of $(Z,\gA_E)$. Graded functions on such a chart are
$\La$-valued functions
\mar{z785}\beq
f=\op\sum_{k=0}^m \frac1{k!}f_{a_1\ldots a_k}(z)c^{a_1}\cdots
c^{a_k}, \label{z785}
\eeq
where $f_{a_1\cdots a_k}(z)$ are smooth functions on $U$ and
$\{c^a\}$ is the  fibre basis for $E^*$. In particular, the sheaf
epimorphism $\si$ in (\ref{cmp140}) is induced by the body map of
$\La$. We agree to call $\{z^A,c^a\}$ the local {\it basis}
\index{basis!for a graded manifold} for the graded manifold
$(Z,\gA_E)$. Transition functions
\be
y'^a=\rho^a_b(z^A)y^b
\ee
of bundle coordinates on $E\to Z$ induce the corresponding
transformation
\mar{+6}\beq
c'^a=\rho^a_b(z^A)c^b \label{+6}
\eeq
of the associated local basis for the graded manifold $(Z,\gA_E)$
and the according coordinate transformation law of graded
functions (\ref{z785}).

\begin{rem}
Although graded functions are locally represented by $\La$-valued
functions (\ref{z785}), they are not $\La$-valued functions on a
manifold $Z$ because of the transformation law (\ref{+6}).
\end{rem}

\begin{rem}
Let us note that general automorphisms of a graded manifold take
the form
\mar{+95}\beq
c'^a=\rho^a(z^A,c^b), \label{+95}
\eeq
where $\rho^a(z^A,c^b)$ are local graded functions. Considering
simple graded manifolds, we actually restrict the class of graded
manifold transformations (\ref{+95}) to the linear ones
(\ref{+6}), compatible with given Batchelor's isomorphism.
\end{rem}

Let $E\to Z$ and $E'\to Z$ be vector bundles and $\Phi: E\to E'$
their bundle morphism over a morphism $\zeta: Z\to Z'$. Then every
section $s^*$ of the dual bundle $E'^*\to Z'$ defines the
pull-back section $\Phi^*s^*$ of the dual bundle $E^*\to Z$ by the
law
\be
v_z\rfloor \Phi^*s^*(z)=\Phi(v_z)\rfloor s^*(\zeta(z)), \qquad
v_z\in E_z.
\ee
 It follows that a linear bundle morphism
$\Phi$ yields a morphism
\mar{w901}\beq
S\Phi: (Z,\gA_E) \to (Z',\gA_{E'}) \label{w901}
\eeq
 of simple graded manifolds seen as
local-ringed spaces. This is the pair $(\zeta,\zeta_*\circ\Phi^*)$
of the morphism $\zeta$ of the body manifolds and the composition
of the pull-back
\be
 \gA_{E'}\ni
f\mapsto \Phi^*f\in\gA_E
\ee
 of graded functions and the direct
image $\zeta_*$ of the sheaf $\gA_E$ onto $Z'$. Relative to local
bases $(z^A,c^a)$ and $(z'^A,c'^a)$ for $(Z,\gA_E)$ and
$(Z',\gA_{E'})$ respectively, the morphism (\ref{w901}) reads
\be
S\Phi(z)=\zeta(z), \qquad S\Phi(c'^a)=\Phi^a_b(z)c^b.
\ee
Accordingly, the pull-back onto $Z$ of graded exterior forms on
$Z'$ is defined.

Given a graded manifold $(Z,\gA)$, by the sheaf $\gd\gA$ of graded
derivations of $\gA$ is meant a subsheaf of endomorphisms of the
structure sheaf $\gA$ such that any section $u$ of $\gd\gA$ over
an open subset $U\subset Z$ is a {\it graded derivation}
\index{derivation!graded} of the graded  algebra $\gA(U)$.
Conversely, one can show that, given open sets $U'\subset U$,
there is a surjection of the derivation modules
\be
\gd(\gA(U))\to \gd(\gA(U'))
\ee
\cite{bart}. It follows that any graded derivation of the local
graded algebra $\gA(U)$ is also a local section over $U$ of the
sheaf $\gd\gA$. Sections of $\gd\gA$ are  called {\it graded
vector fields} \index{graded vector field} on the graded manifold
$(Z,\gA)$. They make up the graded derivation module $\gd\gA(Z)$
of the graded commutative $\Bbb R$-ring $\gA(Z)$. This module is a
Lie superalgebra with respect to the superbracket (\ref{ws14}).

In comparison with general theory of graded manifolds, an
essential simplification is that graded vector fields on a simple
graded manifold $(Z,\gA_E)$ can be seen as sections of a vector
bundle as follows.

Due to the vertical splitting
\be
VE\cong E\times E,
\ee
the vertical tangent bundle $VE$ of $E\to Z$ can be provided with
the fibre bases $\{\dr/\dr c^a\}$, which are the duals of the
bases $\{c^a\}$. These are the fibre bases for
\be
\pr_2VE\cong E.
\ee
Then graded vector fields on a trivialization chart $(U;z^A,y^a)$
of $E$ read
\mar{hn14}\beq
u= u^A\dr_A + u^a\frac{\dr}{\dr c^a}, \label{hn14}
\eeq
where $u^\la, u^a$ are local graded functions on $U$. In
particular,
\be
\frac{\dr}{\dr c^a}\circ\frac{\dr}{\dr c^b} =-\frac{\dr}{\dr
c^b}\circ\frac{\dr}{\dr c^a}, \qquad \dr_A\circ\frac{\dr}{\dr
c^a}=\frac{\dr}{\dr c^a}\circ \dr_A.
\ee
The derivations (\ref{hn14}) act on graded functions
$f\in\gA_E(U)$ (\ref{z785}) by the rule
\mar{cmp50a}\beq
u(f_{a\ldots b}c^a\cdots c^b)=u^A\dr_A(f_{a\ldots b})c^a\cdots c^b
+u^k f_{a\ldots b}\frac{\dr}{\dr c^k}\rfloor (c^a\cdots c^b).
\label{cmp50a}
\eeq
This rule implies the corresponding coordinate transformation law
\be
u'^A =u^A, \qquad u'^a=\rho^a_ju^j +u^A\dr_A(\rho^a_j)c^j
\ee
of graded vector fields. It follows that graded vector fields
(\ref{hn14}) can be represented by sections of the vector bundle
$\cV_E\to Z$ which is locally isomorphic to the vector bundle
\mar{+243}\beq
\cV_E|_U\approx\w E^*\op\ot_Z(E\op\oplus_Z TZ)|_U, \label{+243}
\eeq
and is characterized by an atlas of bundle coordinates
\be
(z^A,z^A_{a_1\ldots a_k},v^i_{b_1\ldots b_k}), \qquad
k=0,\ldots,m,
\ee
possessing the transition functions
\be
 && z'^A_{i_1\ldots
i_k}=\rho^{-1}{}_{i_1}^{a_1}\cdots
\rho^{-1}{}_{i_k}^{a_k} z^A_{a_1\ldots a_k}, \\
&& v'^i_{j_1\ldots j_k}=\rho^{-1}{}_{j_1}^{b_1}\cdots
\rho^{-1}{}_{j_k}^{b_k}\left[\rho^i_jv^j_{b_1\ldots b_k}+
\frac{k!}{(k-1)!} z^A_{b_1\ldots b_{k-1}}\dr_A\rho^i_{b_k}\right],
\ee
which fulfil the cocycle condition. Thus, the derivation module
$\gd\gA_E(Z)$ is isomorphic to the structure module $\cV_E(Z)$ of
global sections of the vector bundle $\cV_E\to Z$.

There is the exact sequence
\mar{1030}\beq
0\to \w E^*\op\ot_Z E\to\cV_E\to \w E^*\op\ot_Z TZ\to 0
\label{1030}
\eeq
of vector bundles over $Z$. Its splitting
\mar{cmp70}\beq
\wt\g:\dot z^A\dr_A \mapsto \dot z^A(\dr_A
+\wt\g_A^a\frac{\dr}{\dr c^a}) \label{cmp70}
\eeq
transforms every vector field $\tau$ on $Z$ into the graded vector
field
\mar{ijmp10}\beq
\tau=\tau^A\dr_A\mapsto \nabla_\tau=\tau^A(\dr_A
+\wt\g_A^a\frac{\dr}{\dr c^a}), \label{ijmp10}
\eeq
which is a graded derivation of the graded commutative $\Bbb
R$-ring $\gA_E(Z)$ satisfying the Leibniz rule
\be
\nabla_\tau(sf)=(\tau\rfloor ds)f +s\nabla_\tau(f), \quad
f\in\gA_E(Z), \quad s\in C^\infty(Z).
\ee
It follows that the splitting (\ref{cmp70}) of the exact sequence
(\ref{1030}) yields a connection on the graded commutative
$C^\infty(Z)$-ring $\gA_E(Z)$ in accordance with Definition
\ref{ws36}. It is called a {\it graded connection} \index{graded
connection} on the simple graded manifold $(Z,\gA_E)$. In
particular, this connection provides the corresponding horizontal
splitting
\be
u= u^A\dr_A + u^a\frac{\dr}{\dr c^a}=u^A(\dr_A
+\wt\g_A^a\frac{\dr}{\dr c^a}) + (u^a- u^A\wt\g_A^a)\frac{\dr}{\dr
c^a}
\ee
of graded vector fields. A graded connection (\ref{cmp70}) always
exists \cite{book05}.

\begin{rem} \label{+94} \mar{+94}
By virtue of the isomorphism (\ref{+54}), any connection $\wt \g$
on a graded manifold $(Z,\gA)$, restricted to a splitting domain
$U$, takes the form (\ref{cmp70}). Given two splitting domains $U$
and $U'$ of $(Z,\gA)$ with the transition functions (\ref{+95}),
the connection components $\wt\g^a_A$ obey the transformation law
\mar{+96}\beq
\wt\g'^a_A= \wt\g^b_A\frac{\dr}{\dr c^b}\rho^a +\dr_A\rho^a.
\label{+96}
\eeq
If $U$ and $U'$ are the trivialization charts of the same vector
bundle $E$ in Theorem \ref{lmp1} together with the transition
functions (\ref{+6}), the transformation law (\ref{+96}) takes the
form
\mar{+97}\beq
\wt\g'^a_A= \rho^a_b(z)\wt\g^b_A +\dr_A\rho^a_b(z)c^b. \label{+97}
\eeq
\end{rem}

\begin{rem}
It should be emphasized that the above notion of a graded
connection is a connection on the graded commutative ring
$\gA_E(Z)$ seen as a $C^\infty(Z)$-module. It differs from that of
a  connection on a graded fibre bundle $(Z,\gA)\to (X,\cB)$ in
\cite{alm}. The latter is a connection on a graded $\cB(X)$-module
represented by a section of the jet graded bundle $J^1(Z/X)\to
(Z,\gA)$ of sections of the graded fibre bundle
   $(Z,\gA)\to (X,\cB)$ \cite{rup}.
\end{rem}

\begin{ex} \label{1031} \mar{1031}
Every linear connection
\be
\g=dz^A\ot (\dr_A +\g_A{}^a{}_by^b \dr_a)
\ee
on the vector bundle $E\to Z$ yields the graded connection
\mar{cmp73}\beq
\g_S=dz^A\ot (\dr_A +\g_A{}^a{}_bc^b\frac{\dr}{\dr c^a})
\label{cmp73}
\eeq
on the simple graded manifold $(Z,\gA_E)$. In view of Remark
\ref{+94}, $\g_S$ is also a graded connection on the graded
manifold
\be
(Z,\gA)\cong (Z,\gA_E),
\ee
but its linear form (\ref{cmp73}) is not maintained under the
transformation law (\ref{+96}).
\end{ex}

The {\it curvature} \index{curvature!of a graded connection} of
the graded connection $\nabla_\tau$ (\ref{ijmp10}) is defined by
the expression (\ref{t59}):
\mar{+110}\ben
&&
R(\tau,\tau')=[\nabla_\tau,\nabla_{\tau'}]-\nabla_{[\tau,\tau']},\nonumber\\
&& R(\tau,\tau') =\tau^A\tau'^B R^a_{AB}\frac{\dr}{\dr c^a}:
\gA_E(Z)\to \gA_E(Z),
\nonumber\\
&&R^a_{AB} =\dr_A\wt\g^a_B-\dr_B\wt\g^a_A
+\wt\g^k_A\frac{\dr}{\dr c^k}\wt\g^a_B -
   \wt\g^k_B\frac{\dr}{\dr c^k} \wt\g^a_A.\label{+110}
\een
It can also be written in the form
\mar{+111}\ben
&&R:\gA_E(Z)\to \cO^2(Z)\ot \gA_E(Z), \nonumber \\
&& R =\frac12 R_{AB}^a dz^A\w dz^B\ot\frac{\dr}{\dr c^a}. \label{+111}
\een

Let now $\cV^*_E\to  Z$ be a vector bundle which is the pointwise
$\w E^*$-dual of the vector bundle $\cV_E\to Z$. It is locally
isomorphic to the vector bundle
\mar{+244}\beq
\cV^*_E|_U\approx \w E^*\op\ot_Z(E^*\op\oplus_Z T^*Z)|_U.
\label{+244}
\eeq
With respect to the dual bases $\{dz^A\}$ for $T^*Z$ and
$\{dc^b\}$ for
\be
\pr_2V^*E\cong E^*,
\ee
sections of the vector bundle $\cV^*_E$ take the coordinate form
\be
\f=\f_A dz^A + \f_adc^a,
\ee
together with transition functions
\be
\f'_a=\rho^{-1}{}_a^b\f_b, \qquad \f'_A=\f_A
+\rho^{-1}{}_a^b\dr_A(\rho^a_j)\f_bc^j.
\ee
They are regarded as {\it graded exterior one-forms} \index{graded
exterior form} on the graded manifold $(Z,\gA_E)$, and make up the
$\gA_E(Z)$-dual
\be
\cC^1_E=\gd\gA_E(Z)^*
\ee
of the derivation module
\be
\gd\gA_E(Z)=\cV_E(Z).
\ee
Conversely,
\be
\gd\gA_E(Z)=(\cC^1_E)^*.
\ee
The duality morphism is given by the graded interior product
\mar{cmp65}\beq
u\rfloor \f=u^A\f_A + (-1)^{\nw{\f_a}}u^a\f_a. \label{cmp65}
\eeq
In particular, the dual of the exact sequence (\ref{1030}) is the
exact sequence
\mar{cmp72}\beq
0\to \w E^*\op\ot_ZT^*Z\to\cV^*_E\to \w E^*\op\ot_Z E^*\to 0.
\label{cmp72}
\eeq
Any graded connection $\wt\g$ (\ref{cmp70}) yields the splitting
of the exact sequence (\ref{cmp72}), and determines the
corresponding decomposition of graded one-forms
\be
\f=\f_A dz^A + \f_adc^a =(\f_A+\f_a\wt\g_A^a)dz^A +\f_a(dc^a
-\wt\g_A^adz^A).
\ee

Higher degree graded exterior forms  are defined as sections of
the exterior bundle $\op\w^k_Z\cV^*_E$. They make up a bigraded
algebra $\cC^*_E$ which is isomorphic to the bigraded exterior
algebra of the graded module $\cC^1_E$ over $\cC^0_E=\gA(Z)$. This
algebra is locally generated by graded forms $dz^A$, $dc^i$ such
that
\mar{+113'}\beq
dz^A\w dc^i=-dc^i\w dz^A, \qquad dc^i\w dc^j= dc^j\w dc^i.
\label{+113'}
\eeq

The {\it graded exterior differential} \index{graded exterior
differential} $d$ of graded functions is introduced by the
condition $u\rfloor df=u(f)$ for an arbitrary graded vector field
$u$, and is extended uniquely to graded exterior forms by the rule
(\ref{ws44}). It is given by the coordinate expression
\be
d\f= dz^A \w \dr_A\f +dc^a\w \frac{\dr}{\dr c^a}\f,
\ee
where the derivatives $\dr_\la$, $\dr/\dr c^a$ act on coefficients
of graded exterior forms by the formula (\ref{cmp50a}), and they
are graded commutative with the graded forms $dz^A$ and $dc^a$.
The formulae (\ref{ws45}) -- (\ref{+117}) hold.

The graded exterior differential $d$ makes $\cC^*_E$ into a
bigraded differential algebra whose de Rham complex reads
\mar{+137}\beq
0\to\Bbb R\to \gA_E(Z)\ar^d \cC^1_E \ar^d \cdots \cC^k_E \ar^d
\cdots. \label{+137}
\eeq
Its cohomology $H^*_{GR}(Z)$  is called the {\it graded de Rham
cohomology} \index{graded de Rham cohomology} of the graded
manifold $(Z,\gA_E)$. One can compute this cohomology with the aid
of the abstract de Rham theorem. Let $\gO^k\gA_E$ denote the sheaf
of germs of graded $k$-forms on $(Z,\gA_E)$. Its structure module
is $\cC^k_E$. These sheaves make up the complex
\mar{1033}\beq
0\to\Bbb R\ar \gA_E \ar^d \gO^1\gA_E\ar^d\cdots
\gO^k\gA_E\ar^d\cdots. \label{1033}
\eeq
Its members $\gO^k\gA_E$ are sheaves of $C^\infty_Z$-modules on
$Z$ and, consequently, are fine and acyclic. Furthermore, the
Poincar\'e lemma for graded exterior forms holds \cite{bart}. It
follows that the complex (\ref{1033}) is a fine resolution of the
constant sheaf $\Bbb R$ on the manifold $Z$.  Then, by virtue of
Theorem \ref{spr230}, there is
   an isomorphism
\mar{+136}\beq
H^*_{GR}(Z)=H^*(Z;\Bbb R)=H^*(Z) \label{+136}
\eeq
of the graded de Rham cohomology $H^*_{GR}(Z)$ to the de Rham
cohomology $H^*(Z)$ of the smooth manifold $Z$. Moreover, the
cohomology isomorphism (\ref{+136}) accompanies the cochain
monomorphism $\cO^*(Z)\to \cC^*_E$ of the de Rham complex
$\cO^*(Z)$ (\ref{t37}) of smooth exterior forms on $Z$ to the
graded de Rham complex (\ref{+137}). Hence, any closed graded
exterior form is split into a sum $\f=d\si +\vf$ of an exact
graded exterior form $d\si\in \cO^*\gA_E$ and a closed exterior
form $\vf\in \cO^*(Z)$ on $Z$.

\section{Supermanifolds}

There are different types of supermanifolds. These are
$H^\infty$-, $G^\infty$-, $GH^\infty$-, $G$-, and DeWitt
supermanifolds \cite{bart,book05,sard09a}. By analogy with smooth
manifolds, supermanifolds are constructed by gluing together of
open subsets of supervector spaces $B^{n,m}$ with the aid  of
transition superfunctions. Therefore, let us start with the notion
of a superfunction.

Though there are different classes of superfunctions, they can be
introduced in the same manner as follows.

Let
\be
B^{n,m}=\La^n_0\oplus \La^m_1,\qquad n,m\geq 0,
\ee
be a supervector space, where $\La$ is a Grassmann algebra of rank
$0<N\geq m$. Let
\be
\si^{n,m}: B^{n,m}\to \Bbb R^n, \qquad s: B^{n,m}\to R^{n,m}=R^n_0
\oplus R_1^m
\ee
be the corresponding body and soul maps (see the decomposition
(\ref{+11})). Then any element $q\in B^{n,m}$ is uniquely split as
\mar{+10}\beq
q=(x,y)=(\si(x^i) + s(x^i))e^0_i + y^je^1_j, \label{+10}
\eeq
where $\{e^0_i,e^1_j\}$ is a basis for $B^{n,m}$ and $\si(x^i)\in
\Bbb R$, $s(x^i)\in R_0$, $y^j\in R_1$.

Let $\La'$ be another Grassmann algebra of rank $0\leq N'\leq N$
which is treated as a subalgebra of $\La$, i.e., the basis
$\{c^a\}$, $a=1,\ldots,N'$, for $\La'$ is a subset of the basis
$\{c^i\}$, $i=1,\ldots,N$, for $\La$. Given an open subset
$U\subset \Bbb R^n$, let us consider a $\La'$-valued function
\mar{+12}\beq
f(z)=\op\sum_{k=0}^{N'} \frac1{k!}f_{a_1\ldots
a_k}(z)c^{a_1}\cdots c^{a_k} \label{+12}
\eeq
on $U$ with smooth coefficients $f_{a_1\cdots a_k}(z)$, $z\in U$.
It is a graded function on $U$. Its prolongation to
$(\si^{n,0})^{-1}(U)\subset B^{n,0}$ is defined as the formal
Taylor series
\mar{+14}\beq
f(x)= \op\sum_{k=0}^{N'} \frac1{k!}\left[
   \op\sum_{p=0}^N\frac{1}{p!}\frac{\dr^pf_{a_1\ldots a_k}}{\dr
z^{i_1}\cdots \dr z^{i_p}}(\si(x))s(x^{i_1})\cdots
s(x^{i_p})\right]c^{a_1}\cdots c^{a_k}. \label{+14}
\eeq
Then a {\it superfunction} \index{superfunction} $F(q)$ on
\be
(\si^{n,m})^{-1}(U)\subset B^{n,m}
\ee
is given by a sum
\mar{+13}\beq
F(q)=F(x,y)= \op\sum_{r=0}^m \frac1{r!} f_{j_1\ldots
j_r}(x)y^{j_1}\cdots y^{j_r}, \label{+13}
\eeq
where $f_{j_1\ldots j_r}(x)$ are functions (\ref{+14}). However,
the representation of a superfunction $F(x,y)$ by the sum
(\ref{+13}) need not be unique.

The germs of superfunctions (\ref{+13}) constitute the sheaf
$\gS_{N'}$ of graded commutative $\La'$-algebras on $B^{n,m}$, but
it is not a sheaf of $C^\infty_{B^{n,m}}$-modules since
superfunctions are expressed in Taylor series.

Using the representation (\ref{+13}), one can define derivatives
of superfunctions as follows. Let $f(x)$ be a superfunction  on
$B^{n,0}$. Since $f$, by definition, is the Taylor series
   (\ref{+14}), its
partial derivative along an even coordinate $x^i$ is defined in a
natural way as
\mar{+16}\ben
&& \dr_if(x)=(\dr_if)(\si(x),s(x))=
\label{+16}\\ && \qquad \op\sum_{k=0}^{N'} \frac1{k!}\left[
   \op\sum_{p=0}^N\frac{1}{p!}\frac{\dr^{p+1}f_{a_1\ldots a_k}}{\dr
z^i\dr z^{i_1}\cdots \dr z^{i_p}}(\si(x))s(x^{i_1})\cdots
s(x^{i_p})\right]c^{a_1}\cdots c^{a_k}. \nonumber
\een
This even derivative is extended to superfunctions $F$ on
$B^{n,m}$ in spite of the fact that the representation (\ref{+13})
is not necessarily unique. However, the definition of odd
derivatives of superfunctions is more intricate.

Let $\gS_{N'}^0\subset \gS_{N'}$ be the subsheaf of superfunctions
$F(x,y)=f(x)$ (\ref{+14}) independent of the odd arguments $y^j$.
Let $\w\Bbb R^m$ be a Grassmann algebra generated by
$(a^1,\ldots,a^m)$. The expression (\ref{+13}) implies that, for
any open subset $U\subset B^{n,m}$, there exists the sheaf
morphism
\mar{+15,7}\ben
&&\la: \gS^0_{N'}\ot\w\Bbb R^m \to \gS_{N'}, \label{+15}\\
&& \la(x,y):\op\sum_{r=0}^m \frac1{r!} f_{j_1\ldots
j_r}(x)\ot(a^{j_1}\cdots a^{j_r})\to \label{+17}\\
&& \qquad \op\sum_{r=0}^m \frac1{r!} f_{j_1\ldots
j_r}(x)y^{j_1}\cdots y^{j_r},\nonumber
\een
over $B^{n,m}$. Clearly, the morphism $\la$ (\ref{+15}) is an
epimorphism. One can show that this epimorphism is injective and,
consequently, is an isomorphism iff
\mar{+20}\beq
N-N'\geq m \label{+20}
\eeq
\cite{bart}. Roughly speaking, in this case, there exists a tuple
of elements $y^{j_1},\ldots,y^{j_r}\in \La$ for each superfunction
$f$ such that
\be
\la(f\ot(a^{j_1}\cdots a^{j_r}))\neq 0
\ee
at the point $(x,y^{j_1},\ldots,y^{j_m})$ of $B^{n,m}$.

If the condition (\ref{+20}) holds, the representation of each
superfunction $F(x,y)$ by the sum (\ref{+13}) is unique, and it is
an image of some section $f\ot a$ of the sheaf
$\gS^0_{N'}\ot\w\Bbb R^m$ with respect to the morphism $\la$
(\ref{+17}). Then an odd derivative of $F$ is defined as
\be
\frac{\dr}{\dr y^j}(\la(f\ot y))=\la (f\ot \frac{\dr}{\dr
a^j}(a)).
\ee
This definition is consistent only if $\la$ is an isomorphism,
i.e., the relation (\ref{+20}) holds. If otherwise, there exists a
non-vanishing element $f\ot a$ such that
\be
\la(f\ot a)=0,
\ee
whereas
\be
\la (f\ot \dr_j(a))\neq 0.
\ee
For instance, if
\be
N-N'=m-1,
\ee
such an element is
\be
f\ot a=c^1\cdots c^{N'}\ot(a^1\cdots a^m).
\ee

In order to classify superfunctions, we follow the terminology of
\cite{bart,rog86,rog07}.

$\bullet$ If $N'=N$, one deals with {\it
$G^\infty$-superfunctions}, \index{$G^\infty$-superfunctions}
introduced in \cite{rog80}. In this case, the inequality
(\ref{+20}) is not satisfied, unless $m=0$.

$\bullet$ If the condition (\ref{+20}) holds,
$\gS_{N'}=\ccG\cH_{N'}$ is the sheaf of {\it
$GH^\infty$-superfunctions}. \index{$GH^\infty$-superfunctions}

$\bullet$ In particular, if $N'=0$, the condition (\ref{+20}) is
satisfied, and $\gS_{N'}=\cH^\infty$ is the sheaf of {\it
$H^\infty$-superfunctions} \index{$H^\infty$-superfunctions}
\mar{+41}\beq
F(x,y)= \op\sum_{r=0}^m \frac1{r!}\left[
   \op\sum_{p=0}^N\frac{1}{p!}\frac{\dr^pf_{j_1\ldots j_r}}{\dr
z^{i_1}\cdots \dr z^{i_p}}(\si(x))s(x^{i_1})\cdots
s(x^{i_p})\right]y^{j_1}\cdots y^{j_r}, \label{+41}
\eeq
where $f_{j_1\ldots j_r}$ are real functions \cite{batch2,dewt}.

Superfunctions of the above three types are called {\it smooth
superfunctions}. \index{superfunction!smooth} The fourth type of
superfunctions is the following.

Given the sheaf $\ccG\cH_{N'}$ of $GH^\infty$-superfunctions on a
supervector space $B^{n,m}$, let us define the sheaf of graded
commutative $\La$-algebras
\mar{+21}\beq
\ccG_{N'}=\ccG\cH_{N'}\op\ot_{\La'} \La, \label{+21}
\eeq
where $\La$ is regarded as a graded commutative $\La'$-algebra.
The sheaf $\ccG_{N'}$ (\ref{+21}) possesses the following
important properties \cite{bart}.

$\bullet$ There is the {\it evaluation morphism} \index{evaluation
morphism}
\mar{+22}\beq
\dl:\ccG_{N'}\ni F\ot a\mapsto Fa \in C^\La_{B^{n,m}}, \label{+22}\\
\eeq
where
\be
C^\La_{B^{n,m}}\cong C^0_{B^{n,m}}\ot\La
\ee
is the sheaf of continuous $\La$-valued functions on $B^{n,m}$.
Its image is isomorphic to the sheaf $\ccG^\infty$ of
$G^\infty$-superfunctions on $B^{n,m}$.

$\bullet$ For any two integers $N'$ and $N''$ satisfying the
condition (\ref{+20}), there exists the canonical isomorphism
between the sheaves $\ccG_{N'}$ and $\ccG_{N''}$. Therefore, one
can define the canonical sheaf $\ccG_{n,m}$ of graded commutative
$\La$-algebras on the supervector space $B^{n,m}$ whose sections
can be seen as tensor products $F\ot a$ of
$H^\infty$-superfunctions $F$ (\ref{+41}) and elements $a\in\La$.
They are called {\it $G$-superfunctions}.
\index{$G$-superfunction}

$\bullet$ The sheaf $\gd\ccG_{n,m}$ of graded derivations of the
sheaf $\ccG_{n,m}$ is a locally free sheaf of $\ccG_{n,m}$-modules
of rank $(n,m)$. On any open set $U\subset B^{n,m}$, the
$\ccG_{n,m}(U)$-module $\gd\ccG_{n,m}(U)$ is generated by the
derivations $\dr/\dr x^i$, $\dr/\dr y^j$ which act on
$\ccG_{n,m}(U)$ by the rule
\mar{+83}\beq
\frac{\dr}{\dr x^i}(F\ot a)=\frac{\dr F}{\dr x^i}\ot a, \qquad
\frac{\dr}{\dr y^j}(F\ot a)=\frac{\dr F}{\dr y^j}\ot a.
\label{+83}
\eeq

These properties of $G$-superfunctions make $G$-supermanifolds
most suitable for differential geometric constructions.

A paracompact topological space $M$ is said to be an
$(n,m)$-dimensional {\it smooth supermanifold} \index{smooth
supermanifold} if it admits an atlas
\be
\Psi=\{U_\zeta,\f_\zeta\}, \qquad \f_\zeta: U_\zeta\to B^{n,m},
\ee
such that the transition functions $\f_\zeta\circ\f_\xi^{-1}$ are
supersmooth. Obviously, a smooth supermanifold of dimension
$(n,m)$ is also a real smooth manifold of dimension
$2^{N-1}(n+m)$. If transition superfunctions are $H^\infty$-,
$G^\infty$- or $GH^\infty$-superfunctions, one deals with
$H^\infty$-, $G^\infty$- or $GH^\infty$-supermanifolds,
respectively.

By virtue of Theorem \ref{+26} extended to graded local-ringed
spaces, this definition is equivalent to the following one.

\begin{defi} \label{+27} \mar{+27}
A smooth supermanifold is a graded local-ringed space $(M,\gS)$
which is locally isomorphic to $(B^{n,m},\cS)$, where $\cS$ is one
of the sheaves of smooth superfunctions on $B^{n,m}$. The sheaf
$\cS$ is called the {\it structure sheaf} \index{structure
sheaf!of a supermanifold} of a smooth supermanifold.
\end{defi}

In accordance with Definition \ref{+27}, by a morphism of smooth
supermanifolds is meant their morphism $(\vf,\Phi)$ as graded
local-ringed spaces, where $\Phi$ is an even graded morphism. In
particular, every morphism $\vf: M\to M'$ yields the smooth
supermanifold morphism $(\vf,\Phi=\vf^*)$.

Smooth supermanifolds however are effected by serious
inconsistencies as follows. Since odd derivatives of
$G^\infty$-superfunctions are ill defined, the sheaf of
derivations of the sheaf of $G^\infty$-superfunctions is not
locally free. Nevertheless, any $G$-supermanifold has an
underlying $G^\infty$-supermanifold.

In the case of $GH^\infty$-supermanifolds (including
$H^\infty$-ones), spaces of values of $GH^\infty$-superfunctions
at different points are not mutually isomorphic because the
Grassmann algebra $\La$ is not a free module with respect to its
subalgebra $\La'$. By these reasons, we turn to
$G$-supermanifolds. Their definition repeats Definition \ref{+27}.

\begin{defi} \label{+30} \mar{+30}
An $(n,m)$-dimensional {\it $G$-supermanifold}
\index{$G$-supermanifold} is a graded local-ringed space
$(M,G_M)$, satisfying the following conditions:

$\bullet$ $M$ is a paracompact topological space;

$\bullet$ $(M,G_M)$ is locally isomorphic to
$(B^{n,m},\ccG_{n,m})$;

$\bullet$ there exists a morphism of sheaves of graded commutative
$\La$-algebras $\dl:G_M\to C^\La_M$, where
\be
C^\La_M\cong C^0_M\ot\La
\ee
is sheaf of continuous $\La$-valued functions on $M$,  and $\dl$
is locally isomorphic to the evaluation morphism (\ref{+22}).
\end{defi}

\begin{ex} \label{+50} \mar{+50}
The triple $(B^{n,m},\ccG_{n,m},\dl)$, where $\dl$ is the
evaluation morphism (\ref{+22}),  is called the {\it standard
$G$-supermanifold}. \index{$G$-supermanifold standard} For any
open subset $U\subset B^{n,m}$, the space $\ccG_{n,m}(U)$ can be
provided with the topology which makes it into a graded Fr\'echet
algebra. Then there are  isometrical isomorphisms
\mar{+65}\ben
&& \ccG_{n,m}(U)\cong \cH^\infty(U)\ot \La\cong
C^\infty(\si^{n,m}(U))\ot\La\ot\w
\Bbb R^m \cong \label{+65} \\
&&\qquad C^\infty(\si^{n,m}(U))\ot\w\Bbb R^{N+m}. \nonumber
\een
\end{ex}

\begin{rem} \label{+81} \mar{+81}
Any $GH^\infty$-supermanifold $(M,GH^\infty_M)$ with the structure
sheaf $GH^\infty_M$ is naturally extended to the $G$-supermanifold
$(M,GH^\infty_M\ot \La)$. Every $G$-supermanifold defines an {\it
underlying $G^\infty$-supermanifold} \index{underlying
$G^\infty$-supermanifold} $(M,\dl(G_M))$, where
$\dl(G_M)=G^\infty_M$ is the sheaf of $G^\infty$-superfunctions on
$M$.
\end{rem}

As in the case of smooth supermanifolds, the underlying space $M$
of a $G$-supermanifold $(M,G_M)$ is provided with the structure of
a real smooth manifold of dimension $2^{N-1}(n+m)$, and morphisms
of $G$-supermanifolds are smooth morphisms of the underlying
smooth manifolds. However, it may happen that non-isomorphic
$G$-supermanifold have isomorphic underlying smooth manifolds.

Let $(M,G_M)$ be a $G$-supermanifold. Sections $u$ of the sheaf
$\gd G_M$ of graded derivations are called {\it supervector
fields} \index{supervector field} on the $G$-supermanifold
$(M,G_M)$, while sections $\f$ of the dual sheaf $\gd G_M^*$ are
{\it one-superforms} \index{superform} on $(M,G_M)$. Given a
coordinate chart $(q^i)=(x^i,y^j)$ on $U\subset M$, supervector
fields and one-superforms read
\be
u=u^i\dr_i, \qquad \f=\f_idq^i,
\ee
where coefficients $u^i$ and $\f_i$ are $G$-superfunctions on
$U$. The graded differential calculus in supervector fields and
superforms obeys the standard formulae (\ref{ws14}), (\ref{ws45}),
(\ref{ws44}) and (\ref{ws46}).

Let us consider cohomology of $G$-supermanifolds. Given a
$G$-supermanifold $(M,G_M)$, let
\be
\gO^k_{\La M}=\gO^k_M\ot\La
\ee
be the sheaves of smooth $\La$-valued exterior forms on $M$. These
sheaves are fine, and they constitute the fine resolution
\be
0\to\La\to C^\infty_M\ot\La \to \gO^1_M\ot\La\to\cdots
\ee
of the constant sheaf $\La$ on $M$. We have the corresponding de
Rham complex
\be
0\to\La\to C^\infty_\La(M) \to \cO^1_\La(M)\to\cdots
\ee
of $\La$-valued exterior forms on $M$. By virtue of Theorem
\ref{spr230}, the cohomology $H^*_\La(M)$ of this complex is
isomorphic to the sheaf cohomology $H^*(M;\La)$ of $M$ with
coefficients in the constant sheaf $\La$ and, consequently, is
related to the de Rham cohomology as follows:
\mar{+146}\beq
H^*_\La(M)=H^*(M;\La)=H^*(M)\ot\La. \label{+146}
\eeq
Thus, the cohomology groups of $\La$-valued exterior forms do not
provide us with information on the $G$-supermanifold structure of
$M$.

Let us turn to cohomology of superforms on a $G$-supermanifold
$(M,G_M)$. The sheaves $\op\w^k\gd G_M^*$ of superforms constitute
the complex
\mar{+141}\beq
0\to\La\to G_M\to \gd^*G_M\to \cdots. \label{+141}
\eeq
The Poincar\'e lemma for superforms is proved to hold
\cite{bart,bruz88}, and this complex is exact. However, the
structure sheaf $G_M$ need not be acyclic, and the exact sequence
(\ref{+141}) fails to be a resolution of the constant sheaf $\La$
on $M$ in general. Therefore, the cohomology $H^*_S(M)$ of the de
Rham complex of superforms are not equal to cohomology
$H^*(M;\La)$ of $M$ with coefficients in the constant sheaf $\La$,
and need not be related to the de Rham cohomology $H^*(M)$ of the
smooth manifold $M$. In particular, cohomology $H^*_S(M)$ is not a
topological invariant, but it is invariant under $G$-isomorphisms
of $G$-supermanifolds.

\begin{prop} \label{+140} \mar{+140}  The structure sheaf
$\ccG_{n,m}$ of the standard $G$-supermanifold
$(B^{n,m},\ccG_{n,m})$ is acyclic, i.e.,
\be
H^{k>0}(B^{n,m};\ccG_{n,m})=0.
\ee
\end{prop}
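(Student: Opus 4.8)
The plan is to show that the structure sheaf $\ccG_{n,m}$ of the standard $G$-supermanifold $(B^{n,m},\ccG_{n,m})$ admits a fine resolution, or better, is itself a fine sheaf, so that its higher cohomology on the paracompact space $B^{n,m}$ vanishes. First I would invoke the isometrical isomorphism from Example~\ref{+50}, namely
\[
\ccG_{n,m}(U)\cong C^\infty(\si^{n,m}(U))\ot\w\Bbb R^{N+m}
\]
valid for every open $U\subset B^{n,m}$, and note that these isomorphisms are compatible with restriction maps (this is the point that needs a short check: the evaluation/body structure is natural in $U$). Consequently, as a sheaf on $B^{n,m}$, $\ccG_{n,m}$ is isomorphic to the sheaf $\si^{-1}$-pullback of $C^\infty$ on $\Bbb R^n$ tensored with the finite-dimensional Grassmann algebra $\w\Bbb R^{N+m}$; equivalently, it is a sheaf of modules over the sheaf $\si^{n,m*}C^\infty_{\Bbb R^n}$ of (pulled-back) smooth functions. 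Since $\w\Bbb R^{N+m}$ is finite-dimensional, this is a locally free sheaf of finite rank over that sheaf of smooth functions.

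Next I would apply Proposition~\ref{spr256'}: $B^{n,m}$ is a paracompact space (indeed a smooth manifold of dimension $2^{N-1}(n+m)$), and it admits a partition of unity by elements of the structure module of the sheaf $S=\si^{n,m*}C^\infty_{\Bbb R^n}$ of smooth functions on it — the ordinary smooth bump functions on $B^{n,m}$, pulled back appropriately, or directly the smooth real functions factoring through $\si^{n,m}$ give enough of a partition of unity subordinate to any cover, because $\si^{n,m}$ is an open continuous surjection with affine-space fibres. With $P=\ccG_{n,m}$ a sheaf of $S$-modules, Proposition~\ref{spr256'} yields that $\ccG_{n,m}$ is fine, hence acyclic:
\[
H^{k>0}(B^{n,m};\ccG_{n,m})=0.
\]

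The main obstacle I anticipate is the partition-of-unity step: one must be careful that $\ccG_{n,m}$ really is a module over a sheaf of \emph{smooth} functions that separates points well enough on $B^{n,m}$, since superfunctions are genuine Taylor series in the soul variables and are not smooth functions on $B^{n,m}$ in the naive sense. The resolution is that multiplication by any ordinary smooth function $\varphi\in C^\infty(B^{n,m})$ (or even just $\varphi\circ\si^{n,m}$ with $\varphi\in C^\infty(\Bbb R^n)$) preserves $\ccG_{n,m}$, and such functions already furnish partitions of unity on the paracompact manifold $B^{n,m}$; so the hypotheses of Proposition~\ref{spr256'} are met with $S$ the sheaf generated by these functions. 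A secondary point worth a line is that the isomorphism of Example~\ref{+50} is natural enough to glue to a sheaf isomorphism — but this is already implicitly used in the construction of $\ccG_{n,m}$ as a sheaf, so it adds nothing new. Everything else is a routine consequence of the cited results on fine and acyclic sheaves.
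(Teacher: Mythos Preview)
Your argument has a genuine gap at exactly the point you flag as the ``main obstacle,'' and your proposed resolution does not work.

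The paper states explicitly (just after introducing superfunctions, before equation~(\ref{+14})) that the sheaf of superfunctions ``is not a sheaf of $C^\infty_{B^{n,m}}$-modules since superfunctions are expressed in Taylor series.'' The same applies to $\ccG_{n,m}$: a section over $U$ is determined by smooth data on $\si^{n,m}(U)\subset\Bbb R^n$, and multiplying by an arbitrary $\varphi\in C^\infty(B^{n,m})$ that genuinely depends on soul coordinates does \emph{not} produce another $G$-superfunction. So your first option, taking $S=C^\infty_{B^{n,m}}$, violates hypothesis~(b) of Proposition~\ref{spr256'}. Your fallback option, taking $S$ to consist of functions $\varphi\circ\si^{n,m}$ pulled back from $\Bbb R^n$, violates hypothesis~(a): such functions are constant on the fibres of $\si^{n,m}$, so they cannot furnish a partition of unity subordinate to an open cover of $B^{n,m}$ (in the Euclidean topology the paper uses) that separates points within a single fibre. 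Concretely, cover $B^{n,m}$ by two open half-spaces sliced along a soul coordinate; no body-pulled-back functions can be subordinate to that cover.

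This is precisely why the paper does \emph{not} claim $\ccG_{n,m}$ is fine, and why it remarks just above Proposition~\ref{+140} that for a general $G$-supermanifold the structure sheaf ``need not be acyclic.'' The paper's proof outline invokes the isomorphism~(\ref{+65}) together with further ``cohomological constructions'' from \cite{bart,bruz99}; these go beyond the fine-sheaf mechanism --- roughly, one must relate cohomology on $B^{n,m}$ to cohomology of a genuinely fine sheaf on the body $\Bbb R^n$ via the map $\si^{n,m}$ (for instance through a direct-image or Leray-type argument), rather than manufacture a partition of unity directly on $B^{n,m}$.
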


The proof is based on the isomorphism (\ref{+65}) and some
cohomological constructions \cite{bart,bruz99}.

\section{Supervector bundles}

As was mentioned above, supervector bundles are considered in the
category of $G$-supermanifolds. We start with the definition of
the product of two $G$-supermanifolds seen as a trivial
supervector bundle.

Let $(B^{n,m},\ccG_{n,m})$ and $(B^{r,s},\ccG_{r,s})$ be two
standard $G$-supermanifolds in Example \ref{+30}. Given open sets
$U\subset B^{n,m}$ and $V\subset B^{r,s}$, we consider the
presheaf
\mar{+67}\beq
U\times V\to \ccG_{n,m}(U)\wh\ot \ccG_{r,s}(V), \label{+67}
\eeq
where $\wh\ot$ denotes the tensor product of modules completed in
Grothendieck's topology (see Remark \ref{ws2}). Due to the
isomorphism (\ref{+65}), it is readily observed that the structure
sheaf $\ccG_{n+r,m+s}$ of the standard $G$-supermanifold on
$B^{n+r,m+s}$ is isomorphic to that, defined by the presheaf
(\ref{+67}). This construction is generalized to arbitrary
$G$-supermanifolds as follows.

Let  $(M,G_M)$ and $(M',G_{M'})$ be two $G$-supermanifolds of
dimensions $(n,m)$ and $(r,s)$, respectively. Their {\it product}
\index{product!of $G$-supermanifolds}
\be
(M,G_M) \times(M',G_{M'})
\ee
is defined as the graded local-ringed space $(M\times M',
G_M\wh\ot G_{M'})$, where $G_M\wh\ot G_{M'}$ is the sheaf
determined by the presheaf
\be
&& U\times U'\to G_M(U)\wh\ot G_{M'}(U'),\\
&& \dl: G_M(U)\wh\ot G_{M'}(U')\to C^\infty_{\si(U)}\wh\ot
C^\infty_{\si(U')}= C^\infty_{\si_M(U)\times \si_M(U')},
\ee
for any open subsets $U\subset M$ and $U'\subset M'$. This product
is a $G$-supermanifold of dimension $(n+r,m+s)$ \cite{bart}.
Furthermore, there is the epimorphism
\be
  \pr_1:(M,G_M) \times(M',G_{M'})\to (M,G_M).
\ee
One may define its section over an open subset $U\subset M$ as the
$G$-supermanifold morphism
\be
s_U:(U,G_M|_U)\to (M,G_M) \times(M',G_{M'})
\ee
such that $\pr_1\circ s_U$ is the identity morphism of
$(U,G_M|_U)$. Sections $s_U$ over all open subsets $U\subset M$
determine a sheaf on $M$. This sheaf should be provided with a
suitable graded commutative $G_M$-structure.

For this purpose, let us consider the product
\mar{+72}\beq
(M,G_M)\times (B^{r\mid s}, \ccG_{r\mid s}), \label{+72}
\eeq
  where $B^{r\mid s}$
is the superspace (\ref{+70}). It is called a {\it  product
$G$-supermanifold}. \index{product $G$-supermanifold} Since the
$\La_0$-modules $B^{r\mid s}$ and $B^{r+s,r+s}$ are isomorphic,
$B^{r\mid s}$ has a natural structure of an
$(r+s,r+s)$-dimensional $G$-supermanifold. Because  $B^{r\mid s}$
is a free graded $\La$-module of the type $(r,s)$, the sheaf
$S_M^{r\mid s}$ of sections of the fibration
\mar{+71}\beq
(M,G_M)\times (B^{r\mid s}, \ccG_{r\mid s})\to (M,G_M) \label{+71}
\eeq
has the structure of the sheaf of  free graded $G_M$-modules of
rank $(r,s)$. Conversely, given a $G$-supermanifold $(M,G_M)$ and
a sheaf $S$ of
  free graded $G_M$-modules of rank
$(r,s)$ on $M$, there exists a product $G$-supermanifold
(\ref{+72}) such that $S$ is isomorphic to the sheaf of sections
of the fibration (\ref{+71}).

Let us turn now to the notion of a supervector bundle over
$G$-supermanifolds. Similarly to smooth vector bundles (see
Theorem \ref{sp60}), one can require of the category of
supervector bundles over $G$-supermanifolds to be equivalent to
the category of locally free sheaves of graded modules on
$G$-supermanifolds. Therefore, we can restrict ourselves to
locally trivial supervector bundles with the standard fibre
$B^{r\mid s}$.

\begin{defi} \label{+78} \mar{+78}
A {\it supervector bundle} \index{supervector bundle}  over a
$G$-supermanifold $(M,G_M)$ with the standard fibre $(B^{r\mid
s},\ccG_{r\mid s})$ is defined as a pair $((Y,G_Y),\pi)$ of a
$G$-supermanifold $(Y,G_Y)$ and a $G$-epimorphism
\mar{+76}\beq
\pi: (Y,G_Y)\to (M,G_M) \label{+76}
\eeq
such that $M$ admits an atlas $\{(U_\zeta,\psi_\zeta\}$ of local
$G$-isomorphisms
\be
\psi_\zeta: (\pi^{-1}(U_\zeta), G_Y|_{\pi^{-1}(U_\zeta)})\to
(U_\zeta, G_M|_{U_\zeta})\times (B^{r\mid s},\ccG_{r\mid s}).
\ee
\end{defi}

It is clear that sections of the supervector bundle (\ref{+76})
constitute a  locally free sheaf of graded $G_M$-modules. The
converse of this fact is the following \cite{bart}.

\begin{theo} \label{+77} \mar{+77}
For any locally free sheaf $S$ of graded $G_M$-modules of rank
$(r,s)$ on a $G$-supermanifold $(M,G_M)$, there exists a
supervector bundle over $(M,G_M)$ such that $S$ is isomorphic to
the structure sheaf of its sections.
\end{theo}

The fibre $Y_q$, $q\in M$, of the supervector bundle in Theorem
\ref{+77} is the quotient
\be
S_q/\cM_q\cong S_{Mq}^{r\mid s}/(\cM_q\cdot S_{Mq}^{r\mid s})\cong
B^{r\mid s}
\ee
of the stalk $S_q$ by the submodule $\cM_q$ of the germs $s\in
S_q$ whose evaluation $\dl(f)(q)$ vanishes. This fibre is a graded
$\La$-module isomorphic to $B^{r\mid s}$, and is provided with the
structure of the standard $G$-supermanifold.

\begin{rem} \label{+79} \mar{+79}
The proof of Theorem \ref{+77} is based on the fact that, given
the transition functions $\rho_{\zeta\xi}$ of the sheaf $S$, their
evaluations
\mar{+80}\beq
g_{\zeta\xi}=\dl(\rho_{\zeta\xi}) \label{+80}
\eeq
  define the morphisms
\be
U_\zeta\cap U_\xi \to  GL(r|s;\La),
\ee
and they are assembled into a cocycle of $G^\infty$-morphisms from
$M$ to the general linear graded  group $GL(r|s;\La)$. Thus, we
come to the notion of a {\it $G^\infty$-vector bundle}.
\index{$G^\infty$-vector bundle} Its definition is a repetition of
Definition \ref{+78} if one replaces $G$-supermanifolds and
$G$-morphisms with the $G^\infty$- ones. Moreover, the
$G^\infty$-supermanifold underlying a supervector bundle (see
Remark \ref{+81}) is a $G^\infty$-supervector bundle, whose
transition functions $g_{\zeta\xi}$ are related to those of the
supervector bundle by the evaluation morphisms (\ref{+80}), and
are $GL(r|s;\La)$-valued transition functions.
\end{rem}

Since the category of supervector bundles over a $G$-supermanifold
$(M,G_M)$ is equivalent to the category of locally free sheaves of
graded $G_M$-modules, one can define the usual operations of
direct sum, tensor product, etc. of supervector bundles.

Let us note that any supervector bundle admits the canonical
global zero section. Any section of the supervector bundle $\pi$
(\ref{+76}), restricted to its trivialization chart
\mar{+156}\beq
(U, G_M\mid_U)\times (B^{r\mid s},\ccG_{r\mid s}), \label{+156}
\eeq
is represented by a sum $s = s^a(q)\e_a$, where $\{\e_a\}$ is the
basis  for the graded $\La$-module $B^{r\mid s}$, while $s^a(q)$
are $G$-superfunctions on $U$. Given another trivialization chart
$U'$ of $\pi$, a transition function
\mar{+155}\beq
s'^b(q)\e'_b=s^a(q)h^b{}_a(q)\e_b, \qquad q\in U\cap U',
\label{+155}
\eeq
is given by the $(r+s)\times(r+s)$ matrix $h$ whose entries
$h^b{}_a(q)$ are $G$-superfunctions on $U\cap U'$. One can think
of this matrix as being a section of the supervector bundle over
$U\cap U$ with the above mentioned group $GL(r|s;\La)$ as a
typical fibre.

\begin{ex} \label{+84} \mar{+84}
Given a $G$-supermanifold $(M,G_M)$, let us consider the locally
free sheaf $\gd G_M$ of graded derivations of $G_M$. In accordance
with Theorem \ref{+77}, there is a supervector bundle $T(M,G_M)$,
called {\it supertangent bundle}, \index{supertangent bundle}
whose structure sheaf is isomorphic to $\gd G_M$. If
$(q^1,\ldots,q^{m+n})$ and $(q'^1,\ldots,q'^{m+n})$ are two
coordinate charts on $M$, the Jacobian matrix
\be
h^i_j=\frac{\dr q'^i}{\dr q^j}, \qquad i,j=1,\ldots, n+m,
\ee
(see the prescription (\ref{+83})) provides the transition
morphisms for $T(M,G_M)$.

It should be emphasized that the underlying $G^\infty$-vector
bundle of the supertangent bundle $T(M,G_M)$, called {\it
$G^\infty$-supertangent bundle}, \index{$G^\infty$-tangent bundle}
has the transition functions $\dl(h^i_j)$ which cannot be written
as the Jacobian matrices since the derivatives of
$G^\infty$-superfunctions with respect to odd arguments are ill
defined and the sheaf $\gd G^\infty_M$ is not locally free.
\end{ex}

\section{Superconnections}

Given a supervector bundle $\pi$ (\ref{+76}) with the structure
sheaf $S$, one can follow suit of Definition \ref{+4} and
introduce a connection on this supervector bundle  as a splitting
of the the exact sequence of sheaves
\mar{+121}\beq
0\to \gd G_M^*\ot S\to (G_M\oplus\gd G_M^*)\ot S\to S\to 0.
\label{+121}
\eeq
Its splitting is an even sheaf morphism
\mar{+122}\beq
\nabla: S\to \gd^*G_M\ot S \label{+122}
\eeq
satisfying the Leibniz rule
\mar{+158}\beq
\nabla (fs)=df\ot s + f\nabla(s), \qquad f\in G_M(U), \qquad s\in
S(U), \label{+158}
\eeq
for any open subset $U\in M$. The sheaf morphism (\ref{+122}) is
called a {\it superconnection} \index{superconnection} on the
supervector bundle $\pi$ (\ref{+76}). Its {\it curvature}
\index{curvature!of a superconnection} is given by the expression
\mar{+157}\beq
R=\nabla^2:S\to \op\w^2\gd G^*_M\ot S, \label{+157}
\eeq
similar to the expression (\ref{+105}).

The exact sequence (\ref{+121}) need not be split. One can apply
the criterion in Section 1.7 in order to study the existence of a
superconnection on supervector bundles. Namely, the exact sequence
(\ref{+121}) leads to the exact sequence of sheaves
\be
0\to \hm(S,\gd G_M^*\ot S)\to \hm(S,(G_M\oplus\gd G_M^*)\ot S) \to
\hm(S,S)\to 0
\ee
and to the corresponding exact sequence of the cohomology groups
\be
&& 0\to H^0(M; \hm(S,\gd G_M^*\ot S)) \to H^0(M;
\hm(S,(G_M\oplus\gd G_M^*)\ot S)) \\
&& \qquad \to H^0(M;\hm(S,S))\to H^1(M;\hm(S,\gd G_M^*\ot S))\to
\cdots.
\ee
The exact sequence  (\ref{+121}) defines the Atiyah class
\be
{\rm At}(\pi)\in H^1(M;\hm(S,\gd G_M^*\ot S))
\ee
 of the supervector bundle $\pi$ (\ref{+76}). If
the Atiyah class vanishes, a superconnection on this supervector
bundle exists. In particular, a superconnection exists if the
  cohomology set $H^1(M;\hm(S,\gd G_M^*\ot S))$ is trivial.
In contrast with the sheaf of smooth functions, the structure
sheaf $G_M$ of a $G$-supermanifold is not acyclic in general,
cohomology $H^*(M;\hm(S,\gd G_M^*\ot S))$ is not trivial, and a
supervector bundle need not admit a superconnection.

\begin{ex} \label{+150} \mar{+150}
In accordance with Proposition \ref{+140}, the structure sheaf of
the standard $G$-supermanifold $(B^{n,m},\ccG_{n,m})$ is acyclic,
and the trivial supervector bundle
\mar{+151}\beq
(B^{n,m},\ccG_{n,m})\times (B^{r\mid s},\ccG_{r\mid s}) \to
(B^{n,m},\ccG_{n,m}) \label{+151}
\eeq
has obviously a superconnection, e.g., the trivial
superconnection.
\end{ex}

Example \ref{+150} enables one to obtain a local coordinate
expression for a superconnection on a supervector bundle $\pi$
(\ref{+76}), whose typical fibre is $B^{r\mid s}$ and whose base
is a $G$-supermanifold locally isomorphic to the standard
$G$-supermanifold $(B^{n,m},\ccG_{n,m})$. Let $U\subset M$
(\ref{+156}) be a trivialization chart of this supervector bundle
such that every section $s$ of $\pi|_U$ is represented by a sum
$s^a(q)\e_a$, while the
  sheaf of one-superforms
$\gd^* G_M|_U$ has a local basis $\{d q^i\}$.
  Then a
superconnection $\nabla$ (\ref{+122}) restricted to this
trivialization chart can be given by a collection of coefficients
$\nabla_i{}^a{}_b$:
\mar{+160}\beq
\nabla (\e_a)=dq^i\ot (\nabla_i{}^b{}_a\e_b), \label{+160}
\eeq
which  are $G$-superfunctions on $U$. Bearing in mind the Leibniz
rule (\ref{+158}), one can compute the coefficients of the
curvature form (\ref{+157}) of the superconnection (\ref{+160}).
We have
\be
&& R(\e_a)=\frac12 dq^i\w dq^j\ot R_{ij}{}^b{}_a\e_b, \\
&& R_{ij}{}^a{}_b =(-1)^{[i][j]}\dr_i\nabla_j{}^a{}_b -\dr_j\nabla_i{}^a{}_b
+ (-1)^{[i]([j]+[a]+[k])}\nabla_j{}^a{}_k\nabla_i{}^k{}_b -\\
&& \qquad (-1)^{[j]([a]+[k])}\nabla_i{}^a{}_k\nabla_j{}^k{}_b.
\ee
In a similar way, one can obtain the transformation law of the
superconnection coefficients (\ref{+160}) under the transition
morphisms (\ref{+155}). In particular, any trivial supervector
bundle admits the trivial superconnection $\nabla_i{}^b{}_a=0$.
\newpage

\chapter{Non-commutative geometry}

Non-commutative geometry is developed in main as a generalization
of the calculus in commutative rings of smooth functions
\cite{conn,dub01,book05,grac,land,mad}.  Accordingly, a
non-commutative generalization of differential geometry is phrased
in terms of the differential calculus over a non-commutative ring
which replaces the exterior algebra of differential forms. The
Chevalley--Eilenberg differential calculus over a commutative ring
is straightforwardly generalized to a non-commutative $\cK$-ring
$\cA$. However, the extension of the notion of a differential
operator to modules over a non-commutative ring meets difficulties
\cite{book05,sard07}. In a general setting, any non-commutative
ring can be called into play, but one often follows the more deep
analogy to the case of commutative smooth function rings. In
Connes' commutative geometry, $\cA$ is the algebra $\Bbb
C^\infty(X)$ of smooth complex functions on a compact manifold
$X$. It is a dense subalgebra of the $C^*$-algebra of continuous
complex functions on $X$. Generalizing this case, Connes'
non-commutative geometry \cite{conn80,conn,conjmp} addresses the
differential calculus over an involutive algebra $\cA$ of bounded
operators in a Hilbert space $E$ and, furthermore, studies a
representation of this differential calculus by operators in $E$.

\section{Modules over $C^*$-algebras}

Let us point out some features of modules over non-commutative
algebras and, in particular, $C^*$-algebras.

Let $\cK$ throughout be a commutative ring and $\cA$ a $\cK$-ring
which need not be commutative. Let $\cZ_\cA$ denote its {\it
center}. \index{center of an algebra} An $\cA$-bimodule throughout
is assumed to be a commutative (central) $\cZ_\cA$-bimodule.
Sometimes, it is convenient to use the following compact
abbreviation \cite{dub}. We say that right and left $\cA$-modules,
$\cA$-bimodules and $\cZ_\cA$-bimodules are $\cA_i$-modules of
type $(1,0)$, $(0,1)$, $(1,1)$ and $(0,0)$, respectively (or
$(\cA_i-\cA_j)$-modules where $\cA_0=\cZ_\cA$ and $\cA_1=\cA$). Of
course, $\cA_i$-modules of type $(1,1)$ are also of type $(1,0)$
and $(0,1)$, while $\cA_i$-modules of type $(1,0)$, $(0,1)$
$(1,1)$ are also of type $(0,0)$. With this abbreviation, the
basic constructions of new modules from old ones are phrased as
follows.

$\bullet$ If $P$ and $P'$ are $\cA_i$-modules of the same type
$(i,j)$, so is its direct sum $P\oplus P'$.

$\bullet$ Let $P$ and $P'$ be $\cA_i$-modules of type $(i,k)$ and
$(k,j)$, respectively. Their tensor product $P\ot P'$ is an
$\cA_i$-module of type $(i,j)$.

$\bullet$ Given an $\cA_i$-module $P$ of type $(i,j)$, its
$\cA$-dual $P^*=\hm_{\cA_i-\cA_j}(P,\cA)$ is a module of type
$(i+1,j+1){\rm mod}\,2$.

Let $A$ be a complex involutive algebra. Any module over $A$ is
also a complex vector space. An $A_i$-module of type $(1,1)$ is
called an {\it involutive module} \index{module!involutive} if it
is equipped with an antilinear involution $p\mapsto p^*$ such that
\be
(apb)^*=b^*p^*a^*, \qquad  a,b\in A, \qquad  p\in P.
\ee
Due to this relation, an involutive module is reconstructed by its
right or left module structure. In particular, an involutive
module is said to be a projective module of finite rank if, seen
as a right (or left) module, it is a finite projective module.

Given a right module $P$ over an involutive algebra $A$, a {\it
Hermitian form} \index{Hermitian form!on a module} on $P$ is
defined as a sesquilinear $A$-valued form
\mar{w88}\ben
&& \lng.|.\rng:P\times P\to A, \label{w88}\\
&& \lng pa|p'a'\rng=a^*\lng p|p'\rng a', \qquad \lng p|p'\rng=\lng
p'|p\rng^*, \qquad p,p'\in P, \quad a,a'\in A. \nonumber
\een
A Hermitian form (\ref{w88}) on $P$ yields an antilinear morphism
$h$ of $P$ to its $A$-dual $P^*$ given by the formula
\mar{w87}\beq
(h p)(p')= \lng p|p'\rng, \qquad p,p'\in P. \label{w87}
\eeq
A Hermitian form (\ref{w88}) is called {\it invertible}
\index{Hermitian form!on a module!invertible} if the morphism $h$
is invertible.

Let $A$ be a $C^*$-algebra. A Hermitian form (\ref{w88}) on a
right $A$-module $P$ is called {\it positive} \index{Hermitian
form!on a module!positive} if $\lng p|p\rng$ for all $p\in P$ is a
{\it positive element} \index{positive element} of a $C^*$-algebra
$A$, i.e.,
\be
\lng p|p\rng=aa^*,\qquad a\in A.
\ee
Let $A$ be a unital $C^*$-algebra. Any projective $A$-module $P$
of finite rank admits an invertible positive Hermitian form.
Moreover, all these forms on $P$ are isomorphic \cite{mis}.

A positive Hermitian form on a right $A$-module $P$ endows $P$
with the semi-norm
\mar{w89}\beq
||p||=||\lng p|p\rng||^{1/2}, \qquad p\in P, \label{w89}
\eeq
where $||\lng p|p\rng||$ is the $C^*$-algebra norm of $\lng
p|p\rng\in A$. Equipped with this seminorm and the corresponding
topology, $P$ is called the (right) {\it pre-Hilbert module}.
\index{pre-Hilbert module} It is a {\it Hilbert module}
\index{Hilbert module} (a {\it $C^*$-module} in the terminology of
\cite{conn}) \index{$C^*$-module} if the seminorm (\ref{w89}) is a
complete norm.

\begin{ex} \label{w94} \mar{w94}
A $C^*$-algebra $A$ is provided with the structure of a Hilbert
$A$-module with respect to the action of $A$ on itself by right
multiplications and the positive Hermitian form
\mar{w95}\beq
\lng a|a'\rng= a^*a', \qquad a,a'\in A. \label{w95}
\eeq
\end{ex}

\begin{ex} \label{w200} \mar{w200}
Let $A=\Bbb C^0(X)$ be the $C^*$-algebra of continuous complex
functions on a compact space $X$, and let $E\to X$ be a
(topological) complex vector bundle endowed with a Hermitian fibre
metric $\lng.|.\rng_x$. Then the space $E(X)$ of continuous
sections of $E\to X$ is a Hilbert $\Bbb C^0(X)$-module with
respect to the $\Bbb C^0(X)$-valued Hermitian form
\be
\lng s|s'\rng (x)= \lng s(x)|s'(x)\rng_x, \qquad s,s'\in E(X).
\ee
\end{ex}

Given a Hilbert $A$-module $P$, by its {\it endomorphism}
\index{endomorphism!of a Hilbert module} $T$ is meant a continuous
$A$-linear endomorphism of a right module $P$ which admits the
{\it adjoint endomorphism} \index{endomorphism!of a Hilbert
module!adjoint} $T^*$, which is uniquely given by the relation
\be
\lng p|Tp'\rng=\lng T^*p|p'\rng, \qquad p,p'\in P.
\ee
The set $B_A(P)$ of $A$-linear endomorphisms of a Hilbert
$A$-module $P$ is a $C^*$-algebra with respect to the operator
norm. {\it Compact endomorphisms} \index{endomorphism!of a Hilbert
module!compact} of $P$ are defined as the closer of its
endomorphisms of finite rank. Let us consider endomorphisms
$T_{p,q}\in B_A(P)$ of $P$ of the form
\mar{w92}\beq
T_{p,q}p'=p\lng p'|q\rng, \qquad p,p',q\in P. \label{w92}
\eeq
They obey the relations
\be
T^*_{p,q}=T_{q,p}, \qquad T_{p,q}T_{p',q'}=T_{p\lng q|p'\rng,q'}=
T_{p,T_{q,p'}q}.
\ee
The linear span of endomorphisms (\ref{w92}) is a two-sided ideal
of $B_A(P)$ \cite{rief72}. Its closure is the set $T_A(P)$ of
compact $A$-linear endomorphisms of $P$.

In conclusion, let us turn to projective Hilbert modules of finite
rank over a unital $C^*$-algebra $A$. One can show the following
\cite{mis,rief72}.

$\bullet$ Let $P$ be a right Hilbert $A$-module such that $\id
P\in T_A(P)$. Then $P$ is a projective module of finite rank.

$\bullet$ Conversely, let $P$ be a projective right $A$-module of
finite rank. Then $P$ admits a positive Hermitian form which makes
it into a Hilbert module such that $\id P\in T_A(P)$.

$\bullet$ Given two positive Hermitian forms $\lng.|.\rng$ and
$\lng.|.\rng'$ on a projective right $A$-module $P$, there exists
an invertible $A$-linear endomorphism of $P$ such that
\be
\lng p|p'\rng'= \lng Tp|Tp'\rng, \qquad p,p'\in P.
\ee

\section{Non-commutative differential calculus}

The notion of a differential calculus in Section 1.4 has been
formulated for any $\cK$-ring $\cA$. One can generalize the
Chevalley--Eilenberg differential calculus over a commutative ring
in Section 1.4 to a non-commutative $\cK$-ring $\cA$
\cite{dub88,dub01,book05}. For this purpose, let us consider
derivations $u\in\gd\cA$ of $\cA$. They obey the Leibniz rule
\mar{ws100'}\beq
u(ab)=u(a)b+au(b), \qquad a,b\in\cA, \label{ws100'}
\eeq
(see Remark \ref{w70}).
 By virtue of the relation (\ref{ws100'}),
the set of derivations $\gd\cA$ is both a $\cZ_\cA$-bimodule and a
Lie $\cK$-algebra with respect to the Lie bracket
\mar{ws101}\beq
[u,u']=uu'-u'u. \label{ws101}
\eeq
It is readily observed that derivations preserve the center
$\cZ_\cA$ of $\cA$.

\begin{rem} \label{w161} \mar{w161}
If $\cA$ is an involutive ring, the differential calculus over
$\cA$ fulfills the additional relations
\mar{1010}\beq
(\al\cdot\bt)^*=\bt^*\cdot\al^*, \qquad (\dl\al)^*=-\dl\al^*,
\qquad \al,\bt \in \Om^*. \label{1010}
\eeq
In particular, the second relation (\ref{1010}) shows that $\dl$
is an antisymmetric derivation of an involutive ring $\cA$.
\end{rem}

Let us consider the extended Chevalley--Eilenberg complex
(\ref{ws102}) of the Lie algebra $\gd\cA$ with coefficients in the
ring $\cA$, regarded as a $\gd\cA$-module. This complex contains a
subcomplex $\cO^*[\gd\cA]$ of $\cZ_\cA$-multilinear skew-symmetric
maps (\ref{+840'}) with respect to the Chevalley--Eilenberg
coboundary operator $d$ (\ref{+840}). Its terms $\cO^k[\gd\cA]$
are $\cA$-bimodules. The graded module $\cO^*[\gd\cA]$ is provided
with the product (\ref{ws103}) which obeys the relation
(\ref{ws98}) and makes $\cO^*[\gd\cA]$ into a  differential graded
algebra. Let us note that, if $\cA$ is not commutative, there is
nothing like the graded commutativity of forms (\ref{ws99}) in
general. Since
\mar{708'}\beq
\cO^1[\gd\cA]=\hm_{\cZ_\cA}(\gd\cA,\cA), \label{708'}
\eeq
we have the following non-commutative generalizations of the
interior product
\be
(u\rfloor\f)(u_1,\ldots,u_{k-1})= k\f(u,u_1,\ldots,u_{k-1}),
\qquad u\in\gd\cA, \qquad \f\in \cO^*[\gd\cA],
\ee
and the Lie derivative
\be
\bL_u(\f)=d(u\rfloor\f) +u\rfloor f(\f).
\ee
Then one can think of elements of $\cO^1[\gd\cA]$ as being the
non-commutative generalization of exterior one-forms.

The minimal Chevalley--Eilenberg differential calculus $\cO^*\cA$
over $\cA$ consists of the monomials
\be
a_0 da_1\w\cdots \w da_k, \qquad a_i\in\cA,
\ee
whose product $\w$ (\ref{ws103}) obeys the juxtaposition rule
\be
(a_0d a_1)\w (b_0d b_1)=a_0d (a_1b_0)\w d b_1- a_0a_1d b_0\w d
b_1, \qquad a_i,b_i\in\cA.
\ee
For instance, it follows from the product (\ref{ws103}) that, if
$a,a'\in\cZ_\cA$, then
\mar{w123}\beq
da\w da'=-da'\w da, \qquad ada'=(da')a. \label{w123}
\eeq

\begin{prop} \label{w130} \mar{w130}
There is the duality relation
\mar{ws130}\beq
\gd\cA=\hm_{\cA-\cA}(\cO^1\cA,\cA),\label{ws130}
\eeq
generalizing the relation (\ref{5.81}) to non-commutative rings.
\end{prop}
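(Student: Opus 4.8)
The plan is to transcribe the proof of the commutative duality \eqref{5.81} to the present setting, being careful that $\cO^1\cA$ is an $\cA$-bimodule (not merely a left module) and that $\gd\cA$ carries only a $\cZ_\cA$-bimodule structure. Recall that $\cO^1\cA$ sits inside $\cO^1[\gd\cA]=\hm_{\cZ_\cA}(\gd\cA,\cA)$ (see \eqref{708'}) as the $\cA$-subbimodule generated by the forms $da$, $a\in\cA$, with $(da)(u)=u(a)$, the left and right $\cA$-actions being $(a\f)(u)=a\f(u)$ and $(\f a)(u)=\f(u)a$ as read off from the product \eqref{ws103}. First I would observe that $\cO^1\cA$ is in fact generated as a \emph{left} $\cA$-module by the $da$: since $d$ obeys the Leibniz rule $d(a'a)=(da')a+a'(da)$, one has $(da')a=d(a'a)-a'(da)$, so any monomial $a_0(da_1)a_2$ equals the left-module combination $a_0\,d(a_1a_2)-a_0a_1\,da_2$, and hence every element of $\cO^1\cA$ is a finite sum $\sum_i a_i\,db_i$.

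Next I would define, for $u\in\gd\cA$, the evaluation map $\f_u:\cO^1\cA\to\cA$, $\f_u(\al)=\al(u)$. From the formulas for the $\cA$-actions on $\cO^1\cA$ one gets $\f_u(a\al)=a\,\f_u(\al)$ and $\f_u(\al a)=\f_u(\al)\,a$, so $\f_u\in\hm_{\cA-\cA}(\cO^1\cA,\cA)$; and $u\mapsto\f_u$ is additive and $\cZ_\cA$-linear, since $\f_{zu}(\al)=\al(zu)=z\,\al(u)$ for $z\in\cZ_\cA$ by the $\cZ_\cA$-linearity of the elements of $\cO^1[\gd\cA]=\hm_{\cZ_\cA}(\gd\cA,\cA)$. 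This map is injective: if $\f_u=0$, then in particular $u(a)=(da)(u)=\f_u(da)=0$ for all $a\in\cA$, whence $u=0$.

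For surjectivity, given $\f\in\hm_{\cA-\cA}(\cO^1\cA,\cA)$ I would set $u(a):=\f(da)$. Since $d$ is $\cK$-linear (a derivation of the $\cK$-ring $\cA$) and $\f$ is $\cK$-linear, $u:\cA\to\cA$ is a $\cK$-module morphism; and it obeys the Leibniz rule \eqref{ws100'}: using $d(ab)=(da)b+a(db)$ together with the bimodule property of $\f$,
\be
u(ab)=\f(d(ab))=\f((da)b)+\f(a(db))=\f(da)\,b+a\,\f(db)=u(a)\,b+a\,u(b).
\ee
Hence $u\in\gd\cA$, and $\f_u=\f$ because both are $\cA$-bimodule morphisms agreeing on the left-module generators $da$, as $\f_u(da)=u(a)=\f(da)$. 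Therefore $u\mapsto\f_u$ is an isomorphism of $\cZ_\cA$-bimodules, which is precisely \eqref{ws130}.

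I expect no serious obstacle: the argument is a direct adaptation of the commutative case, and the only thing demanding care is the left/right bookkeeping. In particular one genuinely needs $\hm_{\cA-\cA}$ rather than $\hm_\cA$, because for non-commutative $\cA$ the product $au$ of $a\in\cA$ with a derivation $u$ need not be a derivation, so $\gd\cA$ has no $\cA$-module structure and the pairing with $\cO^1\cA$ must exploit both sides of the bimodule.
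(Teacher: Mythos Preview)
Your proof is correct and follows essentially the same approach as the paper: define the map $u\mapsto\f_u$ via $\f_u(da)=u(a)$, use that $\cO^1\cA$ is generated by the $da$ to get injectivity, and construct the inverse by $u_\f(a)=\f(da)$, checking the Leibniz rule from $d(ab)=(da)b+a\,db$ together with the $\cA$-bimodule property of $\f$. You have simply spelled out more of the details (the left-module generation by the $da$, the explicit Leibniz verification, and the remark on why $\hm_{\cA-\cA}$ rather than $\hm_\cA$ is needed) than the paper's terse version does.
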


\begin{proof}
It follows from the definition (\ref{+840}) of the
Chevalley--Eilenberg coboundary operator that
\mar{spr708'}\beq
(d a)(u)=u(a), \qquad a\in\cA, \qquad u\in\gd\cA. \label{spr708'}
\eeq
This equality yields the morphism
\be
\gd\cA\ni u\mapsto \f_u\in \hm_{\cA-\cA}(\cO^1\cA,\cA), \qquad
\f_u(da)=u(a), \qquad a\in \cA.
\ee
This morphism is a monomorphism  because the module $\cO^1\cA$ is
generated by elements $da$, $a\in\cA$. At the same time, any
element $\f \in \hm_{\cA-\cA}(\cO^1\cA,\cA)$ induces the
derivation $u_\f(a)=\f(da)$ of $\cA$. Thus, there is a morphism
\be
\hm_{\cA-\cA}(\cO^1\cA,\cA)\to\gd \cA,
\ee
which is a monomorphism since $\cO^1\cA$ is generated by elements
$da$, $a\in\cA$.
\end{proof}

Let us turn now to a different differential calculus over a
non-commutative ring which is often used in non-commutative
geometry \cite{conn,land}. Let $\cA$ be a (non-commutative)
$\cK$-ring over a commutative ring $\cK$. Let us consider the
tensor product $\cA\op\ot_\cK\cA$ of $\cK$-modules. It is brought
into an $\cA$-bimodule with respect to the multiplication
\be
b(a\ot a')c=(ba)\ot (a'c), \qquad a,a',b,c\in\cA.
\ee
Let us consider its submodule $\Om^1(\cA)$ generated by the
elements
\be
\bb\ot a-a\ot\bb, \qquad a\in\cA.
\ee
It is readily observed that
\mar{w110}\beq
d:\cA\ni a\mapsto \bb\ot a -a\ot\bb\in\Om^1(\cA) \label{w110}
\eeq
is a $\Om^1(\cA)$-valued derivation of $\cA$. Thus, $\Om^1(\cA)$
is an $\cA$-bimodule generated by the elements $da$, $a\in\cA$,
such that the relation
\mar{w265}\beq
(da)b=d(ab)-adb, \qquad a,b\in \cA, \label{w265}
\eeq
holds. Let us consider the tensor algebra $\Om^*(\cA)$ of the
$\cA$-bimodule $\Om^1(\cA)$. It consists of the monomials
\mar{w1005}\beq
a_0da_1\cdots da_k, \qquad a_i\in\cA, \label{w1005}
\eeq
whose product obeys the juxtaposition rule
\be
(a_0d a_1)(b_0d b_1)=a_0d (a_1b_0)d b_1- a_0a_1d b_0 b_1, \qquad
a_i,b_i\in \cA,
\ee
because of the relation (\ref{w265}). The operator $d$
(\ref{w110}) is extended to $\Om^*(\cA)$ by the law
\mar{w169}\beq
d(a_0da_1\cdots da_k)=da_0da_1\cdots da_k, \label{w169}
\eeq
that makes $\Om^*(\cA)$ into a differential graded algebra. Its de
Rham cohomology groups are
\be
H^0(\Om^*(\cA))=\cK, \qquad H^{r>0}(\Om^*(\cA))=0.
\ee

If $\cA$ is not a unital algebra, one can consider its unital
extension $\wt\cA$ in Remark \ref{w120}, and then construct the
differential graded algebra $\Om^*(\wt\cA)$. This algebra contains
the differential graded subalgebra $\Om^*(\cA)$ of monomials
(\ref{w1005}). The de Rham cohomology groups of $\Om^*(\cA)$ are
trivial.

Of course, $\Om^*(\cA)$ is a minimal differential calculus. One
calls it the {\it universal differential calculus}
\index{differential calculus universal} over $\cA$ because of the
following property \cite{land}. Let $P$ be an $\cA$-bimodule. Any
$P$-valued derivation $\Delta$ of $\cA$ factorizes as
$\Delta={\got f}^\Delta\circ d$ through some $(\cA-\cA)$-module
homomorphism
\mar{w133}\beq
{\got f}^\Delta:\Om^1(\cA)\to P.\label{w133}
\eeq
Moreover, let $\cA'$ be another $\cK$-algebra and $(\Om'^*,\dl')$
its differential calculus over a $\cK$-ring $\cA'$. Any
homomorphism $\cA\to \cA'$ is uniquely extended to a morphism of
differential graded algebras
\be
\rho^*: \Om^*(\cA) \to \Om'^*
\ee
such that $\rho^{k+1}\circ d=\dl' \circ\rho^k$. Indeed, this
morphism factorizes through the morphism of $\Om^*(\cA)$ to the
minimal differential calculus in $\Om'^*$ which sends $da\to
\dl'\rho(a)$.

Elements of the universal differential calculus $\Om^*(\cA)$ are
called {\it universal forms}. \index{universal form} However, they
can not be regarded as the non-commutative generalization of
exterior forms because, in contrast with the Chevalley--Eilenberg
differential calculus, the monomials $da$, $a\in\cZ_\cA$, of the
universal differential calculus do not satisfy the relations
(\ref{w123}). In particular, if $\cA$ is a commutative ring, the
module $\cO^1$ (\ref{mos058}) of exterior one-forms over $\cA$ is
the quotient of the module $\Om^1(\cA)$ (\ref{w110}) of universal
forms by the relations (\ref{5.53}). At the same time, if $P=\cA$,
the morphism (\ref{w133}) takes the form
\be
{\got f}^\Delta(da)=\Delta(a).
\ee
This relation defines the monomorphism of $\Om^1(\cA)$ to
$\cO^1[\gd\cA]$ (\ref{708'}) by the formula (\ref{spr708'}).
Therefore, its
 range coincides with
the term $\cO^1\cA$ of the minimal Chevalley--Eilenberg
differential calculus, i.e., there is an isomorphism
\mar{w134}\beq
\Om^1(\cA)=\cO^1\cA. \label{w134}
\eeq

\section{Differential operators in non-commutative geometry}

It seems natural to regard derivations of a non-commutative
$\cK$-ring $\cA$ and the Chevalley--Eilenberg coboundary operator
$d$ (\ref{+840}) as particular differential operators in
non-commutative geometry. Definition \ref{ws131} provides a
standard notion of differential operators on modules over a
commutative ring. However, there exist its different
generalizations to modules over a non-commutative ring
\cite{bor97,dublmp,dub01,lunts}.

Let $P$ and $Q$ be $\cA$-bimodules over a non-commutative
$\cK$-ring $\cA$. The $\cK$-module $\hm_\cK(P,Q)$ of $\cK$-linear
homomorphisms $\Phi:P\to Q$ can be provided with the left $\cA$-
and $\cA^\bll$-module structures (\ref{5.29}) and the similar
right module structures
\mar{ws105}\beq
(\Phi a)(p)=\Phi(p)a, \qquad (a\bll\Phi)(p)=\Phi(pa), \quad
a\in\cA, \qquad p\in\ P. \label{ws105}
\eeq
For the sake of convenience, we will refer to the module
structures (\ref{5.29}) and (\ref{ws105}) as the left and right
$\cA-\cA^\bll$ structures, respectively. Let us put
\mar{ws133}\beq
\ol\dl_a\Phi=\Phi a-a\bll\Phi, \qquad a\in\cA, \qquad \Phi\in
\hm_\cK(P,Q). \label{ws133}
\eeq
It is readily observed that
\be
\dl_a\circ\ol\dl_b=\ol\dl_b\circ\dl_a, \qquad a,b\in\cA.
\ee

The left $\cA$-module homomorphisms $\Delta: P\to Q$ obey the
conditions $\dl_a\Delta=0$, for all $a\in\cA$ and, consequently,
they can be regarded as left zero order $Q$-valued differential
operators on $P$. Similarly, right zero order differential
operators are defined.

Utilizing the condition (\ref{ws106}) as a definition of a first
order differential operator in non-commutative geometry, one
however meets difficulties. If $P=\cA$ and $\Delta(\bb)=0$, the
condition (\ref{ws106}) does not lead to the Leibniz rule
(\ref{+a20}), i.e., derivations of the $\cK$-ring $\cA$ are not
first order differential operators. In order to overcome these
difficulties, one can replace the condition (\ref{ws106}) with the
following one \cite{dublmp}.

\begin{defi} \label{ws120} \mar{ws120}
An element $\Delta\in \hm_\cK(P,Q)$ is called a first order {\it
differential operator} \index{differential operator!in
non-commutative geometry} of a  bimodule $P$ over a
non-commutative ring $\cA$ if it obeys the condition
\mar{ws114}\ben
&& \dl_a\circ\ol\dl_b\Delta=\ol\dl_b\circ\dl_a\Delta=0,
\qquad  a,b\in\cA, \nonumber \\
&& a\Delta(p)b -a\Delta(pb) -\Delta(ap)b +\Delta(apb)=0, \qquad p\in P.
\label{ws114}
\een
\end{defi}

First order $Q$-valued differential operators on $P$ make up a
$\cZ_\cA$-module $\dif_1(P,Q)$.

If $P$ is a commutative bimodule over a commutative ring $\cA$,
then $\dl_a=\ol\dl_a$ and Definition \ref{ws120} comes to
Definition \ref{ws131} for first order differential operators.

In particular, let $P=\cA$. Any left or right zero order
$Q$-valued differential operator $\Delta$ is uniquely defined by
its value $\Delta(\bb)$. As a consequence, there are left and
right $\cA$-module isomorphisms
\be
&& Q\ni q\mapsto \Delta^{\rm R}_q\in\dif_0^{\rm R}(\cA,Q), \qquad
\Delta^{\rm R}_q(a)=qa, \qquad a\in\cA,\\
&& Q\ni q\mapsto \Delta^{\rm L}_q\in\dif_0^{\rm L}(\cA,Q), \qquad
\Delta^{\rm L}_q(a)=aq.
\ee
A first order $Q$-valued differential operator $\Delta$ on $\cA$
fulfils the condition
\mar{ws110}\beq
\Delta(ab)=\Delta(a)b+a\Delta(b) -a\Delta(\bb)b. \label{ws110}
\eeq
It is a derivation of $\cA$ if $\Delta(\bb)=0$. One obtains at
once that any first order differential operator on $\cA$ is split
into the sums
\be
&& \Delta(a)=a\Delta(\bb) +[\Delta(a)-a\Delta(\bb)], \\
&& \Delta(a)=\Delta(\bb)a +[\Delta(a)-\Delta(\bb)a]
\ee
of the derivations $\Delta(a)-a\Delta(\bb)$ or
$\Delta(a)-\Delta(\bb)a$ and the left or right zero order
differential operators $a\Delta(\bb)$ and $\Delta(\bb)a$,
respectively. If $u$ is a $Q$-valued derivation of $\cA$, then
$au$ (\ref{5.29}) and $ua$ (\ref{ws105}) are so for any
$a\in\cZ_\cA$. Hence, $Q$-valued derivations of $\cA$ constitute a
$\cZ_\cA$-module $\gd(\cA,Q)$. There are two $\cZ_\cA$-module
decompositions
\be
&& \dif_1(\cA,Q)= \dif_0^{\rm L}(\cA,Q) \oplus \gd(\cA,Q), \\
&& \dif_1(\cA,Q)= \dif_0^{\rm R}(\cA,Q) \oplus \gd(\cA,Q).
\ee
They differ from each other in the inner derivations $a\mapsto
aq-qa$.

Let $\hm_\cA^{\rm R}(P,Q)$ and $\hm_\cA^{\rm L}(P,Q)$ be the
modules of right and left $\cA$-module homomorphisms of $P$ to
$Q$, respectively. They are provided with the left and right
$\cA-\cA^\bll$-module structures (\ref{5.29}) and (\ref{ws105}),
respectively.

\begin{prop} \label{ws113} \mar{ws113}
An element $\Delta\in\hm_\cK(P,Q)$ is a first order $Q$-valued
differential operator on $P$ in accordance with Definition
\ref{ws120} iff it obeys the condition
\mar{n21}\beq
\Delta(apb)=(\op\dr^\to a)(p)b +a\Delta(p)b + a(\op\dr^\leftarrow
b)(p), \qquad p\in P, \quad  a,b\in\cA,\label{n21}
\eeq
where $\op\dr^\to$ and $\op\dr^\leftarrow$ are $\hm_\cA^{\rm
R}(P,Q)$- and $\hm_\cA^{\rm L}(P,Q)$-valued derivations of $\cA$,
respectively \cite{book05}. Namely,
\be
(\op\dr^\to a)(pb)=(\op\dr^\to a)(p)b, \qquad (\op\dr^\leftarrow
b)(ap) =a(\op\dr^\leftarrow b)(p).
\ee
\end{prop}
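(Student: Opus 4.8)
The plan is to prove the equivalence by matching, in each direction, the pair of ``twisted difference'' operators $\dl_a$ (\ref{spr172}) and $\ol\dl_b$ (\ref{ws133}) that enter Definition \ref{ws120} with the pair of one-sided derivations $\op\dr^\to,\op\dr^\leftarrow$ that enter (\ref{n21}). Throughout I would use that $\dl_a$ and $\ol\dl_b$ commute, as noted just before Definition \ref{ws120}, and that every derivation of $\cA$ vanishes on $\bb$.

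Starting from a first order operator $\Delta$ in the sense of Definition \ref{ws120}, I would set $\op\dr^\to a:=-\dl_a\Delta$ and $\op\dr^\leftarrow b:=-\ol\dl_b\Delta$, so that explicitly $(\op\dr^\to a)(p)=\Delta(ap)-a\Delta(p)$ and $(\op\dr^\leftarrow b)(p)=\Delta(pb)-\Delta(p)b$, and then verify three points. First, that $\op\dr^\to a\in\hm_\cA^{\rm R}(P,Q)$ and $\op\dr^\leftarrow b\in\hm_\cA^{\rm L}(P,Q)$: this is a direct reading of the vanishing conditions $\ol\dl_b\circ\dl_a\Delta=0$ and $\dl_a\circ\ol\dl_b\Delta=0$, which say precisely that $\dl_a\Delta$ commutes with right multiplications and $\ol\dl_b\Delta$ with left ones. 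Second, that $a\mapsto\op\dr^\to a$ obeys the Leibniz rule with values in $\hm_\cA^{\rm R}(P,Q)$ equipped with the left $\cA-\cA^\bll$ structure (\ref{5.29}), and $b\mapsto\op\dr^\leftarrow b$ with values in $\hm_\cA^{\rm L}(P,Q)$ equipped with the right $\cA-\cA^\bll$ structure (\ref{ws105}); each is a one-line expansion of $(\op\dr^\to(ac))(p)$ and $(\op\dr^\leftarrow(bc))(p)$ from the explicit formulas above. Third, that (\ref{n21}) holds: substituting the two definitions into its right-hand side and simplifying the $a\Delta(p)b$ terms collapses (\ref{n21}) to $\Delta(apb)-\Delta(ap)b-a\Delta(pb)+a\Delta(p)b=0$, which is exactly the defining relation (\ref{ws114}).

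For the converse, suppose $\Delta$ satisfies (\ref{n21}) for some $\hm_\cA^{\rm R}(P,Q)$-valued derivation $\op\dr^\to$ and some $\hm_\cA^{\rm L}(P,Q)$-valued derivation $\op\dr^\leftarrow$. First I would pin these down: since $\op\dr^\leftarrow\bb=0$, setting $b=\bb$ in (\ref{n21}) forces $(\op\dr^\to a)(p)=\Delta(ap)-a\Delta(p)=-(\dl_a\Delta)(p)$, and since $\op\dr^\to\bb=0$, setting $a=\bb$ forces $(\op\dr^\leftarrow b)(p)=\Delta(pb)-\Delta(p)b=-(\ol\dl_b\Delta)(p)$. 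Feeding these expressions back into (\ref{n21}) reproduces (\ref{ws114}) for all $a,b\in\cA$ and $p\in P$, whence $\Delta\in\dif_1(P,Q)$; equivalently, $\dl_a\Delta=-\op\dr^\to a$ is right-$\cA$-linear for every $a$, which is the condition $\ol\dl_b\circ\dl_a\Delta=0$.

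All of the manipulations here are elementary. The step I expect to be the only real source of friction — hence the ``hard part'' — is keeping the module bookkeeping straight: one must check that $\op\dr^\to$ is genuinely a derivation into $\hm_\cA^{\rm R}(P,Q)$ carrying the \emph{left} $\cA-\cA^\bll$ module structure, while $\op\dr^\leftarrow$ is a derivation into $\hm_\cA^{\rm L}(P,Q)$ carrying the \emph{right} one, so that in each Leibniz identity it is the ``$\bll$'' (precomposition) action that supplies the extra term, with the left/right placement of $a$ and $b$ consistent with the one in (\ref{n21}).
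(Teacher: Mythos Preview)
Your proof is correct and complete. The paper itself does not supply a proof of this proposition; it states the result with a citation to \cite{book05} and moves on. Your argument --- defining $\op\dr^\to a=-\dl_a\Delta$ and $\op\dr^\leftarrow b=-\ol\dl_b\Delta$, reading off the right/left $\cA$-linearity directly from the vanishing of $\ol\dl_b\circ\dl_a\Delta$ and $\dl_a\circ\ol\dl_b\Delta$, verifying the Leibniz rules, and then collapsing (\ref{n21}) to (\ref{ws114}) --- is exactly the natural direct argument one would expect the cited reference to contain. The converse, recovering $\op\dr^\to$ and $\op\dr^\leftarrow$ by specializing $b=\bb$ and $a=\bb$, is also the canonical step. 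Your remark about the module bookkeeping (that the Leibniz rule for $\op\dr^\to$ uses the $\cA^\bll$-action by precomposition on the right while $\op\dr^\leftarrow$ uses it on the left) is the only subtle point, and you have it right.
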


For instance, let $P$ be a differential calculus over a $\cK$-ring
$\cA$ provided with an associative multiplication $\circ$ and a
coboundary operator $d$. Then $d$ exemplifies a $P$-valued first
order differential operator on $P$ by Definition \ref{ws120}. It
obeys the condition (\ref{n21}) which reads
\be
d(apb)=(da\circ p)b+a(dp)b + a((-1)^{|p|}p\circ db).
\ee
For instance, let $P=\cO^*\cA$ be the Chevalley--Eilenberg
differential calculus over $\cA$. In view of the relations
(\ref{708'}) and (\ref{ws130}), one can think of derivations
$u\in\gd\cA$ as being vector fields in non-commutative geometry. A
problem is that $\gd\cA$ is not an $\cA$-module. One can overcome
this difficulty as follows \cite{bor97}.

Given a non-commutative $\cK$-ring $\cA$ and an $\cA$-bimodule
$Q$, let $d$ be a $Q$-valued derivation of $\cA$. One can think of
$Q$ as being a first degree term of a differential calculus over
$\cA$. Let $Q^*_{\rm R}$ be the right $\cA$-dual of $Q$. It is an
$\cA$-bimodule:
\be
(bu)(q)= bu(q), \qquad (ub)(q)=u(bq), \qquad  b\in\cA, \qquad q\in
Q.
\ee
One can associate to each element $u\in Q^*_{\rm R}$ the
$\cK$-module morphism
\mar{ws121}\beq
\wh u:\cA\in a\mapsto u(da)\in\cA. \label{ws121}
\eeq
This morphism obeys the relations
\mar{ws123}\beq
\wh{(bu)}(a) =bu(da), \qquad \wh u(ba)=\wh u(b)a+\wh{(ub)}(a).
\label{ws123}
\eeq
One calls $(Q^*_{\rm R},u\mapsto\wh u )$ the $\cA$-right {\it
Cartan pair}, \index{Cartan pair} and regards $\wh u$
(\ref{ws121}) as an $\cA$-valued first order differential operator
on $\cA$ \cite{bor97}. Let us note that $\wh u$ (\ref{ws121}) need
not be a derivation of $\cA$ and fails to satisfy Definition
\ref{ws120}, unless $u$ belongs to the two-sided $\cA$-dual
$Q^*\subset Q^*_{\rm R}$ of $Q$. Morphisms $\wh u$ (\ref{ws121})
are called into play in order to describe (left) vector fields in
non-commutative geometry \cite{bor97,jara}.

In particular, if $Q=\cO^1\cA$, then $au$ for any $u\in\gd\cA$ and
$a\in\cA$ is a left non-commutative vector field in accordance
with the relation (\ref{spr708}).

Similarly, the $\cA$-left Cartan pair is defined. For instance,
$ua$ for any $u\in\gd\cA$ and $a\in\cA$ is a right {\it
non-commutative vector field}. \index{non-commutative vector
field}

If $\cA$-valued derivations $u_1,\ldots u_r$ of a non-commutative
$\cK$-ring $\cA$ or the above mentioned non-commutative vector
fields $\wh u_1,\ldots \wh u_r$ on $\cA$ are regarded as first
order differential operators on $\cA$, it seems natural to think
of their compositions $u_1\circ\cdots u_r$ or $\wh u_1\circ\cdots
\wh u_r$ as being particular higher order differential operators
on $\cA$. Let us turn to the general notion of a differential
operator on $\cA$-bimodules.

By analogy with Definition \ref{ws131}, one may try to generalize
Definition \ref{ws120} by means of the maps $\dl_a$ (\ref{spr172})
and $\ol\dl_a$ (\ref{ws133}). A problem lies in the fact that, if
$P=Q=\cA$, the compositions $\dl_a\circ\dl_b$ and
$\ol\dl_a\circ\ol\dl_b$ do not imply the Leibniz rule and, as a
consequence, compositions of derivations of $\cA$ fail to be
differential operators \cite{book05,sard07}.

This problem can be solved if $P$ and $Q$ are regarded as left
$\cA$-modules \cite{lunts}. Let us consider the $\cK$-module
$\hm_\cK (P,Q)$ provided with the left $\cA-\cA^\bll$ module
structure (\ref{5.29}). We denote by $\cZ_0$ its center, i.e.,
$\dl_a\Phi=0$ for all $\Phi\in\cZ_0$ and $a\in\cA$. Let $\cI_0=\ol
\cZ_0$ be the $\cA-\cA^\bll$ submodule of $\hm_\cK (P,Q)$
generated by $\cZ_0$. Let us consider:

(i) the quotient $\hm_\cK (P,Q)/\cI_0$,

(ii) its center $\cZ_1$,

(iii) the $\cA-\cA^\bll$ submodule $\ol \cZ_1$ of $\hm_\cK
(P,Q)/\cI_0$ generated by $\cZ_1$,

(iv) the $\cA-\cA^\bll$ submodule $\cI_1$ of $\hm_\cK (P,Q)$ given
by the relation $\cI_1/\cI_0=\ol \cZ_1$.

\noindent Then we define the $\cA-\cA^\bll$ submodules $\cI_r$,
$r=2,\ldots$, of $\hm_\cK (P,Q)$ by induction as
$\cI_r/\cI_{r-1}=\ol \cZ_r$, where $\ol \cZ_r$ is the
$\cA-\cA^\bll$ module generated by the center $\cZ_r$ of the
quotient $\hm_\cK (P,Q)/\cI_{r-1}$.

\begin{defi} \label{ws135} \mar{ws135}
Elements of the submodule $\cI_r$ of $\hm_\cK (P,Q)$ are said to
be left $r$-order $Q$-valued {\it differential operators}
\index{differential operator!in non-commutative geometry} of an
$\cA$-bimodule $P$ \cite{lunts}.
\end{defi}

\begin{prop} \label{ws137} \mar{ws137}
An element $\Delta\in \hm_\cK (P,Q)$ is a differential operator of
order $r$ in accordance with Definition \ref{ws135} iff it is a
finite sum
\mar{ws138}\beq
\Delta(p)=b_i\Phi^i(p) +\Delta_{r-1}(p), \qquad b_i\in\cA,
\label{ws138}
\eeq
where $\Delta_{r-1}$ and $\dl_a\Phi^i$ for all $a\in\cA$ are
$(r-1)$-order differential operators if $r>0$, and they vanish if
$r=0$ \cite{book05}.
\end{prop}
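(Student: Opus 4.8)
The plan is to argue by induction on $r$, the whole content being a careful unpacking of Definition~\ref{ws135} together with one elementary identity. First I would record the reinterpretation of the inductive construction: every $\cI_r$ is, by construction, a sub-$(\cA-\cA^\bll)$-bimodule of $\hm_\cK(P,Q)$ for the left $\cA-\cA^\bll$ structure of (\ref{5.29}) (true for $\cI_0=\ol\cZ_0$ and propagated by $\cI_r/\cI_{r-1}=\ol\cZ_r$ together with the fact that a preimage of a submodule is a submodule), so the quotient $\hm_\cK(P,Q)/\cI_{r-1}$ carries an induced $(\cA-\cA^\bll)$-bimodule structure. Its center $\cZ_r$ then consists precisely of the classes $\ol\Phi$ of those $\Phi$ with $\dl_a\Phi\in\cI_{r-1}$ for every $a\in\cA$: this is just the definition of "center" combined with $\dl_a\Phi=a\Phi-\Phi\bll a$ of (\ref{spr172}), since $a\ol\Phi=\ol\Phi\bll a$ in the quotient means exactly $a\Phi-\Phi\bll a\in\cI_{r-1}$.

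The key observation is that on $\hm_\cK(P,Q)/\cI_{r-1}$ the right $\cA^\bll$-action collapses onto the left $\cA$-action when restricted to the center. Indeed, rewriting $\Phi\bll a=a\Phi-\dl_a\Phi$ and using that $\ol\Phi\in\cZ_r$ forces $\dl_a\Phi\in\cI_{r-1}$, one gets $\ol{\Phi\bll a}=a\ol\Phi$ in the quotient for all $a$. Since the two actions commute ($a(\Phi\bll c)=(a\Phi)\bll c$), the $(\cA-\cA^\bll)$-submodule $\ol\cZ_r$ generated by $\cZ_r$ is therefore nothing but the ordinary left $\cA$-submodule generated by $\cZ_r$, i.e. $\ol\cZ_r=\{\sum_i b_i\ol\Phi^i:\ b_i\in\cA,\ \ol\Phi^i\in\cZ_r\}$.

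Putting these together proves the proposition. By definition $\Delta\in\cI_r$ iff $\ol\Delta\in\ol\cZ_r$, and by the previous paragraph this holds iff there exist $b_i\in\cA$ and $\Phi^i$ with $\ol\Phi^i\in\cZ_r$ such that $\Delta-\sum_i b_i\Phi^i\in\cI_{r-1}$; setting $\Delta_{r-1}:=\Delta-\sum_i b_i\Phi^i$ gives the sum $\Delta(p)=b_i\Phi^i(p)+\Delta_{r-1}(p)$ with $\Delta_{r-1}\in\cI_{r-1}$ and, by the first paragraph, $\dl_a\Phi^i\in\cI_{r-1}$ for all $a$. Invoking the induction hypothesis — membership in $\cI_{r-1}$ being the same as being an $(r-1)$-order differential operator — this is exactly the asserted normal form, and the converse implication is the same chain of equivalences read backwards. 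The base case $r=0$ is the identical computation with the convention $\cI_{-1}=0$: then $\cZ_0$ is the module of left $\cA$-linear maps ($\dl_a\Phi=0$ for all $a$), $\cI_0=\ol\cZ_0$ is its left $\cA$-span, so $\Delta\in\cI_0$ iff $\Delta=\sum_i b_i\Phi^i$ with all $\dl_a\Phi^i=0$, matching the statement ($\Delta_{-1}=0$ and the $\dl_a\Phi^i$ vanish).

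The only place demanding genuine care — and thus the main obstacle — is the bookkeeping behind the "center collapses" step: one must check that each $\cI_r$ really is a two-sided $(\cA-\cA^\bll)$-submodule, so that the induced bimodule structure and hence the notion of center on the next quotient are well defined, and that $\ol\cZ_r$ is generated using only the two commuting actions (so that the collapse $\ol{\Phi\bll a}=a\ol\Phi$ suffices to reduce it to a left span). One should also keep the left/right conventions consistent throughout: everything is carried out with the left $\cA-\cA^\bll$ structure of (\ref{5.29}), not the right one of (\ref{ws105}), and it is the corresponding $\dl_a$ (not $\ol\dl_a$ of (\ref{ws133})) that enters. Once these points are pinned down no further computation is needed; in particular the commutation $\dl_a\circ\ol\dl_b=\ol\dl_b\circ\dl_a$, which motivates Definition~\ref{ws120}, plays no role in this argument.
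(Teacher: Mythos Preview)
Your argument is correct and is exactly the natural unpacking of Definition~\ref{ws135}: identify $\cZ_r$ as the classes of $\Phi$ with $\dl_a\Phi\in\cI_{r-1}$, observe that on such classes the $\cA^\bll$-action collapses to the $\cA$-action so that $\ol\cZ_r$ is just the left $\cA$-span of $\cZ_r$, and then read off the normal form~(\ref{ws138}) by lifting to $\hm_\cK(P,Q)$. The induction and the base case are handled cleanly, and you are right that only the left $\cA-\cA^\bll$ structure~(\ref{5.29}) and the map $\dl_a$ (not $\ol\dl_a$) enter.

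The paper itself does not supply a proof of this proposition; it states the result with a reference to~\cite{book05}. So there is no in-text argument to compare against. Your write-up is self-contained and would serve perfectly well as the missing proof. The one stylistic point worth tightening is the claim that each $\cI_r$ is an $(\cA-\cA^\bll)$-submodule: you assert this is ``propagated'' by the inductive construction, and indeed it follows because preimages of submodules under module quotient maps are submodules, but since you flag this as the main bookkeeping obstacle it would be worth one explicit sentence rather than a parenthetical.
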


If $\cA$ is a commutative ring, Definition \ref{ws135} comes to
Definition \ref{ws131}. Indeed, the expression (\ref{ws138}) shows
that $\Delta\in \hm_\cK (P,Q)$ is an $r$-order differential
operator iff $\dl_a\Delta$ for all $a\in\cA$ is a differential
operator of order $r-1$.

\begin{prop} \label{ws143} \mar{ws143}
If $P$ and $Q$ are $\cA$-bimodules, the set $\cI_r$ of $r$-order
$Q$-valued differential operators on $P$ is provided with the left
and right $\cA-\cA^\bll$ module structures \cite{book05}.
\end{prop}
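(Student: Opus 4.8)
The plan is to prove the statement by induction on the order $r$, using the inductive construction of the submodules $\cI_r\subset\hm_\cK(P,Q)$ together with the elementary fact that the right $\cA$- and $\cA^\bll$-actions (\ref{ws105}) commute with the left ones (\ref{5.29}). First I would record the computational lemma: for all $a,b\in\cA$ and $\Phi\in\hm_\cK(P,Q)$ one has $\dl_a(\Phi b)=(\dl_a\Phi)b$ and $\dl_a(b\bll\Phi)=b\bll(\dl_a\Phi)$, i.e. the maps $\Phi\mapsto\Phi b$ and $\Phi\mapsto b\bll\Phi$ commute with every $\dl_a$. This is a one-line check from the definitions of $\dl_a$ (\ref{spr172}) and of the four actions $a\cdot$, $\cdot\bll a$, $\cdot a$, $a\bll\cdot$, and it is essentially the relation $\dl_a\circ\ol\dl_b=\ol\dl_b\circ\dl_a$ already noted before Definition \ref{ws120}. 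An immediate consequence is that a left $\cA$-module homomorphism $P\to Q$ stays one after being acted on by $\cdot a$ or $a\bll\cdot$, so the center $\cZ_0$ of $\hm_\cK(P,Q)$ (for the structure (\ref{5.29})), which consists precisely of the left $\cA$-linear $\Phi$, is stable under the right $\cA-\cA^\bll$-action.

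Next I would handle the passage from a right-stable subset to the submodule it generates. Because $a\cdot$ and $\cdot\bll b$ commute with each other, every element of the left $\cA-\cA^\bll$-submodule $\ol S$ generated by a subset $S$ is a finite sum of terms of the form $a\,\Phi\,\bll b$ with $\Phi\in S$, $a,b\in\cA$; applying $\cdot c$ or $d\bll\cdot$ turns such a term into $a\,(\Phi c)\,\bll b$ or $a\,(d\bll\Phi)\,\bll b$, which again lies in $\ol S$ once $S$ is right-stable. Hence: if $S$ is stable under the right $\cA-\cA^\bll$-action, so is $\ol S$. Combined with the previous paragraph, this shows that $\cI_0=\ol\cZ_0$ is a submodule for both structures — left by construction, right by the stability just established.

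For the inductive step, assume $\cI_{r-1}$ is stable under both the left and the right $\cA-\cA^\bll$-actions. Then both actions descend to the quotient $\hm_\cK(P,Q)/\cI_{r-1}$, and by the computational lemma the right action still commutes with every $\dl_a$ on this quotient. Therefore it preserves the center $\cZ_r$ of $\hm_\cK(P,Q)/\cI_{r-1}$ (the classes annihilated by all $\dl_a$), and, by the generation argument above, it preserves the left submodule $\ol\cZ_r$ generated by $\cZ_r$; pulling back along $\hm_\cK(P,Q)\to\hm_\cK(P,Q)/\cI_{r-1}$, it preserves $\cI_r$, which is defined by $\cI_r/\cI_{r-1}=\ol\cZ_r$. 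That $\cI_r$ is a left $\cA-\cA^\bll$-submodule is immediate from the same data, since $\cI_{r-1}$ and $\ol\cZ_r$ both are. This closes the induction.

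I expect the only real work to be purely bookkeeping: keeping straight the four actions and verifying they pairwise commute as claimed (equivalently, the identities $\dl_a(\Phi b)=(\dl_a\Phi)b$ and $\dl_a(b\bll\Phi)=b\bll(\dl_a\Phi)$), and being careful that ``the left $\cA-\cA^\bll$-submodule generated by $S$'' really consists of the finite sums of terms $a\,\Phi\,\bll b$. Once those are in place the induction is automatic, and no deeper property of bimodules or of $\cA$ is needed. Alternatively one could argue non-inductively from the structural description of Proposition \ref{ws137}, checking that if $\Delta(p)=b_i\Phi^i(p)+\Delta_{r-1}(p)$ then $\Delta a$ and $a\bll\Delta$ admit presentations of the same shape, but the induction along the definition of $\cI_r$ seems the most transparent route.
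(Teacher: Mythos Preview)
Your argument is correct. The key identities $\dl_a(\Phi b)=(\dl_a\Phi)b$ and $\dl_a(b\bll\Phi)=b\bll(\dl_a\Phi)$ hold exactly as you say, the stability of $\cZ_0$ under the right $\cA-\cA^\bll$ action follows, and your generation argument for $\ol S$ is fine because $a\cdot$ and $\cdot\bll b$ commute. The induction then goes through cleanly.

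There is nothing to compare against here: the paper does not prove this proposition but simply cites \cite{book05}. Your induction along the recursive definition of $\cI_r$ is the natural argument and is presumably what lies behind that citation; the alternative you mention via Proposition \ref{ws137} would also work but is no shorter.
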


Let $P=Q=\cA$. Any zero order differential operator on $\cA$ in
accordance with Definition \ref{ws135} takes the form $a\mapsto
cac'$ for some $c,c'\in\cA$.

\begin{prop} \label{ws146} \mar{ws146}
Let $\Delta_1$ and $\Delta_2$ be $n$- and $m$-order $\cA$-valued
differential operators on $\cA$, respectively. Then their
composition $\Delta_1\circ\Delta_2$ is an $(n+m)$-order
differential operator \cite{book05}.
\end{prop}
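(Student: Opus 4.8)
The plan is to prove the assertion $P(n,m)$, ``$\Delta_1\in\cI_n$ and $\Delta_2\in\cI_m$ imply $\Delta_1\circ\Delta_2\in\cI_{n+m}$'', by induction on $n+m$, where $\cI_r$ is the set of $r$-order $\cA$-valued differential operators on $\cA$ of Definition \ref{ws135}; the two tools are the inductive description of $\cI_r$ in Proposition \ref{ws137} and the stability of $\cI_r$ under the left and right $\cA$--$\cA^\bll$ module structures (Proposition \ref{ws143}). The base cases $n=0$ and $m=0$ are immediate from the fact, recalled above, that a zero-order operator on $\cA$ is a finite sum $a\mapsto\sum_l c_la c'_l$: if $\Delta_1(a)=\sum_l c_la c'_l$ then $(\Delta_1\circ\Delta_2)(a)=\sum_l c_l\,\Delta_2(a)\,c'_l$ is obtained from $\Delta_2\in\cI_m$ by the module operations $\Phi\mapsto c_l\Phi$ and $\Phi\mapsto\Phi c'_l$, hence lies in $\cI_m$; dually, if $\Delta_2(a)=\sum_k d_ka d'_k$ then $a\mapsto\Delta_1(d_ka d'_k)$ equals $(d'_k\bll\Delta_1)\bll d_k$ in the two $\cA^\bll$ structures and lies in $\cI_n$.

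The inductive step rests on the identity $\dl_a(\Delta_1\circ\Delta_2)=(\dl_a\Delta_1)\circ\Delta_2+\Delta_1\circ(\dl_a\Delta_2)$, $a\in\cA$, which is immediate from $(\dl_a\Phi)(p)=a\Phi(p)-\Phi(ap)$ and the $\cK$-linearity of $\Delta_1$. The catch is that over a non-commutative $\cA$ one does {\it not} have ``$\Delta\in\cI_r\Rightarrow\dl_a\Delta\in\cI_{r-1}$'' (the commutators $[a,b]$ obstruct it, which is exactly why Definition \ref{ws135} is recursive), so the classical Grothendieck-style composition argument does not run directly. I therefore intend to carry along, and prove simultaneously, the auxiliary statement $Q(n,m)$: ``if $\dl_a\Theta\in\cI_{m-1}$ for all $a\in\cA$ and $\Xi\in\cI_n$, then $\Xi\circ\Theta\in\cI_{n+m}$'' --- note that $\dl_a\Theta\in\cI_{m-1}$ for all $a$ already forces $\Theta\in\cI_m$ by Proposition \ref{ws137}. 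The induction is well founded, since $Q(n,m)$ will call only on $P(n-1,m)$ and $P(n,m-1)$, while $P(n,m)$ will call on $Q(n,m)$ together with $P(n,m-1)$.

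For $P(n,m)$ with $n,m\geq1$, decompose $\Delta_2=c_j\Psi^j+\Delta_2'$ as in Proposition \ref{ws137}, with $\dl_a\Psi^j\in\cI_{m-1}$ and $\Delta_2'\in\cI_{m-1}$; then $\Delta_1\circ\Delta_2=\sum_j(\Delta_1\bll c_j)\circ\Psi^j+\Delta_1\circ\Delta_2'$, where $\Delta_1\circ\Delta_2'\in\cI_{n+m-1}$ by $P(n,m-1)$ and each $(\Delta_1\bll c_j)\circ\Psi^j\in\cI_{n+m}$ by $Q(n,m)$, since $\Delta_1\bll c_j\in\cI_n$ (Proposition \ref{ws143}) and $\dl_a\Psi^j\in\cI_{m-1}$. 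For $Q(n,m)$, decompose $\Xi=b_i\Phi^i+\Xi'$ with $\dl_a\Phi^i\in\cI_{n-1}$ and $\Xi'\in\cI_{n-1}$ (so $\Phi^i\in\cI_n$); then $\Xi\circ\Theta=b_i(\Phi^i\circ\Theta)+\Xi'\circ\Theta$, with $\Xi'\circ\Theta\in\cI_{n+m}$ by $P(n-1,m)$, so by the module property of $\cI_{n+m}$ it remains to show $\Phi^i\circ\Theta\in\cI_{n+m}$. Apply the identity: $\dl_a(\Phi^i\circ\Theta)=(\dl_a\Phi^i)\circ\Theta+\Phi^i\circ(\dl_a\Theta)$, where $(\dl_a\Phi^i)\circ\Theta\in\cI_{n+m-1}$ by $P(n-1,m)$ and $\Phi^i\circ(\dl_a\Theta)\in\cI_{n+m-1}$ by $P(n,m-1)$; hence $\dl_a(\Phi^i\circ\Theta)\in\cI_{n+m-1}$ for all $a$, and Proposition \ref{ws137} applied to the trivial decomposition $\bb\cdot(\Phi^i\circ\Theta)+0$ gives $\Phi^i\circ\Theta\in\cI_{n+m}$, closing the induction.

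I expect the main obstacle to be precisely the failure of ``$\dl_a$ lowers order by one'' over a non-commutative ring: this is what forces one to pair $P$ with the stronger-hypothesis claim $Q$ and to arrange the double induction so that $Q$ only appeals to strictly smaller instances of $P$. Everything else should be routine --- chiefly the bookkeeping of the four left and right $\cA$ and $\cA^\bll$ actions needed to recognize a composition of $\Delta$ with left multiplication by $c$ as the module transform $\Delta\bll c$, which is the repeated role of Proposition \ref{ws143}.
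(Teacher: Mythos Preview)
Your argument is correct. The paper does not supply a proof of this proposition (it only cites \cite{book05}), so there is nothing to compare against directly; what you have written is a complete and self-contained proof using only the characterisation in Proposition~\ref{ws137} and the module-stability of $\cI_r$ from Proposition~\ref{ws143}. The key identity $\dl_a(\Delta_1\circ\Delta_2)=(\dl_a\Delta_1)\circ\Delta_2+\Delta_1\circ(\dl_a\Delta_2)$ checks immediately, the base cases are handled correctly via the explicit form $a\mapsto\sum c_lac'_l$ of zero-order operators on $\cA$, and your simultaneous induction on the pair $(P,Q)$ is well founded: $P(n,m)$ invokes $P(n,m-1)$ and $Q(n,m)$, while $Q(n,m)$ invokes only $P(n-1,m)$ and $P(n,m-1)$.

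The one point worth making explicit is exactly the one you flag: over a non-commutative $\cA$ the implication $\Delta\in\cI_r\Rightarrow\dl_a\Delta\in\cI_{r-1}$ can fail, so the classical commutative-ring argument (induct on $n+m$, apply $\dl_a$, done) breaks. Your device of carrying the auxiliary hypothesis ``$\dl_a\Theta\in\cI_{m-1}$ for all $a$'' through the second factor --- which is precisely what the decomposition $\Delta_2=c_j\Psi^j+\Delta_2'$ of Proposition~\ref{ws137} hands you --- is the right way around this, and the trivial decomposition $\bb\cdot(\Phi^i\circ\Theta)+0$ to invoke the ``if'' direction of Proposition~\ref{ws137} closes the loop cleanly.
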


Any derivation $u\in\gd\cA$ of a $\cK$-ring $\cA$ is a first order
differential operator in accordance with Definition \ref{ws135}.
Indeed, it is readily observed that
\be
(\dl_au)(b)= au(b)-u(ab)=-u(a)b, \qquad b\in\cA,
\ee
is a zero order differential operator for all $a\in\cA$. The
compositions $au$, $u\bll a$ (\ref{5.29}), $ua$, $a\bll u$
(\ref{ws105}) for any $u\in\gd\cA$, $a\in\cA$ and the compositions
of derivations $u_1\circ\cdots\circ u_r$ are also differential
operators on $\cA$ in accordance with Definition \ref{ws135}.

At the same time, non-commutative vector fields do not satisfy
Definition \ref{ws135} in general. First order differential
operators by Definition \ref{ws120} also need not obey Definition
\ref{ws135}, unless $P=Q=\cA$.

By analogy with Definition \ref{ws135} and Proposition
\ref{ws137}, one can define differential operators on right
$\cA$-modules as follows.

\begin{defi} \label{ws151} \mar{ws151}
Let $P$ and $Q$ be seen as right $\cA$-modules over a
non-commutative $\cK$-ring $\cA$. An element
$\Delta\in\hm_\cK(P,Q)$ is said to be a right zero order
$Q$-valued differential operator on $P$ if it is a finite sum
$\Delta=\Phi^i b_i$, $b_i\in\cA$, where $\ol\dl_a\Phi^i=0$ for all
$a\in\cA$. An element $\Delta\in\hm_\cK(P,Q)$ is called a right
differential operator of order $r>0$ on $P$ if it is a finite sum
\mar{ws150}\beq
\Delta(p)=\Phi^i(p)b_i +\Delta_{r-1}(p), \qquad b_i\in\cA,
\label{ws150}
\eeq
where $\Delta_{r-1}$ and $\ol\dl_a\Phi^i$ for all $a\in\cA$ are
right $(r-1)$-order differential operators.
\end{defi}

Definition \ref{ws135} and Definition \ref{ws151} of left and
right differential operators on $\cA$-bimodules are not
equivalent, but one can combine them as follows.

\begin{defi} \label{ws152} \mar{ws152}
Let $P$ and $Q$ be bimodules over a non-commutative $\cK$-ring
$\cA$. An element $\Delta\in\hm_\cK(P,Q)$ is a two-sided zero
order $Q$-valued differential operator on $P$ if it is either a
left or right zero order differential operator. An element
$\Delta\in\hm_\cK(P,Q)$ is said to be a two-sided differential
operator of order $r>0$ on $P$ if it is brought both into the form
\be
\Delta=b_i\Phi^i +\Delta_{r-1},\qquad b_i\in\cA,
\ee
and
\be
\Delta=\ol\Phi^i\ol b_i +\ol\Delta_{r-1}, \qquad \ol b_i\in\cA,
\ee
where $\Delta_{r-1}$, $\ol\Delta_{r-1}$ and $\dl_a\Phi^i$,
$\ol\dl_a\ol\Phi^i$ for all $a\in\cA$
  are two-sided $(r-1)$-order differential operators.
\end{defi}

One can think of this definition as a generalization of Definition
\ref{ws120} to higher order differential operators.

It is readily observed that two-sided differential operators
described by Definition \ref{ws152} need not be left or right
differential operators, and {\it vice versa}. At the same time,
$\cA$-valued derivations of a $\cK$-ring $\cA$ and their
compositions obey Definition \ref{ws152}.

\section{Connections in non-commutative geometry}

This Section is devoted to the definitions of a connection in
non-commutative geometry. We follow the notion of an algebraic
connection in Section 1.3 generalized to modules over
non-commutative rings \cite{conn80,conn,dub,dub2,dub01}.

Let $(\Om^*,\dl)$ be a differential calculus over a $\cK$-ring
$\cA$, and let $P$ be a left $\cA$-module. Following  Definition
\ref{+181}, one can construct the tensor product of modules
$\Om^1\ot P$  and then define a {\it left connection}
\index{connection!left} on $P$ as a $\cK$-module morphism
\mar{+866}\beq
\nabla^L: P\to \Om^1\op\ot_\cA P, \label{+866}
\eeq
which obeys the Leibniz rule
\be
\nabla^L(ap)=\dl a\ot p +a\nabla^L(p), \qquad p\in P, \qquad a\in
\cA,
\ee
\cite{land,var}. If $\Om^*=\Om^*(\cA)$ is the universal
differential calculus over $\cA$, the connection (\ref{+866}) is
called a {\it universal connection}.
\index{connection!non-commutative!universal}

Similarly, a {\it right non-commutative connection}
\index{connection!right} on a right $\cA$-module $P$ is defined as
a $\cK$-module morphism
\be
\nabla^R: P\to P\op\ot_\cA\Om^1,
\ee
which obeys the Leibniz rule
\be
\nabla^R(pa)=p\ot\dl a +\nabla^R(p) a, \qquad p\in P, \qquad a\in
\cA.
\ee

The forthcoming theorem shows that a connection on a left (or
right) module over a non-commutative ring need not exist
\cite{cunt,land}.

\begin{theo} \label{+893} \mar{+893}
A left (resp. right) universal connection on a left (resp. right)
module $P$ of finite rank exists iff $P$ is projective.
\end{theo}

In contrast with connections on left and right modules, there is a
problem  how to define a connection on an $\cA$-bimodule over a
non-commutative ring. If $\cA$ is a commutative ring, then the
tensor products $\Om^1\ot P$ and $P\ot\Om^1$ are naturally
identified by the permutation morphism
\be
\varrho: \al\ot p\mapsto p\ot\al, \qquad  \al\in\Om^1, \quad
  p\in
P,
\ee
so that any left connection $\nabla^L$ is also the right one
$\varrho\circ \nabla^L$, and {\it vice versa}. If $\cA$ is a
non-commutative ring,  neither left nor right connections  are
connections on an $\cA$-bimodule $P$  since $\nabla^L(P)\in
\Om^1\ot P$, whereas $\nabla^R(P)\in P\ot\Om^1$. As a palliative,
one may assume that there exists some $\cA$-bimodule isomorphism
\mar{+894}\beq
\varrho: \Om^1\ot P\to P\ot\Om^1. \label{+894}
\eeq
Then a couple $(\nabla^L,\nabla^R)$ of right and left
non-commutative connections on $P$ is said to be a
$\varrho$-compatible if $\varrho\circ\nabla^L= \nabla^R$
\cite{dub2,land,mour} (see also \cite{dabr96} for a weaker
condition). However, this couple is not a true connection on an
$\cA$-bimodule. The problem is not simplified even if $P=\Om^1$,
together with the natural permutation
\be
\f\ot\f'\mapsto \f'\ot\f, \qquad \f,\f'\in \Om^1.
\ee

A different construction of a connection on $\cA$-bimodules has
been suggested in \cite{dub}. Let $(\Om^*\cA,\dl)$ be the minimal
Chevalley--Eilenberg differential calculus over a $\cK$-ring
$\cA$. Due to the isomorphism (\ref{ws130}), any element
$u\in\gd\cA$ determines the morphisms
\mar{1013,4}\ben
&& u: \cO^1\cA\ot P\ni \al\ot p\mapsto u(\al)p\in P, \label{1013}\\
&& u: P\ot\cO^1\cA\ni \p\ot\al\mapsto pu(\al)\in P. \label{1014}
\een

\begin{defi} \label{+845} \mar{+845}
A connection on an $\cA$-bimodule $P$ with respect to the minimal
Chevalley--Eilenberg differential calculus $\cO^*\cA$ over $\cA$
is defined as a $\cZ_\cA$-bimodule morphism
\mar{+846}\beq
\nabla: \gd\cA\ni u\mapsto \nabla_u\in \hm_\cK (P,P), \label{+846}
\eeq
which obeys the Leibniz rule
\mar{+847}\beq
\nabla_u(a_ipb_j)=u(a_i)pb_j +a_i\nabla_u(p)b_j +a_ipu(b_j), \quad
   p\in P, \quad
   a_k,b_k\in A_k. \label{+847}
\eeq
\end{defi}

Comparing the formulae (\ref{n21}) and (\ref{+847}) shows that, by
virtue of Proposition \ref{ws113}, $\nabla_u$ (\ref{+847}) is a
first order differential operator on $P$ in accordance with
Definitions \ref{ws120} and \ref{ws151}.

Let us recall that, if $a\in \cZ_\cA$, then $\dl a$ belongs to the
center $\cZ_P$ of the module $P$ and $u(\dl a)\in \cZ_\cA$ for all
$u\in\gd\cA$. If $\cA$ is a commutative ring, Definition
\ref{+845} is identic to Definition \ref{1016}.

\begin{rem}
Due to the isomorphism (\ref{w134}), Definition \ref{+845} can be
also extended to connections with respect to the universal
differential calculus.
\end{rem}

\begin{rem}
Definition \ref{+845} can be applied to the case of a non-minimal
differential calculus $\Om^*$ over $\cA$ by replacing the
derivation module $\gd\cA$ with the dual $(\Om^1)^*$ of the
$\cA$-bimodule $\Om^1$. Definition \ref{+845} is straightforwardly
applied to the Chevalley--Eilenberg differential calculus
$\cO^*[\gd\cA]$ over $\cA$ due to the inclusion
\be
\gd\cA\subset \gd\cA^{**}=(\cO^*[\gd\cA])^*.
\ee
\end{rem}

We agree to call a connection in Definition \ref{+845} the {\it
Dubois--Violette connection}. \index{Dubois--Violette connection}
It is readily observed that any left (resp. right) connection
$\nabla$ on an $\cA$-module of type $(1,0)$ (resp. $(0,1)$)
determines the Dubois--Violette connection
$\nabla_u=u\circ\nabla$, $u\in\gd\cA$, on $P$ due to the morphism
(\ref{1013}) (resp. (\ref{1014})).

\begin{ex}
If $P=\cA$, the morphisms
\mar{+870}\beq
\nabla_u(a)=u(a), \quad  u\in \gd\cA, \quad  a\in \cA,
\label{+870}
\eeq
define the canonical Dubois--Violette connection on the ring
$\cA$. Then, due to the Leibniz rule (\ref{+847}), a
Dubois--Violette connection on any $\cA$-bimodule $P$ is also a
Dubois--Violette connection on $P$, seen as a $\cZ_\cA$-bimodule.
\end{ex}

\begin{ex}
If $P$ is an $\cA$-bimodule and the ring $\cA$ has only inner
derivations
\be
{\rm ad}\, b(a)= ba-ab,
\ee
the morphisms
\mar{+871}\beq
\nabla_{{\rm ad}\, b}(p)= bp-pb, \quad  b\in\cA, \quad p\in P,
\label{+871}
\eeq
define the canonical Dubois--Violette connection on $P$.
\end{ex}

The  curvature $R$ of a Dubois--Violette connection $\nabla$
(\ref{+846}) on an $\cA$-module $P$ is defined as the
$\cZ_\cA$-module morphism
\mar{+874}\ben
&& R:\gd\cA\times\gd\cA \ni(u,u')\mapsto R_{u,u'}\in
\hm_{A_i-A_j}(P,P), \label{+874}\\
&& R_{u,u'}(p)=\nabla_u(\nabla_{u'}(p)) -\nabla_{u'}(\nabla_u(p))
-\nabla_{[u,u']}(p), \quad p\in P,\nonumber
\een
\cite{dub}. We have the relations
\be
&& R_{au,a'u'}=aa' R_{u,u'}, \qquad a,a'\in\cZ_\cA,\\
&& R_{u,u'}(a_ipb_j)= a_iR_{u,u'}(p)b_j, \quad a_k,b_k\in
A_k.
\ee
For instance, the curvature of the connections (\ref{+870}) and
(\ref{+871}) vanishes.

There are the following standard constructions of new
Dubois--Violette connections from old ones.

(i) Given $\cA$-bimodules $P$ and $P'$ and two Dubois--Violette
connections $\nabla$ and $\nabla'$ on them, there is an obvious
Dubois--Violette connection $\nabla\oplus\nabla'$ on the direct
sum $P\oplus P'$.

(ii) Let $P$ be an $\cA$-bimodule and $P^*$ its $\cA$-dual. For
any Dubois--Violette connection $\nabla$ on $P$, there is a unique
{\it dual Dubois--Violette connection} \index{Dubois--Violette
connection!dual} $\nabla'$ on $P^*$ such that
\be
u(\lng p,p'\rng)=\lng \nabla_u(p),p'\rng +\lng p,\nabla'(p')\rng,
\quad
   p\in P,\quad\ p'\in P^*, \quad u\in\gd\cA.
\ee

(iii) Let $P$ and $P'$ be $\cA$-bimodules, and let $\nabla$ and
$\nabla'$ be Dubois--Violette connections on these modules. For
any $u\in\gd\cA$, let us consider the endomorphism
\mar{+848}\beq
(\nabla\ot\nabla')_u= \nabla_u\ot \id P' + \id P\ot\nabla'_u
\label{+848}
\eeq
of the tensor product $P\ot P'$ of $\cK$-modules $P$ and $P'$.
This endomorphism preserves the subset of $P\ot P'$ generated by
elements
\be
pa\ot p'-p\ot ap', \qquad p\in P, \quad p'\in P' \quad a\in A_k.
\ee
Consequently, the endomorphisms (\ref{+848}) define a
Dubois--Violette connection on the tensor product $P\ot P'$ of
modules $P$ and $P'$.

(iv) Let $\cA$ be a unital $*$-algebra and $P$ an involutive
module over $\cA$. Let us recall that, in this case, the
derivation module $\gd\cA$ consists of only symmetric derivations
of $\cA$. For any Dubois--Violette connection $\nabla$ on $P$, the
conjugate Dubois--Violette connection $\nabla^*$ on $P$ is defined
by the relation
\mar{+849}\beq
\nabla_u^*(p)=(\nabla_u(p^*))^*. \label{+849}
\eeq
A Dubois--Violette connection $\nabla$ on an involutive module $P$
is said to be {\it real} \index{Dubois--Violette connection!real}
if $\nabla=\nabla^*$.

Let now $P=\cO^1\cA$. Any Dubois--Violette connection on an
$\cA$-bimodule $\cO^1\cA$ is called a
   {\it linear connection} \index{Dubois--Violette connection!linear}
\cite{dub}. Let us note that this is not the term for a left or
right connection on $\cO^1\cA$ \cite{dub2}. If $\cO^1\cA$ is an
involutive module, a linear connection on it is assumed to be
real. Given a linear connection $\nabla$ on $\cO^1\cA$, there is
an $\cA$-module homomorphism
\mar{+873}\ben
&& T:\cO^1\cA\to \cO^2\cA, \nonumber\\
&& (T\f)(u,u')=(d\f)(u,u') - \nabla_u(\f)(u') +\nabla_{u'}(\f)(u),
\label{+873}
\een
called the {\it torsion} \index{torsion!of a non-commutative
connection} of the linear connection $\nabla$.

\section{Matrix geometry}

{\it Matrix geometry} \index{matrix geometry} over the algebra
$\cA=M_n$ of complex $n\times n$ matrices provides an important
example of a non-commutative system of finite degrees of freedom
\cite{dub90,mad}. Let  $\{\ve_r\}$, $1\leq r\leq n^2-1$, be an
anti-Hermitian basis for the (right) Lie algebra $su(n)$. All
derivations of the algebra $M_n$ are inner, and $u_r={\rm
ad}\,\ve_r$ constitute a basis for the complex Lie algebra $\gd
M_n$ of derivations of $M_n$, together with the commutation
relations
\be
[u_r,u_q]=c^s_{rq}u_s,
\ee
where $c^s_{rq}$ are structure constants of the Lie algebra
$su(n)$. Since the center $\cZ_{M_n}$ of $M_n$ consists of
matrices $c\bb$, $c\in \Bbb C$, the derivation module $\gd M_n$ is
an $(n^2-1)$-dimensional complex vector space. Let us consider the
minimal Chevalley--Eilenberg
   differential calculus $(\cO^*M_n,d)$ over
the algebra $M_n$ with respect to the Chevalley--Eilenberg
coboundary operator $d$ (\ref{+840}). In particular,
$\cO^0M_n=M_n$, while $\cO^1M_n$ is a free left $M_n$-module of
rank $n^2-1$ whose basis $\{\thh^r\}$ is the dual of the basis
$\{u_r\}$ for the complex Lie algebra $\gd M_n$, i.e.,
\be
\thh^r(u_q)= \dl^r_q\bb.
\ee
It is readily observed that elements $\thh^r$ of this basis belong
to the center of the $M_n$-bimodule $\cO^1M_n$, i.e.,
\mar{+930}\beq
a\thh^r=\thh^r a, \qquad  a\in M_n. \label{+930}
\eeq
It also follows that
\mar{+931}\beq
\thh^r\w\thh^q=-\thh^q\w\thh^r. \label{+931}
\eeq
The morphism
\be
d:M_n\to \cO^1M_n,
\ee
given by the formula (\ref{spr708}), reads
\be
d\ve_r(u_q)={\rm ad}\,\ve_q(\ve_r)=c^s_{qr}\ve_s,
\ee
that is,
\mar{+922}\beq
d\ve_r=c^s_{qr}\ve_s\thh^q. \label{+922}
\eeq
The formula (\ref{+921}) leads to the Maurer--Cartan equations
\mar{+934}\beq
d\thh^r=-\frac12c^r_{qs}\thh^q\w\thh^s. \label{+934}
\eeq
If we put $\thh=\ve_r\thh^r$, the equality (\ref{+922}) can be
brought into the form
\be
da=a\thh-\thh a, \qquad  a\in M_n.
\ee
It follows that the $M_n$-bimodule $\cO^1M_n$ is generated by only
one element $\thh$.

Let us consider linear connections on the $M_n$-bimodule
$\cO^1M_n$ in matrix geometry. Such a connection $\nabla$ is given
by the relations
\mar{+932}\ben
&& \nabla_{u=c^ru_r}=c^r\nabla_r, \nonumber \\
&& \nabla_r(\thh^p)=\om^p_{rq}\thh^q, \qquad \om^p_{rq}\in M_n.
\label{+932}
\een
Bearing in mind the equalities (\ref{+930}) -- (\ref{+931}), we
obtain from the Leibniz rule (\ref{+847}) that
\be
a\nabla_r(\thh^p)=\nabla_r(\thh^p)a, \qquad  a\in M_n.
\ee
It follows that elements $\om^p_{rq}$ in the expression
(\ref{+932}) are proportional to the identity $\bb\in M_n$, i.e.,
are complex numbers. Then the relations
\mar{+933}\beq
\nabla_r(\thh^p)=\om^p_{rq}\thh^q, \qquad \om^p_{rq}\in \Bbb C,
\label{+933}
\eeq
determine any linear connection on $\cO^1M_n$. Let us point out
the following two particular linear connections on $\cO^1M_n$.

(i) Since all derivations of the algebra $M_n$ are inner, we have
the curvature-free connection (\ref{+871}) given by the relations
$\nabla_r(\thh^p)=0$. However, this connection is not
torsion-free. The expressions (\ref{+934}) and (\ref{+873}) result
in
\be
(T\thh^p)(u_r,u_q)=-c^p_{rq}.
\ee

(ii) In matrix geometry, there is a unique torsion-free linear
connection
\be
\nabla_r(\thh^p)=-c^p_{rq}\thh^q.
\ee

\section{Connes' non-commutative geometry}

Connes' non-commutative geometry addresses the representation of
the universal differential calculus $(\Om^*(\cA),d)$ over an
involutive algebra $\cA$ in a Hilbert space $E$ so that the
coboundary operator $da$, $a\in\cA$, is represented by the bracket
$[\cD,\pi(a)]$, $a\in A$, where $\cD$ is a certain operator in
$E$. Thus, one comes to the notion of a spectral triple
\cite{conn,land,var}.

\begin{defi} \label{w168} \mar{w168}
A {\it spectral triple} $(\cA,E,\cD)$ \index{spectral triple} is
given by a unital $*$-algebra $\cA\subset B(E)$ of bounded
operators in a separable Hilbert space $E$ and a (unbounded)
self-adjoint operator $\cD$ in $E$ such that:

(i) the brackets $[\cD,a]$, $a\in A$, are bounded operators in
$E$,

(ii) the resolvent $(\cD-\la)^{-1}$, $\la\in \Bbb C\setminus \Bbb
R$, is a compact operator in $E$,
\end{defi}

The triple $(\cA,E,\cD)$ is also called the {\it $K$-cycle}
\index{$K$-cycle} over $\cA$ \cite{conn}.

\begin{rem} \label{w171} \mar{w171}
In Connes' commutative geometry (see Example \ref{w170} below),
$\cA$ is the algebra $\Bbb C^\infty(X)$ of smooth complex
functions on a compact manifold $X$. It is a dense subalgebra of
the $C^*$-algebra of continuous complex functions on $X$.
Generalizing this case, one usually assume that an algebra $\cA$
in Connes' spectral triple is a dense involutive subalgebra of
some $C^*$-algebra.
\end{rem}

\begin{rem} \label{w172} \mar{w172}
In many cases, $E$ is a $\Bbb Z_2$-graded Hilbert space whose
grading automorphism $\g$ obeys the conditions
\be
\g \cD+\cD\g=0, \qquad [a,\g]=0, \qquad a\in\cA,
\ee
i.e., elements of $\cA$ are even operators, while $\cD$ is the odd
one.
 The spectral triple is called {\it even}
\index{spectral triple!even} if such a grading exists and {\it
odd} \index{spectral triple!odd} otherwise.
\end{rem}

Given a spectral triple $(\cA,E,\cD)$ in Definition \ref{w168},
let $(\Om^*(\cA),d)$ be the universal differential calculus over
the algebra $\cA$. Let us consider the representation of the
graded algebra $\Om^*(\cA)$ by bounded operators
\mar{+940}\beq
\pi (a_0d a_1\cdots d a_k)= a_0[\cD,a_1]\cdots [\cD,a_k]
\label{+940}
\eeq
in the Hilbert space $E$. Since
\be
[\cD,a]^*=-[\cD,a^*],
\ee
we have
\be
\pi(\om)^*=\pi(\om^*), \qquad \om\in \Om^*(\cA).
\ee
However, the representation (\ref{+940}) fails to be a
representation of the differential graded algebra $\Om^*(\cA)$
because $\pi(\om)=0$ need not imply that $\pi(d\om)=0$. Therefore,
one should construct the appropriate quotient of $\Om^*(\cA)$ in
order to obtain a differential graded algebra of operators in $E$.

Let $J_0^*$ be the graded two-sided ideal of $\Om^*(\cA)$ where
\be
J^k_0=\{\om\in\Om^k(\cA)\, :\, \pi(\om)=0\}.
\ee
Then it is readily observed that
\be
J^*=J_0^* +d J_0^*
\ee
is a differential graded two-sided ideal of $\Om^*(\cA)$. By {\it
Connes' differential calculus} \index{Connes' differential
calculus} is meant the pair $(\Om^*_\cD\cA, d)$ such that
\be
\Om^*_\cD\cA =\Om^*\cA/J^*, \qquad  d[\om]=[d \om],
\ee
where $[\om]$ denotes the class of $\om\in \Om^*(\cA)$ in
$\Om^*_\cD\cA$. It is a differential calculus over
$\Om_\cD^0\cA=\cA$. Its representation $\pi(\Om^*_\cD\cA)$  is
given by the classes $\pi[\om]$ of operators
\be
\op\sum_j a_0^j[\cD,a_1^j]\cdots [\cD,a_k^j], \qquad a_i^j\in\cA,
\ee
modulo the operators
\be
\op\sum_j [\cD,b_0^j][\cD,b_1^j]\cdots [\cD,b_{k-1}^j]
\ee
such that
\be
\op\sum_j b_0^j[\cD,b_1^j]\cdots [\cD,b_{k-1}^j]=0\}.
\ee
It should be emphasized that $\pi([\om])$ are not operators in
$E$. The coboundary operator in $\pi(\Om^*_\cD\cA)$ reads
\be
d(a_0[\cD,a_1]\cdots [\cD,a_k])=[\cD,a_0][\cD,a_1]\cdots
[\cD,a_k].
\ee

\begin{ex} \label{w170} \mar{w170}
Connes' commutative geometry is characterized by the spectral
triple $(\cA,E,\cD)$ where:

$\bullet$  $\cA=\Bbb C^\infty(X)$ is the above mentioned algebra
of smooth complex functions on a compact manifold $X$,

$\bullet$ $E=L^2(X,S)$ is the Hilbert space of square integrable
sections of a spinor bundle $S\to X$,

$\bullet$ $\cD$ is the Dirac operator on this spinor bundle.

\noindent In this case, the representation $\pi$ (\ref{+940}) of
the universal differential calculus $\Om^*(\cA)$ over $\cA$ (up to
a possible twisting by a complex line bundle) is the complex
Clifford algebra ${\rm Cliff}(T^*X)\ot\Bbb C$ of the cotangent
bundle $T^*X$ and
\be
\pi(\Om^*_\cD\cA)=\cO^*(X)\ot\Bbb C
\ee
is the algebra of complex exterior forms on $X$
\cite{conn1,frol1,ren2}.
\end{ex}

Spectral triples have been studied, e.g., for non-commutative
tori, the Moyal deformations of $\Bbb R^n$, non-commutative
spheres $2$-, $3$- and $4$-spheres \cite{chak,con02}, and quantum
Heisenberg manifolds \cite{chak2}.

Given Connes' differential calculus $(\Om^*_\cD\cA,d)$ over an
algebra $\cA$, let $P$ be a right projective $\cA$-module of
finite rank. In the spirit of the Serre--Swan theorem \ref{sp60},
one can think of $P$ as being a non-commutative vector bundle. By
virtue of Theorem \ref{+893}, it admits a connection. Let us
construct this connection in an explicit form.

Given a generic right finite projective module $P$ over a complex
ring $\cA$, let
\be
\bp: \Bbb C^N\op\ot_\Bbb C \cA \to P, \qquad i_P: P\to \Bbb
C^N\op\ot_\Bbb C \cA
\ee
be the corresponding projection and injection, where $\ot_\Bbb C$
denotes the tensor product over $\Bbb C$, while $\ot_\cA$ stands
for the tensor product over the ring $\cA$. There is the
composition of morphisms
\mar{+941}\beq
P\ar^{i_p} \Bbb C^N\op\ot_\Bbb C\cA \ar^{\id\ot d}\Bbb
C^N\op\ot_\Bbb C \Om^1(\cA)\ar^{\bp} P\op\ot_\cA\Om^1(\cA),
\label{+941}
\eeq
where the canonical module isomorphism
\be
\Bbb C^N\op\ot_\Bbb C\Om^1(\cA)=(\Bbb C^N\op\ot_\Bbb
C\cA)\op\ot_\cA\Om^1(\cA)
\ee
is used. It is readily observed that the composition (\ref{+941})
denoted briefly as $\bp\circ d$ is a right universal connection on
the module $P$.

Given the universal connection $\bp\circ d$ on a right finite
projective module $P$ over a $*$-algebra $\cA$, let us consider
the morphism
\be
P\ar^{\bp\circ d} P\op\ot_\cA \Om^1(\cA)\ar^{\id\ot\pi}P\op\ot_\cA
\Om^1_\cD\cA.
\ee
This is a right connection $\nabla_0$ on the module $P$ with
respect to Connes' differential calculus. Any other right
connection $\nabla$ on $P$ with respect to Connes' differential
calculus takes the form
\mar{+945}\beq
\nabla = \nabla_0 + \si=(\id\ot\pi)\circ\bp\circ d +\si
\label{+945}
\eeq
where $\si$ is an $\cA$ module morphism
\be
\si: P\to P\op\ot_\cA\Om^1_\cD\cA.
\ee
The term $\si$ in the connection $\nabla$ (\ref{+945}) is called a
{\it non-commutative gauge field}. \index{non-commutative gauge
field}

\section{Differential calculus over Hopf algebras}

Hopf algebras make a contribution to many quantum models
\cite{chan,char,majid2}. For instance, quantum groups are
particular Hopf algebras. In a general setting, any
non-cocommutative Hopf algebra can be treated as a quantum group
\cite{chan}. However, the development of differential calculus and
differential geometry over Hopf algebras has met some problems
\cite{book05}.

Let $\cA$ be a complex unital algebra, i.e., a $\Bbb C$-ring. The
{\it tensor product} $\cA\ot\cA$ of algebras $\cA$ \index{tensor
product!of algebras} is defined as that of vector spaces $\cA$
provided with the multiplication
\be
(a\ot b)\ot (a'\ot b')=(aa')\ot (bb').
\ee
Let us write the multiplication operation of the algebra $\cA$ as
a $\Bbb C$-linear morphism
\be
m:a\ot b\to ab, \qquad a,b\in\cA.
\ee

A {\it coalgebra} \index{coalgebra} $\cA$ is defined as a vector
space $\cA$ provided with the following linear morphisms:

$\bullet$ a coassociative {\it comultiplication}
\index{comultiplication} $\Delta: \cA\to \cA\ot\cA$,

$\bullet$ a {\it counit} \index{counit} $\e: \cA\to\Bbb C$,

\noindent which obey the relations
\mar{w35}\beq
(\Delta\ot\id)\circ\Delta=(\id\ot\Delta)\circ\Delta, \quad
(\e\ot\id)\circ\Delta=(\id\ot\e)\circ\Delta=\id. \label{w35}
\eeq
As a shorthand, one writes
\mar{w38}\beq
\Delta(a)=\sum a_{(1)}\ot a_{(2)}, \quad
(\Delta\ot\id)\circ\Delta(a)= \sum a_{(1)}\ot a_{(2)}\ot a_{(3)}.
\label{w38}
\eeq
A comultiplication and a counit extend to tensor products
pairwise, i.e.,
\be
\Delta(a\ot a')=\Delta(a)\ot\Delta(a'), \qquad \e(a\ot
a')=\e(a)\e(a').
\ee

A {\it bi-algebra} \index{bi-algebra} $(\cA,m,\Delta,\e)$ is
defined as an associative algebra $\cA$ which is also a coalgebra
so that
\be
&&\Delta(\bb)=\bb\ot\bb, \\
&& \sum (ab)_{(1)}\ot (ab)_{(2)}=
\sum (a_{(1)}b_{(1)})\ot (a_{(2)}b_{(2)}), \qquad a,b\in\cA,\\
&& \e(\bb)=1, \qquad \e(ab)=\e(a)\e(b).
\ee

A {\it Hopf algebra} \index{Hopf algebra} $(\cA,m,\Delta,\e, S)$
is a bi-algebra $\cA$ endowed with a linear morphism
$S:\cA\to\cA$, called the {\it antipode}, \index{antipode} such
that
\be
m((S\ot\id)\Delta(a))=m((\id\ot S)\Delta(a))=\e(a)\bb.
\ee
It obeys the relations
\be
&& S(\bb)=\bb, \qquad S(ab)=S(b)S(a), \quad a,b\in\cA,\\
&& \e\circ S=\e, \qquad \Delta \circ S=P\circ (S\ot S)\circ \Delta,
\ee
where  $P: a\ot b \mapsto b\ot a$ is the transposition operator.
Let us note that, given a bi-algebra $\cA$, there is a unique
antipode, if any, such that $\cA$ becomes a Hopf algebra.

For the sake of brevity, we call $\Delta$, $\e$ and $S$ the {\it
co-operations} \index{co-operation} of a Hopf algebra.

A Hopf algebra is said to be {\it cocommutative} \index{Hopf
algebra!cocommutative} if $P\circ\Delta=\Delta$. If a Hopf algebra
is commutative or cocommutative, then $S^2=\id$. If
$(\cA,m,\Delta,\e,S)$ is a Hopf algebra whose antipode $S$ is
invertible, then
\mar{w54}\beq
\cA^t=(\cA,m,P\circ\Delta,\e,S^{-1}) \label{w54}
\eeq
is also a Hopf algebra.

Let $\cA$ be an involutive algebra. A Hopf algebra
$(\cA,m,\Delta,\e, S)$ is said to be {\it involutive} \index{Hopf
algebra!involutive} if
\be
&& \Delta(a^*)=\Delta(a)^*=\sum a^*_{(1)}\ot a^*_{(2)},\\
&& \e(a^*)=\ol{\e(a)}, \qquad S(S(a^*)^*)=a.
\ee

A Hopf algebra is called {\it quasi-triangular} \index{Hopf
algebra!quasi-triangular} if there exists an invertible element
$\cR=r_i\ot r^i\in\cA\ot\cA$ which obeys the relations
\mar{ws200}\bea
&& (\Delta\ot\id)(\cR)=\cR_{13}\cR_{23}, \label{ws200a}\\
&& (\id\ot\Delta)(\cR)=\cR_{12}\cR_{23}, \label{ws200b}\\
&& (P\circ\Delta)(a)=\cR\Delta(a)\cR^{-1}, \qquad a\in\cA, \label{ws200c}
\eea
where
\be
\cR_{12}=r_k\ot r^k\ot\bb, \qquad \cR_{13}=r_k\ot \bb\ot r^k,
\qquad \cR_{23}=\bb\ot r_k\ot r^k.
\ee
The element $\cR$ is said to be the {\it universal $R$-matrix}
\index{universal $R$ matrix} of a Hopf algebra $\cA$. It satisfies
the {\it quantum Yang--Baxter equation} \index{Yang--Baxter
equation!quantum}
\mar{ws201}\beq
\cR_{12}\cR_{13}\cR_{23}=\cR_{23}\cR_{13}\cR_{12}. \label{ws201}
\eeq
One can show that
\be
(S\ot\id)(\cR)=\cR^{-1}, \qquad (\id\ot S)(\cR^{-1})=\cR, \qquad
(S\ot S)(\cR)=\cR.
\ee
The antipode of a quasi-triangular Hopf algebra is always
invertible.

\begin{ex} \label{ws202} \mar{ws202} Let $\cG$ be a finite-dimensional
Lie algebra whose basis is $\{e_k\}$. Its universal enveloping
algebra $\ol\cG$ is provided with the structure of a cocommutative
Hopf algebra $U\cG$, called the {\it classical Hopf algebra},
\index{Hopf algebra!classical} with respect to the co-operations
\be
\Delta(e_k)=e_k\ot \bb +\bb\ot e_k, \qquad \e(e_k)=0,\qquad
S(e_k)=-e_k,
\ee
extended by linearity to $\ol\cG$. It is a quasi-triangular Hopf
algebra where $\cR=\bb\ot\bb$.
\end{ex}

\begin{ex} \label{w31} \mar{w31} Let $G$ be a finite group and $\Bbb
C G$ the complex group ring \cite{book05}. It is brought into the
cocommutative Hopf algebra, called the {\it group Hopf algebra},
\index{group Hopf algebra}  characterized by the co-operations
\be
\Delta(g)=g\ot g, \qquad \e(g)=1, \qquad S(g)=g^{-1}, \qquad g\in
G,
\ee
extended by linearity to $\Bbb C G$.
\end{ex}

\begin{ex} \label{w32} \mar{w32} Given a finite group $G$, let us consider
the algebra $\Bbb C(G)=\Bbb C^G$ of complex functions on $G$. Let
us identify
\be
\Bbb C(G)\ot \Bbb C(G)=\Bbb C(G\times G).
\ee
Then the algebra $\Bbb C(G)$ is a commutative Hopf algebra with
respect to the co-operations
\mar{w32'}\beq
\Delta(f)(g,g')=f(gg'), \qquad \e(f)=f(\bb), \qquad
S(f)(g)=f(g^{-1})  \label{w32'}
\eeq
for all $f\in \Bbb C(G)$. It is called the {\it group function
Hopf algebra}. \index{group function Hopf algebra}
\end{ex}

\begin{ex} \label{w33} \mar{w33} The Hopf algebra $U_q(b_+)$, where $q$
is a non-zero real number, is generated by the elements $\bb$,
$a$, $g$ and $g^{-1}$ obeying the relations
\be
&& gg^{-1}=g^{-1}g=\bb, \qquad ga=qag, \\
&& \Delta(a)=a\ot\bb +g\ot a, \qquad \Delta(g)=g\ot g, \qquad
\Delta(g^{-1})= g^{-1}\ot g^{-1},\\
&& \e(a)=0, \qquad \e(g)=\e(g^{-1})=1, \\
&& S(a)=-g^{-1}a, \qquad S(g)=g^{-1}, \qquad S(g^{-1})=g.
\ee
\end{ex}

\begin{ex} \label{w45} \mar{w45}
Let $\cG$ be a finite-dimensional semi-simple Lie algebra. It
yields an associated complex Lie group $G$ of $n\times n$ complex
matrices $b$ which obey a family of polynomial equations $p(b)=0$.
Correspondingly, we have an algebraic variety with the coordinate
algebra $\Bbb C[G]$ of polynomials $\Bbb C[b^i_j]$ in $n^2$
variables modulo the relations $p(b)=0$. It is a commutative Hopf
algebra with respect to the co-operations
\be
\Delta (b^i_j)=b^i_k\ot b^k_j, \qquad \e(b^i_j)=\dl^i_j.
\ee
Its antipode is given algebraically via a matrix of cofactors of
the matrix $b^i_j$, i.e.,
\be
S(b^i_k)b^k_j=b^i_kS(b^k_j)=\dl^i_j\bb.
\ee
For instance, the Hopf algebra $\Bbb C[SL(2)]$ is generated by
four elements $a$, $b$, $c$ and $d$ which are assembled into the
matrix
\be
A=\left(\begin{array}{cc}
a & b\\
c & d
\end{array}\right)
\ee
and obey the polynomial equation
\be
A\,{\rm det}\,A=({\rm det}\,A)A=A,
\ee
i.e., ${\rm det}\,A=ad-cb$ is the unit element. The co-operations
of this Hopf algebra can be written in the compact form
\be
\Delta (A)=A\ot A, \qquad \e (A)=\left(\begin{array}{cc}
1 & 0\\
0 & 1
\end{array}\right),\qquad
S(A)=A^{-1} = \left(\begin{array}{cc}
d & -b\\
-c & a
\end{array}\right).
\ee
It is quasi-triangular Hopf algebra where $\cR=\bb\ot\bb$.
\end{ex}

Given a Hopf algebra $(\cA,m,\Delta,\e, S)$, let us consider the
space $\cA^*=\hm(A,\Bbb C)$ of complex linear forms on $\cA$. It
is a complex ring with respect to the {\it convolution product}
\index{convolution product}
\mar{w260}\ben
&& f*f'= (f\ot f')\circ \Delta, \qquad f,f'\in\cA^*, \label{w260}\\
&& (f*f')(a)=\sum f(a_{(1)})f'(a_{(2)}), \qquad a\in A. \nonumber
\een
The unit of $\cA^*$ is the counit $\e$ of $\cA$. Sometimes, the
notion of the convolution product (\ref{w260}) includes the linear
morphisms
\mar{w261}\ben
&& f*a=[(\Id\ot f)\circ\Delta](a)=\sum a_{(1)}f(a_{(2)}), \label{w261}\\
&& a*f=[(f\ot\Id)\circ\Delta](a)=\sum f(a_{(1)})a_{(2)},
\quad a\in \cA, \nonumber
\een
of the complex space $\cA$ \cite{woron87}. There are the relations
\mar{w262} \beq
(f*f')(a)=f'(a*f)=f(f'*a), \qquad a\in \cA, \qquad f,f'\in\cA^*.
\label{w262}
\eeq

A Hopf algebra $(\cA',m',\Delta',\e', S')$ is said to be {\it
dually paired} \index{Hopf algebra!dually paired} with a Hopf
algebra $(\cA,m,\Delta,\e, S)$ if there exists a non-degenerate
interior product
\be
\lng.,.\rng: \cA\ot\cA'\to\Bbb C
\ee
which satisfies the relations
\mar{qm850}\bea
&& \lng a,a'b'\rng= \lng\Delta(a),a'\ot b'\rng, \qquad
a,b\in\cA, \quad a',b'\in\cA',
\label{qm850a}\\
&& \lng a\ot b,\Delta'(a')\rng=\lng ab,a'\rng,  \label{qm850b}\\
&&  \e'(a')=\lng\bb,a'\rng,\qquad \e(a)=\lng a,\bb\rng, \label{qm850c}\\
&& \lng a,S'(a')\rng= \lng S(a),a'\rng,\label{qm850d}
\eea
where $\lng,\rng$ extends to tensor products pairwise, i.e.,
\mar{w50}\beq
\lng a\ot b,a'\ot b'\rng=\lng a,a'\rng \lng b,b'\rng. \label{w50}
\eeq
If $\cA$ is an involutive Hopf algebra, the involutive Hopf
algebra $\cA'$ dually paired with $\cA$ should satisfy the
additional condition
\be
\lng a^*,a'\rng=\lng a, (S'(a'))^*\rng.
\ee

For instance, the Hopf algebra $U_q(b_+)$ in Example \ref{w33} is
dually paired with itself by
\be
\lng a,a\rng=1, \qquad \lng g,g\rng=q, \qquad \lng a,g\rng=\lng
g,a\rng=0.
\ee
If $\cG$ is a finite-dimensional complex semi-simple Lie algebra
represented by $n\times n$ matrices $\rho(a)$, $a\in\cG$, the Hopf
algebras $U\cG$ in Example \ref{ws202} and $\Bbb C[G]$ in Example
\ref{w45} are dually paired with respect to the interior product
\mar{w48}\beq
\lng a,b^i_j\rng=\rho(a)^i_j. \label{w48}
\eeq

Let a Hopf algebra $(\cA,m,\Delta,\e, S)$ be a finite-dimensional
vector space, and let $\cA'$ be its algebraic dual. Then the
equality (\ref{w50}) yields an isomorphism
\be
(\cA\ot\cA)'= \cA'\ot\cA'.
\ee
In this case, the relations (\ref{qm850a}) -- (\ref{qm850d})
provide $\cA'$ with a unique Hopf algebra structure
$(m',\Delta',\e', S')$, called the {\it dual Hopf algebra}.
\index{dual Hopf algebra}

One says that a Hopf algebra $\cH$ acts on a complex algebra $A$
on the left (or $A$ is a {\it module over a Hopf algebra} $\cH$)
\index{module!over a Hopf algebra} if $\cH$ acts $h\triangleright
a$  on $A$
 as a vector space
such that
\be
&& h\triangleright(ab)=\sum (h_{(1)}\triangleright a)
(h_{(2)}\triangleright b), \qquad a,b\in A,\\
&& h\triangleright\bb=\e(h)\bb, \qquad h\in\cH,
\ee
i.e., the multiplication and the unit in $A$ commute with the
action of $\cH$.

In particular, any Hopf algebra acts on itself as an algebra.
Namely, the left and right {\it adjoint action} \index{adjoint
action!of a Hopf algebra} of a Hopf algebra $\cH$ on itself is
given by the formulae
\mar{w41}\ben
&& h\triangleright h': h\ot h'\mapsto \sum h_{(1)}h'S(h_{(2)}),
\label{w41}\\
&& h'\triangleleft h: h\ot h'\mapsto \sum S(h_{(1)})h'h_{(2)}.
\een
For instance, the left and right adjoint action of the classical
Hopf algebra $U\cG$ in Example \ref{ws202} reads
\mar{w42}\beq
a\triangleright b =[a,b], \qquad b\triangleleft a=[b,a], \qquad
a,b\in \cG. \label{w42}
\eeq
The left adjoint action of the group Hopf algebra  $\Bbb C G$ in
Example \ref{w31} coincides with the adjoint action of the group
$G$ on itself. Since the Hopf algebras $\Bbb C(G)$ in Example
\ref{w32} and $\Bbb C[G]$ in Example \ref{w45} are commutative,
their adjoint action is trivial.

If a Hopf algebra $\cH'$ is dually paired with a Hopf algebra
$\cH$, its action on $\cH$ reads
\mar{w40}\beq
h'\triangleright h=\sum h_{(1)}\lng h',h_{(2)}\rng, \qquad h'\in
\cH', \qquad h\in\cH. \label{w40}
\eeq
For instance, the Hopf algebra $\Bbb C[G]$ acts on the classical
Hopf algebra $U\cG$ by the law
\be
b^i_j\triangleright a=\bb\rho(a)^i_j +a\dl^i_j.
\ee

A coalgebra $(\cA,\Delta,\e)$ is said to be a module of a Hopf
algebra $\cH$ if it is an $\cH$-module as an algebra and,
additionally, the relations
\be
\Delta(h\triangleright a)=\sum (h_{(1)}\triangleright a_{(1)})\ot
(h_{(2)}\triangleright a_{(2)}), \qquad \e(h\triangleright
a)=\e(h)\e(a)
\ee
hold for all $h\in\cH$ and $a\in \cA$.

A left {\it coaction} \index{coaction} of a Hopf algebra $\cH$ on
a vector space $V$  is given by a map $\bt: V\to \cH\ot V$ such
that
\mar{w36}\beq
(\id \ot\bt)\circ\bt=(\Delta\ot\id)\circ\bt, \qquad
(\e\ot\id)\circ\bt=\id. \label{w36}
\eeq
A right coaction $V\to V\ot\cH$ of $\cH$ on $V$ is similarly
defined. We write
\mar{w37}\beq
\bt(v)=\sum \ol v^{(1)}\ot v^{(2)}, \qquad \ol v^{(1)}\in \cH,
\qquad v^{(2)}\in V.
\eeq
One says that $V$ is a (left) {\it comodule} \index{comodule!over
a Hopf algebra} over a Hopf algebra $\cH$. For instance the {\it
trivial comodule} \index{comodule!trivial} is $\Bbb C$ where
\be
\bt(\la)=\bb\ot\la, \qquad \la\in\Bbb C.
\ee
If $V$ is a $\cH$-bimodule, the additional relations
\mar{w270}\beq
\bt(hvh')=\Delta(h)\bt(v)\Delta(h'),  \qquad h,h'\in\cH,
\label{w270}
\eeq
hold. If an element $v\in V$ satisfies the equality
\mar{w271}\beq
\bt(v)=\bb\ot v, \label{w271}
\eeq
one says that $v$ is left-invariant.

A Hopf algebra $\cH$ coacts on a unital algebra $A$ if $A$ is a
$\cH$-comodule and $\bt$ is an algebra homomorphism of $A$ to the
algebra $\cH\ot A$, i.e.,
\be
\bt(ab)=\bt(a)\bt(b), \qquad \bt(\bb_A)=\bb\ot\bb_A, \qquad a,b\in
A.
\ee
Then $A$ is called a (left) {\it $\cH$-comodule algebra}.
\index{comodule algebra}  A glance at the formulae (\ref{w35}) and
(\ref{w36}) shows that any Hopf algebra coacts
\be
\Delta:\cH\to \cH\ot\cH
\ee
on itself as an algebra.

A Hopf algebra coacts $\cH$ on a Hopf algebra $\cA$ if it coacts
on $\cA$ as an algebra and, additionally, this coaction commutates
with co-operations of $\cA$. In particular, any Hopf algebra
coacts on itself on the left as a coalgebra by the law
\mar{w39}\beq
{\rm ad}_l(h)= \sum h_{(1)}S(h_{(3)})\ot h_{(2)} \label{w39}
\eeq
(see the shorthand (\ref{w38})). Accordingly, the right coaction
reads
\mar{w250}\beq
{\rm ad}_r(h)= \sum h_{(2)}\ot S(h_{(1)})h_{(3)}. \label{w250}
\eeq

Turn now to the differential calculus over Hopf algebras.

Let $(\cA,m,\Delta,\e,S)$ be a Hopf algebra with an invertible
antipode $S$. A differential calculus over $\cA$ is that over
$\cA$ as a non-commutative ring which obeys some additional
conditions related to the co-operations in $\cA$
\cite{klim,woron}.

Here, we focus on a minimal first order differential calculus
(henceforth {\it FODC}) \index{FODC} over $\cA$ (see Remark
\ref{w310} below). It is a left $\cA$-module $\Om^1$ generated by
elements $da$, $a\in\cA$, and provided with the $\cA$-bimodule
structure by the rule
\mar{w266}\beq
(da)b=d(ab)-adb, \qquad a,b \in \cA. \label{w266}
\eeq
The universal FODC $\Om^1(\cA)$ over $\cA$ is isomorphic to the
sub-bimodule $\Ker m$ of the $\cA$-bimodule $\cA\ot\cA$. Then any
FODC $\Om^1$ over $\cA$ is uniquely characterized by a
sub-bimodule
\mar{w267}\beq
\cN=\{\sum (a_k\ot b_k -a_kb_k\ot\bb)\in \Ker m \, :\, \Om^1\ni
\sum a_kdb_k=0\} \label{w267}
\eeq
of the $\cA$-bimodule $\Ker m$. Obiously, $\Om^1=\Om^1(\cA)/\cN$.

A FODC $(\Om^1,d)$ over a Hopf algebra $\cA$ is called {\it
left-covariant} \index{FODC!left-covariant} if it possesses the
structure of a left $\cA$-comodule
\be
\Delta_l:\Om^1\to \cA\ot\Om^1
\ee
(see relations (\ref{w36}) and (\ref{w270})) such that
\mar{w268}\beq
\Delta_l(adb)=\Delta(a)(\id\ot d)\Delta(b), \qquad a,b\in \cA.
\label{w268}
\eeq
Let us note that the original notion of a left-covariant FODC in
\cite{woron} implies the above mentioned one. Similarly, a FODC
over a Hopf algebra $\cA$ is called {\it right-covariant}
\index{FODC!right-covariant} if it possesses the structure of a
right $\cA$-comodule $\Delta_r:\Om^1\to \Om^1\ot\cA$ such that
\be
\Delta_r(adb)=\Delta(a)(d\ot \id)\Delta(b), \qquad a,b\in \cA.
\ee
A FODC over $\cA$ is called {\it bicovariant}
\index{FODC!bicovariant} if it is both a left- and right-covariant
FODC satisfying the relation
\be
(\id\ot\Delta_r)\circ\Delta_l=(\Delta_l\ot \id)\circ \Delta_r.
\ee

We here restrict our consideration to left-covariant FODCs over a
Hopf algebra $\cA$. An $\cA$-bimodule $V$ which is also an
$\cA$-comodule is called a {\it covariant bimodule}.
\index{bimodule!covariant} Covariant bimodules possess the
following important properties.

\begin{theo} \label{w287} \mar{w287}
Let $(V,\Delta_l)$ be a left-covariant bimodule over a Hopf
algebra $\cA$, and let $V_{\rm inv}$ denote the complex vector
space of left-invariant elements of $V$ obeying the condition
(\ref{w271}). There exists an epimorphism of complex spaces
$\rho:V\to V_{\rm inv}$ such that
\mar{w309}\beq
\rho(av)=\e(a)\rho(v), \qquad v\in V, \qquad a\in \cA.
\label{w309}
\eeq
Moreover, if
\mar{w288}\beq
\Delta_l(v)=\sum a_k\ot v_k, \qquad a_k\in\cA, \qquad v_k\in V,
\label{w288}
\eeq
then
\mar{w289}\beq
\rho(v)=\sum S(a_k)v_k. \label{w289}
\eeq
\end{theo}

\begin{theo} \label{w272} \mar{w272}
Let $\{w_i\}_{i\in I}$ be a basis for the vector space $V_{\rm
inv}$ of left-invariant elements of $V$ in Theorem \ref{w287}.
Then the following hold.

$\bullet$ Any element of $V$ is uniquely decomposed into the
finite sums
\mar{w280}\beq
v=\sum a_iw_i, \qquad v=\sum w_i b_i, \qquad a_i,b_i\in\cA.
\label{w280}
\eeq

$\bullet$ There exist complex linear forms $f_{ij}\in \cA^*$ on
the complex vector space $\cA$ such that
\mar{w281,2}\ben
&& w_ib=\op\sum_j (f_{ij}* b)w_j, \qquad b\in\cA, \label{w281}\\
&& aw_i=\op\sum_j w_j(f_{ij}\circ S^{-1}*a), \qquad a\in\cA,\label{w282}
\een
where $*$ is the convolution product (\ref{w260}) -- (\ref{w261}).

$\bullet$ The forms $f_{ij}$ are uniquely defined. They obey the
relations
\mar{w283}\beq
f_{ij}(ab)=\op\sum_k f_{ik}(a)f_{kj}(b), \qquad
f_{ij}(\bb)=\dl_{ij}, \qquad a,b\in\cA. \label{w283}
\eeq
\end{theo}

Theorem \ref{w272} describes all left-covariant bimodules. Namely,
given a family of complex linear forms $\{f_{ij}\}_{i,j\in I}$
obeying the conditions (\ref{w281}) -- (\ref{w282}), one can
consider a free left $\cA$-module $V$ generated by elements $v_i$
indexed by the set $I$ and provided it with the operations
\mar{w286}\beq
w_ib= \op\sum_j(f_{ij}*b)w_j, \qquad \Delta_l(aw_i)= \Delta
(a)(\id \ot w_i). \label{w286}
\eeq

For instance, if $V=\Om^1$ is a left-covariant FODC over $\cA$,
the condition
\be
\Delta_l(db)=\sum b_{(1)}\ot db_{(2)}, \qquad b\in\cA,
\ee
(\ref{w268}) is exactly the relation (\ref{w288}) in Theorem
\ref{w287}. Therefore, the projection $\rho$ of an element $db\in
\Om^1$ to the space $V_{\rm inv}$ is
\mar{w291}\beq
\rho(db)=\sum S(b_{(1)}) db_{(2)} \label{w291}
\eeq
in accordance with the formula (\ref{w309}). It is called the {\it
Maurer--Cartan form}. \index{Maurer--Cartan form!on a Hopf
algebra} By virtue of the equality (\ref{w288}), the projection
$\rho$ of an arbitrary element $\f\in\Om^1$ reads
\be
\rho(\f)=\rho(\op\sum_k a_kdb_k)=\op\sum_k\e(a_k)\sum
S(b_{k(1)})db_{k(2)}.
\ee
It follows that, for any element $\f\in\Om^1$, there exists an
element $b\in \Ker\e$ such that $\rho(\f)=\rho(db)$. Namely, we
have
\be
&& \rho(\op\sum_k a_kdb_k)= \rho(db), \\
&& b=\op\sum_k\e(a_k)(b_k-\e(b_k)\bb).
\ee
Consequently, any invariant element of a left-covariant FODC
$\Om^1$ is of the form (\ref{w291}) where $b\in \Ker\e$.

The covariance conditions impose some restrictions on the
sub-bimodule $\cN$ (\ref{w267}). The well-known Woronowicz theorem
states the following \cite{woron}.

\begin{theo} \label{w292} \mar{w292}
Let $\cR$ be a right ideal of $\cA$ contained in $\Ker\e$ and
\mar{w293}\beq
\cN=\{\sum aS(b_{(1)})\ot b_{(2)}\, :\, b\in\cR,\,a\in\cA\}.
\eeq
Then $\cN$ is a sub-bimodule of the universal FODC $\Om^1(\cA)$,
and
\mar{w294}\beq
\Om^1_\cN=\Om^1(\cA)/\cN \label{w294}
\eeq
 is a left-covariant FODC. Moreover, any left-covariant
FODC can be obtained in this way.
\end{theo}

It follows that
\be
\Om^1_\cN\ni da=0
\ee
iff either $a=\bb$ of $a\in \cR$. In particular, the universal
FODC over a Hopf algebra is always left-covariant, right-covariant
and bicovariant.

Given the right ideal $\cR$ in Theorem \ref{w292}, let us consider
the subspace
\mar{w300}\beq
T=\{\chi\in\cA^*\, :\, \chi(\bb)=0;\,\chi(a)=0,\, a\in\cR\}
\label{w300}
\eeq
of the complex dual $\cA^*$ of $\cA$. Somebody calls $T$ the {\it
quantum tangent space} \index{quantum tangent space} \cite{heck}.
Moreover, $T$ is a complex Lie algebra with respect to the bracket
\be
[\chi,\chi']=\chi*\chi'-\chi'*\chi.
\ee
Therefore, it is also called the {\it quantum Lie algebra}
\index{quantum Lie algebra of a Hopf algebra} \cite{schmud}. One
can show the following.

$\bullet$ There is the interior product
\mar{w303}\ben
&& \lng.,.\rng: \Om^1_\cN\times
T\to\Bbb C, \label{w303}\\
&& \lng\f,\chi\rng=\chi(a), \qquad \rho(da)=\rho(\f), \qquad a\in\Ker\e,
\nonumber
\een
which obeys the relations
\be
\lng a\f,\chi\rng=\e(a)\lng\f.\chi\rng, \qquad \lng
da,\chi\rng=\chi(a).
\ee

$\bullet$ The interior product (\ref{w303}) makes the space
$V_{\rm inv}$ of left-invariant elements of the FODC $\Om^1_\cN$
(\ref{w294}) and the quantum tangent space $T$ (\ref{w300}) into a
dual pair.

$\bullet$ In order to simplify the notation, let us assume that
$T$ is finite-dimensional. Let $\{\chi_i\}_{i\in I}$ be a basis
for $T$, $\{w_i\}$ the dual basis for $V_{\rm inv}$  and $\{a_j\}$
the set of elements of $\Ker \e$ defined by the condition
$\chi_i(a_j)=\dl_{ij}$. Then the relations
\mar{w305}\ben
&& w_i=\rho(da_i), \qquad f_{ij}(b)=\chi_i(a_jb), \qquad b\in\cA, \nonumber\\
&& \f=\op\sum_i\lng\f,\chi_i\rng w_i, \qquad da=\op\sum_i \chi_i(a)w_i,
\qquad a\in\cA, \nonumber\\
&& \chi_i*ab=\op\sum_j (\chi_i*a)(f_{ij}*b) +a(\chi_i*b) \label{w305}
\een
hold (see Theorem \ref{w272} for the notation).

In view of the relation (\ref{w305}), one can think of the
elements
\mar{w306}\beq
u\in\hm(\cA,\cA), \qquad u(a)=\chi*a=\sum a_{(1)}\chi(a_{(2)})
\label{w306}
\eeq
as being vector fields for the left-covariant FODC $\Om^1_\cN$
which obey the deformed Leibniz rule (\ref{w305}). They are called
{\it invariant vector fields} \index{invariant vector field}
\cite{ash}, and can be defined in an intrinsic way as follows.

\begin{prop} \label{w307} \mar{w307}
An element $u\in \hm(\cA,\cA)$ is an invariant vector field iff it
satisfies the equality
\be
u=(\id \ot \e)\circ(\id\ot u)\circ\Delta.
\ee
\end{prop}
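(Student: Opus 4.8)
The plan is to establish both implications by a direct computation in Sweedler notation, using only coassociativity and the counit axioms (\ref{w35}); the antipode will play no role, so the argument in fact works over any bi-algebra. Recall from (\ref{w306}) that $u\in\hm(\cA,\cA)$ is an invariant vector field exactly when there is a linear form $\chi\in\cA^*$ with $u(a)=\chi*a=\sum a_{(1)}\chi(a_{(2)})$ for all $a$, the convolution being (\ref{w261}). The first thing I would record is that such a $\chi$, if it exists, is necessarily $\chi=\e\circ u$: applying $\e$ to $u(a)=\sum a_{(1)}\chi(a_{(2)})$ and using $(\e\ot\id)\circ\Delta=\id$ yields $\e(u(a))=\sum\e(a_{(1)})\chi(a_{(2)})=\chi(a)$. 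This tells me what form to test in the converse direction.

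For the "only if" direction, I would take $u$ an invariant vector field, write $u(a)=\sum a_{(1)}\chi(a_{(2)})$, and substitute into the right-hand side of the asserted identity to get $(\id\ot\e)\circ(\id\ot u)\circ\Delta(a)=\sum a_{(1)}\,\e\big(u(a_{(2)})\big)$. Expanding $u(a_{(2)})$ once more by the same formula and rewriting via the shorthand $(\Delta\ot\id)\circ\Delta(a)=\sum a_{(1)}\ot a_{(2)}\ot a_{(3)}$ of (\ref{w38}), the expression becomes $\sum a_{(1)}\,\e(a_{(2)})\,\chi(a_{(3)})$. Applying the counit axiom $(\e\ot\id)\circ\Delta=\id$ to the $\e$-carrying middle leg collapses this to $\sum a_{(1)}\chi(a_{(2)})=\chi*a=u(a)$, which is the claim. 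For the "if" direction, I would assume $u=(\id\ot\e)\circ(\id\ot u)\circ\Delta$, set $\chi:=\e\circ u\in\cA^*$, and simply read off $u(a)=\sum a_{(1)}\,\e(u(a_{(2)}))=\sum a_{(1)}\chi(a_{(2)})=\chi*a$ from (\ref{w261}); thus $u$ has the form (\ref{w306}) and is an invariant vector field, with generating form $\chi=\e\circ u$ (which, once $u$ is taken to come from the left-covariant FODC $\Om^1_\cN$, lies in the quantum tangent space $T$ by Theorem \ref{w272} and the pairing (\ref{w303})).

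I do not expect a genuine obstacle: the statement is entirely formal. The only place requiring a little care is the coassociativity bookkeeping in the "only if" direction --- one must contract the counit against the correct middle factor of $(\Delta\ot\id)\circ\Delta(a)$ --- together with the observation, used to line the two directions up, that the generating form can only be $\chi=\e\circ u$.
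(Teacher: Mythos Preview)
The paper states Proposition \ref{w307} without proof, so there is no argument in the text to compare against. Your proof is correct: both directions are exactly the Sweedler-notation computations you outline, relying only on coassociativity and the counit axiom (\ref{w35}), and your observation that the generating form is forced to be $\chi=\e\circ u$ is precisely what makes the ``if'' direction work.

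One minor remark on scope: the paper introduces invariant vector fields (\ref{w306}) in the context of a fixed left-covariant FODC $\Om^1_\cN$, where the generating $\chi$ is drawn from the quantum tangent space $T$ (\ref{w300}); the intrinsic characterization in the proposition, however, makes no reference to $T$, so the equivalence it asserts is with the form $u=\chi*(\cdot)$ for some $\chi\in\cA^*$. Your argument establishes exactly that, and your closing parenthetical correctly flags the distinction.
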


However, the notion of invariant vector fields meets the standard
problem of non-commutative geometry investigated in Section 4.3.
Namely, the space of invariant vector fields fails to be an
$\cA$-module. For instance, one has suggested to overcome this
difficulty by appealing to Cartan pairs in Section 4.3. The
general problem of differential operators on Hopf algebras has
been studied in \cite{lunts}.

\begin{rem} \label{w310} \mar{w310}
Let us say a few words on a higher order differential calculus
over a Hopf algebra. A minimal differential calculus $(\Om^*,d)$
over a Hopf algebra $\cA$ seen as an associative algebra  is said
to be left-covariant if $\Om^*$ possesses the structure of a left
$\cA$-comodule
\be
\Delta_l:\Om^*\to \cA\ot\Om^*
\ee
such that
\be
\Delta_l(a_0da_1\cdots da_k)=\Delta (a_0)[(\id\ot
d)\Delta(a_1)]\cdots [(\id\ot d)\Delta(a_k)], \quad a_i\in\cA.
\ee
Let $\Om^*(\cA)$ be the universal differential calculus over $\cA$
and $\cN^*$ its differential graded ideal such that
\be
\Delta_l(\cN^*)\subset \cA\ot\cN^*.
\ee
Then, the differential calculus $\Om^*(\cA)/\cN^*$ is
left-covariant. Similarly, right-covariant differential calculi
over a Hopf algebra $\cA$ are considered. Let us note that
different ideals $\cN^*$ may lead to isomorphic quotients
$\Om^*(\cA)/\cN^*$. We always choose $\cN^*$ the kernel of the
morphism $\Om^*(\cA)\to \Om^*$.
\end{rem}

\chapter{Appendix. Cohomology}

For the sake of convenience of the reader, several topics on
cohomology are compiled in this Chapter.

\section{Cohomology of complexes}

This Section summarizes the relevant basics on homology and
cohomology of complexes of modules over a commutative ring
\cite{mcl,massey}.

Let $\cK$ be a commutative ring.  A sequence
\mar{b3256}\beq
0\to B^0 \ar^{\dl^0} B^1 \ar^{\dl^1}\cdots B^p\ar^{\dl^p}\cdots
\label{b3256}
\eeq
of modules $B^p$ and their homomorphisms $\dl^p$ is said to be a
{\it cochain complex} \index{cochain complex} (henceforth, simply,
a {\it complex}) \index{complex} if
\be
\dl^{p+1}\circ \dl^p =0, \qquad  p\in \Bbb N,
\ee
i.e., $\im \dl^p\subset \Ker \dl^{p+1}$. The homomorphisms $\dl^p$
are called {\it coboundary operators}. \index{coboundary operator}
For the sake of convenience, let us denote $B^{-1}=0$ and
$\dl^{-1}:0\to B^0$. Elements of the module $B^p$ are said to be
{\it $p$-cochains}, \index{cochain} while elements of its
submodules $\Ker \dl^p\subset B^p$ and  $\im \dl^{p-1}\subseteq
\Ker \dl^p$ are called {\it $p$-cocycles} \index{$p$-cocycle} and
{\it $p$-coboundaries}, \index{$p$-coboundary} respectively. The
{\it $p$-th cohomology group} \index{cohomology group} of the
complex $B^*$ (\ref{b3256}) is the factor module
\be
H^p(B^*)= \Ker \dl^p/\im \dl^{p-1}.
\ee
It is a $\cK$-module. In particular, $H^0(B^*)=\Ker \dl^0$.

A complex (\ref{b3256}) is said to be {\it exact}
\index{complex!exact} at a term $B^p$ if $H^p(B^*)=0$. It is an
exact sequence if all cohomology groups are trivial.

A complex $(B^*,\dl^*)$ is called {\it acyclic}
\index{complex!acyclic} if its cohomology groups $H^{p>0}(B^*)$
are trivial. It is acyclic if there exists a {\it homotopy
operator} \index{homotopy operator} ${\bf h}$, defined as a set of
module morphisms
\be
{\bf h}^{p+1}: B^{p+1}\to B^p, \qquad p\in\Bbb N,
\ee
such that
\be
{\bf h}^{p+1}\circ \dl^p + \dl^{p-1}\circ {\bf h}^p=\id B^p,
\qquad p\in\Bbb N_+.
\ee
Indeed, if $\dl^pb^p=0$, then $b^p=\dl^{p-1}({\bf h}^pb^p)$, and
$H^{p>0}(B^*)=0$.

A complex $(B^*,\dl^*)$ is said to be a {\it resolution}
\index{resolution} of a module $B$ if it is acyclic and
$H^0(B^*)=B$.

The following are the standard constructions of new complexes from
old ones.

$\bullet$ Given complexes $(B^*_1,\dl^*_1)$ and $(B^*_2,\dl^*_2)$,
their {\it direct sum} \index{direct sum!of complexes}
$B^*_1\oplus B^*_2$ is a complex of modules
\be
(B^*_1\oplus B^*_2)^p=B^p_1\oplus B^p_2
\ee
with respect to the coboundary operators
\be
\dl^p_\oplus(b^p_1 + b^p_2)=\dl^p_1b^p_1 +\dl^p_2 b^p_2.
\ee

$\bullet$ Given a subcomplex $(C^*,\dl^*)$ of a complex
$(B^*,\dl^*)$, the {\it factor complex} \index{factor complex}
$B^*/C^*$ is defined as a complex of factor modules $B^p/C^p$
provided with the coboundary operators
\be
\dl^p[b^p]=[\dl^p b^p],
\ee
where $[b^p]\in B^p/C^p$ denotes the coset of the element $b^p$.

$\bullet$ Given complexes $(B^*_1,\dl^*_1)$ and $(B^*_2,\dl^*_2)$,
their {\it tensor product} \index{tensor product!of complexes}
$B^*_1\ot B^*_2$ is a complex of modules
\be
(B^*_1\ot B^*_2)^p=\op\oplus_{k+r=p} B^k_1\ot B^r_2
\ee
with respect to the coboundary operators
\be
\dl^p_\ot(B^k_1\ot B^r_2)=(\dl_1^kb^k_1)\ot b^r_2 +
(-1)^kb^k_1\ot(\dl_2^rb^r_2).
\ee

A {\it cochain morphism} \index{cochain morphism} of complexes
\mar{spr32'}\beq
\g:B^*_1\to B^*_2 \label{spr32'}
\eeq
is defined as a family of degree-preserving homomorphisms
\be
\g^p: B^p_1\to B^p_2, \qquad p\in\Bbb N,
\ee
which commute with the coboundary operators, i.e.,
\be
\dl^p_2\circ\g^p=\g^{p+1}\circ\dl^p_1, \qquad p\in\Bbb N.
\ee
It follows that if $b^p\in B^p_1$ is a cocycle or a coboundary,
then $\g^p(b^p)\in B^p_2$ is so. Therefore, the cochain morphism
of complexes (\ref{spr32'}) yields an induced homomorphism of
their cohomology groups
\mar{spr33'}\beq
[\g]^*: H^*(B^*_1) \to H^*(B^*_2). \label{spr33'}
\eeq

Let us consider a short sequence of complexes
\mar{spr34'}\beq
0\to C^*\ar^\g B^* \ar^\zeta F^*\to 0, \label{spr34'}
\eeq
represented by the commutative diagram
\be
\begin{array}{rcrccrccl}
& & & 0 & & & 0 & &\\
& & & \put(0,10){\vector(0,-1){20}} & & &
\put(0,10){\vector(0,-1){20}}
   & &\\
\cdots &\longrightarrow & & C^p & \op\longrightarrow^{\dl^p_C} & &
C^{p+1} & \longrightarrow &  \cdots \\
& & _{\g_p} & \put(0,10){\vector(0,-1){20}} & & _{\g_{p+1}}&
\put(0,10){\vector(0,-1){20}}
   &  &\\
\cdots &\longrightarrow & & B^p & \op\longrightarrow^{\dl^p_B} & &
B^{p+1} & \longrightarrow &  \cdots \\
& &_{\zeta_p} & \put(0,10){\vector(0,-1){20}} & & _{\zeta_{p+1}}
& \put(0,10){\vector(0,-1){20}} & & \\
\cdots &\longrightarrow & & F^p & \op\longrightarrow^{\dl^p_F} & &
F^{p+1} & \longrightarrow  & \cdots \\
& & & \put(0,10){\vector(0,-1){20}} & & &
\put(0,10){\vector(0,-1){20}}
   & & \\
& & & 0 & & & 0 & &
\end{array}
\ee
It is said to be exact if all columns of this diagram are exact,
i.e., $\g$ is a cochain monomorphism and $\zeta$ is a cochain
epimorphism onto the quotient $F^*=B^*/C^*$.

\begin{theo} \label{spr36'} \mar{spr36'}
The short exact sequence  of complexes (\ref{spr34'}) yields the
long exact sequence of their cohomology groups
\mar{spr35'}\ben
&& 0\to H^0(C^*)\ar^{[\g]^0} H^0(B^*)\ar^{[\zeta]^0} H^0(F^*)\ar^{\tau^0}
H^1(C^*)\ar\cdots \label{spr35'}\\
&& \qquad \ar H^p(C^*)\ar^{[\g]^p} H^p(B^*)\ar^{[\zeta]^p}
H^p(F^*)\ar^{\tau^p} H^{p+1}(C^*)\ar\cdots.\nonumber
\een
\end{theo}

\begin{theo} \label{spr37'} \mar{spr37'}
A direct sequence of complexes
\mar{spr55}\beq
B^*_0\ar B^*_1\ar\cdots B^*_k\ar^{\g^k_{k+1}} B^*_{k+1}\ar \cdots
\label{spr55}
\eeq
admits a direct limit $B^*_\infty$ which is a complex whose
cohomology $H^*(B^*_\infty)$ is a direct limit of the direct
sequence of cohomology groups
\be
H^*(B^*_0)\ar H^*(B^*_1)\ar \cdots H^*(B^*_k)\ar^{[\g^k_{k+1}]}
H^*(B^*_{k+1})\ar \cdots.
\ee
This statement is also true for a direct system of complexes
indexed by an arbitrary directed set.
\end{theo}

\section{Cohomology of Lie algebras}

One can associate to an arbitrary Lie algebra the
Chevalley--Eilenberg complex. In this Section, $\cG$ denotes a Lie
algebra (not necessarily finite-dimensional) over a commutative
ring $\cK$.

Let $\cG$ act on a $\cK$-module $P$ on the left by endomorphisms
\be
&& \cG\times P\ni (\ve,p)\mapsto \ve p\in P, \\
&& [\ve,\ve']p=(\ve\circ
\ve'-\ve'\circ \ve)p, \qquad \ve,\ve'\in\cG.
\ee
One says that $P$ is a {\it $\cG$-module}. \index{module!over a
Lie algebra}  A $\cK$-multilinear skew-symmetric map
\be
c^k:\op\times^k\cG\to P
\ee
is called a $P$-valued $k$-cochain on the Lie algebra $\cG$. These
cochains form a $\cG$-module $C^k[\cG;P]$. Let us put
$C^0[\cG;P]=P$. We obtain the cochain complex
\mar{spr997}\beq
0\to P\ar^{\dl^0} C^1[\cG;P]\ar^{\dl^1} \cdots C^k[\cG;P]
\ar^{\dl^k} \cdots, \label{spr997}
\eeq
with respect to the {\it Chevalley--Eilenberg coboundary
operators} \index{Chevalley--Eilenberg!coboundary operator}
\mar{spr132}\ben
&& \dl^kc^k (\ve_0,\ldots,\ve_k)=\op\sum_{i=0}^k(-1)^i\ve_ic^k(\ve_0,\ldots,
\wh\ve_i, \ldots, \ve_k)+ \label{spr132}\\
&& \qquad \op\sum_{1\leq i<j\leq k}
(-1)^{i+j}c^k([\ve_i,\ve_j], \ve_0,\ldots, \wh\ve_i, \ldots,
\wh\ve_j,\ldots, \ve_k), \nonumber
\een
where the caret $\,\wh{}\,$ denotes omission \cite{fuks}. The
complex (\ref{spr997}) is called the {\it Chevalley--Eilenberg
complex} \index{Chevalley--Eilenberg!complex} with coefficients in
a module $P$. It is finite if the Lie algebra $\cG$ is
finite-dimensional.

For instance,
\mar{spr133,4}\ben
&& \dl^0p(\ve_0)=\ve_0p, \label{spr133}\\
&& \dl^1c^1(\ve_0,\ve_1)=\ve_0c^1(\ve_1)-\ve_1c^1(\ve_0) -
c^1([\ve_0,\ve_1]). \label{spr134}
\een

Cohomology $H^*(\cG;P)$ of the complex $C^*[\cG;P]$ is called the
{\it Chevalley--Eilenberg cohomology of the Lie algebra $\cG$ with
coefficients in the module} $P$.
\index{Chevalley--Eilenberg!cohomology!with coefficients in a
module}

The following are two standard variants of the
Chevalley--Eilenberg complex.

(i) Let $P=\cG$  be regarded as a $\cG$-module with respect to the
adjoint representation
\be
\ve:\ve'\mapsto [\ve,\ve']\qquad, \ve,\ve'\in\cG.
\ee
We abbreviate with $C^*[\cG]$ the Chevalley--Eilenberg complex of
$\cG$-valued cochains on $\cG$. Cohomology $H^*(\cG)$ of this
complex is called the {\it Chevalley--Eilenberg cohomology}
\index{Chevalley--Eilenberg!cohomology} or, simply, the {\it
cohomology of a Lie algebra} $\cG$. \index{cohomology!of a Lie
algebra}

In particular, $C^0[\cG]=\cG$, while $C^1[\cG]$ consists of
endomorphisms of the Lie algebra $\cG$. Accordingly, the
coboundary operators (\ref{spr133}) and (\ref{spr134}) read
\mar{spr135,6}\ben
&&\dl^0\ve(\ve_0)=[\ve_0,\ve], \label{spr135} \\
&& \dl^1 c^1(\ve_0,\ve_1)=[\ve_0,c^1(\ve_1)]-[\ve_1,c^1(\ve_0)] -
c^1([\ve_0,\ve_1]). \label{spr136}
\een
A glance at the expression (\ref{spr136}) shows that a one-cocycle
$c^1$ on $\cG$ obeys the relation
\be
c^1([\ve_0,\ve_1])=[c^1(\ve_0),\ve_1]+ [\ve_0,c^1(\ve_1)]
\ee
and, thus, it is a derivation of the Lie algebra $\cG$.
Accordingly, any one-coboundary (\ref{spr135}) is an inner
derivation of $\cG$ up to the sign minus. Therefore, one can think
of the cohomology $H^1(\cG)$ as being the set of outer derivations
of $\cG$.

(ii) Let $P=\cK$ and $\cG:\cK\to 0$. Then the Chevalley--Eilenberg
complex $C^*[\cG;\cK]$ is the exterior algebra $\w\cG^*$ of the
dual Lie algebra $\cG^*$. The Chevalley--Eilenberg coboundary
operators (\ref{spr132}) on this algebra read
\mar{spr940}\beq
\dl^kc^k (\ve_0,\ldots,\ve_k)= \op\sum_{i<j}^k
(-1)^{i+j}c^k([\ve_i,\ve_j], \ve_0,\ldots, \wh\ve_i, \ldots,
\wh\ve_j,\ldots, \ve_k). \label{spr940}
\eeq
In particular,
\mar{020}\ben
&&\dl^0c^0(\ve_0)=0, \qquad c^0\in \cK, \nonumber \\
&& \dl^1 c^1(\ve_0,\ve_1)= -
c^1([\ve_0,\ve_1]), \qquad c^1\in \cG^*, \label{020}\\
&& \dl^2c^2(\ve_0,\ve_1,\ve_2)=-c^2([\ve_0,\ve_1],\ve_2) +
c^2([\ve_0,\ve_2],\ve_1) - c^2([\ve_1,\ve_2],\ve_0).\nonumber
\een

Cohomology $H^*(\cG;\cK)$ of the complex $C^*[\cG;\cK]$ are called
the {\it Chevalley--Eilenberg cohomology with coefficients in the
trivial representation}. \index{Chevalley--Eilenberg
cohomology!with coefficients in the trivial representation} It is
provided with the {\it cup-product} \index{cup-product!of the
Chevalley--Eilenberg cohomology}
\mar{ws80}\beq
[c]\smile [c']=[c\w c'], \label{ws80}
\eeq
where $[c]$ denotes the cohomology class of a cocycle $c$. This
product makes $H^*(\cG;\cK)$ into a graded commutative algebra.

For instance, let $\cG$ be the right Lie algebra of a
finite-dimensional real Lie group $G$. Then the relation
(\ref{020}) is the well-known Maurer--Cartan equation. Written
with respect to the basis $\ve_i$ for $\cG$ and the dual basis
$\thh^i$ for $\cG^*$, this equation reads
\mar{021}\beq
\dl\thh^k=-\frac12c^k_{ij}\thh^i\w \thh^j. \label{021}
\eeq
There is a monomorphism of the complex $C^*[\cG;\Bbb R]$ onto the
subcomplex of right-invariant exterior forms of the de Rham
complex (\ref{t37}) of exterior forms on $G$. This monomorphism
induces an isomorphism of the Chevalley--Eilenberg cohomology
$H^*(\cG;\Bbb R)$ to the de Rham cohomology of $G$ \cite{fuks}.
For instance, if $G$ is semi-simple, then
\be
H^1(\cG;\Bbb R)=H^2(\cG;\Bbb R)=0.
\ee

\section{Sheaf cohomology}

In this Section, we follow the terminology of \cite{bred,hir}.

A {\it sheaf} \index{sheaf} on a topological space $X$ is a
topological fibre bundle $\pi:S\to X$ in modules over a
commutative ring $\cK$, where the surjection $\pi$ is a local
homeomorphism and fibres $S_x$, $x\in X$, called the {\it stalks},
are provided with the discrete topology. Global sections of a
sheaf $S$ make up a $\cK$-module $S(X)$, called the {\it structure
module} \index{structure module!of a sheaf} of $S$.

Any sheaf is generated by a presheaf. A {\it presheaf}
\index{presheaf} $S_\sU$ on a topological space $X$ is defined if
a module $S_U$ over a commutative ring $\cK$ is assigned to every
open subset $U\subset X$ $(S_\emptyset=0)$ and if, for any pair of
open subsets $V\subset U$, there exists the {\it restriction
morphism} \index{restriction morphism} $r_V^U:S_U\rightarrow S_V$
such that
\be
r_U^U=\id S_U,\qquad  r_W^U=r_W^Vr_V^U, \qquad W\subset V\subset
U.
\ee

Every presheaf $S_\sU$ on a topological space $X$ yields a sheaf
on $X$ whose stalk $S_x$ at a point $x\in X$ is the direct limit
of the modules $S_U,\,x\in U$, with respect to the restriction
morphisms $r_V^U$. It means that, for each open neighborhood $U$
of a point $x$, every element $s\in S_U$ determines an element
$s_x\in S_x$, called the {\it germ} of $s$ at $x$. Two elements
$s\in S_U$ and $s'\in S_V$ belong to the same germ at $x$ iff
there exists an open neighborhood $W\subset U\cap V$ of $x$ such
that $r_W^Us=r_W^Vs'$.

\begin{ex} \label{spr7} \mar{spr7}
Let $C^0_\sU$ be the presheaf of continuous real functions on a
topological space $X$. Two such functions $s$ and $s'$ define the
same germ $s_x$ if they coincide on an open neighborhood  of $x$.
Hence, we obtain the {\it sheaf $C^0_X$ of continuous functions}
\index{sheaf!of continuous functions} on $X$. Similarly, the {\it
sheaf $C^\infty_X$ of smooth functions} \index{sheaf!of smooth
functions} on a smooth manifold $X$ is defined. Let us also
mention the presheaf of real functions which are constant on
connected open subsets of $X$. It generates the {\it constant
sheaf} on $X$ \index{sheaf!constant} denoted by $\Bbb R$.
\end{ex}

Two different presheaves may generate the same sheaf. Conversely,
every sheaf $S$ defines a presheaf $S(\sU)$ of modules $S(U)$ of
its local sections. It is called the {\it canonical presheaf}
\index{preasheaf!canonical} of the sheaf $S$. Global sections of
$S$ make up the {\it structure module} $S(X)$ \index{structure
module!of a sheaf} of $S$. If a  sheaf $S$ is constructed from a
presheaf $S_\sU$, there are natural module morphisms
\be
S_U\ni s\mapsto s(U)\in S(U), \qquad s(x)= s_x, \quad x\in U,
\ee
which are neither monomorphisms nor epimorphisms in general. For
instance, it may happen that a non-zero presheaf defines a zero
sheaf. The sheaf generated by the canonical presheaf of a sheaf
$S$ coincides with $S$.

A direct sum and a tensor product of presheaves (as families of
modules)  and sheaves (as fibre bundles in modules) are naturally
defined. By virtue of Theorem \ref{spr170}, a direct sum (resp. a
tensor product) of presheaves generates a direct sum (resp. a
tensor product) of the corresponding sheaves.

\begin{rem} \label{spr190'} \mar{spr190'}
In the terminology of \cite{tenn}, a sheaf is introduced as a
presheaf which satisfies the following additional axioms.

(S1) Suppose that $U\subset X$ is an open subset and $\{U_\al\}$
is its open cover. If $s,s'\in S_U$ obey the condition
\be
r^U_{U_\al}(s)=r^U_{U_\al}(s')
\ee
for all $U_\al$, then $s=s'$.

(S2) Let $U$ and $\{U_\al\}$ be as in previous item. Suppose that
we are given a family of presheaf elements $\{s_\al\in
S_{U_\al}\}$ such that
\be
r^{U_\al}_{U_\al\cap U_\la}(s_\al)=r^{U_\la}_{U_\al\cap
U_\la}(s_\la)
\ee
for all $U_\al$, $U_\la$. Then there exists a presheaf element
$s\in S_U$ such that $s_\al=r^U_{U_\al}(s)$.

Canonical presheaves are in one-to-one correspondence with
presheaves obeying these axioms. For instance, the presheaves of
continuous, smooth and locally constant functions in Example
\ref{spr7} satisfy the axioms (S1) -- (S2).
\end{rem}

\begin{rem}
The notion of a sheaf can be extended to sets, but not to
non-commutative groups. One can consider a presheaf of such
groups, but it generates a sheaf of sets because a direct limit of
non-commutative groups need not be a group.
\end{rem}

There is a useful construction of a sheaf on a topological space
$X$ from local sheaves on open subsets which make up a cover of
$X$.

\begin{prop} \label{+25} \mar{+25}
Let $\{U_\zeta\}$ be an open cover of a topological space $X$ and
$S_\zeta$ a sheaf on $U_\zeta$ for every $U_\zeta$. Let us suppose
that, if $U_\zeta\cap U_\xi\neq\emptyset$, there is a sheaf
isomorphism
\be
\rho_{\zeta\xi}:S_\xi\mid_{U_\zeta\cap U_\xi}\to
S_\zeta\mid_{U_\zeta\cap U_\xi}
\ee
and, for every triple $(U_\zeta,U_\xi, U_\iota)$, these
isomorphisms fulfil the cocycle condition
\be
\rho_{\xi\zeta}\circ\rho_{\zeta\iota}(S_\iota|_{U_\zeta\cap
U_\xi\cap U_\iota})=\rho_{\xi\iota}(S_\iota|_{U_\zeta\cap
U_\xi\cap U_\iota}).
\ee
Then there exists a sheaf $S$ on $X$ together with the sheaf
isomorphisms $\f_\zeta: S|_{U_\zeta}\to S_\zeta$ such that
\be
\f_\zeta|_{U_\zeta\cap
U_\xi}=\rho_{\zeta\xi}\circ\f_\xi|_{U_\zeta\cap U_\xi}.
\ee
\end{prop}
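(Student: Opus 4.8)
\textbf{Proof proposal for Proposition \ref{+25}.}

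The plan is to construct the total space of the desired sheaf $S$ by gluing the total spaces of the local sheaves $S_\zeta$ along the prescribed isomorphisms $\rho_{\zeta\xi}$, and then to verify that the resulting object is a topological fibre bundle in $\cK$-modules with the required local homeomorphism structure. Concretely, I would first form the disjoint union $\widetilde S=\bigsqcup_\zeta S_\zeta$ (each $S_\zeta$ carrying its sheaf topology over $U_\zeta$), and introduce on it the relation identifying $a\in (S_\xi)_x$ with $\rho_{\zeta\xi}(a)\in (S_\zeta)_x$ whenever $x\in U_\zeta\cap U_\xi$. The reflexivity of this relation follows from $\rho_{\zeta\zeta}=\id$ (take $\zeta=\xi=\iota$ in the cocycle condition, or note it is forced), symmetry from the fact that each $\rho_{\zeta\xi}$ is an isomorphism with inverse $\rho_{\xi\zeta}$ (again a consequence of the cocycle condition applied to the triple $(U_\zeta,U_\xi,U_\zeta)$), and transitivity is exactly the cocycle condition $\rho_{\xi\zeta}\circ\rho_{\zeta\iota}=\rho_{\xi\iota}$ on the triple overlaps. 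So it is an equivalence relation, and I set $S=\widetilde S/\!\sim$ with the quotient topology and the obvious projection $\pi\col S\to X$ induced by the $\pi_\zeta\col S_\zeta\to U_\zeta$.

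Next I would check that $\pi$ is a local homeomorphism with discrete $\cK$-module stalks. Each canonical injection $S_\zeta\hookrightarrow\widetilde S\to S$ is continuous, and its image is $\pi^{-1}(U_\zeta)$; I claim this map $\f_\zeta^{-1}\col S_\zeta\to S|_{U_\zeta}$ is a homeomorphism. Injectivity on $S_\zeta$ holds because two points of a single $S_\zeta$ are identified only if related through a chain of $\rho$'s returning to $\zeta$, which by the cocycle condition collapses to the identity; surjectivity onto $\pi^{-1}(U_\zeta)$ is clear since any germ over a point of $U_\zeta$ has a representative in some $S_\xi$ with $x\in U_\zeta\cap U_\xi$, hence is $\sim$-equivalent to $\rho_{\zeta\xi}$ of it, which lies in $S_\zeta$; openness follows from the quotient topology together with the openness of the $\rho_{\zeta\xi}$ as sheaf isomorphisms. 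Since $\pi_\zeta$ is a local homeomorphism and the $S_\zeta$-fibres are discrete $\cK$-modules, the same holds for $\pi$. The $\cK$-module structure on each stalk $S_x$ is transported from any $(S_\zeta)_x$ with $x\in U_\zeta$, and it is well defined precisely because the $\rho_{\zeta\xi}$ restrict to $\cK$-module isomorphisms on stalks; one checks that addition and scalar multiplication are continuous on $S\times_X S$ and $\cK\times S$ by working inside a single chart $\pi^{-1}(U_\zeta)$. Thus $S$ is a sheaf of $\cK$-modules on $X$ and $\f_\zeta\col S|_{U_\zeta}\to S_\zeta$ are sheaf isomorphisms.

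Finally, the compatibility relation $\f_\zeta|_{U_\zeta\cap U_\xi}=\rho_{\zeta\xi}\circ\f_\xi|_{U_\zeta\cap U_\xi}$ is immediate from the construction: over a point $x\in U_\zeta\cap U_\xi$, the class in $S_x$ of an element $a\in (S_\xi)_x$ is by definition sent by $\f_\xi$ to $a$ and by $\f_\zeta$ to $\rho_{\zeta\xi}(a)$, so the two agree. I expect the main obstacle to be the bookkeeping in showing that the identifications are consistent and that no unintended collapsing occurs — i.e.\ the careful deduction of $\rho_{\zeta\zeta}=\id$ and $\rho_{\zeta\xi}^{-1}=\rho_{\xi\zeta}$ from the stated cocycle condition, and the verification that the quotient topology genuinely makes each $\f_\zeta$ a homeomorphism rather than merely a continuous bijection. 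Everything else (continuity of the module operations, discreteness of stalks) is routine and local, handled one chart at a time.
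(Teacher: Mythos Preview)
The paper states this proposition without proof (it is a standard gluing lemma, quoted as a tool in the section on sheaf cohomology), so there is no argument in the text to compare against. Your construction via the \'etal\'e-space quotient $\bigsqcup_\zeta S_\zeta/\!\sim$ is the standard one and is correct in outline; the points you flag as the main obstacles (deriving $\rho_{\zeta\zeta}=\id$ and $\rho_{\zeta\xi}^{-1}=\rho_{\xi\zeta}$ from the cocycle condition, and checking that each $\f_\zeta$ is open for the quotient topology) are indeed the only places requiring care, and your sketch handles them adequately.
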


A {\it morphism of a presheaf} \index{morphism!of presheaves}
$S_\sU$ to a presheaf $S'_\sU$ on the same topological space $X$
is defined as a set of module morphisms $\g_U:S_U\to S'_U$ which
commute with restriction morphisms. A morphism of presheaves
yields a {\it morphism of sheaves} \index{morphism!of sheaves}
generated by these presheaves. This is a bundle morphism over $X$
such that $\g_x: S_x\to S'_x$ is the direct limit of morphisms
$\g_U$, $x\in U$. Conversely, any morphism of sheaves $S\to S'$ on
a topological space $X$ yields a morphism of canonical presheaves
of local sections of these sheaves. Let $\hm(S|_U,S'|_U)$ be the
commutative group of sheaf morphisms $S|_U\to S'|_U$ for any open
subset $U\subset X$. These groups are assembled into a presheaf,
and define the sheaf $\hm(S,S')$ on $X$. There is a monomorphism
\mar{+212}\beq
\hm(S,S')(U)\to \hm(S(U),S'(U)), \label{+212}
\eeq
which need not be an isomorphism.

By virtue of Theorem \ref{dlim1}, if a presheaf morphism is a
monomorphism or an epimorphism, so is the corresponding sheaf
morphism. Furthermore, the following holds.

\begin{theo} \label{spr29} \mar{spr29}
A short exact sequence
\mar{spr208}\beq
0\to S'_\sU\to S_\sU\to S''_\sU\to 0 \label{spr208}
\eeq
of presheaves on the same topological space yields the short exact
sequence of sheaves generated by these presheaves
\mar{ms0102}\beq
0\to S'\to S\to S''\to 0, \label{ms0102}
\eeq
where the {\it factor sheaf} \index{factor sheaf} $S''=S/S'$ is
isomorphic to that generated by the factor presheaf
\be
S''_\sU=S_\sU/S'_\sU.
\ee
If the exact sequence of presheaves (\ref{spr208}) is split, i.e.,
\be
S_\sU\cong S'_\sU\oplus S''_\sU,
\ee
the corresponding splitting
\be
S\cong S'\oplus S''
\ee
of the exact sequence of sheaves (\ref{ms0102}) holds.
\end{theo}

The converse is more intricate. A sheaf morphism induces a
morphism of the corresponding canonical presheaves. If $S\to S'$
is a monomorphism, $S(\sU)\to S'(\sU)$ is also a monomorphism.
However, if $S\to S'$ is an epimorphism, $S(\sU)\to S'(\sU)$ need
not be so. Therefore, the short exact sequence (\ref{ms0102}) of
sheaves yields the exact sequence of the canonical presheaves
\mar{ms0103'}\beq
0\to S'(\sU)\to S(\sU)\to S''(\sU), \label{ms0103'}
\eeq
where $S(\sU)\to S''(\sU)$ is not necessarily an epimorphism. At
the same time, there is the short exact sequence of presheaves
\mar{ms0103}\beq
0\to S'(\sU)\to S(\sU)\to S''_\sU \to 0, \label{ms0103}
\eeq
where the factor presheaf
\be
S''_\sU=S(\sU)/S'(\sU)
\ee
generates the factor sheaf $S''=S/S'$, but need not be its
canonical presheaf.

\begin{theo} \label{spr30} \mar{spr30}
Let the exact sequence of sheaves (\ref{ms0102}) be split. Then
\be
S(\sU)\cong S'(\sU) \oplus S''(\sU),
\ee
and the canonical presheaves make up the short exact sequence
\mar{+218}\beq
0\to S'(\sU)\to S(\sU)\to S''(\sU)\to 0. \label{+218}
\eeq
\end{theo}

Let us turn now to sheaf cohomology. We follow its definition in
\cite{hir}. In the case of paracompact topological spaces, it
coincides with a different definition of sheaf cohomology based on
the canonical flabby resolution (see Remark \ref{spr250} below).

Let $S_\sU$ be a presheaf of modules on a topological space $X$,
and let $\gU=\{U_i\}_{i\in I}$ be an open cover of $X$. One
constructs a cochain complex where a $p$-cochain is defined as a
function $s^p$ which associates an element
\be
s^p(i_0,\ldots,i_p)\in S_{U_{i_0}\cap\cdots\cap U_{i_p}}
\ee
to each $(p+1)$-tuple $(i_0,\ldots,i_p)$ of indices in $I$. These
$p$-cochains are assembled into a module $C^p(\gU,S_\sU)$. Let us
introduce the coboundary operator
\mar{spr180}\ben
&& \delta^p:C^p(\gU,S_\sU)\to C^{p+1}(\gU,S_\sU), \nonumber\\
&& \dl^ps^p(i_0,\ldots,i_{p+1})=\op\sum_{k=0}^{p+1}(-1)^kr_W^
{W_k}s^p(i_0,\ldots,\wh i_k,\ldots,i_{p+1}), \label{spr180}\\
&&
W=U_{i_0}\cap\ldots\cap U_{i_{p+1}},\qquad
W_k=U_{i_0}\cap\cdots\cap\wh U_{i_k}\cap\cdots\cap
U_{i_{p+1}}.\nonumber
\een
One can easily check that $\delta^{p+1}\circ\delta^p=0$. Thus, we
obtain the cochain complex of modules
\mar{spr181}\beq
0\to C^0(\gU,S_\sU)\ar^{\dl^0}\cdots C^p(\gU,S_\sU)\ar^{\dl^p}
C^{p+1}(\gU,S_\sU)\ar\cdots. \label{spr181}
\eeq
Its cohomology groups
\be
H^p(\gU;S_\sU)=\Ker\dl^p/\im\dl^{p-1}
\ee
are modules. Of course, they depend on an open cover $\gU$ of the
topological space $X$.

\begin{rem}
Throughout the Lectures, only {\it proper covers} \index{proper
cover} are considered, i.e., $U_i\neq U_j$ if $i\neq j$. A cover
$\gU'$ is said to be a {\it refinement} \index{refinement} of a
cover $\gU$ if, for each $U'\in\gU'$, there exists $U\in\gU$ such
that $U'\subset U$.
\end{rem}

Let $\gU'$ be a refinement of the cover $\gU$. Then there is a
morphism of cohomology groups
\be
H^*(\gU;S_\sU)\rightarrow H^*(\gU';S_\sU).
\ee
Let us take the direct limit of cohomology groups $H^*(\gU;S_\sU)$
with respect to these morphisms, where $\gU$ runs through all open
covers of $X$. This limit $H^*(X;S_\sU)$ is called the {\it
cohomology of $X$ with coefficients in the presheaf} $S_\sU$.
\index{cohomology!with coefficients in a presheaf}

Let $S$ be a sheaf on a topological space $X$. {\it Cohomology of
$X$ with coefficients in $S$} or, simply, {\it sheaf cohomology}
of $X$ \index{sheaf cohomology} is defined as cohomology
\be
H^*(X;S)=H^*(X;S(\sU))
\ee
with coefficients in the canonical presheaf $S(\sU)$ of the sheaf
$S$.

In this case, a $p$-cochain $s^p\in C^p(\gU,S(\sU))$ is a
collection
\be
s^p=\{s^p(i_0,\ldots,i_p)\}
\ee
of local sections $s^p(i_0,\ldots,i_p)$ of the sheaf $S$ over
$U_{i_0}\cap\cdots\cap U_{i_p}$ for each $(p+1)$-tuple
$(U_{i_0},\ldots,U_{i_p})$ of elements of the cover $\gU$. The
coboundary operator (\ref{spr180}) reads
\mar{spr192}\beq
\dl^ps^p(i_0,\ldots,i_{p+1})=\op\sum_{k=0}^{p+1}(-1)^k
s^p(i_0,\ldots,\wh i_k,\ldots,i_{p+1})|_{U_{i_0}\cap\cdots\cap
U_{i_{p+1}}}. \label{spr192}
\eeq
For instance,
\mar{spr188,9}\ben
&& \dl^0s^0(i,j)=[s^0(j) -s^0(i)]|_{U_i\cap U_j},
\label{spr188}\\
&& \dl^1s^1(i,j,k)=[s^1(j,k)-s^1(i,k)
   +s^1(i,j)]|_{U_i\cap U_j\cap U_k}. \label{spr189}
\een
A glance at the expression (\ref{spr188}) shows that a
zero-cocycle is a collection $s=\{s(i)\}_I$ of local sections of
the sheaf $S$ over $U_i\in\gU$ such that $s(i)=s(j)$ on $U_i\cap
U_j$. It follows from the axiom (S2) in Remark \ref{spr190'} that
$s$ is a global section of the sheaf $S$, while each $s(i)$ is its
restriction $s|_{U_i}$ to $U_i$. Consequently, the cohomology
group $H^0(\gU,S(\sU))$ is isomorphic to the structure module
$S(X)$ of global sections of the sheaf $S$. A one-cocycle is a
collection $\{s(i,j)\}$ of local sections of the sheaf $S$ over
overlaps $U_i\cap U_j$ which satisfy the {\it cocycle condition}
\index{cocycle condition}
\mar{spr192'}\beq
[s(j,k)-s(i,k) +s(i,j)]|_{U_i\cap U_j\cap U_k}=0. \label{spr192'}
\eeq

If $X$ is a paracompact space, the study of its sheaf cohomology
is essentially simplified due to the following fact \cite{hir}.

\begin{theo} \label{spr225} \mar{spr225}
Cohomology of a paracompact space $X$ with coefficients in a sheaf
$S$ coincides with cohomology of $X$ with coefficients in any
presheaf generating the sheaf $S$.
\end{theo}

\begin{rem} \label{spr200} \mar{spr200}
We follow the definition of a {\it paracompact topological space}
\index{paracompact space} in \cite{hir} as a Hausdorff space such
that any its open cover admits a {\it locally finite}
\index{locally finite covering} open refinement, i.e., any point
has an open neighborhood which intersects only a finite number of
elements of this refinement. A topological space $X$ is
paracompact iff any cover $\{U_\xi\}$ of $X$ admits a subordinate
{\it partition of unity} \index{partition of unity} $\{f_\xi\}$,
i.e.:

(i) $f_\xi$ are real positive continuous functions on $X$;

(ii) supp$\,f_\xi\subset U_\xi$;

(iii) each point $x\in X$ has an open neighborhood which
intersects only a finite number of the sets supp$\,f_\xi$;

(iv) $\op\sum_\xi f_\xi(x)=1$ for all $x\in X$.
\end{rem}

The key point of the analysis of sheaf cohomology is that short
exact sequences of presheaves and sheaves yield long exact
sequences of sheaf cohomology groups.

Let $S_\sU$ and $S'_\sU$ be presheaves on the same topological
space $X$. It is readily observed that, given an open cover $\gU$
of $X$, any morphism $S_\sU\to S'_\sU$  yields a cochain morphism
of complexes
\be
C^*(\gU,S_\sU)\to C^*(\gU,S'_\sU)
\ee
and the corresponding morphism
\be
H^*(\gU,S_\sU)\to H^*(\gU,S'_\sU)
\ee
of cohomology groups of these complexes. Passing to the direct
limit through all refinements of $\gU$, we come to a morphism of
the cohomology groups
\be
H^*(X,S_\sU)\to H^*(X,S'_\sU)
\ee
of $X$ with coefficients in the presheaves $S_\sU$ and $S'_\sU$.
In particular, any sheaf morphism $S\to S'$ yields a morphism of
canonical presheaves $S(\{U\})\to S'(\{U\})$ and the corresponding
cohomology morphism
\be
H^*(X,S)\to H^*(X,S').
\ee

By virtue of Theorems \ref{spr36'} and \ref{spr37'}, every short
exact sequence
\mar{spr220}\beq
0\to S'_\sU\ar S_\sU\ar S''_\sU\to 0 \label{spr220}
\eeq
of presheaves on the same topological space $X$ and the
corresponding exact sequence of complexes (\ref{spr181}) yield the
long exact sequence
\mar{spr221}\ben
&& 0\to H^0(X;S'_\sU)\ar H^0(X;S_\sU)\ar H^0(X;S''_\sU)\ar \label{spr221}\\
&& \qquad
H^1(X;S'_\sU) \ar\cdots  H^p(X;S'_\sU)\ar H^p(X;S_\sU)\ar \nonumber\\
&& \qquad  H^p(X;S''_\sU)\ar
H^{p+1}(X;S'_\sU) \ar\cdots \nonumber
\een
of the cohomology groups of $X$ with coefficients in these
presheaves. This result however is not extended to an exact
sequence of sheaves, unless $X$ is a paracompact space. Let
\mar{spr226}\beq
0\to S'\ar S\ar S'' \to 0 \label{spr226}
\eeq
be a short  exact sequence of sheaves on $X$. It yields the short
exact sequence of presheaves (\ref{ms0103}) where the presheaf
$S''_\sU$ generates the sheaf $S''$. If $X$ is paracompact,
\be
H^*(X;S''_\sU)=H^*(X;S'')
\ee
in accordance with Theorem \ref{spr225}, and we have the exact
sequence of sheaf cohomology
\mar{spr227}\ben
&& 0\to H^0(X;S')\ar H^0(X;S)\ar H^0(X;S'')\ar
\label{spr227}\\
&& \qquad H^1(X;S') \ar\cdots   H^p(X;S')\ar H^p(X;S)\ar \nonumber
\\
&& \qquad H^p(X;S'')\ar H^{p+1}(X;S') \ar\cdots\,. \nonumber
\een

Let us point out the following isomorphism between sheaf
cohomology and singular (\v Chech and Alexandery) cohomology of a
paracompact space \cite{bred,span}.

\begin{theo} \label{spr257} \mar{spr257}
The sheaf cohomology $H^*(X;\Bbb Z)$ (resp. $H^*(X;\Bbb Q)$,
$H^*(X;\Bbb R)$) of a paracompact topological space $X$ with
coefficients in the constant sheaf $\Bbb Z$ (resp. $\Bbb Q$, $\Bbb
R$) is isomorphic to the singular cohomology of $X$  with
coefficients in the ring $\Bbb Z$ (resp. $\Bbb Q$, $\Bbb R$).
\end{theo}

Since singular cohomology is a {\it topological invariant}
\index{topological invariant} (i.e., homotopic topological spaces
have the same singular cohomology) \cite{span}, the sheaf
cohomology groups $H^*(X;\Bbb Z)$, $H^*(X;\Bbb Q)$, $H^*(X;\Bbb
R)$ of a paracompact space are also topological invariants.

Let us turn now to the abstract de Rham theorem which provides a
powerful tool of studying algebraic systems on paracompact spaces.

Let us consider an exact sequence of sheaves
\mar{spr228}\beq
0\to S\ar^h S_0\ar^{h^0} S_1\ar^{h^1}\cdots S_p\ar^{h^p}\cdots.
\label{spr228}
\eeq
It is said to be a {\it resolution of the sheaf}
\index{resolution!of a sheaf} $S$ if each sheaf $S_{p\geq 0}$ is
{\it acyclic}, \index{sheaf!acyclic} i.e., its cohomology groups
$H^{k>0}(X;S_p)$ vanish.

Any exact sequence of sheaves (\ref{spr228}) yields the sequence
of their structure modules
\mar{spr229}\beq
0\to S(X)\ar^{h_*} S_0(X)\ar^{h^0_*} S_1(X)\ar^{h^1_*}\cdots
S_p(X)\ar^{h^p_*}\cdots \label{spr229}
\eeq
which is always exact at terms $S(X)$ and $S_0(X)$ (see the exact
sequence (\ref{ms0103'})). The sequence (\ref{spr229}) is a
cochain complex because $h^{p+1}_*\circ h^p_*=0$. If $X$ is a
paracompact space and the exact sequence (\ref{spr228}) is a
resolution of $S$, the {\it abstract de Rham theorem} \index{de
Rham theorem!abstract} establishes an isomorphism of cohomology of
the complex (\ref{spr229}) to cohomology of $X$ with coefficients
in the sheaf $S$ as follows \cite{hir}.

\begin{theo} \label{spr230} \mar{spr230}
Given a resolution (\ref{spr228}) of a sheaf $S$ on a paracompact
topological  space $X$ and the induced complex (\ref{spr229}),
there are isomorphisms
\mar{spr231}\beq
H^0(X;S)=\Ker h^0_*, \qquad H^q(X;S)=\Ker h^q_*/\im h^{q-1}_*,
\qquad q>0. \label{spr231}
\eeq
\end{theo}

We will also refer to the following minor modification of Theorem
\ref{spr230} \cite{jmp01,tak2}.

\begin{theo} \label{spr232} \mar{spr232}
Let
\mar{+131'}\beq
0\to S\ar^h S_0\ar^{h^0} S_1\ar^{h^1}\cdots\ar^{h^{p-1}}
S_p\ar^{h^p} S_{p+1}, \qquad p>1, \label{+131'}
\eeq
be an exact sequence of sheaves on a paracompact topological space
$X$, where the sheaves $S_q$, $0\leq q<p$, are acyclic, and let
\mar{+130}\beq
0\to S(X)\ar^{h_*} S_0(X)\ar^{h^0_*}
S_1(X)\ar^{h^1_*}\cdots\ar^{h^{p-1}_*} S_p(X)\ar^{h^p_*}
S_{p+1}(X) \label{+130}
\eeq
be the corresponding cochain complex of structure modules of these
sheaves. Then the isomorphisms (\ref{spr231}) hold for $0\leq
q\leq p$.
\end{theo}

Any sheaf on a topological space admits the canonical resolution
by flabby sheaves as follows.

A sheaf $S$ on a topological space $X$ is called {\it flabby}
\index{sheaf!flabby} (or {\it flasque} \index{sheaf!flasque} in
the terminology of \cite{tenn}), if the restriction morphism
$S(X)\to S(U)$ is an epimorphism for any open $U\subset X$, i.e.,
if any local section of the sheaf $S$ can be extended to a global
section. A flabby sheaf is acyclic. Indeed, given an arbitrary
cover $\gU$ of $X$, let us consider the complex $C^*(\gU,S(\sU))$
(\ref{spr181}) for its canonical presheaf $S(\sU)$. Since $S$ is
flabby, one can define a morphism
\mar{spr251}\ben
&& h: C^p(\gU,S(\sU))\to C^{p-1}(\gU,S(\sU)), \qquad p>0, \nonumber \\
&& hs^p(i_0,\ldots,i_{p-1})= j^*s^p(i_0,\ldots,i_{p-1}, j),
\label{spr251}
\een
where $U_j$ is a fixed element of the cover $\gU$ and $j^*s^p$ is
an extension of $s^p(i_0,\ldots,i_{p-1}, j)$ onto
$U_{i_0}\cap\cdots \cap U_{i_{p-1}}$. A direct verification shows
that $h$ (\ref{spr251}) is a homotopy operator for the complex
$C^*(\gU,S(\sU))$ and, consequently, $H^{p>0}(\gU, S(\sU))=0$.

Given an arbitrary sheaf $S$ on a topological space $X$, let
$S_F^0(\sU)$ denote the presheaf of all (not-necessarily
continuous) sections of the sheaf $S$. It generates a sheaf
$S_F^0$ on $X$, and coincides with the canonical presheaf of this
sheaf. There are the natural monomorphisms $S(\sU)\to S_F^0(\sU)$
and $S\to S_F^0$. It is readily observed that the sheaf $S_F^0$ is
flabby. Let us take the quotient $S_F^0/S$ and construct the
flabby sheaf
\be
S^1_F=(S^0_F/S)^0_F.
\ee
Continuing the procedure, we obtain the exact sequence of sheaves
\mar{spr253}\beq
0\to S\ar S_F^0\ar S_F^1\ar\cdots, \label{spr253}
\eeq
which is a resolution of $S$ since all sheaves are flabby and,
consequently, acyclic. It is called the {\it canonical flabby
resolution} \index{resolution!flabby} of the sheaf $S$. The exact
sequence of sheaves (\ref{spr253}) yields the complex of structure
modules of these sheaves
\mar{spr254}\beq
0\to S(X)\ar S_F^0(X)\ar S_F^1(X)\ar\cdots\,. \label{spr254}
\eeq
If $X$ is paracompact, the cohomology of $X$ with coefficients in
the sheaf $S$ coincides with cohomology of the complex
(\ref{spr254}) by virtue of Theorem \ref{spr230}.

\begin{rem} \label{spr250} \mar{spr250}
An important peculiarity of flabby sheaves is that a short exact
sequence of flabby sheaves on an arbitrary topological space
provides the short exact sequence of their structure modules.
Therefore, there is a different definition of sheaf cohomology.
Cohomology of a topological space $X$ with coefficients in a sheaf
$S$ is defined directly as cohomology of the complex
(\ref{spr254}) \cite{bred}. For a paracompact space, this
definition coincides with above mentioned one due to Theorem
\ref{spr230}.
\end{rem}

In the sequel, we also refer to a {\it fine resolution}
\index{resolution!fine} of sheaves, i.e., a resolution by fine
sheaves.

A sheaf $S$  on a paracompact space $X$ is called {\it fine}
\index{sheaf!fine} if, for each locally finite open cover $\gU
=\{U_i\}_{i\in I}$ of $X$,  there exists a system $\{h_i\}$ of
endomorphisms $h_i:S\to S$ such that:

(i) there is a closed subset $V_i\subset U_i$ and $h_i(S_x)=0$ if
$x\not\in V_i$,

(ii) $\op\sum_{i\in I}h_i$ is the identity map of $S$. A fine
sheaf on a paracompact space is acyclic.

\noindent Indeed, given an arbitrary locally finite cover
$\gU=\{U_i\}_{i\in I}$
  of $X$ and a $p$-cochain $s^p$,
let us define the $(p-1)$-cochain
\mar{spr255}\beq
{\bf h}s^p(i_0,\ldots,i_{p-1})=\op\sum_{i\in I}
i^*s^p(i,i_0,\ldots,i_{p-1}) \label{spr255}
\eeq
where $i^*s^p$, by definition, is equal to $h_is^p$ on the set
$U_i\cap U_{i_0}\cap \cdots U_{i_{p-1}}$ and to 0 outside this
set. Then the morphism ${\bf h}$ (\ref{spr255}) is a homotopy
operator.

There are the following important examples of fine sheaves
\cite{book05}.

\begin{prop}  \label{spr256} \mar{spr256}
Let $X$ be a paracompact topological  space which admits a
partition of unity performed by elements of the structure module
$\gA(X)$ of some sheaf $\gA$ of real functions on $X$. Then any
sheaf $S$ of $\gA$-modules on $X$, including $\gA$ itself, is
fine.
\end{prop}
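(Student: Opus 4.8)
The plan is to fix a locally finite open cover $\gU=\{U_i\}_{i\in I}$ of the paracompact space $X$ and build, for any sheaf $S$ of $\gA$-modules on $X$, a system $\{h_i\colon S\to S\}$ of endomorphisms witnessing fineness in the sense of the definition just above the statement. The hypothesis furnishes a partition of unity $\{f_i\}_{i\in I}$ by elements of the structure module $\gA(X)$ which is subordinate to $\gU$: each $f_i$ is a global section of $\gA$ with $\mathrm{supp}\,f_i\subset U_i$, the supports are locally finite, and $\op\sum_i f_i=\bb$. (If the given partition of unity is indexed by a different cover, one first refines it against $\gU$ in the usual way; since $X$ is paracompact this is harmless, and the cohomology computation does not depend on the choice of cover in the direct limit. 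I would suppress this point or relegate it to a remark.)

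First I would define $h_i$ stalkwise. For a point $x\in X$, the germ $(f_i)_x\in\gA_x$ acts on the $\gA_x$-module $S_x$ by scalar multiplication, and I set $h_i|_{S_x}\colon S_x\ni \sigma_x\mapsto (f_i)_x\cdot\sigma_x\in S_x$. The key routine checks are: (a) that this collection of stalk maps is continuous, hence a genuine sheaf endomorphism $h_i\colon S\to S$ — this follows because on each open $U$ the map is just $s\mapsto f_i|_U\cdot s$ on sections, i.e. it is induced by a presheaf morphism, so it respects restrictions and is continuous in the sheaf topology; (b) that it is $\gA$-linear, which is immediate from commutativity of the $\gA_x$-module structure. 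Then I verify the two conditions in the definition of a fine sheaf. For the first, since $\mathrm{supp}\,f_i$ is closed and contained in $U_i$, I take $V_i=\mathrm{supp}\,f_i$; for $x\notin V_i$ the germ $(f_i)_x$ is zero, so $h_i(S_x)=0$. For the second, the local finiteness of the supports guarantees that near any point only finitely many $h_i$ are nonzero, so $\op\sum_{i\in I}h_i$ makes sense as a sheaf endomorphism, and stalkwise it equals multiplication by $\op\sum_i (f_i)_x=\bb_x$, i.e. the identity on $S_x$. Hence $\op\sum_i h_i=\id S$.

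This shows $S$ is fine; in particular $\gA$ itself is fine, taking $S=\gA$ with its tautological module structure over itself. The acyclicity ("consequently, acyclic") is then not something to prove afresh: it is exactly the assertion, recorded in the paragraph preceding Proposition \ref{spr256} (with the homotopy operator ${\bf h}$ of (\ref{spr255})), that a fine sheaf on a paracompact space has vanishing higher cohomology. So the argument concludes by invoking that fact.

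The only genuinely delicate step is the reduction in the subordination of the partition of unity to the given cover $\gU$ — one must make sure that a partition of unity by elements of $\gA(X)$ subordinate to some cover can be replaced by one subordinate to the arbitrary locally finite $\gU$ appearing in the definition of fineness, without leaving the structure module $\gA(X)$. This is handled by the standard paracompactness argument (refine, then group and sum the refined functions, using that $\gA(X)$ is closed under the relevant locally finite sums because $\gA$ is a sheaf of real functions), but it is the point where a little care is needed; everything else is a direct unwinding of definitions.
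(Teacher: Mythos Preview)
Your argument is correct and is exactly the intended one: the paper does not actually spell out a proof of this proposition (it cites \cite{book05} for the result), but the definition of a fine sheaf given immediately before the statement is tailored so that multiplication by the partition-of-unity functions $f_i\in\gA(X)$ supplies the required endomorphisms $h_i$, which is precisely what you do. Your identification of the subordination step as the only place needing care is apt; the rest is, as you say, an unwinding of the definition.
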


In particular, the sheaf $C^0_X$ of continuous functions on a
paracompact topological space is fine, and so is any sheaf of
$C^0_X$-modules. A smooth manifold $X$ admits a partition of unity
performed by smooth real functions. It follows that the sheaf
$C^\infty_X$ of smooth real functions on $X$ is fine, and so is
any sheaf of $C^\infty_X$-modules, e.g., the sheaves of sections
of smooth vector bundles over $X$.

We complete our exposition of sheaf cohomology with the following
useful theorem \cite{bart}.

\begin{theo} \label{+144} \mar{+144}
Let $f:X\to X'$ be a continuous map and $S$ a sheaf on $X$. Let
either $f$ be a closed immersion or every point $x'\in X'$ have a
base of open neighborhoods $\{U\}$ such that the sheaves
$S\mid_{f^{-1}(U)}$ are acyclic. Then the cohomology groups
$H^*(X;S)$ and $H^*(X';f_*S)$ are isomorphic.
\end{theo}

\bibliographystyle{alpha}
\bibliographystyle{plain}
\addcontentsline{toc}{chapter}{Bibliography}
\bibliography{conn99}


\addcontentsline{toc}{chapter}{Index}


\begin{theindex}

  \item $C^*$-algebra
    \subitem defined by a continuous field of $C^*$-algebras, 41
    \subitem elementary, 41
  \item $C^*$-dynamic system, 46
  \item $C^*$-module, 82
  \item $G$-superfunction, 72
  \item $G$-supermanifold, 73
  \item $G$-supermanifold standard, 73
  \item $GH^\infty$-superfunctions, 71
  \item $G^\infty$-superfunctions, 71
  \item $G^\infty$-tangent bundle, 77
  \item $G^\infty$-vector bundle, 77
  \item $H^\infty$-superfunctions, 71
  \item $K$-cycle, 97
  \item $\cK$-algebra, 6
  \item $p$-coboundary, 111
  \item $p$-cocycle, 111

  \indexspace

  \item  Atiyah class, 27

  \indexspace

  \item adjoint action
    \subitem of a Hopf algebra, 104
  \item algebra, 5
    \subitem $\Bbb N$-graded, 17
      \subsubitem commutative, 17
    \subitem $\Bbb Z_2$-graded, 53
    \subitem unital, 5
  \item algebra $\Bbb Z_2$-graded commutative, 53
  \item antiholomorphic function on a Hilbert space, 36
  \item antipode, 101

  \indexspace

  \item Banach manifold, 30
  \item Banach vector bundle, 40
  \item basis
    \subitem for a graded manifold, 63
    \subitem for a module, 7
  \item Batchelor's theorem, 62
  \item Berry connection, 49
  \item bi-algebra, 100
  \item bimodule, 6
    \subitem covariant, 107
  \item bimodule commutative, 6
  \item bimodule graded, 54
  \item body
    \subitem of a graded manifold, 62
    \subitem of a ringed space, 19
  \item body map, 55

  \indexspace

  \item Cartan pair, 89
  \item category ${\bf Vect}(\cB)$, 32
  \item category {\bf Bnh}, 32
  \item center of an algebra, 81
  \item characteristic vector bundle, 63
  \item Chevalley--Eilenberg
    \subitem coboundary operator, 114
    \subitem cohomology, 114
      \subsubitem with coefficients in a module, 114
    \subitem complex, 114
  \item Chevalley--Eilenberg cohomology
    \subitem with coefficients in the trivial representation, 115
  \item Chevalley--Eilenberg differential calculus, 18
    \subitem minimal, 19
  \item co-operation, 101
  \item coaction, 105
  \item coalgebra, 100
  \item coboundary operator, 111
  \item cochain, 111
  \item cochain complex, 111
  \item cochain morphism, 112
  \item cocycle condition, 120
  \item cohomology
    \subitem of a Lie algebra, 114
    \subitem with coefficients in a presheaf, 120
  \item cohomology group, 111
  \item comodule
    \subitem over a Hopf algebra, 105
    \subitem trivial, 105
  \item comodule algebra, 105
  \item complex, 111
    \subitem acyclic, 111
    \subitem exact, 111
  \item complex vector field
    \subitem on a Hilbert manifold, 36
  \item comultiplication, 100
  \item connection
    \subitem left, 92
    \subitem non-commutative
      \subsubitem universal, 92
    \subitem on a bundle of $C^*$-algebras, 43
    \subitem on a graded commutative ring, 62
    \subitem on a Hilbert bundle, 44
    \subitem on a Hilbert manifold, 38
    \subitem on a module, 15, 16
    \subitem on a ring, 16
    \subitem on a sheaf, 26
    \subitem right, 92
  \item connection on a Banach manifold, 33
  \item connection on a graded module, 61
  \item Connes' differential calculus, 98
  \item convolution product, 103
  \item cotangent bundle
    \subitem  complex, 37
    \subitem antiholomorphic, 37
    \subitem holomorphic, 37
    \subitem of a Banach manifold, 32
  \item counit, 100
  \item covariant differential
    \subitem on a module, 16
  \item cup-product, 17
    \subitem of the Chevalley--Eilenberg cohomology, 115
  \item curvature of a connection
    \subitem on a module, 16
  \item curvature of a connection on sheaves, 26
  \item curvature of a graded connection, 66
  \item curvature of a superconnection, 78

  \indexspace

  \item de Rham cohomology, 17
    \subitem of a ring, 19
  \item de Rham cohomology of a manifold, 22
  \item de Rham complex, 17
    \subitem of a ring, 19
    \subitem of sheaves, 22
  \item de Rham complex of exterior forms, 22
  \item de Rham theorem, 22
  \item de Rham theorem abstract, 122
  \item derivation
    \subitem inner, 12
    \subitem of a non-commutative algebra, 12
    \subitem with values in a module, 11
  \item derivation graded, 64
  \item derivation module, 11
  \item derivative
    \subitem on a Banach space, 30
  \item differentiable function
    \subitem between Banach spaces, 30
    \subitem on a Hilbert space, 35
  \item differential
    \subitem on a Banach space, 30
  \item differential bigraded algebra, 60
  \item differential calculus, 17
    \subitem minimal, 18
  \item differential calculus universal, 86
  \item differential graded algebra, 17
  \item differential operator
    \subitem in non-commutative geometry, 88, 90
    \subitem on a module, 10
  \item direct image
    \subitem of a sheaf, 19
  \item direct limit, 8
  \item direct sequence, 8
  \item direct sum
    \subitem of complexes, 112
    \subitem of modules, 6
  \item direct system
    \subitem of modules, 8
  \item directed set, 8
  \item division algebra, 5
  \item Dixmier--Douady class, 43
  \item dual Hopf algebra, 104
  \item Dubois--Violette connection, 93
    \subitem linear, 95
    \subitem real, 95
  \item Dubois--Violette connection dual, 94

  \indexspace

  \item endomorphism
    \subitem of a Hilbert module, 83
      \subsubitem adjoint, 83
      \subsubitem compact, 83
  \item evaluation morphism, 71
  \item even element, 53
  \item even morphism, 55
  \item evolution operator, 48
  \item exact sequence
    \subitem of modules, 7
    \subitem short, 7
  \item exterior algebra, 9
    \subitem bigraded, 54
  \item exterior form
    \subitem antiholomorphic
      \subsubitem on a Hilbert manifold, 37
    \subitem on a Hilbert manifold, 37
  \item exterior form on a Banach manifold, 33

  \indexspace

  \item factor algebra, 5
  \item factor complex, 112
  \item factor module, 7
  \item factor sheaf, 118
  \item fibre bundle
    \subitem of $C^*$-algebras, 41
  \item fibred manifold Banach, 32
  \item field, 5
  \item field of $C^*$-algebras, 41
  \item FODC, 106
    \subitem bicovariant, 106
    \subitem left-covariant, 106
    \subitem right-covariant, 106
  \item Fr\'echet ring, 21
  \item Fubini--Studi metric, 40
  \item fundamental form
    \subitem of a Hermitian metric on a Hilbert manifold, 38

  \indexspace

  \item general linear graded group, 56
  \item generalized connection, 44
  \item geometric space, 19
  \item graded commutative ring, 55
    \subitem Banach, 55
  \item graded connection, 66
  \item graded de Rham cohomology, 68
  \item graded derivation, 57
  \item graded derivation module, 58
  \item graded differential calculus, 60
  \item graded differential operator, 57
  \item graded envelope, 56
  \item graded exterior differential, 68
  \item graded exterior form, 67
  \item graded exterior product, 54
  \item graded function, 62
  \item graded jet module, 58
  \item graded manifold, 62
    \subitem simple, 63
  \item graded module, 54
    \subitem dual, 55
    \subitem free, 54
  \item graded vector field, 64
  \item graded vector space, 54
  \item grading automorphism, 53
  \item Grassmann algebra, 55
  \item Grothendieck's topology, 21
  \item group function Hopf algebra, 102
  \item group Hopf algebra, 102

  \indexspace

  \item Heisenberg equation, 46
  \item Hermitian form
    \subitem on a module, 82
      \subsubitem invertible, 82
      \subsubitem positive, 82
  \item Hermitian manifold
    \subitem infinite-dimensional, 38
  \item Hermitian metric
    \subitem on a Hilbert manifold, 38
  \item Hilbert bundle, 41
  \item Hilbert manifold, 36
  \item Hilbert module, 41, 82
  \item holomorphic exterior form on a Hilbert manifold, 37
  \item holomorphic function on a Hilbert space, 36
  \item homogeneous element, 54
  \item homotopy operator, 111
  \item Hopf algebra, 101
    \subitem classical, 102
    \subitem cocommutative, 101
    \subitem dually paired, 103
    \subitem involutive, 101
    \subitem quasi-triangular, 101

  \indexspace

  \item ideal, 5
    \subitem maximal, 5
    \subitem of nilpotents, 55
    \subitem prime, 5
  \item ideal proper, 5
  \item inductive limit, 9
  \item instantwise quantization, 45
  \item interior product
    \subitem graded, 60
  \item invariant vector field, 109
  \item inverse image
    \subitem of a sheaf, 20
  \item inverse limit, 9
  \item inverse mapping theorem, 30
  \item inverse sequence, 9

  \indexspace

  \item jet module, 12
  \item jet sheaf, 25
  \item juxtaposition rule, 17

  \indexspace

  \item K\"ahler form
    \subitem on a Hilbert manifold, 38
  \item K\"ahler manifold
    \subitem infinite-dimensional, 38
  \item K\"ahler metric
    \subitem on a Hilbert manifold, 38
  \item Koszul connection, 16

  \indexspace

  \item Leibniz rule, 11, 16
  \item Leibniz rule graded, 17
  \item Leibniz rule non-commutative, 12
  \item Lie derivative
    \subitem of a bigraded algebra, 60
  \item Lie superalgebra, 56
  \item local ring, 5
  \item local-ringed space, 19
  \item locally
    \subitem free module, 20
  \item locally finite covering, 120

  \indexspace

  \item matrix geometry, 95
  \item Maurer--Cartan form
    \subitem on a Hopf algebra, 108
  \item metric connection
    \subitem on a Hilbert manifold, 39
  \item module, 6
    \subitem dual, 7
    \subitem finitely generated, 7
    \subitem free, 7
    \subitem involutive, 82
    \subitem of finite rank, 7
    \subitem over a Hopf algebra, 104
    \subitem over a Lie algebra, 114
    \subitem projective, 7
  \item morphism
    \subitem of presheaves, 117
    \subitem of ringed spaces, 20
    \subitem of sheaves, 117

  \indexspace

  \item non-commutative gauge field, 100
  \item non-commutative vector field, 89

  \indexspace

  \item odd element, 53
  \item odd morphism, 55
  \item operator of a parallel displacement, 47

  \indexspace

  \item paracompact space, 120
  \item partition of unity, 121
  \item positive element, 82
  \item pre-Hilbert module, 82
  \item preaheaf
    \subitem canonical, 116
  \item presheaf, 116
  \item product
    \subitem of $G$-supermanifolds, 75
  \item product $G$-supermanifold, 75
  \item projective Hilbert space, 39
  \item proper cover, 119

  \indexspace

  \item quantum Lie algebra of a Hopf algebra, 109
  \item quantum tangent space, 109

  \indexspace

  \item refinement, 119
  \item resolution, 112
    \subitem fine, 124
    \subitem flabby, 124
    \subitem of a sheaf, 122
  \item restriction morphism, 116
  \item ring, 5
  \item ringed space, 19

  \indexspace

  \item Schr\"odinger equation, 47
  \item second-countable topological space, 21
  \item separable topological space, 21
  \item Serre--Swan theorem, 23
  \item sheaf, 116
    \subitem acyclic, 122
    \subitem constant, 116
    \subitem fine, 124
    \subitem flabby, 123
    \subitem flasque, 123
    \subitem locally free, 20
      \subsubitem of constant rank, 20
      \subsubitem of finite type, 20
    \subitem of continuous functions, 116
    \subitem of derivations, 20
    \subitem of jets, 25
    \subitem of modules, 20
    \subitem of smooth functions, 116
  \item sheaf cohomology, 120
  \item smooth manifold, 21
  \item smooth supermanifold, 72
  \item soul map, 55
  \item spectral triple, 97
  \item spectral triple even, 97
  \item spectral triple odd, 97
  \item split (injection), 29
  \item split (subspace), 29
  \item splitting domain, 62
  \item splitting of an exact sequence, 8
  \item structure module
    \subitem of a sheaf, 116
    \subitem of a simple graded manifold, 63
    \subitem of a vector bundle, 23
  \item structure sheaf
    \subitem of a graded manifold, 62
    \subitem of a ringed space, 19
    \subitem of a vector bundle, 22
  \item structure sheaf of a supermanifold, 72
  \item superbracket, 56
  \item superconnection, 78
  \item superform, 73
  \item superfunction, 69
    \subitem smooth, 71
  \item supermatrix, 56
  \item supermatrix even, 56
  \item supermatrix odd, 56
  \item supertangent bundle, 77
  \item supervector bundle, 76
  \item supervector field, 73
  \item supervector space, 56

  \indexspace

  \item tangent bundle
    \subitem antiholomorphic, 36
    \subitem complexified , 36
    \subitem holomorphic, 36
  \item tangent bundle of a Banach manifold, 31
  \item tangent space
    \subitem antiholomorphic, 36
  \item tangent space complex, 36
  \item tangent space holomorphic, 36
  \item tangent space to a Banach manifold, 31
  \item tensor algebra, 9
  \item tensor product
    \subitem of algebras, 100
    \subitem of complexes, 112
    \subitem of graded modules, 54
    \subitem of modules, 6
    \subitem topological, 21
  \item time-ordered exponential, 48
  \item topological invariant, 122
  \item torsion
    \subitem of a non-commutative connection, 95
  \item trivial extension of a sheaf, 20

  \indexspace

  \item underlying $G^\infty$-supermanifold, 73
  \item unital extension, 6
  \item universal $R$ matrix, 101
  \item universal form, 86

  \indexspace

  \item vector field
    \subitem on a Banach manifold, 32
  \item vector space, 6

  \indexspace

  \item Yang--Baxter equation
    \subitem quantum, 101

\end{theindex}

\end{document}